\documentclass[11pt]{article}
\usepackage[english]{babel}
\usepackage{amsfonts,amsmath}
\usepackage{amssymb}
\usepackage{graphicx}
\usepackage{a4wide}
\newtheorem{theorem}{Theorem}

\newtheorem{corollary}[theorem]{Corollary}
\newtheorem{definition}[theorem]{Definition}
\newtheorem{lemma}[theorem]{Lemma}
\newtheorem{proposition}[theorem]{Proposition}
\newtheorem{remark}[theorem]{Remark}
\newenvironment{proof}[1][Proof]{\noindent\textbf{#1.} }{\ \rule{0.5em}{0.5em}}
\numberwithin{equation}{section}
\numberwithin{theorem}{section}
\newcommand{\z}{\mathfrak{z}}

\begin{document}
\date{}
\title{Non-Isothermal Boundary in the Boltzmann Theory and Fourier Law}
\author{R. Esposito$ ^1$, Y. Guo$ ^2$, C. Kim$ ^3$ and R. Marra$ ^4$}
\maketitle
\footnotetext [1]{International Research Center M\&MOCS, Univ. dell'Aquila, Cisterna di Latina, (LT) 04012 Italy}
\footnotetext [2]{Division of Applied Mathematics, Brown University, Providence, RI 02812, U.S.A.}
\footnotetext [3]{Department of Pure Mathematics and Mathematical Statistics, University of Cambridge, Cambridge\\ CB3 0WA, UK}
\footnotetext [4]{Dipartimento di Fisica and Unit\`a INFN, Universit\`a di
Roma Tor Vergata, 00133 Roma, Italy. }

\begin{abstract}
In the study of the heat transfer in the Boltzmann theory, \ the basic
problem is to construct solutions to the following steady problem:
\begin{eqnarray}
v\cdot \nabla _{x}F &=&\frac{1}{\text{K}_{\text{n}}}Q(F,F),\text{ \ \ \ \ \ }%
(x,v)\in \Omega \times \mathbf{R}^{3},  \label{boltzmann} \\
\hskip -1cmF(x,v)|_{n(x)\cdot v<0} &=&\mu _{\theta }\int_{n(x)\cdot v^{\prime
}>0}F(x,v^{\prime })(n(x)\cdot v^{\prime })dv^{\prime},\quad x\in
\partial \Omega \text{,}  \label{fdiffuse}
\end{eqnarray}%
where $\Omega $ is a bounded domain in $\mathbf{R}^{d}$, $1\leq d\leq 3$, ${\text{K}_{\text{n}}}$ is the Knudsen number and $\mu _{\theta }=\frac{1}{2\pi \theta ^{2}(x)}\exp [-\frac{|v|^{2}}{%
2\theta (x)}]$ is a Maxwellian with non-constant (non-isothermal)  wall temperature $%
\theta (x)$. Based on new constructive coercivity estimates for both steady and dynamic cases, for $|\theta -\theta_{0}|\leq \delta \ll 1$ and any fixed value of  ${\text{K}_{\text{n}}}$, we  construct a unique solution $F_s$ to (\ref{boltzmann}) and (\ref{fdiffuse}), continuous away from the grazing set and exponentially asymptotically stable.
This solution  is a genuine non equilibrium stationary solution differing from  a local equilibrium Maxwellian. As an application of our results we establish the expansion $F_s=\mu_{\theta_0}+\delta F_{1}+O(\delta ^{2})$ and we prove that, if the Fourier law holds,  the temperature contribution associated to $F_1$ must be linear, in the slab geometry. This contradicts available numerical simulations, leading to the prediction of breakdown of the Fourier law in the kinetic regime.
\end{abstract}

\bigskip
\tableofcontents

\vskip 1cm

\section{Introduction and notation}

According to the Boltzmann equation (\ref{boltz}),
a rarefied gas
confined in a bounded domain, in contact with a thermal reservoir
modeled by (\ref{fdiffuse})
at a given constant temperature $\theta $ (isothermal), has an
equilibrium state described by the Maxwellian
\begin{equation*}
\text{constant}\times \text{\textrm{e}}^{-|v|^{2}/2\theta },
\end{equation*}%
and it is well known \cite{U1,CIP,Gui,DV,V2,Guo08, V} that such an equilibrium
is reached exponentially fast, at least if the initial state is close to the
above Maxwellian in a suitable norm.

If the temperature $\theta $ at the boundary is not uniform in (\ref%
{fdiffuse}), such a statement is not true and the existence of stationary
solutions and the rate of convergence require a much more delicate analysis,
because rather complex phenomena are involved. For example, suppose that the
domain is just a slab between two parallel plates at fixed temperatures $%
\theta _{-}$ and $\theta _{+}$ with $\theta _{-}<\theta _{+}$.  Then, one
expects that a stationary solution is reached, where there is a steady flow
of heat from the hotter plate to the colder one. The approach to the
stationary solution may involve convective motions, oscillations and
possibly more complicated phenomena. Even the description of the stationary
solution is not obvious, and the relation between the heat flux and the
temperature gradient, (e.g. the Fourier law (\ref{fourierlaw1})), is not
\textit{a priori} known. A first answer to such questions can be given
confining the analysis to the small Knudsen number regime (i.e., K$_{\text{n}%
}\rightarrow 0$ in (\ref{boltzmann})), where the particles undergo a large
number of collision per unit time and a hydrodynamic regime is established.
In this case, it can be formally shown, using expansion techniques \cite%
{C,CC}, that the lowest order in K$_{\text{n}}$  is a  Maxwellian local equilibrium
and the evolution is ruled by macroscopic equations, such as the
Navier-Stokes equations. In particular, the heat flux vector $q$ turns out to be
proportional to the gradient of temperature, as predicted by the Fourier law
\begin{equation}\label{fourierlaw1} q=-\kappa(\theta)\nabla_x \theta\end{equation}
with the heat conductivity  $\kappa (\theta )$ depending on the interaction potential. This was first obtained by Maxwell and Boltzmann \cite{M, B}
which relates the macroscopic heat flow to the microscopic potential of
interaction between the molecules. The rigorous proof of such a statement
was given in \cite{ELM1,ELM2} in the case of the slab geometry and provided
that $\theta _{+}-\theta _{-}$ is sufficiently small (uniformly in the
Knudsen number K$_{\text{n}}$). This is a special case of a problem which
has received recently a large attention in the Statistical Mechanics
community, the derivation of the Fourier law from the microscopic deterministic evolutions
ruled by the Newton or Schr\"{o}dinger equation or from stochastic models \cite{BLR, O, BOS, ALS}. 

The aim of this paper is to analyze the thermal conduction phenomena in the
kinetic regime. This problem was studied in the slab geometry, for small Knudsen numbers in \cite{ELM1,ELM2} and for large Knudsen number (K$_{\text{n}}\rightarrow \infty $) in \cite{Yu}.
Here we are interested in a general domain and in a regime where  the Knudsen number K$_{\text{n}}$ is neither
small nor large. In this regime, the only construction of solutions to (\ref%
{boltzmann}) and (\ref{fdiffuse}) we are aware of,  was achieved in \cite{AN} in a slab for
large $\theta _{+}-\theta _{-}$, with $L^1$ techniques closer in the spirit to the DiPerna-Lions renormalized
solutions \cite{DL, DL2} (see also \cite{V} and references quoted therein). However, the uniqueness and stability of such $L^{1}$
solutions are unknown. To develop  the quantitative analysis we have in mind,
the theory of the solutions close to the equilibrium  (\cite{U1,U2,CIP,Guo08}) is better suited. For this reason we confine ourselves to
temperature profiles at the boundary which do not oscillate too much. More
precisely, we will assume that the temperature $\theta (x)$ on $\partial
\Omega $ is given by $\theta (x)=\theta _{0}+\delta \vartheta (x)$ with $%
\delta $ a small parameter and $\vartheta (x)$ a prescribed bounded function
on $\partial \Omega $ such that
\begin{equation*}
\sup_{x\in \partial \Omega }|\vartheta (x)|\leq 1.
\end{equation*}%
This will allow us to use perturbation arguments in the neighborhood of the
equilibrium at the\textit{\ uniform }temperature $\theta _{0}$.

We shall consider the Boltzmann equation
\begin{equation}
\partial _{t}F+v\cdot \nabla F=\frac{1}{\text{K}_{\text{n}}}Q(F,F),
\label{boltz}
\end{equation}%
with $F(t,x,v)$ the probability density that a particle of the gas at time $t
$ is in a small cell of the phase space $\Omega \times \mathbf{R}^{3}$
centered at $(x,v)$.  Here $\Omega $ is a bounded domain in $\mathbf{R}^{d}$,
$d=1,2,3$ with a smooth boundary $\partial \Omega $.  The function $F$ is
required to be a positive function on $\Omega \times \mathbf{R}^{3}$ such
that $\iint_{\Omega \times \mathbf{R}^{3}}Fdxdv$ is fixed for any $t$.  The right
hand side of (\ref{boltz}), $Q(F,G)$, is the Boltzmann collision operator
(non-symmetric)
\begin{eqnarray}
Q(F,G) &=&\int_{\mathbf{R}^{3}}dv_{\ast }\int_{\mathbf{S}^{2}}d\omega
B(v-v_{\ast },\omega )F(v_{\ast }^{\prime })G(v^{\prime })  \notag \\
&&-\int_{\mathbf{R}^{3}}dv_{\ast }\int_{\mathbf{S}^{2}}d\omega B(v-v_{\ast
},\omega )F(v_{\ast })G(v)  \notag \\
&\equiv &Q_{\text{gain}}(F,G)-Q_{\text{loss}}(F,G),  \label{QFG}
\end{eqnarray}%
where $B(v,\omega )=|v|^{\gamma }q_{0}\left( \omega \cdot \frac{v}{|v|}%
\right) $ with $0\leq \gamma \leq 1$ (hard potential), $0\leq
q_{0}(\omega \cdot \frac{v}{|v|})\leq C|\omega \cdot \frac{v}{|v|}|$
(angular cutoff) is the collision cross section and $v^{\prime },v_{\ast
}^{\prime }$ are the incoming velocities in a binary elastic collision with
outgoing velocities $v,v_{\ast }$ and impact parameter $\omega $:
\begin{equation}
v^{\prime }=v-\omega \lbrack (v-v_{\ast })\cdot \omega ],\quad v_{\ast
}^{\prime }=v_{\ast }+\omega \lbrack (v-v_{\ast })\cdot \omega ].
\label{elast}
\end{equation}%

The contact of the gas with thermal reservoirs is described by suitable
boundary conditions. We confine ourselves to the simplest interesting case
of the diffuse reflection (\ref{fdiffuse}), although more general boundary
data could be studied \cite{CIP}. On $\partial \Omega $ (supposed to be a $%
C^{1}$-smooth surface with external normal $n(x)$ well defined in each point
$x\in \partial \Omega $) we assume the condition: 
\begin{equation}
F(t,x,v)=\mu _{\theta }(x,v)\int_{n(x)\cdot v^{\prime }>0}F(t,x,v^{\prime
})\{n(x)\cdot v^{\prime }\}dv^{\prime },  \label{diffbc}
\end{equation}%
for $x\in \partial \Omega $ and $n(x)\cdot v<0$, where $\mu _{\theta }$ is the
Maxwellian at temperature $\theta $,
\begin{equation}
\mu _{\theta }(x,v)=\frac{1}{2\pi \theta ^{2}(x)}\exp \left[ -\frac{|v|^{2}}{%
2\theta (x)}\right] ,  \label{maxw}
\end{equation}%
normalized so that
\begin{equation}
\int_{n(x)\cdot v>0}\mu _{\theta }(x,v)\{n(x)\cdot v\}dv=1.  \label{norm}
\end{equation}

Throughout this paper, $\Omega $ is a connected and bounded domain in $%
\mathbf{R}^{d}$, for $d=1,2,3$ and the velocity $\bar{v}\in \mathbf{R}^{d}$
and $\hat{v}\in \mathbf{R}^{3-d}$ such that
\begin{equation}
v=(v_{1},\cdots ,v_{d},v_{d+1},\cdots ,v_{3})=(\bar{v},\hat{v}).  \label{v}
\end{equation}%
We denote the phase boundary in the phase space $\Omega \times \mathbf{R}^{3}
$ as $\gamma =\partial \Omega \times \mathbf{R}^{3}$, and split it into the
outgoing boundary $\gamma _{+}$, the incoming boundary $\gamma _{-}$, and
the grazing boundary $\gamma _{0}$:
\begin{eqnarray*}
\gamma _{+} &=&\{(x,v)\in \partial \Omega \times \mathbf{R}^{3}\ :\
n(x)\cdot v>0\}, \\
\gamma _{-} &=&\{(x,v)\in \partial \Omega \times \mathbf{R}^{3}\ :\
n(x)\cdot v<0\}, \\
\gamma _{0} &=&\{(x,v)\in \partial \Omega \times \mathbf{R}^{3}\ :\
n(x)\cdot v=0\}.
\end{eqnarray*}%
The backward \textit{exit time} $t_{\mathbf{b}}(x,v)$ is defined for $%
(x,v)\in \overline{\Omega}\times \mathbf{R}^{3}$
\begin{equation}
t_{\mathbf{b}}(x,v)=\inf \{\ t\geq 0:x-t\bar{v}\in \partial \Omega \},
\label{backexit}
\end{equation}%
and $x_{\mathbf{b}}(x,v)=x-t_{\mathbf{b}}(x,v)\bar{v}\in \partial \Omega $.
Furthermore, we define the singular grazing boundary $\gamma _{0}^{\mathbf{S}%
}$, a subset of $\gamma _{0}$, as:
\begin{equation}
\gamma _{0}^{\mathbf{S}}\ =\ \{(x,v)\in \gamma _{0}\ :\ t_{\mathbf{b}%
}(x,-v)\neq 0\ \ \text{and }\ t_{\mathbf{b}}(x,v)\neq 0\},  \label{graze}
\end{equation}%
and the discontinuity set in $\overline{\Omega}\times \mathbf{R}^{3}$:
\begin{equation}
\mathfrak{D}=\gamma _{0}\cup \{(x,v)\in \overline{\Omega}\times \mathbf{R}^{3}\
:\ (x_{\mathbf{b}}(x,v),v)\in \gamma _{0}^{\mathbf{S}}\}.  \label{Dset}
\end{equation}
We will use the short notation $\mu _{\delta }$ for the Maxwellian
\begin{equation}
\mu _{\delta }(x,v)=\mu _{\theta _{0}+\delta \vartheta (x)}(v)=\frac{1}{2\pi
\lbrack \theta _{0}+\delta \vartheta (x)]^{2}}\exp \left[ -\frac{|v|^{2}}{%
2[\theta _{0}+\delta \vartheta (x)]}\right] .  \label{mudelta}
\end{equation}%
Moreover, to denote the global Maxwellian at temperature $\theta_0$, $\mu _{\theta _{0}}$, we will simply use the symbol $\mu $:
\begin{equation*}
\mu \equiv \mu _{\theta _{0}}. 
\end{equation*}%
Since (\ref{norm}) is valid for all $\delta $, we have
\begin{equation}
\int_{n(x)\cdot v>0}\mu(v) \{n(x)\cdot v\}dv=1.  \label{munorm}
\end{equation}
We denote by $L$ the standard linearized Boltzmann operator
\begin{eqnarray}
Lf &=&-\frac{1}{\sqrt{\mu }}[Q(\mu ,\sqrt{\mu }f)+Q(\sqrt{\mu }f,\mu )]=\nu
(v)f-Kf  \notag \\
&=&\nu (v)f-\int_{\mathbf{R}^{3}}\mathbf{k}(v,v_*)f(v_*)dv_*,  \label{linBol}
\end{eqnarray}%
with the collision frequency $\nu (v)\equiv \iint_{\mathbf{R}^{3}\times
\mathbf{S}^{2}}B(v-v_{\ast },\omega )\mu (v_{\ast })d\omega dv_{\ast }\sim
\{1+|v|\}^{\gamma }$ for $0\leq \gamma \leq 1$.  Moreover, we set
\begin{equation}
\Gamma (f_{1},f_{2})=\frac{1}{\sqrt{\mu }}Q(\sqrt{\mu }f_{1},\sqrt{\mu }%
f_{2})\equiv \Gamma _{\text{gain}}(f_{1},f_{2})-\Gamma _{\text{loss}%
}(f_{1},f_{2}).\label{nonBol}
\end{equation}%
Finally, we define
\begin{equation}
P_{\gamma }f(x,v)=\sqrt{\mu (v)}\int_{n(x)\cdot v^{\prime }>0}f(x,v^{\prime
})\sqrt{\mu (v^{\prime })}(n(x)\cdot v^{\prime })dv^{\prime }.
\label{pgamma}
\end{equation}
Thanks to (\ref{munorm}), $P_{\gamma
}f $, viewed as function on $\{v\in \mathbf{R}^3\ |\ v\cdot n(x)>0\}$ for any fixed $x\in \partial\Omega$,  is a $L_{v}^{2}$-projection with respect to the measure $|n(x)\cdot v|$
for any boundary function $f$ defined on $\gamma _{+}$.

We denote $\Vert \,\cdot \,\Vert _{\infty }$ either the $L^{\infty
}(\Omega \times \mathbf{R}^{3})-$norm or the $L^{\infty }(\Omega )-$norm in the bulk,
while $|\,\cdot \,|_{\infty }$ is either the $L^{\infty }(\partial \Omega
\times \mathbf{R}^{3})-$norm or the $L^{\infty }(\partial \Omega )-$norm at the boundary.
Also we adopt the Vinogradov notation: $X\lesssim Y$ is equivalent to $|X|\leq CY$
where $C$ is a constant not depending on $X$ and $Y$.  We subscript this to
denote dependence on parameters, thus $X\lesssim _{\alpha }Y$ means $|X|\leq
C_{\alpha }Y$.  Denote $\langle v\rangle =\sqrt{1+|v|^{2}}$.  Our main results
are as follows.

\begin{theorem}
\label{main1} There exists $\delta _{0}>0$ such that for $%
0<\delta <\delta _{0}$ in (\ref{mudelta}) and for all $M>0$, there exists a non-negative
solution $F_{s}=M\mu +\sqrt{\mu }f_{s}\geq 0$ with $\iint_{\Omega \times
\mathbf{R}^{3}}f_{s}\sqrt{\mu }dxdv=0$ to the steady problem (\ref{boltzmann}%
) and (\ref{fdiffuse}) such that for all $0\leqq \zeta <\frac{1}{4+2\delta }%
,\ \beta >4$,
\begin{equation}
\|\langle v\rangle ^{\beta }e^{\zeta |v|^{2}}f_{s}\|_{\infty }+|\langle
v\rangle ^{\beta }e^{\zeta |v|^{2}}f_{s}|_{\infty }\lesssim \delta .
\label{inftymain}
\end{equation}%
If $M\mu +\sqrt{\mu }g_{s}$ with $\iint_{\Omega \times \mathbf{R}^{3}}g_{s}%
\sqrt{\mu }dxdv=0$ is another solution such that, for $\beta >4$
\begin{equation}
\|\langle v\rangle ^{\beta }g_{s}\|_{\infty }+|\langle v\rangle ^{\beta
}g_{s}|_{\infty }\ll 1,  \notag
\end{equation}%
then $f_{s}\ \equiv \ g_{s}$.  Furthermore, if $\theta (x)$ is continuous on $%
\partial \Omega $ then $F_{s}$ is continuous away from $\mathfrak{D}$.  In
particular, if $\Omega $ is convex then $\mathfrak{D}=\gamma _{0}$.  On the
other hand, if $\Omega $ is not convex, we can construct a continuous
function $\vartheta (x)$ on $\partial \Omega $ in (\ref{temper}) with $%
|\vartheta |_{\infty }\leq 1$ such that the corresponding solution $F_{s}$
in not continuous.
\end{theorem}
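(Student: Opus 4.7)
The plan is to write $F_s=M\mu+\sqrt{\mu}f_s$ and treat $f_s$ as an $O(\delta)$ perturbation, producing it as the fixed point of an iteration built on the coercivity estimates for the linearized problem advertised in the abstract. Substituting the ansatz into (\ref{boltzmann})--(\ref{fdiffuse}) gives
\begin{equation*}
v\cdot\nabla_x f_s+\tfrac{1}{\text{K}_{\text{n}}}Lf_s=\tfrac{1}{\text{K}_{\text{n}}}\Gamma(f_s,f_s),
\end{equation*}
with boundary condition, for $(x,v)\in\gamma_-$,
\begin{equation*}
f_s(x,v)=P_\gamma f_s(x,v)+\frac{\mu_\delta(x,v)-\mu(v)}{\sqrt{\mu(v)}}\!\!\int_{n(x)\cdot v'>0}\!\!\sqrt{\mu(v')}\,f_s(x,v')(n(x)\cdot v')dv'+M\,\frac{\mu_\delta-\mu}{\sqrt{\mu}}.
\end{equation*}
The inhomogeneous term $M(\mu_\delta-\mu)/\sqrt{\mu}$ is $O(\delta)$ by Taylor expansion of $\mu_\theta$ at $\theta_0$, supplying the forcing that drives $f_s$ away from zero at scale $\delta$. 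Throughout we enforce the zero-mass constraint $\iint f_s\sqrt{\mu}=0$; this is consistent with the boundary condition and selects the unique element of $\ker L$ compatible with the prescribed total mass $M$.

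For existence, I would set up a Picard iteration $f^{n+1}$ solving the linear problem with right-hand side $\Gamma(f^n,f^n)$ and the non-isothermal boundary data displayed above with $f^{n}$ on the right. The linear step is controlled by the paper's constructive coercivity, which I expect to yield $\|\langle v\rangle^\beta e^{\zeta|v|^2}f^{n+1}\|_\infty+|\langle v\rangle^\beta e^{\zeta|v|^2}f^{n+1}|_\infty\lesssim \delta+\|\langle v\rangle^\beta\Gamma(f^n,f^n)\|_\infty+\delta\,|\langle v\rangle^\beta f^{n}|_\infty$. Because $\Gamma$ is bilinear with the standard weighted $L^\infty$ bound $\|\Gamma(f,g)\|_\infty\lesssim\|f\|_\infty\|g\|_\infty$ (for the weights in (\ref{inftymain}), using $\zeta<\tfrac{1}{4+2\delta}$ to avoid saturating the Gaussian decay of $\mu$), the map is a contraction on the ball of radius $C\delta$ for $\delta$ small, producing a fixed point $f_s$ satisfying (\ref{inftymain}). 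Positivity $F_s\ge 0$ would then be recovered by a separate maximum-principle or dynamical stability argument: one solves the time-dependent problem with initial data $M\mu$ (which is positive), shows $F(t)\to F_s$ exponentially via the dynamic coercivity estimate, and positivity is preserved along the flow.

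Uniqueness among small solutions is standard linearization: $h=f_s-g_s$ solves a linear equation whose nonlinear contribution is $\Gamma(f_s,h)+\Gamma(h,g_s)$, and whose boundary inhomogeneity is an $O(\delta)$ multiple of $h$ itself. Feeding the coercivity estimate to $h$ and absorbing the nonlinear and boundary terms using $\|f_s\|_\infty+\|g_s\|_\infty\ll 1$ gives $\|h\|_\infty\le \tfrac{1}{2}\|h\|_\infty$, hence $h\equiv 0$. For continuity away from $\mathfrak{D}$, I would use the Duhamel representation of $f_s$ along backward characteristics $x-s\bar v$: for $(x,v)\notin\mathfrak{D}$ the backward trajectory reaches $\partial\Omega$ at a non-grazing point and, after one reflection, the integral $\int_{n(x_{\mathbf{b}})\cdot v'>0}(\cdots)dv'$ smooths the incoming-velocity dependence; iterating the representation and using continuity of $\theta$ propagates continuity from the boundary into the interior. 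Points in $\mathfrak{D}$ are precisely the obstruction, since there the characteristic limits on both sides of a grazing ray can emanate from boundary patches with distinct $\theta$ values.

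The main obstacle I anticipate is the \emph{constructive coercivity} for the non-isothermal linearized problem: the boundary operator is no longer the $L^2(|n\cdot v|)$-projection $P_\gamma$ but an $O(\delta)$ perturbation of it, so the usual isothermal $L^2$ orthogonality used to subtract the null-space piece of $L$ is slightly lost. One must combine an $L^2$ hypocoercivity argument (accounting carefully for the $O(\delta)$ boundary correction and using the zero-mass normalization to remove the hydrodynamic null modes) with a Vidav--Guo-type bootstrap to $L^\infty$ via double iteration of the characteristics-plus-reflection formula. The final non-convex discontinuity example would be explicit: in a domain with a concave arc, pick a grazing chord connecting two boundary points, prescribe $\vartheta$ taking two different values on either endpoint neighborhood, and use the Duhamel formula to exhibit distinct left and right limits of $F_s$ across the singular characteristic.
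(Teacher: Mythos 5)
Your proposal follows essentially the same route as the paper: a fixed-point iteration around $M\mu$ using the weighted $L^\infty$ linear estimate (Proposition \ref{linfty}) with the $O(\delta)$ non-isothermal boundary terms as the inhomogeneity $r$, positivity via the dynamical asymptotic stability of Section~7 (indeed a direct positivity-preserving stationary iteration fails because it destroys the zero-mass constraint), continuity away from $\mathfrak{D}$ from the $L^\infty$ convergence of the iterates along characteristics (plus the results of \cite{Kim}), and the non-convex discontinuity example by comparing the boundary-condition evaluation $\mathbf{I}$ with the stochastic-cycle/Duhamel evaluation $\mathbf{II}$ of $F_s$ at a singular grazing point $(x_0,v)\in\gamma_0^{\mathbf{S}}$ for a $\vartheta$ taking distinct values at $x_0$ and at the other endpoint $x_1$ of the grazing chord. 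The one point you leave implicit is that the constraint $\iint f\sqrt{\mu}=0$ (indispensable for the $a$-estimate in Lemma \ref{steadyabc} through the Neumann choice of $\phi_a$) is not automatically preserved by the naive linear step, which is precisely why the paper introduces the penalization $\varepsilon f^{l+1}$ in the approximating scheme and then sends $\varepsilon\to 0$; this is the concrete realization of the ``constructive coercivity'' obstacle you correctly flag.
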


We stress that the solution $F_s$ is a genuine non equilibrium steady solution. Indeed, it is not a local Maxwellian because it would not satisfy equation (\ref{boltzmann}),  nor a global Maxwellian because it would not satisfy the boundary condition  (\ref{fdiffuse}). 

We also remark that in addition to the wellposedness property, the continuity
property in this theorem is the first step to understand higher regularity
of $F_{s}$ in a convex domain. The robust $L^{\infty }$ estimates used in
the proof enable us to establish the following $\delta -$expansion of $F_{s}$%
, which is crucial for deriving the necessary condition of the Fourier law (%
\ref{fourierlaw1}). In the rest of this paper we assume the normalization $M=1$.

\begin{theorem}
\label{main2} Let
\begin{equation}
\mu _{\delta }=\mu +\delta \mu _{1}+\delta ^{2}\mu _{2}+\cdots ,
\end{equation}
with $\int \mu _{i}d\gamma =0$ for all $i$ from (\ref{norm}), and set
\begin{equation*}
F_{s}=\mu +\sqrt{\mu }f_{s}.
\end{equation*}
Then there exist $f_{1},f_{2},...,f_{m-1}$ with $\|\langle v\rangle ^{\beta
}e^{\zeta |v|^{2}}f_{i}\|_{\infty }\lesssim 1$, for $0\leqq \zeta <\frac{1}{4
},\beta >4$, such that the following $\delta$-expansion is valid
\begin{equation*}
f_{s}=\delta f_{1}+\delta ^{2}f_{2}+\cdots +\delta ^{m}f_{m}^{\delta },
\end{equation*}
with $\|\langle v\rangle ^{\beta }e^{\zeta |v|^{2}}f_{m}^{\delta }\|_{\infty
}\lesssim 1$ for $0\leqq \zeta <\frac{1}{4+2\delta },\ \beta >4$.  In
particular, $f_{1}$ satisfies
\begin{eqnarray}
v\cdot \nabla _{x}f_{1}+\frac{1}{\text{K}_{\text{n}}}Lf_{1} &=&0,  \label{f1}
\\
f_{1}|_{\gamma _{-}} &=&\sqrt{\mu (v)}\int_{n(x)\cdot v^{\prime
}>0}f_{1}(x,v^{\prime })\sqrt{\mu (v^{\prime })}\{n(x)\cdot v^{\prime
}\}dv^{\prime }+\frac{\mu _{1}}{\sqrt{\mu }}.  \notag
\end{eqnarray}
\end{theorem}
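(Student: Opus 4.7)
The plan is to derive a $\delta$-hierarchy by formal expansion, solve each linearized layer by the constructive steady theory, and close the series by applying the nonlinear machinery of Theorem \ref{main1} to the remainder. Taylor expanding $\mu_\delta = \mu_{\theta_0+\delta\vartheta(x)}$ in $\delta$ produces explicit coefficients $\mu_i$, and differentiating the normalization (\ref{norm}) in $\delta$ gives $\int\mu_i\,d\gamma=0$ for all $i\geq 1$. With $F_s=\mu+\sqrt{\mu}f_s$ and $Q(\mu,\mu)=0$, the bulk steady equation reduces to
$$v\cdot\nabla_x f_s + \frac{1}{\text{K}_{\text{n}}}Lf_s = \frac{1}{\text{K}_{\text{n}}}\Gamma(f_s,f_s),$$
while (\ref{munorm}) rewrites the diffuse boundary condition as
$$f_s|_{\gamma_-} = P_\gamma f_s + \frac{\mu_\delta-\mu}{\sqrt{\mu}}\Big(1+\int_{n\cdot v'>0}\sqrt{\mu(v')}\,f_s(x,v')(v'\cdot n)\,dv'\Big).$$
Plugging $f_s=\sum_{k\geq 1}\delta^k f_k$ into these two relations and matching powers yields (\ref{f1}) at order $\delta$, and for each $k\geq 2$ a linear problem for $f_k$ with bulk source $\text{K}_{\text{n}}^{-1}\sum_{i+j=k,\,i,j\geq 1}\Gamma(f_i,f_j)$ and boundary data $\mu_k/\sqrt{\mu}+\sum_{i+j=k,\,i,j\geq 1}(\mu_i/\sqrt{\mu})\int\sqrt{\mu}\,f_j(v'\cdot n)\,dv'$, depending only on the previously built $f_1,\dots,f_{k-1}$.

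For each $k=1,\dots,m-1$ I would solve this linear steady diffuse-reflection problem inductively, using the linearized version of the constructive coercivity/$L^\infty$ scheme that underlies Theorem \ref{main1}. The solvability condition (obtained by testing against $\sqrt{\mu}$) is supplied at order $1$ by $\int\mu_1\,d\gamma=0$ and at higher orders by a combination of $\int\mu_k\,d\gamma=0$ with the already imposed normalizations $\iint f_j\sqrt{\mu}\,dx\,dv=0$ for $j<k$; together with the mass-zero condition $\iint f_k\sqrt{\mu}\,dx\,dv=0$, this pins $f_k$ uniquely. Because $\mu_i/\sqrt{\mu}$ carries the sharp Gaussian decay $e^{-|v|^2/(4\theta_0)}$ and $\Gamma(\cdot,\cdot)$ preserves such decay, the weighted bound $\|\langle v\rangle^\beta e^{\zeta|v|^2}f_k\|_\infty\lesssim 1$ for every $\zeta<1/4$ and $\beta>4$ propagates from step $k-1$ to step $k$.

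For the remainder, set $f^{\text{app}}:=\sum_{k=1}^{m-1}\delta^k f_k$ and define $f_m^\delta$ through $f_s=f^{\text{app}}+\delta^m f_m^\delta$. Subtracting the first $m-1$ hierarchy equations from the full equation for $f_s$ and dividing by $\delta^m$ gives a problem of the schematic form
$$v\cdot\nabla_x f_m^\delta + \frac{1}{\text{K}_{\text{n}}}Lf_m^\delta - \frac{2}{\text{K}_{\text{n}}}\Gamma(f^{\text{app}},f_m^\delta) - \frac{\delta^m}{\text{K}_{\text{n}}}\Gamma(f_m^\delta,f_m^\delta) = \mathcal{R}_m,$$
with a corresponding inhomogeneous diffuse boundary condition on $\gamma_-$ and a deterministic residual $\mathcal{R}_m$ bounded uniformly in $\delta$. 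Since $f^{\text{app}}=O(\delta)$ and the pure nonlinearity carries the prefactor $\delta^m$, this is a small perturbation of the linearized steady operator; existence, uniqueness and the estimate $\|\langle v\rangle^\beta e^{\zeta|v|^2}f_m^\delta\|_\infty\lesssim 1$ for $\zeta<1/(4+2\delta),\ \beta>4$ then follow by applying the same fixed-point construction used to prove Theorem \ref{main1} directly to this perturbed problem.

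The main obstacle is the careful bookkeeping of the weighted $L^\infty$ norms. The strong weight $\zeta<1/4$ is available for $f_1,\dots,f_{m-1}$ precisely because every source in their linear problems is composed of $\sqrt{\mu}$-factored polynomials. The remainder problem inevitably involves $\mu_\delta/\sqrt{\mu}$ on $\gamma_-$, whose Gaussian rate (uniform in $|\vartheta|\leq 1$) only allows $\zeta<1/(4+2\delta)$; this matches the threshold already appearing in Theorem \ref{main1}. Verifying that all coercivity constants in the scheme of Theorem \ref{main1} stay uniformly bounded when the linearized operator is additionally perturbed by the small term $\Gamma(f^{\text{app}},\cdot)$ is the technical heart of the argument.
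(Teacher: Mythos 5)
Your proposal matches the paper's proof: both derive the $\delta$-hierarchy by plugging the formal expansion into (\ref{boltzmann})--(\ref{fdiffuse}) and matching powers of $\delta$, construct $f_1,\dots,f_{m-1}$ inductively by repeated application of Proposition \ref{linfty} (the $\zeta<1/4$ threshold holding because each $\mu_i/\sqrt{\mu}$ and each $\Gamma(f_j,f_{i-j})$ carries $e^{-|v|^2/4}$ decay), and then control $f_m^\delta$ through the remainder equation by the same smallness/fixed-point argument as in Theorem \ref{main1}, with the degraded threshold $\zeta<1/(4+2\delta)$ traced precisely to the factor $(\mu_\delta-\mu)/\sqrt{\mu}$ in the boundary condition. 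The only minor slip is writing $2\Gamma(f^{\text{app}},f_m^\delta)$ where the non-symmetric $\Gamma$ requires $\Gamma(f^{\text{app}},f_m^\delta)+\Gamma(f_m^\delta,f^{\text{app}})$, but you flag the display as schematic, so this is cosmetic.
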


We have the following dynamical stability result:

\begin{theorem}
\label{main3} For every fixed $0\leqq \zeta <\frac{1}{4+2\delta },\ \beta >4$
, there exist $\lambda >0$ and $\varepsilon _{0}>0$, depending on $\delta _{0}
$, such that if
\begin{equation}
\|\langle v\rangle ^{\beta }e^{\zeta |v|^{2}}[f(0)-f_{s}]\|_{\infty
}+|\langle v\rangle ^{\beta }e^{\zeta |v|^{2}}[f(0)-f_{s}]|_{\infty }\leq
\varepsilon _{0},\label{epsilon0}
\end{equation}
then there exists a unique non-negative solution {$F(t)=\mu +
\sqrt{\mu }f(t)\geq 0$} to the dynamical problem (\ref{boltz}) and (\ref
{diffbc}) such that
\begin{eqnarray*}
&&\ \ \|\langle v\rangle ^{\beta }e^{\zeta |v|^{2}}[f(t)-f_{s}]\|_{\infty
}+|\langle v\rangle ^{\beta }e^{\zeta |v|^{2}}[f(t)-f_{s}]|_{\infty } \\
&&\ \lesssim \ e^{-\lambda t}\big\{\|\langle v\rangle ^{\beta }e^{\zeta
|v|^{2}}[f(0)-f_{s}]\|_{\infty }+|\langle v\rangle ^{\beta }e^{\zeta
|v|^{2}}[f(0)-f_{s}]|_{\infty }\big\}.
\end{eqnarray*}%
If the domain is convex, $\theta (x)$ is continuous on $\partial \Omega $ and moreover
 $F_{0}(x,v)$ is continuous away from $\gamma _{0}$ and satisfies the
compatibility condition
\begin{equation}
F_{0}(x,v)=\mu ^{\theta }(x,v)\int_{n(x)\cdot v^{\prime
}>0}F_{0}(x,v^{\prime })(n(x)\cdot v^{\prime })dv^{\prime },  \notag
\end{equation}%
then $F(t,x,v)$ is continuous away from $\gamma _{0}$.
\end{theorem}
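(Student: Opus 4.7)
My plan is to set $g(t) := f(t) - f_s$ and prove exponential decay of $g$ in the weighted $L^\infty$ norm, both in the bulk and on the boundary, by combining a dynamic coercivity estimate at the $L^2$ level with an $L^2$--$L^\infty$ bootstrap along characteristics, mirroring the steady construction of $f_s$ in Theorem \ref{main1}. Subtracting the steady equation from the evolution equation and writing the collision nonlinearity in bilinear form yields
\begin{equation*}
\partial_t g + v\cdot \nabla_x g + \tfrac{1}{\text{K}_\text{n}} L g \;=\; \tfrac{1}{\text{K}_\text{n}}\bigl[\Gamma(f_s,g)+\Gamma(g,f_s)+\Gamma(g,g)\bigr],
\end{equation*}
and, since $F$ and $F_s$ satisfy the same diffuse boundary condition, $g$ obeys
\begin{equation*}
g|_{\gamma_-}(x,v) \;=\; \tfrac{\mu_\delta(x,v)}{\sqrt{\mu(v)}}\int_{n(x)\cdot v'>0}\sqrt{\mu(v')}\,g(x,v')\,\{n(x)\cdot v'\}\,dv'.
\end{equation*}
Because $\mu_\delta/\sqrt{\mu} = \sqrt{\mu} + O(\delta)$, this boundary operator is an $O(\delta)$ perturbation of $P_\gamma$; because $\|\langle v\rangle^\beta e^{\zeta|v|^2}f_s\|_\infty \lesssim \delta$ from Theorem \ref{main1}, the terms $\Gamma(f_s,g),\Gamma(g,f_s)$ act as $O(\delta)$ linear perturbations of $L$; and mass conservation applied to both $F$ and $F_s$ propagates $\iint g(t)\sqrt{\mu}\,dx\,dv \equiv 0$.

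Next I would establish $L^2$ exponential decay. Testing the equation against $g$ yields the standard interior dissipation $\tfrac{1}{\text{K}_\text{n}}\langle Lg,g\rangle \gtrsim \|(I-P)g\|_\nu^2$ together with a boundary contribution controlling $|(I-P_\gamma)g|_{2,+}^2$ up to an $O(\delta)|g|_{2,+}^2$ error coming from $\mu_\delta-\mu$. The hydrodynamic part $Pg$ is not directly dissipated; its control is precisely the role of the dynamic constructive coercivity estimate announced in the abstract, which bounds $\|Pg\|_2$ by a combination of $\|(I-P)g\|_\nu$, $|(I-P_\gamma)g|_{2,+}$, a boundary deviation of order $\delta$, and a quadratic remainder, using the zero-mass normalization to remove the one-dimensional kernel of the full problem. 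Combining this with the dissipation and Gronwall gives $\|g(t)\|_2 \lesssim e^{-\lambda_2 t}\|g(0)\|_2$ for some $\lambda_2>0$, provided $\delta_0$ is small enough that the boundary error is absorbed.

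To upgrade to $L^\infty$ I would write $g$ in mild form along backward trajectories, iterating through the diffuse reflections a bounded number of times $k$. As in the steady analysis, a small-measure-of-near-grazing-velocities argument splits the $K$-integrand: on trajectories that stay away from $\gamma_0^{\mathbf{S}}$ the contribution is bounded by $\|g\|_{L^2}$, which decays exponentially by the previous step, while the near-grazing part is bounded by $Ce^{-\lambda_0 t/\text{K}_\text{n}}\|\langle v\rangle^\beta e^{\zeta|v|^2}g(0)\|_\infty$ with a small prefactor. Running a standard bootstrap on the quantity $\sup_{0\leq s\leq t} e^{\lambda s}\|\langle v\rangle^\beta e^{\zeta|v|^2}g(s)\|_\infty$ closes the estimate for $\varepsilon_0$ sufficiently small, absorbing the quadratic $\Gamma(g,g)$ term. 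The same $L^\infty$ bound, applied to the trace $g|_{\gamma_+}$ and transported by the diffuse operator, yields the boundary part of the decay. Positivity of $F(t)$ follows from a comparison argument using the positivity of the diffuse boundary kernel and the positivity of the gain term in a time-discretized approximation, and uniqueness is a direct consequence of the contraction structure. In the convex case, continuity is transported along the back-exit map $(x,v)\mapsto (x_\mathbf{b}(x,v),v)$, which is continuous off $\gamma_0$; the gain operator is smoothing in $v$, the diffuse boundary integrals depend continuously on $x$ through $\mu_\delta$ and the continuous temperature $\theta(x)$, and the compatibility condition at $t=0$ prevents the creation of a jump across $\gamma_-$.

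The hard part is the dynamic coercivity inequality in the second step: unlike in the periodic or pure-specular setting, here the only boundary dissipation is through $I-P_\gamma$, and the non-constant wall temperature couples the hydrodynamic modes of $g$ to the boundary profile $\theta_0+\delta\vartheta$ through the deviation $\mu_\delta-\mu$; one has to show that this coupling enters only at order $\delta$ and can therefore be absorbed using the smallness of $\delta$ from Theorem \ref{main1}. Once this estimate is secured, the remainder is a by now standard, but still delicate, $L^2$--$L^\infty$ bootstrap with the weighted exponential norms $\langle v\rangle^\beta e^{\zeta|v|^2}$, the admissible range $\zeta<\tfrac{1}{4+2\delta}$ being dictated by the need that this weight be absorbed by the Maxwellian $\mu_\delta$ appearing in the diffuse boundary integral.
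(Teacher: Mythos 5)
Your plan matches the paper's own proof essentially step by step: the equation and boundary condition for $g=f-f_s$ are exactly right, the $L^2$ decay via the dynamic constructive coercivity estimate (Lemma \ref{dabc}, Proposition \ref{dlinearl2}) combined with the zero total mass of $g$, the $L^\infty$ upgrade by iterating the mild formulation along stochastic cycles (Lemma \ref{iterationlinfty}, Proposition \ref{dlinearlinfty}), the absorption of $\Gamma(f_s,\cdot)+\Gamma(\cdot,f_s)$ and of the $\mu_\delta-\mu$ boundary correction using $\delta\ll1$, and the absorption of $\Gamma(g,g)$ using $\varepsilon_0\ll1$ — all of this is what the paper does. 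The continuity argument in the convex case and the observation about the admissible range of $\zeta$ also coincide with the paper.

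The one place you gloss over a genuine subtlety is the positivity of $F(t)$. The phrase ``comparison argument'' is not what works here: the paper replaces the contracting iteration used for wellposedness with a separate, \emph{positivity-preserving} scheme of the form
\begin{equation*}
\bigl(\partial_t + v\cdot\nabla_x + \nu(F^\ell)\bigr)F^{\ell+1}=Q_{\text{gain}}(F^\ell,F^\ell),\qquad F^{\ell+1}|_{\gamma_-}=\mu_\delta\!\int_{\gamma_+}F^\ell\,(n\cdot v)\,dv,
\end{equation*}
and the nontrivial issue is convergence of \emph{that} scheme. As the paper emphasizes, this iteration does not preserve the zero-mass constraint that underlies the coercivity estimate (this is precisely why the same scheme fails outright in the steady construction, so that $F_s\geq0$ can only be proved dynamically). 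The convergence is therefore established on a short time interval $[0,t_0]$, exploiting the $L^\infty$ smallness that the stability estimate already gives, and then propagated in time; finally one takes $t\to\infty$ and uses $F(t)\to F_s$ to conclude $F_s\geq0$. If you flesh out your sketch, make sure the positivity paragraph does the short-time convergence argument rather than a comparison/subsolution argument, and records explicitly why this scheme cannot be run for the steady problem.
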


The asymptotic stability of $F_{s}$ further justifies the physical importance
of such a steady state solution. We remember that, when $\delta >0$, then $F_{s}
$ fails to be a global Maxwellian or even a local Maxwellian, and its stability analysis marks a
drastic departure from relative entropy approach (e.g. \cite{DV}). Moreover,
such an asymptotic stability plays a crucial role in our proof of
non-negativity of $F_{s}$.

\vskip .2cm
An important consequence of Theorem \ref{main2} is the Corollary below which
specializes the result to the case of a slab $-\frac{1}{2}\leq x\leq \frac{1%
}{2}$  between two parallel plates kept at temperatures $\theta (-\frac{1}{2}%
)=1-\delta $ and $\theta (\frac{1}{2})=1+\delta $.

\begin{corollary}
\label{fourierlaw0}Let $x\in \Omega =[-\frac{1}{2},\frac{1}{2}]$ and  $%
f_{s}$ be the solution to
\begin{eqnarray}
v_{1}\partial _{x}f_{s}+\frac{1}{\text{K}_{\text{n}}}Lf_{s} &=&\frac{1}{%
\text{K}_{\text{n}}}\Gamma (f_{s},f_{s}),\text{ \ \ \ \ \ \ \ }-\frac{1}{2}%
<x<\frac{1}{2},  \label{slab} \\
{{\sqrt{\mu(v)}}}f_{s}(-\frac{1}{2},v) &=&\frac{1}{2\pi \lbrack 1-\delta ]^{2}}\exp \left[ -%
\frac{|v|^{2}}{2[1-\delta ]}\right] \int_{u_{1}<0}u_{1}{\sqrt{\mu(u)}}f_{s}(-\frac{1}{2}%
,u)du,\text{ \ \ \ }v_{1}>0,  \notag \\
{\sqrt{\mu(v)}}f_{s}(\frac{1}{2},v) &=&\frac{1}{2\pi \lbrack 1+\delta ]^{2}}\exp \left[ -%
\frac{|v|^{2}}{2[1+\delta ]}\right] \int_{u_{1}>0}u_{1}{\sqrt{\mu(u)}}f_{s}(\frac{1}{2}%
,u)du,\text{ \ \ \ \ \ \ }v_{1}<0.  \notag
\end{eqnarray}%
where $v_{1}$ is the component in the direction $x$ of the velocity $v$.

Then $f_{1}$, the first order in $\delta $ correction to $\mu $ according to
the expansion of Theorem \ref{main2}, is the unique solution to
\begin{eqnarray}
&& v_{1}\partial _{x}f_{1}+\frac{1}{\text{K}_{\text{n}}}Lf_{1} =0,\quad-\frac{1}{2}<x<\frac{1}{2},  \label{slablinear} \\
&&f_{1}(-\frac{1}{2},v) =\sqrt{\mu (v)}\int_{u_{1}<0}f_{1}(-\frac{1}{2},u)%
\sqrt{\mu (u)}\{-u_{1}\}du+\frac{\mu _{1}(-\frac{1}{2},v)}{\sqrt{\mu (v)}},%
\text{ \ \ \ }v_{1}>0,  \notag \\
&&f_{1}(\frac{1}{2},v) =\sqrt{\mu (v)}\int_{u_{1}>0}f_{1}(\frac{1}{2},u)%
\sqrt{\mu (u)}\{u_{1}\}du+\frac{\mu _{1}(\frac{1}{2},v)}{\sqrt{\mu (v)}},
\text{\ \ \ \ \ \ \ \ \ \ \ \ \ }v_{1}<0.  \notag
\end{eqnarray}
\end{corollary}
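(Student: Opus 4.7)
I would deduce Corollary \ref{fourierlaw0} directly from Theorem \ref{main2} specialized to the one-dimensional slab, together with the uniqueness part of the linear steady theory developed earlier in the paper.

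\textbf{Step 1 (translating the slab data).} In $\Omega = [-\tfrac{1}{2}, \tfrac{1}{2}]$, $v\cdot\nabla_x$ collapses to $v_1\partial_x$, the outward normals are $n(\pm\tfrac{1}{2}) = \pm e_1$, and the incoming condition $n(x)\cdot v < 0$ reads $v_1 > 0$ at $x=-\tfrac{1}{2}$ and $v_1 < 0$ at $x=\tfrac{1}{2}$. With $\theta(x) = 1 + \delta\vartheta(x)$ and $\vartheta(\pm\tfrac{1}{2}) = \pm 1$, a direct Taylor expansion of (\ref{mudelta}) at $\delta = 0$ yields
\begin{equation*}
\mu_1(x, v) = \vartheta(x)\,\mu(v)\left[\tfrac{|v|^2}{2} - 2\right],
\end{equation*}
so the boundary terms $\mu_1(\pm\tfrac{1}{2}, \cdot)/\sqrt{\mu}$ match exactly the inhomogeneous expressions in (\ref{slablinear}).

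\textbf{Step 2 (existence).} Theorem \ref{main2} provides the expansion $f_s = \delta f_1 + \delta^2 f_2 + \cdots$, together with equation (\ref{f1}) for $f_1$ in the weighted $L^\infty$ class. Inserting this expansion into the slab problem (\ref{slab}) and collecting the $O(\delta)$ coefficients, the quadratic term $\Gamma(f_s,f_s)$ contributes only at $O(\delta^2)$ and higher, so the bulk equation reduces to $v_1\partial_x f_1 + \mathrm{K}_{\mathrm{n}}^{-1} L f_1 = 0$. Expanding the diffuse boundary condition in (\ref{slab}) in $\delta$ and using Step 1 recovers the inhomogeneous boundary conditions of (\ref{slablinear}). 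Hence $f_1$ solves (\ref{slablinear}).

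\textbf{Step 3 (uniqueness).} If $g_1$ is a second solution in the same weighted $L^\infty$ class, with the normalization $\iint g_1\sqrt{\mu}\,dxdv = 0$ inherited at each order from the constraint on $f_s$, then $h = f_1 - g_1$ solves the homogeneous linear steady problem
\begin{equation*}
v_1\partial_x h + \frac{1}{\mathrm{K}_{\mathrm{n}}} L h = 0, \qquad h\big|_{\gamma_-} = P_\gamma h, \qquad \iint h\sqrt{\mu}\,dxdv = 0.
\end{equation*}
The constructive $L^\infty$ coercivity estimate for the linear steady problem with zero data, which is the main technical input behind Theorem \ref{main1}, then forces $h \equiv 0$.

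\textbf{Main obstacle.} The corollary itself is essentially a bookkeeping exercise in the $\delta$-expansion; the substantive work sits in the linear $L^\infty$ coercivity estimate used in Step 3. Once that estimate (developed earlier in the paper for the construction of $F_s$) is in hand, it applies verbatim to the homogeneous slab problem and closes the uniqueness argument.
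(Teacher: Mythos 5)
Your argument is correct and is precisely the (unwritten) proof the paper has in mind: Corollary \ref{fourierlaw0} is presented as an immediate specialization of Theorem \ref{main2} to the slab $\Omega = [-\tfrac{1}{2},\tfrac{1}{2}]$, where $n(\pm\tfrac12)=\pm e_1$ turns the abstract diffuse boundary condition (\ref{f1}) into the two explicit conditions of (\ref{slablinear}), and uniqueness comes from the linear steady theory (Proposition \ref{linearl2}/Proposition \ref{linfty}) applied to the difference of two candidate solutions. Your computation $\mu_1 = \vartheta\,\mu\,[\tfrac{|v|^2}{2}-2]$ agrees with the expansion the paper records in (\ref{mudeltaEXP}), and you rightly flag that the zero-mass normalization $\iint f_1\sqrt{\mu}\,dx\,dv=0$ (inherited order by order from $\iint f_s\sqrt{\mu}=0$) is essential for uniqueness, since otherwise $f_1+c\sqrt{\mu}$ would be another solution of (\ref{slablinear}). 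One minor refinement: the uniqueness step does not actually require the full $L^\infty$ machinery — once both candidate solutions are known to lie in the weighted $L^\infty$ class and hence in $L^2$, the $L^2$ uniqueness of Proposition \ref{linearl2} already forces $h\equiv 0$; the $L^\infty$ estimate is what is needed for existence and continuity, not for this uniqueness argument.
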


In order to establish a criterion of validity of the Fourier law, let us
remember that the temperature associated to the stationary solution $F_{s}$
is given by
\begin{equation}
\theta _{s}(x)=\frac{1}{3\rho _{s}}\int_{\mathbf{R}%
^{3}}|v-u_{s}|^{2}F_{s}(x,v)dv,  \label{thetas}
\end{equation}%
where $\rho_s=\int_{\mathbf{R}^{3}}F_x(x,v)dv$ and $u_{s}(x)=\rho_s^{-1}\int_{\mathbf{R}^{3}}vF_{s}(x,v)dv%
$.  The heat flux associated to the distribution $F_{s}$
is the vector field defined as
\begin{equation}
q_{s}(x)=\frac{1}{2}\int_{\mathbf{R}%
^{3}}(v-u_{s}(x))|v-u_{s}(x)|^{2}F_{s}(x,v)dv.  \label{qs}
\end{equation}

We state the Fourier law in the following formulation: \textit{there is a positive $%
C^{1}$ function $\kappa (\theta )$, the heat conductivity, such that (\ref%
{fourierlaw1}) is valid for }$F_{s}$\textit{. }

\begin{theorem}
\label{fourierlaw} If the Fourier law holds for $F_{s}$, then $\theta
_{1}(x)\equiv \frac{1}{3}\int_{\mathbf{R}^{3}}|v-u_{1}|^{2}f_{1}$ is a
linear function over $-\frac{1}{2}\leq x\leq \frac{1}{2}$.
\end{theorem}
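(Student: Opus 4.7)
The plan is to show that the order-$\delta$ component of the $x$-heat flux, call it $q^{(1)}_{s,1}$, is constant on $[-1/2,1/2]$; combining with the $\delta$-expansion of the Fourier law will then force $\partial_x\theta_1$ to be constant. I would test the linearized slab equation and its boundary condition for $f_1$ (given by Corollary \ref{fourierlaw0}) against the collision invariants of $L$, then match $\delta$-coefficients in the heat-flux formula (\ref{qs}).

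First I would integrate the bulk equation $v_1\partial_xf_1=-\mathrm{K}_{\mathrm{n}}^{-1}Lf_1$ against each of $\sqrt{\mu},\,v_j\sqrt{\mu},\,|v|^2\sqrt{\mu}$. Since $L$ is self-adjoint on $L^2(dv)$ with null space spanned by exactly these elements, the collision term annihilates and one obtains three $x$-conservation laws:
\begin{equation*}
\partial_x\!\int v_1\sqrt{\mu}f_1\,dv=0,\quad \partial_x\!\int v_1v_j\sqrt{\mu}f_1\,dv=0,\quad \partial_x\!\int v_1|v|^2\sqrt{\mu}f_1\,dv=0.
\end{equation*}
Thus $J_1:=\int v_1|v|^2\sqrt{\mu}f_1\,dv$ and $u_{1,1}(x):=\int v_1\sqrt{\mu}f_1\,dv$ are both constants in $x$. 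Next I would fix the latter constant by a boundary computation: differentiating (\ref{norm}) in $\delta$ at $\delta=0$ gives $\int_{v_1<0}v_1\mu_1(-1/2,v)\,dv=0$, and the $v_1$-parity of $\mu_1$ (a consequence of $\mu_\theta$ depending on $v$ only through $|v|^2$) then gives $\int_{v_1>0}v_1\mu_1(-1/2,v)\,dv=0$. Testing the $f_1$-boundary condition of Corollary \ref{fourierlaw0} at $x=-1/2$ against $v_1$ and invoking (\ref{munorm}) (so that $\int_{v_1>0}v_1\mu\,dv=1$) collapses the identity to $\int v_1\sqrt{\mu}f_1(-1/2,v)\,dv=\int_{v_1>0}v_1\mu_1(-1/2,v)\,dv=0$, hence $u_{1,1}(-1/2)=0$ and by the first conservation law $u_{1,1}\equiv 0$. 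The same mass-flux computation applied to (\ref{fdiffuse}) together with stationary mass conservation forces $\int v_1F_s\,dv\equiv 0$, so that $u_{s,1}\equiv 0$ throughout the slab.

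With $u_{s,1}\equiv 0$ and $\int v_1 F_s\,dv\equiv 0$, the formula (\ref{qs}) reduces to
\begin{equation*}
q_{s,1}(x)=\tfrac{1}{2}\!\int v_1|v|^2F_s\,dv\;-\;\sum_j u_{s,j}(x)\!\int v_1v_jF_s\,dv.
\end{equation*}
Inserting $F_s=\mu+\delta\sqrt{\mu}f_1+O(\delta^2)$ and using $\int v_1|v|^2\mu\,dv=0$, $\int v_1v_j\mu\,dv=\theta_0\delta_{1j}$, $u_{s,j}=\delta u_{1,j}+O(\delta^2)$, together with $u_{1,1}=0$, the order-$\delta$ contribution of the second sum vanishes and one obtains $q_{s,1}(x)=\tfrac{\delta}{2}J_1+O(\delta^2)$. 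Hence the first-order heat flux is the constant $q^{(1)}_{s,1}\equiv J_1/2$. Expanding the Fourier law $q_{s,1}=-\kappa(\theta_s)\partial_x\theta_s$ with $\theta_s=\theta_0+\delta\theta_1+O(\delta^2)$ and $\kappa(\theta_s)=\kappa(\theta_0)+O(\delta)$ at order $\delta$ then gives $\tfrac{1}{2}J_1=-\kappa(\theta_0)\,\partial_x\theta_1(x)$, and since $\kappa(\theta_0)>0$ is a constant, $\partial_x\theta_1$ is constant and $\theta_1$ is affine on $[-\tfrac{1}{2},\tfrac{1}{2}]$.

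The most delicate step will be the boundary computation that forces $u_{1,1}(\pm 1/2)=0$: one must show that the Maxwellian source $\mu_1/\sqrt{\mu}$ in the $f_1$-boundary condition contributes no net normal mass flux, which rests on the $\delta$-differentiation of (\ref{norm}) together with the $v_1$-parity of $\mu_1$ afforded by the slab geometry. The other routine but necessary ingredient is rigorous extraction of the first-order coefficients from the $\delta$-expansion of $F_s$ and the integrability of $f_1$ against the polynomial and exponential moments appearing in the argument; these are guaranteed by the weighted $L^\infty$ bounds of Theorems \ref{main1} and \ref{main2}.
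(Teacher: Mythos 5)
Your strategy is fundamentally sound and parallels the paper's proof in its key ingredients: the conservation laws obtained by pairing the linearized slab equation against the collision invariants, the boundary mass-flux computation that forces $u_{1,1}(\pm\frac{1}{2})=0$ (and $u_{s,1}\equiv 0$), the resulting constancy of the order-$\delta$ heat flux, and the final step of matching the $\delta$-expansion against the Fourier law. The paper packages these somewhat differently --- it works directly with the full $F_s$, showing first that $\partial_x f_s\in L^2_{\mathrm{loc}}$ via a Caccioppoli-type cut-off estimate built on Lemma \ref{steadyabc}, then that $u_s\equiv 0$ (all three components, using a reflection/uniqueness argument for the tangential ones), and deduces $\partial_x q_s\equiv 0$ exactly, before ever touching $\delta$. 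Your route is more economical because for the $O(\delta)$ heat flux you only need $u_{s,1}\equiv 0$ together with $\int v_1 v_j\mu\,dv=\delta_{1j}$, and both the paper and you compute $u_{s,1}=0$ by essentially the same boundary-flux identity.

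The step that is not quite justified is the final one: you differentiate the $\delta$-expansion of $\theta_s$ and match $\partial_x$-coefficients, writing in effect $\partial_x\theta_s=\delta\,\partial_x\theta_1+O(\delta^2)$. But Theorem \ref{main2} controls the remainder $f_m^\delta$ only in weighted $L^\infty$, not in $W^{1,\infty}$; nothing prevents $\partial_x$ of the $O(\delta^2)$ remainder of $\theta_s$ from being $O(1)$, in which case the $\delta$-coefficient of $\partial_x\theta_s$ is not well-defined and the pointwise identity $\frac{1}{2}J_1=-\kappa(\theta_0)\partial_x\theta_1$ cannot be read off. The paper avoids this by integrating once: combined with $\partial_x q_s=0$ and $\partial_x\theta_s\in L^2_{\mathrm{loc}}$, the Fourier law yields $\partial_{xx}K(\theta_s)=0$ distributionally, hence $K(\theta_s(x))=Ax+B\in W^{2,\infty}$ continuous up to the boundary, and only then is the $\delta$-expansion applied --- at the level of $L^\infty$, which Theorem \ref{main2} does control. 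In your language the fix is to integrate your identity $q_{s,1}=-\partial_x K(\theta_s)$ to get $K(\theta_s(x))-K(\theta_s(0))=-\tfrac{\delta}{2}J_1x+O(\delta^2)$ uniformly in $x$, Taylor-expand $K(\theta_s)=K(\theta_0)+\delta\kappa(\theta_0)\theta_1+O(\delta^2)$, and observe that since $\theta_1$ is $\delta$-independent, the $O(\delta)$ discrepancy must vanish identically, leaving $\theta_1$ affine. You should also note that some local $x$-regularity of $\theta_s$ (provided in the paper by $\partial_x f_s\in L^2_{\mathrm{loc}}$) is needed both for the Fourier law to make sense pointwise and for the chain rule $\kappa(\theta_s)\partial_x\theta_s=\partial_x K(\theta_s)$ to hold.
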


Available numerical simulations, Figure 1, \cite{OAS} indicate the linearity is clearly
violated  for all finite Knudsen number K$_{\text{n}}$ ($=k$ in Figure 1).
\begin{figure}[tbp]
\begin{center}
\includegraphics[
width=1.0\textwidth]{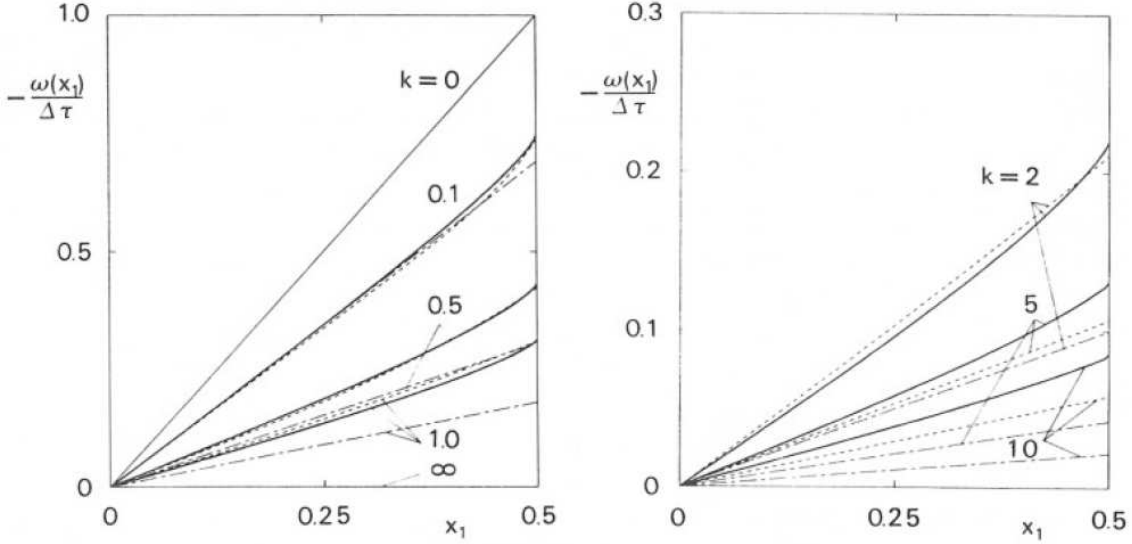}
\end{center}
\caption{Nonlinearity of $\protect\theta _{1}$ in a kinetic regime}
\end{figure}

We therefore predict that the general Fourier law (\ref{fourierlaw1}%
) is invalid and inadequate in the kinetic regime with finite Knudsen
number K$_{\text{n}}$.  It is necessary to at least solve the linear
Boltzmann equation (\ref{f1}) to capture the heat transfer in the Boltzmann
theory. 

\vskip .3cm
Without loss of generality, we may assume $\theta _{0}=1$ and K$_{\text{n}}=1
$ throughout the rest of the paper. Therefore in the rest of this paper we assume
\begin{equation}
\theta (x)=1+\delta \vartheta (x).  \label{temper}
\end{equation}

The key difficulty in the study of the steady Boltzmann equation lies in the
fact that the usual entropic estimate for $\int F\ln F$, coming from $\partial
_{t}F$, is absent. The only a-priori estimate is given by the entropy
production $\int Q(F,F)\ln F$, which is very hard to use \cite{AN}. In the
context of small perturbation of $\mu $, only the linearized dissipation
rate $\|(\mathbf{I}-\mathbf{P})f\|_{\nu }^{2}$ is controlled \footnote{We  denote by $\mathbf{P}f$ the projection of $f$ on the null space of $L$.}, and the key is
to estimate the missing hydrodynamic part $\mathbf{P}f$, in term of $(%
\mathbf{I}-\mathbf{P})f$.  This is a well-known basic question in the
Boltzmann theory. Motivated by the studies of collisions in a plasma, a new
\textit{nonlinear energy method} in high Sobolev norms was initiated in the
Boltzmann study, to estimate $\mathbf{P}f$ (\cite{G1}) in terms of $(\mathbf{I%
}-\mathbf{P})f$.  For $F=\mu +\sqrt{\mu }f$, remembering the definitions   (\ref{linBol}) of the
linearized Boltzmann operator and (\ref{nonBol}) of the quadratic nonlinear collision operator, the Boltzmann equation (\ref%
{boltz}) can be rewritten for the perturbation $f$ \ as%
\begin{equation}
\text{\ }\left\{ \partial _{t}+v\cdot \nabla _{x}+L\right\} f=\Gamma (f,f),
\label{gboltzmann}
\end{equation}%
 \textit{If} the operator $L$ \textit{were} positive definite,
then global solutions (for small $f$) could be constructed ``easily'' for (\ref%
{gboltzmann}). However, $L$ is only semi-positive
\begin{equation}
\langle Lf,f\rangle \gtrsim \|\{\mathbf{I}-\mathbf{P}\}f\|_{\nu }^{2},\text{
}  \label{pg}
\end{equation}%
where $\|\cdot \|_{\nu }$ is the $\nu$-weighted $L^{2}$ norm. The kernel
(the hydrodynamic part) is given by
\begin{equation}
\mathbf{P}f\equiv \{a_{f}(t,x)+v\cdot b_{f}(t,x)+\frac{(|v|^{2}-3)}{2}%
c_{f}(t,x)\}\sqrt{\mu }.  \label{P}
\end{equation}%
Note that we use slight different definitions of $a_{f}(t,x)$ and $c_{f}(t,x)
$ from \cite{G1} to capture the crucial total mass constraint. The so-called
`hydrodynamic part' of $f$, $\mathbf{P}f$, is the $L_{v}^{2}$-projection on
the kernel of $L$, for every given $x$.  The novelty of such energy method is
to show that $L$ is indeed \textit{positive definite} for small solutions $f$
to the nonlinear Boltzmann equation (\ref{gboltzmann}). The key macroscopic
equations for  $\xi_f=(a_f,b_{f},c_f)$ connect $\mathbf{P}f$ and $\{\mathbf{I}-\mathbf{P}\}f$
via the Boltzmann equation as in page 621 of \cite{G1}:%
\begin{equation}
\Delta \xi_{f}\ \sim \ \partial ^{2}\{\mathbf{I}-\mathbf{P}\}f\ +\ \text{%
higher order terms,}  \label{ellipticity}
\end{equation}%
where $\partial ^{2}$ denotes some second order differential operator in
\cite{G1}. Such hidden ellipticity in $H^{k}$ ($k\geq 1$) implies that the
hydrodynamic part $\mathbf{P}f$, missing from the lower bound of $L$, can be
controlled via the microscopic part $\{\mathbf{I}-\mathbf{P}\}f$, so that $L$
can control the full $\|f\|_{\nu }^{2}$ from (\ref{pg}). Such nonlinear
energy framework has led to resolutions to several  open problems in the
Boltzmann theory \cite{G1,G2,GS}.

It should be noted that all of these results deal with idealized periodic
domains, in which the solutions can remain smooth in $H^{k}$ for $k\geq 1$.
Of course, a gas is usually confined within a container, and its interaction
with the boundary plays a crucial role both from physical and mathematical view
points. Mathematically speaking, the phase boundary $\partial \Omega \times
\mathbf{R}^{3}$ is always \textit{characteristic} but not \textit{uniformly
characteristic} at the grazing set $\gamma _{0}=\{(x,v):x\in \partial \Omega
,~$and $v\cdot n(x)=0\}$.  In particular, many of the natural physical
boundary conditions create singularities in general domains (\cite{Guo08,Kim}), for which the high Sobolev estimates break down in the crucial
elliptic estimates (\ref{ellipticity}). Discontinuities are expected to be
created at the boundary, and then propagate inside a non-convex domain.
Therefore completely new tools need to be developed.

A new $L^{2}$--$L^{\infty }\ $framework is developed in \cite{Guo08} (see
\cite{G3} for a short summary of the method) to resolve such a difficulty in
the Boltzmann theory, which leads to resolution of asymptotic stability of
Maxwellians for specular reflection in an analytic and convex domain, and
for diffuse reflection (with uniform temperature!) in general domains (no
convexity is needed).  We remark that the non-convex domains occur naturally
for non-isothermal boundary (e.g. two non flat separated boundaries).
Furthermore, the solutions to the boundary problems are shown to be \textit{%
continuous} if the domain $\Omega $ is strictly convex. Different $%
L^{2}-L^{\infty}$ methods have been used in \cite{AEMN1,AEMN2,AN1, ELM1, ELM2} in particular geometries.

The new $L^{2}-L^{\infty }$ framework introduced in \cite{Guo08} has two parts:

$L^{2}$\textit{\ Positivity:} Assume the wall temperature $\theta $ is
constant in (\ref{diffbc}). It suffices to establish the following
finite-time estimate%
\begin{equation}
\int_{0}^{1}\|\mathbf{P}g(s)\|_{\nu }^{2}ds\lesssim \left\{ \int_{0}^{1}\|\{%
\mathbf{I}-\mathbf{P\}}g(s)\|_{\nu }^{2}+\text{boundary contributions}%
\right\} .  \label{pi-p}
\end{equation}%
The natural attempt is to establish $L^{2}$ estimate for $\xi_{f}$ from the
macroscopic equation (\ref{ellipticity}). However, this is very challenging
due to the fact that only $f$ has trace in the sense of Green's identity
(Lemma \ref{green}), neither $\mathbf{P}f$ nor $\{\mathbf{I}-\mathbf{P}\}f$ $%
\ $even make sense on the boundary of a general domain. Instead, the proof of (\ref{pi-p}) given in \cite{Guo08}
is based on a delicate contradiction argument because it is difficult to
estimate $\mathbf{P}f$ via $\{\mathbf{I}-\mathbf{P}\}f$ directly in a $L^{2}$
setting in the elliptic equation (\ref{ellipticity}), in the presence of
boundary conditions.  The heart of this argument, lies in an exact
computation of $\mathbf{P}f$ which leads to the contradiction. As a result,
such an indirect method fails to provide a constructive estimate of (\ref%
{pi-p}) with explicit constants.

$L^{\infty }$\textit{\ Bound: }The method of characteristics can
bootstrap the $L^{2}$ bound into a point-wise bound to close the nonlinear
estimate. Let $U(t)$ be the semigroup generated by $v\cdot\nabla_x+L$ and $G(t)$ the semigroup generated by $v\cdot\nabla_x+\nu$, with the prescribed boundary conditions.  By \textit{two iterations, }one can establish:
\begin{equation}
U(t)=G(t)+\int_{0}^{t}G(t-s_{1})KG(s_{1})ds_{1}+\int_{0}^{t}%
\int_{0}^{s_{1}}G(t-s_{1})KG(s_{1}-s)KU(s)dsds_{1}.  \label{duhamel2}
\end{equation}%
From the compactness property of $K$, the main contribution in (\ref{duhamel2})
is roughly
\begin{equation}
\int_{0}^{t}\int_{0}^{s_{1}}\int_{v^{\prime },v^{\prime \prime }\text{bounded%
}}|f(s,X_{\mathbf{cl}}(s;s_{1},X_{\mathbf{cl}}(s_{1};t,x,v),v^{\prime
}),v^{\prime \prime })|dv^{\prime }dv^{\prime \prime }dsds_{1}.  \label{k-k}
\end{equation}%
where $X_{\mathbf{cl}}(s;t,x,v)$ denotes the generalized characteristics
associated with specific boundary condition. A change of variable from $%
v^{\prime }$ to $X_{\mathbf{cl}}(\cdot )$ would transform the $v^{\prime }$
and $v^{\prime \prime }$-integration in (\ref{k-k}) into \thinspace $x$ and $%
v$ integral of $f$, which decays from the $L^{2}$ theory. The key is to
check if
\begin{equation}
\det \left\{ \frac{X_{\mathbf{cl}}(s;s_{1},X_{\mathbf{cl}}(s_{1};t,x,v),v^{%
\prime })}{dv^{\prime }}\right\} \neq 0.  \label{vidav}
\end{equation}%
is valid. Without boundary, $X_{\mathbf{cl}}(s;s_{1},X(s_{1};t,x,v),v^{%
\prime })$ is simply $x-(t-s_{1})v-(s_{1}-s)v^{\prime }$, and $\det \left\{
\frac{X_{\mathbf{cl}}(\cdot )}{dv^{\prime }}\right\} \neq 0$ most of the
time. For specular or diffuse reflections, each type of characteristic
trajectories $X_{\mathbf{cl}}$ repeatedly interact with the boundary. To
justify (\ref{vidav}), various delicate arguments were invented to overcome
different difficulties, and analytic boundary and convexity are needed for
the specular case.

Since its inception, this new $L^{2}-L^{\infty }$ approach has already led
to new results in the study of relativistic Boltzmann equation \cite{S}, in
hydrodynamic limits of the Boltzmann theory(\cite{GJ} \cite{GJJ} \cite{SS}),
in stability in the presence of a large external field (\cite{EGM} \cite{K2}%
).

Our current study of non-isothermal boundary ($\theta $ is non-uniform) is
naturally based on such a $L^{2}-L^{\infty }\,\ $framework. The main new
difficulty in contrast to \cite{Guo08}, however, is that the presence of a
non-constant temperature creates non-homogeneous terms in both the linear
Boltzmann (steady and unsteady) as well as in the boundary condition, so
that the exact computation, crucial to the $L^{2}$ estimate (\ref{pi-p}),
breaks down, and the $L^{2}-L^{\infty }$ scheme \cite{Guo08} collapses.

The main technical advance in this paper is the development of a {\it direct} (constructive) and
{ robust} approach to establish (\ref{pi-p}) in the presence of a non-uniform
temperature diffuse boundary condition (\ref{fdiffuse}). Instead of using
these macrosopic equations (e.g. (\ref{ellipticity})), whose own meaning
is doubtful in a bounded domain, we resort to the basic Green's identity
for the transport equation and choose proper test functions to recover
ellipticity estimates for $a$, $b$, $c$ and hence $\mathbf{P}f$  directly. In light of the
energy identity, $\{1-P_{\gamma }\}f$ is controlled at the boundary $\gamma
_{+}$, but not $P_{\gamma }f$ in (\ref{pgamma}). The essence of the method
is to choose a test function which can eliminate the $P_{\gamma }f$
contribution at the boundary, and to control the $a$, $b$, $c$ component of $\mathbf{P%
}f$ respectively in the bulk \textit{at the same time}. The choice of the test function
for $c$ is rather direct: we set
\begin{equation*}
\psi _{c}=(|v|^{2}-\beta _{c})\sqrt{\mu }v\cdot \nabla _{x}\phi _{c}(x),
\end{equation*}%
 for some constant $\beta _{c}$ to be determined, with $-\Delta \phi _{c}=c$ and $\phi _{c}=0$ on $\partial \Omega$.  On the other hand, the test
function for $b$ is rather delicate. In fact, two different sets of test
functions have to be constructed:%
\begin{eqnarray*}
(v_{i}^{2}-\beta _{b})\sqrt{\mu }\partial _{j}\phi _{b}^{j},\text{ \ \ \ }%
i,j &=&1,...d, \\
|v|^{2}v_{i}v_{j}\sqrt{\mu }\partial _{j}\phi _{b}^{i},\text{ \ }i &\neq &j,
\end{eqnarray*}%
with $-\Delta \phi _{b}^{j}=b_{j}$ and $\phi _{b}^{j}=0$ on $\partial \Omega
$.  In particular, a unique constant $\beta _{c}$ can be chosen
to deduce the estimates for $b$, thanks to the special structure of the
transport equation and the diffuse boundary condition. The choice of the test
function for $a$ requires special attention. It turns out that in order to
eliminate the contribution $P_{\gamma }f$, we need to choose the test function
\begin{equation*}
(|v|^{2}-\beta _{a})v\cdot \nabla _{x}\phi _{a}\sqrt{\mu },
\end{equation*}%
with $-\Delta \phi _{a}=a$ and Neumann boundary condition%
\begin{equation*}
\frac{\partial }{\partial n}\phi _{a}=0\quad\text{ on }\partial \Omega .
\end{equation*}%
This is only possible if the total mass of  $f$ is zero, i.e.,
\begin{equation}
\iint_{\Omega \times \mathbf{R}^{3}}f(x,v)\sqrt{\mu }dxdv\equiv\sqrt{2\pi}\int_{\Omega }a(x)dx=0.  \label{a=0}
\end{equation}%
This illustrates the importance of the mass constraint, which unfortunately is
not valid for the steady problem (\ref{boltzmann}) and (\ref{fdiffuse}). So we are forced to use a penalization procedure  (see below) to deal with it. The key lemma which delivers the basic estimates is Lemma \ref{steadyabc}.
Furthermore, in the dynamical case, also the time derivatives $\partial _{t}\phi _{c}$, $\partial
_{t}\phi _{b}^{j}$ and $\partial _{t}\phi _{a}$ need to be controlled in
negative Sobolev spaces. This is possible due to the special structure
of the Boltzmann equation as well as the diffuse boundary condition. Once
again, the total mass zero condition (\ref{a=0}) is essential. This is the key to prove the crucial Lemma \ref{dabc}. Even though
this new unified procedure can be viewed as a `weak version' of the
macroscopic equations (\ref{ellipticity}), the estimates we obtain via this
approach are more general. For instance, the (\ref{ellipticity}) was only
valid for dimension  {$\ge 2$}, but the new estimates are valid for any dimension.
There seems to be a very rich structure in the linear Boltzmann equation.

To bootstrap such a $L^{2}$ estimate into a $L^{\infty }$ estimate, we
define the stochastic cycles for the generalized characteristic lines
interacting with the boundary:

\begin{definition}[Stochastic Cycles]
\label{diffusecycles}Fixed any point $(t,x,v)$ with $(x,v)\notin \gamma
_{0}$, let 
$(t_{0},x_{0},\bar{v}_{0})$ $=(t,x,\bar{v})$. For $\bar{v}_{k+1}$such that $
\bar{v}_{k+1}\cdot n(x_{k+1})>0$, define the $(k+1)$-component of the
back-time cycle as
\begin{equation}
(t_{k+1},x_{k+1},\bar{v}_{k+1})=(t_{k}-t_{\mathbf{b}}(x_{k},\bar{v}_{k}),x_{%
\mathbf{b}}(x_{k},\bar{v}_{k}),\bar{v}_{k+1}).  \label{diffusecycle}
\end{equation}%
Set
\begin{eqnarray*}
X_{\mathbf{cl}}(s;t,x,\bar{v}) &=&\sum_{k}\mathbf{1}_{[t_{k+1},t_{k})}(s)%
\{x_{k}+(s-t_{k})\bar{v}_{k}\}, \\
\bar{V}_{\mathbf{cl}}(s;t,x,\bar{v}) &=&\sum_{k}\mathbf{1}%
_{[t_{k+1},t_{k})}(s)\bar{v}_{k},\text{ \ \ \ }V_{\mathbf{cl}}(s;t,x,\bar{v}%
)=(\bar{V}_{\mathbf{cl}}(s;t,x,\bar{v}),\hat{v}).
\end{eqnarray*}%
Define $\mathcal{V}_{k+1}=\{v\in \mathbf{R}^{3}\ |\ \bar{v}\cdot
n(x_{k+1})>0\}$, and let the iterated integral for $k\geq 2$ be defined as
\begin{equation}
\int_{\Pi _{j=1}^{k-1}\mathcal{V}_{j}}\dots\Pi _{j=1}^{k-1}d\sigma _{j}\equiv
\int_{\mathcal{V}_{1}}\dots\left\{ \int_{\mathcal{V}_{k-1}}d\sigma
_{k-1}\right\} d\sigma _{1},  \label{sigma}
\end{equation}%
where $d\sigma _{j}=\mu (v)(n(x_{j})\cdot \bar{v})dv$ is a probability
measure.
\end{definition}

We note that the $\bar{v}_{j}$'s ($j=1,2,...$) are all independent
variables, and $t_{k},x_{k}$ depend on $t_{j},x_{j},\bar{v}_{j}$ for $j\leq
k-1$. However, the phase space $\mathcal{V}_{j}$ implicitly depends on $(t,x,%
\bar{v},\bar{v}_{1},\bar{v}_{2},...\bar{v}_{j-1})$. Our method is to use the
Vidav's two iterations argument(\cite{Vi}) and estimate the $L^{\infty }$-norm
along these stochastic cycles with corresponding phase spaces $\Pi
_{j=1}^{k-1}\mathcal{V}_{j}$. The key is to estimate measures of various
sets in $\Pi _{j=1}^{k-1}\mathcal{V}_{j}$ in Lemma \ref{k}. We designed an
abstract and unified iteration (\ref{iteration}), which is suitable for both steady
and unsteady cases. New precise estimate (\ref{rho12}) of non-homogeneous
terms resulting from the non-constant temperature are obtained in Prop. \ref%
{linfty}. Based on a delicate change of variables in Lemma \ref{tbsmooth},
such an estimate is crucial in the proof of formation of singularity for a
non-convex domain in Theorem \ref{main1}.

In order to keep the mass zero condition and to start iterating scheme, it
is essential to introduce a penalization $\varepsilon $ to solve the problem%
\begin{equation}
\varepsilon f^{l+1}+v\cdot \nabla _{x}f^{l+1}+Lf^{l+1}=\Gamma (f^{l},f^{l}).
\notag
\end{equation}
The presence of $\varepsilon $ ensures the critical zero mass condition (\ref%
{a=0}).   It is important to note that our $L^{\infty }$ estimate are
intertwined with $L^{2}$ at every step of the approximations, which ensures
the preservation of the continuity for a convex domain. The continuity
properties of our final solutions follows from the $L^{\infty }$ limit at
every step. \ Moreover, the proof of continuity away from the singular set $%
\mathfrak{D}$ in (\ref{Dset}) in a general domain is consequence of a
delicate result for $Q_{\text{gain}}$ in \cite{Kim}.

To illustrate the subtle nature of our construction, we remark that for
the natural positivi\-ty-preserving scheme:
\begin{eqnarray}
\varepsilon f^{l+1}+v\cdot \nabla _{x}f^{l+1}+\nu f^{l+1}-Kf^{l} &=&\Gamma _{%
\text{gain}}(f^{l},f^{l})-\Gamma _{\text{loss}}(f^{l},f^{l+1})
\label{positivescheme} \\
F^{l+1} &=&\mu _{\theta }(x,v)\int_{n(x)\cdot v^{\prime
}>0}F^{l}\{n(x)\cdot v^{\prime }\}dv^{\prime },  \notag
\end{eqnarray}%
we are unable to prove the convergence, due to breakdown of $Lf^{l+1}$,
whence the mass zero constraint (\ref{a=0}) fails to be satisfied. Consequently, we are unable to
prove $F_{s}\geq 0$ in our construction. Such a positivity is only proven
via the dynamical asymptotic stability of $F_{s}$, in which the initial
positivity plus the choice of a small time interval are crucial to
guarantee the convergence of  the analog of (\ref{positivescheme}) in the dynamical setting.

Our estimates are robust and allow us to expand our steady state $F_{s}$ in
terms of $\delta $, the magnitude of the perturbation. This leads to the
first order precise characterization of $F_{s}$ by $f_{1}$, which satisfies (%
\ref{f1}).

It should be pointed out the our rather complete study of the
non-isothermal boundary for the Boltzmann theory for $\delta \ll 1$ forms a
mathematically solid foundation for influential work in applied physics and
engineering such as in \cite{So1}\cite{So2}, where the existence of such
steady solutions to the Boltzmann equation is a starting point, but without
mathematical justification. We expect our solutions as well as our new
estimates would lead to many new developments along this direction.

\vskip .2cm

The plan of the paper is the following. In next section we present some background material and in particular a version of the Ukai Trace Theorem and the Green identity as well as a new $L^1$ estimate at the boundary in Lemma 2.3. Section 3 is devoted to the construction of $L^2$ solutions to the stationary linearized problem. In particular, we prove Lemma \ref{steadyabc} which provides the basic estimate of the $L^2$ norms of  $\mathbf{P}f$ in terms of the $L^2$ norm of $(\mathbf{I}-\mathbf{P})f$.
In Section 4, after introducing an abstract iteration scheme suitable for proving $L^\infty$ bounds, we prove, in Proposition \ref{linfty}, the existence of the solution to the linearized problem in $L^\infty$ and related bounds. In Section 5 we combine the results of the previous sections to construct the stationary solution to the Boltzmann equation and discuss its regularity properties. In particular, we give the proof of the $\delta$-expansion and use it to establish a necessary condition for the validity of the Fourier law. In Section 6 we extend the $L^2$ estimates to the time dependent problem. Section 7 contains the extension of the $L^\infty$ estimates to the time dependent problem and the proof of the exponential asymptotic stability of the stationary solution. From this we then obtain its positivity.

\bigskip
\section{Background}

In this section we state basic preliminaries. First we shall clarify the notations of functional spaces and norms: we use $\| \cdot \|_p$ for both of the $L^p(\Omega\times\mathbf{R}^3)$ norm and the $L^p(\Omega)$ norm, and $( \, \cdot\,,\, \cdot \, )$ for the standard $L^2(\Omega\times\mathbf{R}^3)$ inner product. Moreover we denote $\|\cdot \|_{\nu}\equiv \|\nu^{1/2}\cdot \|_2$ and $\|f\|_{H^k}= \|f\|_{2}+ \sum_{i=1}^{k}\|\nabla_x^i f \|_{2}$.  For the phase boundary integration, we define $d\gamma = |n(x)\cdot v|dS(x)dv$ where $dS(x)$ is the surface measure and define $|f|_p^p =  \int_{\gamma} |f(x,v)|^p d\gamma \equiv\int_{\gamma} |f(x,v)|^p  $ and the corresponding space as $L^p(\partial\Omega\times\mathbf{R}^3;d\gamma)=L^p(\partial\Omega\times\mathbf{R}^3)$.  Further $|f|_{p,\pm}= |f \mathbf{1}_{\gamma_{\pm}}|_p$ and $|f|_{\infty,\pm}= |f \mathbf{1}_{\gamma_{\pm}}|_{\infty}$.  We also use $|f|_p^p = \int_{\partial\Omega} |f(x)|^p dS(x)\equiv\int_{\partial\Omega} |f(x)|^p $.  Denote $f_{\pm}=f_{\gamma_{\pm}}$.  Recall that $x\in\overline{\Omega}\subset \mathbf{R}^d$ for $d=1,2,3$, and $v\in \mathbf{R}^3$.  For $\mathcal{A}\in \mathbf{R}^d$, the notation $v\cdot \mathcal{A}$ means $\sum_{i=1}^d v_i \mathcal{A}_i$.

In the next lemma, we prove that the trace of $f$ is well-defined locally for a certain class of $f$:
\begin{lemma}[Ukai Trace Theorem]\label{ukai}
Define
\begin{equation}
\gamma^{\varepsilon}  \ \equiv \ \{ (x,v)\in\gamma : |n(x)\cdot v| \geq \varepsilon, \ |v|\leq \frac{1}{\varepsilon}\}.\label{ukaitrace}
\end{equation}
Then
\begin{eqnarray}
| f\mathbf{1}_{\gamma^{\varepsilon}}  |_1 &\lesssim_{\varepsilon,\Omega}&  \| f  \|_{1}+\|\{v\cdot\nabla_x\} f \|_1  , \ \ \ \ \ \ \ \ \ \ \ \ \ \ \ \ \ \    (\text{steady})\nonumber\\
\int_s^t| f\mathbf{1}_{\gamma^{\varepsilon}}(\tau)  |_1 d\tau  &\lesssim_{\varepsilon,\Omega}&  \int_s^t\| f(\tau) \|_{1} d\tau + \int_s^t\|\{\partial_t + v\cdot\nabla_x\} f(\tau)\|_{1}d\tau  ,\nonumber\\ && \ \ \  \ \ \   \ \ \ \ \ \ \ \ \ \ \ \ \ \  \ \ \ \ \ \ \ \ \ \ \ \ \ \ \ \   \ \ \ \ \ \ \ \  \  (\text{dynamic})\nonumber
\end{eqnarray}
for all $0\leq s\leq t$.
\end{lemma}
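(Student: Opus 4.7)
The plan is to express the trace of $f$ on $\gamma^\varepsilon$ as an integral along the backward space(-time) characteristic through a smooth cutoff, apply the fundamental theorem of calculus, and then convert the resulting boundary integral into a bulk integral via the change of variables $(x, s) \mapsto y = x - s\bar{v}$, whose Jacobian is exactly the measure-theoretic weight $|n(x) \cdot v|$ that appears in $d\gamma$. The transversality condition $|n(x) \cdot v| \geq \varepsilon$ and the velocity bound $|v| \leq 1/\varepsilon$ are what make this change of variables well-defined with constants controlled by $\varepsilon$ and $\Omega$.

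Concretely, for the steady case, fix $(x, v) \in \gamma^\varepsilon \cap \gamma_+$. By $C^1$-smoothness and compactness of $\partial\Omega$ together with transversality, there is $\ell = \ell(\varepsilon, \Omega) > 0$ such that $x - s\bar{v} \in \Omega$ for all $s \in (0, \ell]$, uniformly on $\gamma^\varepsilon \cap \gamma_+$. Pick a cutoff $\chi \in C^\infty([0, \ell])$ with $\chi(0) = 1$ and $\chi(\ell) = 0$; the fundamental theorem then yields
\begin{equation*}
f(x, v) = -\int_0^\ell \chi'(s) f(x - s\bar{v}, v)\, ds + \int_0^\ell \chi(s)\, \bar{v} \cdot \nabla_x f(x - s\bar{v}, v)\, ds.
\end{equation*}
Multiplying by $|n(x) \cdot v|$ and integrating over $\gamma^\varepsilon \cap \gamma_+$, I observe that for each fixed $v$ with $|v| \leq 1/\varepsilon$, the map $\Phi_v:(x, s) \mapsto x - s\bar{v}$ from $\{x \in \partial\Omega : n(x) \cdot v \geq \varepsilon\} \times (0, \ell)$ into $\Omega$ has Jacobian determinant $|n(x) \cdot \bar{v}| = |n(x) \cdot v|$ (via a local boundary frame completed by $-\bar{v}$) and is injective once $\ell$ is taken small enough that the image tube of width $\leq \ell/\varepsilon$ around $\partial\Omega$ is covered only once. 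Hence $\int_0^\ell \int_{n \cdot v \geq \varepsilon} g(x - s\bar{v})\, |n(x) \cdot v|\, dS(x)\, ds \leq \|g\|_{L^1(\Omega)}$ for any nonnegative $g$, and combining with $\|\chi\|_\infty + \|\chi'\|_\infty \lesssim_\ell 1$ gives
\begin{equation*}
|f \mathbf{1}_{\gamma^\varepsilon \cap \gamma_+}|_1 \lesssim_{\varepsilon, \Omega} \int_{|v| \leq 1/\varepsilon} \bigl[\|f(\cdot, v)\|_{L^1(\Omega)} + \|v \cdot \nabla_x f(\cdot, v)\|_{L^1(\Omega)}\bigr] dv \leq \|f\|_1 + \|v \cdot \nabla_x f\|_1,
\end{equation*}
using that $\bar{v} \cdot \nabla_x = v \cdot \nabla_x$ since $\nabla_x$ acts only on the first $d$ components. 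The $\gamma^\varepsilon \cap \gamma_-$ part is symmetric, traced via the forward trajectory $x + s\bar{v}$.

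For the dynamic case, the same scheme applies to the full space-time characteristic $\sigma \mapsto (\tau - \sigma, x - \sigma\bar{v}, v)$, along which $\partial_\sigma f = -(\partial_t + v \cdot \nabla_x) f$; after truncating with the same cutoff and performing the identical spatial change of variables, an additional $\int_s^t d\tau$ and Fubini deliver the claimed time-integrated inequality. The main technical obstacle is establishing injectivity (or at least uniformly bounded multiplicity) of $\Phi_v$ with the explicit Jacobian $|n(x) \cdot v|$: this is exactly where the lower bound $|n(x) \cdot v| \geq \varepsilon$ enters, combined with the $C^1$-control on $\partial\Omega$ and compactness, to let us choose $\ell$ (and therefore the implicit constants) depending only on $\varepsilon$ and $\Omega$.
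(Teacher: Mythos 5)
Your proof is essentially the same as the paper's: both rest on the fundamental theorem of calculus along the spatial (or space-time) characteristic, and both convert the boundary integral into a bulk integral via the boundary-layer change of variables $(x,s)\mapsto x-s\bar{v}$ whose Jacobian is $|n(x)\cdot\bar{v}|$ — which the paper packages as the identity from page 247 of \cite{CIP} rather than re-deriving it. The cutoff $\chi$ is a cosmetic variation: the paper instead integrates directly over $s'\in[0,t_{\mathbf{b}}(x,v)]$ and keeps the end-point term $|f(x-s'\bar{v},v)|$, which is harmless since it lives in the bulk.

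One place where you assert rather than prove: the existence of a uniform $\ell(\varepsilon,\Omega)>0$ with $x-s\bar{v}\in\Omega$ for $s\in(0,\ell]$ on all of $\gamma^{\varepsilon}\cap\gamma_{+}$. This is exactly the content of the quantitative exit-time bound, which the paper derives explicitly from the second-order estimate $|(x-y)\cdot n(x)|\leq C_{\Omega}|x-y|^{2}$ for $y\in\partial\Omega$: taking $y=x_{\mathbf{b}}(x,v)$ gives $t_{\mathbf{b}}(x,v)\geq |n(x)\cdot v|/(C_{\Omega}|v|^{2})\geq C_{\Omega}^{-1}\varepsilon^{3}$ on $\gamma^{\varepsilon}$, so you may take $\ell=\tfrac{1}{2}C_{\Omega}^{-1}\varepsilon^{3}$. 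Your injectivity claim for $\Phi_{v}$ then follows for free, since each $\{x-s\bar{v}:0<s\leq\ell\}$ is an initial segment of the full characteristic, and for fixed $\bar{v}$ distinct characteristics are disjoint. With that one step supplied the argument closes; the time-dependent case via Fubini is handled exactly as in the paper.
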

\begin{proof}
Recall the notation (\ref{v}) and the definition of backward exit time $t_{\mathbf{b}}(x,v)=t_{\mathbf{b}}(x,\bar{v})$ and $x_{\mathbf{b}}(x,v)=x_{\mathbf{b}}(x,\bar{v})=x-t_{\mathbf{b}}(x,\bar{v})\bar{v}\in\partial\Omega $ in (\ref{backexit}).

In the steady case, from \cite{CIP}, page 247, we have the following identity:
\begin{eqnarray}
\iint_{\Omega\times\mathbf{R}^3} f(x,v) dxdv &=& \int_{\gamma_+} \int_{t_{\mathbf{b}}(x,v)}^0 f(x-s\bar{v},v) ds d\gamma\notag\\
&&+\int_{\gamma_-} \int^{t_{\mathbf{b}}(x,-v)}_0 f(x+s\bar{v},v) ds d\gamma.\notag
\end{eqnarray}
Assume $\|f\|_1, \ \|\{v\cdot\nabla_x\} f \|_1 <  \infty$.  Then $f(x+s\bar{v},v)$ is an absolutely continuous function of $s$ for fixed $(x,v)$, so that, by the fundamental theorem of calculus
\begin{eqnarray*}
f(x,v) &=& f(x-s^{\prime}\bar{v} ,v) + \int_{-s^{\prime}}^{0}  v \cdot \nabla_x f(x+\tau \bar{v},v)d\tau\nonumber \\
 && \ \ \ \ \ \ \ \ \ \ \ \ \ \ \ \ \ \ \ \ \ \ \ \ \ \ \ \ \ \ \ \ \ \ \ \text{for } \ (s^{\prime},x,v)\in[0,t_{\mathbf{b}}(x,v)]\times\gamma_+,\\
f(x,v) &=& f(x+s^{\prime}\bar{v} ,v) + \int_{0}^{s^{\prime}}  v \cdot \nabla_x f(x+\tau \bar{v},v)d\tau\nonumber \\
 && \ \ \ \ \ \ \ \ \ \ \ \ \ \ \ \ \ \ \ \ \ \ \ \ \ \ \ \ \ \ \ \ \ \ \ \text{for } \ (s^{\prime},x,v)\in[0,t_{\mathbf{b}}(x,-v)]\times\gamma_-,
\end{eqnarray*}
and therefore, for both cases,
\begin{eqnarray}
|f(x,v)| \leq    |f(x-s^{\prime}\bar{v},v)|  +   \int_{ t_{\mathbf{b}}(x,v)}^0 | v \cdot \nabla_x f(x-\tau \bar{v},v)| d\tau  ,\label{ukai_1}\\
|f(x,v)|  \leq    |f(x+s^{\prime}\bar{v},v)|  +   \int^{t_{\mathbf{b}}(x,-v)}_0 | v \cdot \nabla_x f(x+\tau\bar{ v},v)| d\tau .\label{ukai_2}
\end{eqnarray}
On the other hand, for $x \in \partial \Omega$ assume that $\partial\Omega$ is locally parameterized by $\xi : \{y\in \mathbf{R}^d : |x-y|< \delta\} \rightarrow \mathbf{R}$ so that
\begin{eqnarray*}
\sup_{\substack{  y \in \partial\Omega \\|x-y|< \delta}} \frac{|(x-y)\cdot n(x)|}{|x-y|^2} \ \leq \ \max_{\substack{  y \in \partial\Omega \\|x-y|< \delta}}  |\nabla_x^2 \xi(y)|.
\end{eqnarray*}
By the compactness of $\Omega$ (and $\partial\Omega$), uniformly in $x$, we have $|(x-y)\cdot n(x)|\leq C_{\Omega }|x-y|^{2}$ for
all $y\in \partial \Omega $.  Taking inner product of $%
x-x_{\mathbf{b}}(x,v)=t_{\mathbf{b}}(x,v)\bar{v}$ with $n(x)$, we get
\begin{equation*}
t_{\mathbf{b}}(x,v)|v\cdot n(x )|=|( x-x_{\mathbf{b}}(x,v) )\cdot n(x )|\le C_{\Omega }|x-x_{\mathbf{b}}(x,v)|^{2}=
C_{\Omega}|v|^{2}|t_{\mathbf{b}}(x,v)|^{2}.
\end{equation*}%
Therefore we deduce
\begin{equation}
t_{\mathbf{b}}(x,v) \geq \frac{|n(x)\cdot v|}{C_{\Omega}|v|^2}.
\label{tlower}
\end{equation}
If $(x,v)\in\gamma^{\varepsilon} \cap \gamma_{+}$ or $(x,v)\in\gamma^{\varepsilon} \cap \gamma_-$ then, by the definition of $\gamma^{\varepsilon}$ and (40) in \cite{Guo08},
\begin{equation}
 \min\{ t_{\mathbf{b}}(x,v), \ t_{\mathbf{b}}(x,-v)\} \geq \frac{|n(x)\cdot v|}{C_{\Omega}|v|^2}\geq C_{\Omega}^{-1} \varepsilon^3.\nonumber
\end{equation}
First we integrate (\ref{ukai_1}), for $(x,v)\in\gamma^{\varepsilon} \cap \gamma_+$, and then for $s^{\prime}\in [0,t_{\mathbf{b}}(x,v)]$.  Similarly integrate (\ref{ukai_2}), for $(x,v)\in\gamma^{\varepsilon} \cap\gamma_-$, and then for $s^{\prime}\in [0,t_{\mathbf{b}}(x,-v)]$ to have
\begin{eqnarray}
C_{\Omega}^{-1}\varepsilon^3 |  f \mathbf{1}_{\gamma^{\varepsilon}}  |_{1} \leq C \left\{ \| f \|_1 +   \| v\cdot\nabla_x f \|_1
\right\}. \nonumber
\end{eqnarray}

In the dynamic case, from \cite{CIP} in page 247, we have the following identity :
\begin{eqnarray}
&&\hskip -.7cm\iiint_{[0,t]\times\Omega\times\mathbf{R}^3} f(s,x,v) ds dx dv = \int_{\gamma_+} \int_{\max\{0,t-t_{\mathbf{b}}(x,v)\}}^{t} f(s,x-(t-s)\bar{v},v)  ds d\gamma\nonumber\\
&&\hskip 3.5cm+ \int_{\gamma_-} \int_{0}^{\min\{t,t_{\mathbf{b}}(x,-v)\}}
f(s,x+s\bar{v},v)  ds d\gamma. \ \ \ \ \ \ \ \ \ \ \ \ \label{ukai_int}
\end{eqnarray}
Assume again $\|f\|_1, \ \|\{\partial_t+ v\cdot\nabla_x\} f \|_1 <  \infty$.  Tthen $f(s,x+s\bar{v},v)$ is absolutely continuous of $s$ for fixed $x,v$, so that, by the fundamental theorem of calculus
\begin{eqnarray}
f(s,x,v)&=& f(s-s^{\prime},x-s^{\prime}\bar{v},v) + \int^s_{s-s^{\prime}} v\cdot\nabla_x f (\tau,x-(s-\tau)\bar{v},v ) d\tau \ \ \nonumber\\
&& \ \ \ \ \ \ \ \ \ \ \ \ \ \ \ \      \text{for} \ (s,s^{\prime},x,v)\in[0,t]\times [\max\{0,s-t_{\mathbf{b}}(x,v)\},s]\times\gamma_+ ,\nonumber\\
f(s,x,v)&=& f(s+s^{\prime},x+ s^{\prime} \bar{v},v) + \int^{s+s^{\prime}}_{s} v\cdot\nabla_x f (\tau,x+\tau \bar{v},v ) d\tau \ \ \nonumber\\
&& \ \ \ \ \ \ \ \ \ \ \ \ \ \ \ \     \text{for} \ (s,s^{\prime},x,v)\in[0,t]\times [0,\min\{t-s, t_{\mathbf{b}}(x,-v) \}]\times\gamma_-, \nonumber
\end{eqnarray}
and hence, for both cases,
\begin{eqnarray}
|f(s,x,v)|  &\leq&   |f(s-s^{\prime},x-s^{\prime}\bar{v},v)|  +  \int_{s-s^{\prime}}^s |v\cdot \nabla_x f(\tau,x-(s-\tau)\bar{v},v)|  d\tau
 ,\nonumber\\
|f(s,x,v)|  &\leq&   |f(s+s^{\prime},x+s^{\prime}\bar{v},v)|  +  \int^{s+s^{\prime}}_s |v\cdot \nabla_x f(\tau,x+\tau \bar{v},v)|  d\tau .\nonumber
\end{eqnarray}
Then the rest part of proof is exactly the same as in the steady case.
\end{proof}

\vskip .3cm
The following Green's identities are important in this paper.
\begin{lemma}{\label{green}}For the steady case, assume that  $f(x,v), \ g(x,v)\in L^2(\Omega\times\mathbf{R}^3)$,  $v \cdot \nabla_x f, \ v \cdot \nabla_x g \in L^2(\Omega\times\mathbf{R}^3)$ and $f_{\gamma}, g_{\gamma}\in L^2(\partial\Omega\times\mathbf{R}^3)$.  Then
\begin{eqnarray}
\iint_{\Omega\times\mathbf{R}^3} \{ v\cdot \nabla_x f \} g + \{v\cdot \nabla_x g\} f \ dv dx= \int_{\gamma} f g d\gamma.\label{steadyGreen}
\end{eqnarray}
For the dynamic case, assume that  $f(t,x,v), \ g(t,x,v)\in L^{\infty}([0,T];L^2(\Omega\times\mathbf{R}^3))$, $\partial_t f+ v \cdot \nabla_x f$, $\partial_t g+ v \cdot \nabla_x g \in L^2([0,T] \times \Omega\times\mathbf{R}^3)$ and $f_{\gamma}, g_{\gamma}\in L^2([0,T]\times\partial\Omega\times\mathbf{R}^3)$.  Then, for almost all $t,s\in [0,T]$,
\begin{eqnarray}
&& \ \ \int_s^t \iint_{\Omega\times\mathbf{R}^3} \{ \partial_t f + v\cdot \nabla_x f\} g  \ dv dx + \int_s^t \iint_{\Omega\times\mathbf{R}^3} \{ \partial_t g + v\cdot \nabla_x g\} f \ dv dx \notag\\
& &= \iint_{\Omega\times\mathbf{R}^3} f(t)g(t) dv dx - \iint_{\Omega\times\mathbf{R}^3} f(s)g(s) dv dx + \int_s^t \int_{\gamma} fg d\gamma d\tau.\label{dynamicGreen}
\end{eqnarray}
\end{lemma}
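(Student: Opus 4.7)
My plan is to prove the steady identity (\ref{steadyGreen}) first and then derive (\ref{dynamicGreen}) by treating $(t,x)$ as a space-time position. The core of the argument is the divergence theorem combined with a density argument.

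For smooth, rapidly $v$-decaying functions $f,g\in C^{1}(\overline{\Omega}\times\mathbf{R}^3)$, the Leibniz rule
$$\nabla_x\cdot(vfg)=(v\cdot\nabla_x f)\,g+f\,(v\cdot\nabla_x g)$$
together with the divergence theorem on $\Omega$ applied for each fixed $v$, and then integration in $v$, produces $\iint_{\partial\Omega\times\mathbf{R}^3}(n(x)\cdot v)\,fg\,dS(x)\,dv$. Splitting this into $\gamma_{\pm}$ and recalling $d\gamma=|n\cdot v|\,dS\,dv$ turns the right-hand side into $\int_{\gamma_+}fg\,d\gamma-\int_{\gamma_-}fg\,d\gamma$, which is the right-hand side of (\ref{steadyGreen}) under the conventional signed interpretation of $\int_{\gamma}$.

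The main obstacle is the density step: given $f$ in the stated class, I need to produce $f_n\in C^{1}(\overline{\Omega}\times\mathbf{R}^3)$ with $f_n\to f$ and $v\cdot\nabla_x f_n\to v\cdot\nabla_x f$ in $L^2(\Omega\times\mathbf{R}^3)$ \emph{and} $f_n|_{\gamma}\to f|_{\gamma}$ in $L^2(d\gamma)$, and similarly for $g$. Pure interior mollification recovers the first two convergences but not the trace convergence. I would proceed in two stages. First, truncate $|v|\le N$ and mollify in $v$; this commutes with $v\cdot\nabla_x$ up to a $v$-commutator that is bounded in $L^2$ and vanishes with the mollification parameter. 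Second, using a finite $C^{1}$ atlas of $\partial\Omega$ with a subordinate partition of unity, translate $f$ inward by a distance $\eta>0$ along a smooth transverse field and then mollify in $x$ at scale $\epsilon\ll\eta$. The Friedrichs (DiPerna-Lions) commutator lemma yields $[v\cdot\nabla_x,\rho_\epsilon\ast_x]f\to 0$ in $L^{2}$, while the translation-plus-mollification map acts continuously on $L^{2}(d\gamma)$ and converges to the identity as $\eta,\epsilon\to 0$, thanks to the assumed regularity $f|_{\gamma}\in L^{2}(d\gamma)$. Writing (\ref{steadyGreen}) for the smooth pair $(f_n,g_n)$ and passing to the limit in all three terms closes the steady case.

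For the dynamic identity (\ref{dynamicGreen}), the transport operator is $(1,v)\cdot\nabla_{(t,x)}$ on the space-time cylinder $(s,t)\times\Omega$. Applying the divergence theorem to the vector field $(1,v)fg$ on this cylinder for smooth $f,g$ produces three boundary contributions: the top cap $\{t\}\times\Omega$ with outward normal $(1,\mathbf{0})$, giving $\iint f(t)g(t)\,dxdv$; the bottom cap $\{s\}\times\Omega$ with outward normal $(-1,\mathbf{0})$, giving $-\iint f(s)g(s)\,dxdv$; and the lateral surface $[s,t]\times\partial\Omega$ with spatial normal $n(x)$, giving $\int_s^t\int_{\gamma}fg\,d\gamma\,d\tau$ after the same sign accounting as in the steady case. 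The density argument is identical to the steady one with $t$ played as an extra ``spatial'' variable; the hypothesis that $\partial_t f+v\cdot\nabla_x f\in L^2$ (rather than the two summands separately) poses no additional difficulty, since the space-time Friedrichs commutator requires control only of the total directional derivative. The ``almost all $t,s$'' qualifier in (\ref{dynamicGreen}) reflects the restriction to Lebesgue points of $\tau\mapsto\iint f(\tau)g(\tau)\,dxdv$.
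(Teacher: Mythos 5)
The paper itself does not prove this lemma; it simply refers to Chapter~9 of \cite{CIP}. Your plan is a reasonable blueprint for a direct proof, and its skeleton (Leibniz rule, divergence theorem in $(x)$ or $(t,x)$, signed interpretation of $\int_\gamma$, a.e.\ $t,s$ as Lebesgue points) is correct. However, there is a genuine gap at precisely the step you flag as ``the main obstacle.''

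The gap is the trace convergence $f_n|_\gamma \to f_\gamma$ in $L^2(d\gamma)$. You claim that inward translation plus mollification ``acts continuously on $L^2(d\gamma)$ and converges to the identity,'' but this is exactly where the grazing set spoils things, and the argument as written does not address it. Graph-norm convergence $T_{\eta,\epsilon}f\to f$ plus Lemma~\ref{ukai} controls the trace only on the non-grazing portions $\gamma^{\epsilon'}$ (bounded velocities, $|n\cdot v|\ge\epsilon'$). Near $\gamma_0$ the trace operator is not bounded from the graph norm into $L^2(d\gamma)$, and nothing in your construction gives a uniform-in-$(\eta,\epsilon)$ smallness of $\int_{\gamma\setminus\gamma^{\epsilon'}}|T_{\eta,\epsilon}f|^2\,d\gamma$. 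A one-dimensional calculation already shows the danger: for $\Omega=(0,1)$, writing $f(\eta,v)-f(0,v)=\frac{1}{v}\int_0^\eta v\partial_x f(s,v)\,ds$ and squaring gives a factor $1/|v|$ in the $d\gamma=|v|\,dv$ integral, which is not integrable near $v=0$; so $\|f(\eta,\cdot)-f_\gamma\|_{L^2(d\gamma)}\to0$ does not follow from the graph-norm information alone. The standing hypothesis $f_\gamma\in L^2(d\gamma)$ is what makes the lemma true, but your proof never actually invokes it to close this step.

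The standard route, and presumably the one in CIP, avoids approximating the trace. One version: for $(x,v)\in\gamma_+$ the map $s\mapsto (fg)(x-s\bar v,v)$ is absolutely continuous on $[0,t_{\mathbf b}(x,v)]$ because $v\cdot\nabla_x f,\,v\cdot\nabla_x g\in L^2$, so $\frac{d}{ds}(fg)=(v\cdot\nabla_x f)\,g+f\,(v\cdot\nabla_x g)$ along characteristics; integrating in $s$ and then against $d\gamma$ over $\gamma_+$, and using the flow-out change of variables (the identity from \cite{CIP} p.~247 quoted in Lemma~\ref{ukai}), gives (\ref{steadyGreen}) with only the hypotheses listed, the $L^2(d\gamma)$ assumptions entering through Fubini and dominated convergence when one removes a cutoff away from $\gamma_0$. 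If you wish to keep the density approach, the missing ingredient is an a priori uniform control of the approximants' traces near $\gamma_0$ — for instance, by first multiplying by a cutoff $\chi_{\epsilon'}$ supported in $\gamma^{\epsilon'}$, proving the identity with the cutoff inserted, and then removing the cutoff by dominated convergence using $f_\gamma,g_\gamma\in L^2(d\gamma)$. The same comment applies verbatim to the space-time version (\ref{dynamicGreen}).
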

\begin{proof}
See the proof in Chapter 9 of \cite{CIP}.
\end{proof}

\vskip .3cm
\begin{lemma}\label{tbsmooth}
Let $\Omega \subset \mathbf{R}^d$ and recall the notation (\ref{v}). If
\begin{eqnarray}
\bar{v}\cdot n(x_{\mathbf{b}}(x,\bar{v})) <0, \notag
\end{eqnarray}
then $t_{\mathbf{b}}(x,\bar{v})$ and  $x_{\mathbf{b}}(x,\bar{v})$ are smooth functions of $(x,\bar{v})$ so that
\begin{eqnarray}
&&\nabla_x t_{\mathbf{b}} = \frac{n(x_{\mathbf{b}})}{\bar{v}\cdot n(x_{\mathbf{b}})}, \ \ \ \ \ \ \ \  \ \ \nabla_{\bar{v}} t_{\mathbf{b}}= -\frac{t_{\mathbf{b}} n(x_{\mathbf{b}})}{\bar{v}\cdot n(x_{\mathbf{b}})},\label{tb}\\
&&\nabla_x x_{\mathbf{b}}=I - \nabla_x t_{\mathbf{b}}\otimes v, \ \ \ \ \nabla_{\bar{v}} x_{\mathbf{b}}=-t_{\mathbf{b}}I - \nabla_{\bar{v}} t_{\mathbf{b}}\otimes v.
\notag
\end{eqnarray}
For $d=2,3$, if $x\in\partial\Omega$
\begin{eqnarray}
\int_{\mathbf{S}^{d-1}} |f(x-t_{\mathbf{b}}(x,\mathfrak{u})\mathfrak{u}| |n(x)\cdot \mathfrak{u}| d\mathfrak{u}
\lesssim \int_{\partial\Omega} |f(y)|dS(y),\label{change1}
\end{eqnarray}
and if $ x\in \Omega$ and $\{y \in \mathbf{R}^d : |x-y|< \epsilon|\} \cap \partial\Omega = \emptyset$
\begin{eqnarray}
\int_{\mathbf{S}^{d-1}} |f(x-t_{\mathbf{b}}(x,\mathfrak{u})\mathfrak{u})| d\mathfrak{u}
\lesssim_{\epsilon} \int_{\partial\Omega} |f(y)|dS(y),\label{change2}
\end{eqnarray}
\end{lemma}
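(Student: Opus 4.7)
The plan is to derive the gradient formulas from the implicit function theorem and then to obtain (\ref{change1}) and (\ref{change2}) by an explicit solid-angle change of variables on $\partial\Omega$.

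\textbf{Smoothness of $t_{\mathbf{b}}$ and $x_{\mathbf{b}}$.} Let $\xi:\mathbf{R}^d\to\mathbf{R}$ be a local $C^2$ defining function for $\partial\Omega$ near $x_{\mathbf{b}}(x,\bar v)$, chosen so that $\Omega=\{\xi<0\}$ locally and $\nabla\xi(y)=n(y)$ for $y\in\partial\Omega$. By definition of $t_{\mathbf{b}}$, $\xi(x-t_{\mathbf{b}}(x,\bar v)\bar v)=0$. The non-grazing hypothesis $\bar v\cdot n(x_{\mathbf{b}})<0$ gives $\partial_t\xi(x-t\bar v)|_{t=t_{\mathbf{b}}}=-\bar v\cdot n(x_{\mathbf{b}})\neq 0$, so the implicit function theorem produces a $C^1$ (in fact $C^2$) solution $t_{\mathbf{b}}$ near $(x,\bar v)$. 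Differentiating $\xi(x-t_{\mathbf{b}}\bar v)=0$ in $x_i$ and $\bar v_i$ and using $\nabla\xi(x_{\mathbf{b}})=n(x_{\mathbf{b}})$ yields the two formulas in (\ref{tb}); the gradients of $x_{\mathbf{b}}=x-t_{\mathbf{b}}\bar v$ then follow by the chain rule.

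\textbf{The change-of-variables estimates.} Consider $\Phi_x:\mathfrak u\mapsto y:=x-t_{\mathbf{b}}(x,\mathfrak u)\mathfrak u$, mapping the non-tangential part of $\mathbf{S}^{d-1}$ into the portion of $\partial\Omega$ visible from $x$ (the tangential set has measure zero on $\mathbf{S}^{d-1}$ and can be discarded). By the first part, $\Phi_x$ is a local $C^1$ diffeomorphism onto its image, inverted by $y\mapsto(x-y)/|x-y|$. A standard computation using (\ref{tb}), comparing the differential of $\Phi_x$ on $T_{\mathfrak u}\mathbf{S}^{d-1}$ with an orthonormal frame of $T_y\partial\Omega$, gives the radial-projection Jacobian
\begin{equation*}
d\mathfrak u \;=\; \frac{|n(y)\cdot(x-y)|}{|x-y|^d}\,dS(y),
\end{equation*}
where $|x-y|^{1-d}$ comes from the dilation and $|n(y)\cdot(x-y)|/|x-y|$ from the tilt of $T_y\partial\Omega$ relative to the line-of-sight sphere.

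\textbf{Conclusion of the estimates.} For (\ref{change1}), $x\in\partial\Omega$ and $|n(x)\cdot\mathfrak u|=|n(x)\cdot(x-y)|/|x-y|$, whence
\begin{equation*}
\int_{\mathbf{S}^{d-1}}|f(x_{\mathbf{b}}(x,\mathfrak u))|\,|n(x)\cdot\mathfrak u|\,d\mathfrak u \;\leq\; \int_{\partial\Omega}|f(y)|\,\frac{|n(x)\cdot(x-y)|\,|n(y)\cdot(x-y)|}{|x-y|^{d+1}}\,dS(y).
\end{equation*}
The $C^2$ regularity of $\partial\Omega$ yields the same quadratic tangency bound already exploited in (\ref{tlower}), namely $|n(x)\cdot(x-y)|\lesssim_\Omega|x-y|^2$ and $|n(y)\cdot(x-y)|\lesssim_\Omega|x-y|^2$ for all $x,y\in\partial\Omega$, so the kernel is bounded by $C_\Omega|x-y|^{3-d}$, which is uniformly bounded on $\Omega$ for $d=2,3$; (\ref{change1}) follows. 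For (\ref{change2}), the weight $|n(x)\cdot\mathfrak u|$ is absent and $x$ lies at distance $\geq\epsilon$ from $\partial\Omega$, so the Jacobian kernel is estimated crudely by $\frac{|n(y)\cdot(x-y)|}{|x-y|^d}\leq|x-y|^{1-d}\leq\epsilon^{1-d}$, giving the claim directly.

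The only delicate step is justifying the Jacobian formula, but this reduces to the explicit derivatives (\ref{tb}) combined with a single-chart parametrization of $\partial\Omega$; the tangency bound $|n(x)\cdot(x-y)|\lesssim|x-y|^2$ is the sole place where the $C^2$ regularity of $\partial\Omega$ enters, and it is precisely what absorbs the singularity of the kernel when $x=y$.
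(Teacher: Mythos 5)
Your proof is correct and follows essentially the same route as the paper: the implicit function theorem with a defining function $\xi$ for the gradient formulas, the Jacobian of the radial-projection map $y\mapsto(x-y)/|x-y|$ for the change of variables, and the quadratic tangency bound $|n(x)\cdot(x-y)|\lesssim_\Omega|x-y|^2$ (equivalently the exit-time lower bound (\ref{tlower})) to absorb the singularity of the kernel. The paper verifies the Jacobian determinant explicitly in spherical/polar coordinates chart by chart, whereas you invoke the standard solid-angle Jacobian $d\mathfrak u=|n(y)\cdot(x-y)|\,|x-y|^{-d}\,dS(y)$ directly; this is a presentational, not structural, difference.
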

\begin{proof}
Assume that $\partial\Omega$ is locally parameterized by $\xi: \mathbf{R}^d \rightarrow \mathbf{R}$ and $\Omega$ is locally $\{x\in \mathbf{R}^d : \xi(x)<0\}$ so that
\[
\xi (x_{\mathbf{b}}(t,x,\bar{v}))=\xi (x-t_{\mathbf{b}}(x,\bar{v})\bar{v}) =0.
\]%
By the
implicit function theorem, taking $x_{j}$ and $\bar{v}_{j}$ derivative
respectively,
\begin{eqnarray*}
\partial _{j}\xi -\sum_{k=1}^{d}\partial _{k}\xi \frac{\partial t_{\mathbf{b}}}{%
\partial x_{j}}\bar{v}_{k} =0,& \text{ so that }& \frac{\partial t_{\mathbf{b}}}{%
\partial x_{j}}=\frac{n_{j}(x_{\mathbf{b}})}{\bar{v}\cdot n(x_{\mathbf{b}}) }, \\
\partial _{j}\xi t_{\mathbf{b}}+\sum_{k=1}^{d}\partial _{k}\xi \frac{\partial t_{\mathbf{b}}}{%
\partial \bar{v}_{j}}\bar{v}_{k}  = 0,& \text{ so that }& \frac{\partial t_{\mathbf{b}}}{%
\partial \bar v_{j}}=-t_{\mathbf{b}}\frac{n_{j}(x_{\mathbf{b}})}{\bar{v}\cdot n(x_{\mathbf{b}}) }.
\end{eqnarray*}%
We then have
\begin{eqnarray*}
\frac{\partial (x_{\mathbf{b}})_i}{\partial x_{j}} = \delta _{ij}-\frac{\bar{v}
_{i}n_{j}(x_{\mathbf{b}})}{\bar{v}\cdot n(x_{\mathbf{b}})}, \ \ \ \ \
\frac{\partial (x_{\mathbf{b}})_{i}}{\partial \bar{v}_{j}}  = -t_{\mathbf{b}}\left\{\delta _{ij}-%
\frac{\bar{v}_{i}n_{j}(x_{\mathbf{b}})}{ \bar{v}\cdot n(x_{\mathbf{b}})}\right\}.
\end{eqnarray*}%

Now we prove (\ref{change1}) for $d=3$. Without loss of generality we may assume that $\partial_3 \xi(x_{\mathbf{b}})\neq 0$. Using the spherical coordinates, $\mathfrak{u}=(\sin\theta \cos\phi, \sin\theta \sin\phi, \cos\theta)\in\mathbf{S}^{2}$ and
\begin{eqnarray*}
\xi(x_1-t_{\mathbf{b}}(x,\mathfrak{u})\sin\theta\cos\phi, x_2 -t_{\mathbf{b}}(x,\mathfrak{u})\sin\theta\cos\phi, x_3 -t_{\mathbf{b}}(x,\mathfrak{u})\cos\theta) = 0.
\end{eqnarray*}
By the implicit function theorem we compute
\begin{eqnarray*}
&&\frac{\partial }{\partial\theta}t_{\mathbf{b}}(x,\mathfrak{u}(\theta,\phi)) = - t_{\mathbf{b}}(x,\mathfrak{u}(\theta,\phi))\frac{-\sin\theta + \partial_1 \xi(x_{\mathbf{b}})\cos\theta \cos\phi + \partial_2\xi(x_{\mathbf{b}}) \cos\theta\sin\phi}{\mathfrak{u}(\theta,\phi)\cdot n(x_{\mathbf{b}}) |\nabla_x \xi(x_{\mathbf{b}})|},\\
&&\frac{\partial }{\partial\phi}t_{\mathbf{b}}(x,\mathfrak{u}(\theta,\phi)) = t_{\mathbf{b}}(x,\mathfrak{u}(\theta,\phi))\frac{\partial_1 \xi(x_{\mathbf{b}}) \sin\theta \sin \phi -\partial_2 \xi(x_{\mathbf{b}}) \sin\theta \cos\phi}{\mathfrak{u}(\theta,\phi)\cdot n(x_{\mathbf{b}}) |\nabla_x \xi(x_{\mathbf{b}})|}.
\end{eqnarray*}
Since the Jacobian matrix is
\begin{multline*}
\text{Jac}\left\{\frac{\partial ((x_{\mathbf{b}})_1,(x_{\mathbf{b}})_2)}{\partial(\theta,\phi)}\right\} \\=\left(\begin{array}{ccc}  -\frac{\partial t_{\mathbf{b}}}{\partial\theta}\sin\theta\cos\phi -t_{\mathbf{b}}\cos\theta\cos\phi & -\frac{\partial t_{\mathbf{b}}}{\partial\phi} \sin\theta \cos\phi + t_{\mathbf{b}}\sin\theta \sin\phi \\ -\frac{\partial t_{\mathbf{b}}}{\partial \theta} \sin\theta \sin\phi -t_{\mathbf{b}} \cos\theta \sin\phi & -\frac{\partial t_{\mathbf{b}}}{\partial \phi} \sin\theta \sin \phi - t_{\mathbf{b}} \sin\theta \cos\phi \end{array}\right),
\end{multline*}
we have
\begin{eqnarray}
\det \left(\frac{\partial((x_{\mathbf{b}})_1,(x_{\mathbf{b}})_2)}{\partial(\theta,\phi)}\right) \geq \frac{(t_{\mathbf{b}})^2 \sin\theta|\partial_{3}\xi(x_{\mathbf{b}})|}{|n(x_{\mathbf{b}})\cdot \mathfrak{u}| |\nabla\xi(x_{\mathbf{b}})|},\label{det1}
\end{eqnarray}
and hence
\begin{eqnarray}
&&\int_{  \mathfrak{u} \in \mathbf{S}^2 ,  \mathfrak{u}^{\prime}\sim \mathfrak{u}   } |f(x-t_{\mathbf{b}}(x,\mathfrak{u}^{\prime})\mathfrak{u}^{\prime})| |n(x)\cdot \mathfrak{u}^{\prime}| d\mathfrak{u}^{\prime} \notag\\
&\leq& \iint_{\substack{\theta^{\prime} \in [0,\pi), \phi^{\prime} \in [0,2\pi)\\
(\theta^{\prime}, \phi^{\prime})\sim (\theta,\phi)
}} |f(x-t_{\mathbf{b}}(x,\mathfrak{u}^{\prime})\mathfrak{u}^{\prime})|
 |n(x)\cdot \mathfrak{u}^{\prime}|
 \sin\theta^{\prime} d\phi^{\prime} d\theta^{\prime},\label{xb}
 \end{eqnarray}
where $\mathfrak{u}^{\prime} = (\sin\theta^{\prime} \cos\phi^{\prime}, \sin\theta^{\prime} \sin\phi^{\prime}, \cos\theta^{\prime})$. Now we apply a change of variables
\begin{eqnarray}
(\theta^{\prime},\phi^{\prime}) \mapsto ((x_{\mathbf{b}}(x,\mathfrak{u}^{\prime}))_1, (x_{\mathbf{b}}(x,\mathfrak{u}^{\prime}))_2)
\equiv ((x_{\mathbf{b}}^{\prime})_1,(x_{\mathbf{b}}^{\prime})_2),
\end{eqnarray}
and use (\ref{det1}) to further bound (\ref{xb}) as
 \begin{eqnarray}
 \iint_{\substack{(x_{\mathbf{b}}^{\prime})_1 \sim (x_{\mathbf{b}})_1\\ (x_{\mathbf{b}}^{\prime})_2 \sim (x_{\mathbf{b}})_2 } }|f(x_{\mathbf{b}}^{\prime})|
 \frac{|n(x)\cdot \mathfrak{u}(x_{\mathbf{b}}^{\prime})| |n(x_{\mathbf{b}}^{\prime})\cdot \mathfrak{u}(x_{\mathbf{b}}^{\prime})|}{|t_{\mathbf{b}}(x,\mathfrak{u}(x_{\mathbf{b}}^{\prime}))|^2}
\frac{|\nabla \xi(x_{\mathbf{b}}^{\prime})|}{|\partial_3 \xi(x_{\mathbf{b}}^{\prime})|}d(x_{\mathbf{b}}^{\prime})_1 d(x_{\mathbf{b}}^{\prime})_2.\label{xb1}
\end{eqnarray}
Notice the surface measure of $\partial\Omega$ is $dS=\displaystyle{\frac{|\nabla \xi(x_{\mathbf{b}}^{\prime})|}{|\partial_3 \xi(x_{\mathbf{b}}^{\prime})|}}d(x_{\mathbf{b}}^{\prime})_1 d(x_{\mathbf{b}}^{\prime})_2$ if $\partial_3 \xi(x_{\mathbf{b}}^{\prime}) \neq 0$.
Denote $\mathfrak{u}(x_{\mathbf{b}}^{\prime})\equiv \displaystyle{\frac{x-x_{\mathbf{b}}^{\prime}}{|x-x_{\mathbf{b}}^{\prime}|}}$ and we use (\ref{tlower}) as
\begin{eqnarray*}
t_{\mathbf{b}}(x,\mathfrak{u}(x_{\mathbf{b}}^{\prime})) \gtrsim_{\Omega} |n(x)\cdot \mathfrak{u}(x_{\mathbf{b}}^{\prime}) |, \ \ \
t_{\mathbf{b}}(x,\mathfrak{u}(x_{\mathbf{b}}^{\prime})) \gtrsim_{\Omega} |n(x_{\mathbf{b}}^{\prime})\cdot \mathfrak{u}(x_{\mathbf{b}}^{\prime}) |,
\end{eqnarray*}
to bound (\ref{xb1}) by
\begin{eqnarray*}
\iint_{y\in\mathbf{S}^2, y\sim x_{\mathbf{b}}} |f(y)| dS(y).
\end{eqnarray*}

For $d=2$, using $\mathfrak{u}=(\cos\theta,\sin\theta)$ if $\partial_2 \xi(x_{\mathbf{b}})\neq 0$ we can compute
\begin{eqnarray*}
 {\frac{\partial}{\partial \theta}}{t_{\mathbf{b}}(x,\mathfrak{u}(\theta))} = t_{\mathbf{b}}(x, \mathfrak{u}(\theta))\frac{ n(x_{\mathbf{b}}) \cdot (\sin\theta,-\cos\theta)}{n(x_{\mathbf{b}})\cdot \mathfrak{u}(\theta)},
\end{eqnarray*}
and 
\begin{eqnarray}
\det \left( \frac{\partial (x_{\mathbf{b}})_1}{\partial\theta} \right) \geq \frac{t_{\mathbf{b}}(x,\mathfrak{u}(\theta))|\partial_2 \xi(x_{\mathbf{b}})|}{|n(x_{\mathbf{b}})\cdot \mathfrak{u}(\theta)| |\nabla \xi(x_{\mathbf{b}})|}.\label{det2}
\end{eqnarray}
The rest of proof of (\ref{change1}) is same (even simpler!) as the $d=3$ case.

For (\ref{change2}), since there is a lower bound for $t_{\mathbf{b}}(x,\mathfrak{u})\geq \epsilon$ for $\{|x-y|< \epsilon\}\cap \partial\Omega = \emptyset$ and $\mathfrak{u}\in \mathbf{S}^{d-1}$, using (\ref{det1}) and (\ref{det2}), it is easy to prove (\ref{change2}).
\end{proof}

\bigskip
\section{$L^{2}$ Estimate}
The main purpose of this section is to prove the following:
\begin{proposition}
\label{linearl2} Assume
\begin{equation}
\iint_{\Omega\times\mathbf{R}^3}g(x,v) \sqrt{\mu }\ dxdv=0 , \ \ \ \int_{\gamma _{-}}r\sqrt{\mu }d\gamma=0.
\label{constraint}
\end{equation}%
Then there exists a unique solution to
\begin{equation}
v\cdot \nabla _{x}f+L  f=g , \ \ \ f_{  {-} }=P_{\gamma }f+r,  \label{linearf}
\end{equation}%
such that  $\iint_{\Omega\times\mathbf{R}^3} f\sqrt{\mu}\ dxdv=0$ and
\begin{equation*}
\|f\|_{\nu }+|f|_{2}\lesssim \|g\|_{2}+|r|_{2}.
\end{equation*}
\end{proposition}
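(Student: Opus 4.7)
The plan is to construct the solution by an $\varepsilon$-penalization argument: for each $\varepsilon>0$ I would solve the perturbed problem
\begin{equation*}
\varepsilon f^\varepsilon + v\cdot\nabla_x f^\varepsilon + L f^\varepsilon = g,\qquad f^\varepsilon_- = P_\gamma f^\varepsilon + r,
\end{equation*}
and then pass $\varepsilon\to 0$. For fixed $\varepsilon>0$ the problem can be solved by standard Hilbert-space arguments (for instance by iterating $\varepsilon f^{n+1}+v\cdot\nabla_x f^{n+1}+\nu f^{n+1}=Kf^n+g$ with $f^{n+1}_-=P_\gamma f^n+r$), using that $\varepsilon+\nu$ dominates $K$ in the weighted $L^2$ norm. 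Integrating the penalized equation against $\sqrt\mu$ on $\Omega\times\mathbf{R}^3$, using $\int Lf\sqrt\mu\,dv=0$ together with the two hypotheses in (\ref{constraint}), gives $\varepsilon\iint f^\varepsilon\sqrt\mu\,dxdv=0$, so the zero-mass constraint for $f^\varepsilon$ is preserved for every $\varepsilon$; this is the structural reason the penalization is needed.

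The core \emph{a priori} estimate comes from the Green identity (\ref{steadyGreen}) with test function $f^\varepsilon$. Together with the semi-positivity (\ref{pg}) of $L$, it yields
\begin{equation*}
\varepsilon\|f^\varepsilon\|_2^2+\|(\mathbf{I}-\mathbf{P})f^\varepsilon\|_\nu^2+\tfrac12|f^\varepsilon|_{2,+}^2-\tfrac12|f^\varepsilon|_{2,-}^2\le (g,f^\varepsilon).
\end{equation*}
Writing $|f^\varepsilon|_{2,-}^2=|P_\gamma f^\varepsilon+r|_{2,-}^2$ and using that $P_\gamma$ is the $L^2_v(|n\cdot v|dv)$-projection, with $|P_\gamma f^\varepsilon|_{2,-}\le|f^\varepsilon|_{2,+}$, this converts the boundary contribution into $|\{1-P_\gamma\}f^\varepsilon|_{2,+}^2$ modulo cross-terms absorbed by Cauchy--Schwarz, producing
\begin{equation*}
\varepsilon\|f^\varepsilon\|_2^2+\|(\mathbf{I}-\mathbf{P})f^\varepsilon\|_\nu^2+|\{1-P_\gamma\}f^\varepsilon|_{2,+}^2\lesssim\|g\|_2\|f^\varepsilon\|_2+|r|_2^2.
\end{equation*}
Only the microscopic part is controlled so far; the hydrodynamic part $\mathbf{P}f^\varepsilon$ is still missing from the left-hand side.

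The heart of the argument is to recover $\mathbf{P}f^\varepsilon$ via the forthcoming Lemma \ref{steadyabc}, which applies Green's identity to the penalized equation with the specific test functions sketched in the introduction: $(|v|^2-\beta_c)\sqrt\mu\,v\cdot\nabla_x\phi_c$ with the Dirichlet problem $-\Delta\phi_c=c$ for the $c$-component; the pair $(v_i^2-\beta_b)\sqrt\mu\,\partial_j\phi_b^j$ and $|v|^2 v_iv_j\sqrt\mu\,\partial_j\phi_b^i$ for $b$; and, crucially, $(|v|^2-\beta_a)v\cdot\nabla_x\phi_a\sqrt\mu$ with the Neumann problem $-\Delta\phi_a=a$, $\partial_n\phi_a|_{\partial\Omega}=0$ for $a$. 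The outcome is the estimate
\begin{equation*}
\|\mathbf{P}f^\varepsilon\|_\nu\lesssim\|(\mathbf{I}-\mathbf{P})f^\varepsilon\|_\nu+|\{1-P_\gamma\}f^\varepsilon|_{2,+}+\|g\|_2+|r|_2.
\end{equation*}
Solvability of the Neumann problem for $\phi_a$ demands $\int_\Omega a^\varepsilon=0$, which is precisely the preserved mass constraint. Combining with the previous step and absorbing $\|g\|_2\|f^\varepsilon\|_2$ by Young's inequality yields the $\varepsilon$-independent bound $\|f^\varepsilon\|_\nu+|f^\varepsilon|_2\lesssim\|g\|_2+|r|_2$.

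The limit $\varepsilon\to 0$ is taken by weak compactness of $\{f^{\varepsilon_k}\}$ in the weighted $L^2(\Omega\times\mathbf{R}^3)$ and of its trace in $L^2(\gamma;d\gamma)$. The Ukai trace control (Lemma \ref{ukai}) ensures the diffuse boundary condition passes to the weak limit, linearity preserves the zero-mass condition, and the limit is the desired solution, satisfying the same bound. Uniqueness follows by applying the estimate to the difference of two solutions: with vanishing data and vanishing total mass, it must vanish identically. The main obstacle I foresee is exactly the clean application of Lemma \ref{steadyabc} at the penalized level with constants uniform in $\varepsilon$, since the Neumann test function construction for $a$ \emph{requires} the zero-mass condition that only the penalization preserves through the iteration.
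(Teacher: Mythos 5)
Your high-level strategy is the same as the paper's: penalize by $\varepsilon f$, observe that (\ref{constraint}) and $\int L f\sqrt\mu\,dv=0$ force $\varepsilon\iint f^\varepsilon\sqrt\mu=0$ so the mass constraint propagates, use Green's identity to control $(\mathbf{I}-\mathbf{P})f^\varepsilon$ and $(1-P_\gamma)f^\varepsilon$, invoke Lemma~\ref{steadyabc} (whose Neumann test function needs precisely the mass constraint) to recover $\mathbf{P}f^\varepsilon$, and pass to the weak limit. That part is correct.

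The gap is in the step you dismiss as ``standard Hilbert-space arguments.'' The iteration you propose, $\varepsilon f^{n+1}+v\cdot\nabla_x f^{n+1}+\nu f^{n+1}=Kf^n+g$ with $f^{n+1}_-=P_\gamma f^n+r$, does \emph{not} contract. The energy identity yields $\varepsilon\|f^{n+1}\|_2^2+\frac12\|f^{n+1}\|_\nu^2+\frac12|f^{n+1}|_{2,+}^2\le\frac12\|f^n\|_\nu^2+\frac12|P_\gamma f^n+r|_{2,-}^2+\cdots$, and since $(Kf^n,f^{n+1})\le\frac12\|f^n\|_\nu^2+\frac12\|f^{n+1}\|_\nu^2$ gives only parity in the $\nu$-norm and $|P_\gamma f^n|_{2,-}\le|f^n|_{2,+}$ gives only parity at the boundary, the coefficient in front of the $n$-th iterate is not strictly below $1$. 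The term $\varepsilon\|f^{n+1}\|_2^2$ cannot rescue this, because $\nu(v)\sim\langle v\rangle^\gamma$ is unbounded, so $\|\cdot\|_2$ does not dominate $\|\cdot\|_\nu$. This is exactly why the paper introduces two further truncations before the iteration: the cutoff collision operator $L_m=\nu_m-K_m$ with $B_m=\min\{B,m\}$, so that $\nu_m$ is bounded and $\|\cdot\|_2\geq\frac1m\|\cdot\|_{\nu_m}$, turning $\varepsilon\|f^{n+1}\|_2^2$ into a strict gain $\frac{\varepsilon}{m}\|f^{n+1}\|_{\nu_m}^2$; and the damped boundary condition $f^{\ell+1}_-=(1-\frac1j)P_\gamma f^\ell+r$, which gives the boundary contraction factor $1-\frac2j+\frac{3}{2j^2}<1$. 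Only with both is the sequence Cauchy (Lemma~\ref{lineareb}), and the limits must then be taken in the order $\ell\to\infty$, $j\to\infty$, $\varepsilon\to 0$, and finally $m\to\infty$; you omit the $m\to\infty$ removal of the cutoff entirely. You should either introduce these two regularizations and the final $m$-limit, or give a genuinely different construction of the $\varepsilon$-penalized solution (e.g.\ a Lax--Milgram/weak-formulation argument) that sidesteps them.
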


\noindent For the proof of Proposition \ref{linearl2} we need several lemmas. We start with the simple transport equation with a penalization term:
\begin{lemma}
\label{incoming}For any $\varepsilon >0$, there exists a unique solution to
\begin{equation*}
\varepsilon f+v\cdot \nabla _{x}f=g,\text{ \ \ \  }f_{{-}}=r,
\end{equation*}%
so that
\begin{eqnarray*}
\|f\|_{2}+|f |_{2} &\lesssim_{\varepsilon }&\|g\|_{2}+|r|_{2}, \\
\|\langle v \rangle^{\beta}e^{\zeta|v|^2} f\|_{\infty }+|\langle v \rangle^{\beta}e^{\zeta|v|^2}  f |_{\infty } &\lesssim_{\varepsilon} &\| \langle v \rangle^{\beta}e^{\zeta|v|^2}  g\|_{\infty
}+| \langle v \rangle^{\beta}e^{\zeta|v|^2}  r|_{\infty },
\end{eqnarray*}
for all $\beta\geq 0 , \zeta\geq0$.  Moreover if $g$ and $r$ are continuous away from the grazing set $\gamma _{0}$, then $f$ is continuous away from $\mathfrak{D}$.  In particular, if $ \ \Omega   $ is convex then $\mathfrak{D}=\gamma_0$.
\end{lemma}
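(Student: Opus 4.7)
The plan is to use the method of characteristics to write down an explicit representation for $f$, from which both the $L^2$ and $L^\infty$ bounds fall out easily; the only genuinely delicate point will be the continuity statement in a nonconvex domain.

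First I would integrate along backward characteristics. Since $|v|$ is conserved by the free transport, the ODE $\frac{d}{d\tau}f(x-\tau\bar v,v)=\varepsilon f(x-\tau\bar v,v)-g(x-\tau\bar v,v)$ on $\tau\in[0,t_\mathbf{b}(x,v)]$ with terminal datum $f(x_\mathbf{b},v)=r(x_\mathbf{b},v)\in\gamma_-$ integrates to
\begin{equation*}
f(x,v)\;=\;e^{-\varepsilon t_\mathbf{b}(x,v)}\,r(x_\mathbf{b}(x,v),v)\;+\;\int_0^{t_\mathbf{b}(x,v)} e^{-\varepsilon\tau}\,g(x-\tau\bar v,v)\,d\tau.
\end{equation*}
This formula provides existence; uniqueness is immediate because the difference of two solutions satisfies the homogeneous equation with zero incoming data and must vanish along every characteristic. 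For the $L^\infty$ bound, since $v$ (and hence the weight $\langle v\rangle^\beta e^{\zeta|v|^2}$) is constant along characteristics, multiplying through by this weight and bounding the $\tau$-integral by $\varepsilon^{-1}$ immediately yields the claimed weighted $L^\infty$ estimate, with implicit constant $1+\varepsilon^{-1}$.

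For the $L^2$ estimate I would multiply the equation by $f$, integrate over $\Omega\times\mathbf{R}^3$, and apply the steady Green's identity (\ref{steadyGreen}) to rewrite $\iint (v\cdot\nabla_x f) f\,dx\,dv = \frac{1}{2}(|f|_{2,+}^2-|f|_{2,-}^2)$. Together with the boundary condition $f|_{\gamma_-}=r$ this gives
\begin{equation*}
\varepsilon\|f\|_2^2+\tfrac{1}{2}|f|_{2,+}^2 \;=\; (g,f)+\tfrac{1}{2}|r|_2^2,
\end{equation*}
and Cauchy--Schwarz followed by Young's inequality absorbs $\|f\|_2$ on the left to produce $\|f\|_2+|f|_{2,+}\lesssim_\varepsilon \|g\|_2+|r|_2$. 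Combined with $|f|_{2,-}=|r|_2$ this gives the full trace bound.

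The main obstacle is the continuity claim, because the explicit formula reduces it to continuity of the backward exit maps $(x,v)\mapsto (t_\mathbf{b},x_\mathbf{b})$, which fails precisely where characteristics graze. If $(x,v)\notin\mathfrak{D}$ then by (\ref{Dset}) neither $(x,v)$ itself nor its exit point $(x_\mathbf{b},v)$ lies in the problematic grazing configuration $\gamma_0^{\mathbf{S}}$, so either $\bar v\cdot n(x_\mathbf{b})<0$ (and Lemma \ref{tbsmooth} gives smoothness of $t_\mathbf{b},x_\mathbf{b}$ via the implicit function theorem) or else $(x_\mathbf{b},v)\in\gamma_0\setminus\gamma_0^{\mathbf{S}}$, in which case $t_\mathbf{b}(x_\mathbf{b},\pm v)=0$ prevents the singularity from propagating back into the interior along nearby trajectories. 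Hence the composed expression for $f$ is continuous at $(x,v)$. When $\Omega$ is convex, any grazing point satisfies $t_\mathbf{b}(x,-v)=0$, so $\gamma_0^{\mathbf{S}}=\emptyset$ and $\mathfrak{D}=\gamma_0$, completing the proof.
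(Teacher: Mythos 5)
Your proposal is correct and follows the same strategy as the paper: integrate along backward characteristics to get a representation formula, read off the weighted $L^\infty$ bound, obtain the $L^2$ bound by multiplying by $f$ and using the steady Green's identity (\ref{steadyGreen}), and reduce continuity to the smoothness of $(x,v)\mapsto(t_{\mathbf{b}},x_{\mathbf{b}})$ supplied by Lemma~\ref{tbsmooth}.

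The one stylistic difference worth flagging: you integrate all the way to the backward exit time $t_{\mathbf{b}}(x,v)$, obtaining the closed-form
\(f(x,v)=e^{-\varepsilon t_{\mathbf{b}}}r(x_{\mathbf{b}},v)+\int_0^{t_{\mathbf{b}}}e^{-\varepsilon\tau}g(x-\tau\bar v,v)\,d\tau,\)
whereas the paper writes the Duhamel formula up to an artificial finite horizon $t$ and then takes $t=t(\varepsilon)$ large so that the surviving $e^{-\varepsilon t}\|h\|_\infty$ term can be absorbed. The two are equivalent here; your version is cleaner for this lemma, while the paper's version is phrased so it can be reused verbatim in the later iterative ($\ell\to\ell+1$) schemes where the equation is not triangular. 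For the $L^2$ estimate you apply Green's identity directly; strictly speaking one should first justify $f\in L^2$ (for instance via the explicit formula and a Minkowski/trace estimate), but the paper is no more careful on this point, simply remarking that $\|f\|_2\lesssim\|\langle v\rangle^\beta e^{\zeta|v|^2}f\|_\infty$. On continuity, the paper simply asserts that $(x,v)\notin\mathfrak{D}$ forces $n(x_{\mathbf{b}})\cdot v<0$ and then invokes Lemma~\ref{tbsmooth}; your two-case split acknowledges the possible tangential exit $(x_{\mathbf{b}},v)\in\gamma_0\setminus\gamma_0^{\mathbf{S}}$, but the phrase ``prevents the singularity from propagating'' is not a proof — if you want to retain that case you should argue that nearby backward trajectories exit through a uniformly small boundary patch on which $r$ is continuous (or observe that under the paper's conventions for $t_{\mathbf{b}}$ this case does not actually arise for $x$ in the interior). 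The $\mathfrak{D}=\gamma_0$ step for convex $\Omega$ matches the paper.
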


\begin{proof}
The existence of $f$ and $L^{\infty }$-bound follow from  integration along the
characteristic lines of $\frac{dx}{ds}=\bar{v} \in \mathbf{R}^d$, and $\frac{dv}{ds}=0$ (Recall (\ref{v}) for notations). More precisely, setting $h(x,v)=\langle v \rangle^{\beta} e^{\zeta |v|^2} f(x,v)$, the integrated form of the equation for $h$ is:
\begin{eqnarray*}
 h(x,v) &=& \mathbf{1}_{t> t_{\mathbf{b}}(x,v)} \Big\{ h(x-t\bar{v},v)e^{-\varepsilon t}\\ & & \ \ \ \  \ \ \ \ \  \ \ \ \ \ +\int_0^t \langle v \rangle^{\beta} e^{\zeta |v|^2}g(x-(t-s)\bar{v},v) e^{-\varepsilon(t-s)} ds\Big\}\\
  &&+\mathbf{1}_{t\leq t_{\mathbf{b}}(x,v)} \Big\{ \langle v \rangle^{\beta} e^{\zeta |v|^2}r(x-t\bar{v},v)e^{-\varepsilon t}\\ & & \ \ \ \ \ \ \ \ \ \  \ \ \ \ \  + \int_{t-t_{\mathbf{b}}(x,v)}^t\langle v \rangle^{\beta} e^{\zeta |v|^2} g(x-(t-s)\bar{v},v) e^{-\varepsilon(t-s)} ds\Big\},
\end{eqnarray*}
where $t_{\mathbf{b}}(x,v)$ is defined in (\ref{backexit}). We prove the $L^{\infty}-$bound by choosing a large $t=t(\varepsilon)$ such that
\begin{eqnarray*}
| h(x,v)| \ \lesssim \ \frac{1}{2} \{\| h \|_{\infty}+|h|_{\infty}\} + \| \langle v \rangle^{\beta} e^{\zeta |v|^2} r \|_{\infty}+ {\varepsilon}^{-1} \| \langle v \rangle^{\beta} e^{\zeta |v|^2} g \|_{\infty}.
\end{eqnarray*}
In order to prove the continuity, let $(x,v)\in \overline{\Omega}\times\mathbf{R}^3 \backslash \mathfrak{D}$.  Then by the definition of $\mathfrak{D}$, $n(x_{\mathbf{b}}(x,v))\cdot v <0$ and hence $t_{\mathbf{b}}(x,v)$ is smooth by Lemma \ref{tbsmooth}. Therefore, if $g$ and $ r$ are continuous,  $f(x,v)$ is continuous at $(x,v)\in \bar\Omega \times\mathbf{R}^3 \backslash \mathfrak{D}$.

Suppose now that $\Omega$ is convex. In order to show that $\mathfrak{D}=\gamma_0$, since $\mathfrak{D}\supset\gamma_0$ is true for any $\Omega$, it suffices to show that $\mathfrak{D}\subset\gamma_0$.  Since $\Omega$ is convex $t_{\mathbf{b}}(x,v)=0=t_{\mathbf{b}}(x,-v)$ for $x\in\partial\Omega, \ v\neq 0$.  Therefore $\gamma_0^{\mathbf{S}}=\partial\Omega\times\{0\}$.  Hence, if $(x_{\mathbf{b}}(x,v),v)\in\gamma_0^{\mathbf{S}}$ then $v=0$ and $x_{\mathbf{b}}(x,0)=x\in\partial\Omega$ and $(x,v)\in\gamma_0^{\mathbf{S}}$.

\noindent The $L^{2}$-estimate and the uniqueness follow from Green's formula, since $
\|f\|_{2}\lesssim \|\langle v \rangle^{\beta} e^{\zeta |v|^2}f\|_{\infty }$.
\end{proof}

\vskip .3cm
In next lemma we add to the penalized transport equation a suitably cut-offed linearized Boltzmann operator. Moreover we include a {\it reduced} diffuse reflection boundary condition, with the purpose of setting up a contracting map argument.  We have
\begin{lemma}
\label{lineareb}For any $\varepsilon >0, \ m>0$, and for any integer $j>0$, there exists a unique solution to
\begin{equation}
 \varepsilon f+ v\cdot \nabla _{x}f+L_mf = g,\text{ \ \ \ \ \ \ \ \ }%
f_{{-}}=(1-\frac{1}{j})P_{\gamma }f +r,  \label{elinear}
\end{equation}%
with $L_m$ the linearized Boltzmann operator corresponding to the cut-offed cross section $B_m=\min\{B,m\}$.  Moreover,  uniformly in $j$, we have
\begin{equation*}
\|f\|_{\nu }+|f|_{2}\lesssim _{\varepsilon ,m}\|g\|_{2}+|r|_{2}.
\end{equation*}
Finally the limit $f$ as $j\to \infty$ of the sequence $\{f^j\}$ exists and solves uniquely
\begin{equation}
 \varepsilon f+ v\cdot \nabla _{x}f+L_mf = g,\text{ \ \ \ \ \ \ \ \ }%
f_{{-}}=P_{\gamma }f +r . \label{elinear1}
\end{equation}
\end{lemma}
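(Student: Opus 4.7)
The plan is to construct the solution of (\ref{elinear}) via a Banach fixed-point argument carried out on the boundary trace, using Lemma \ref{incoming} as the underlying transport solver; the strict factor $(1-1/j)<1$ in the reflected data will deliver the contraction.

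\emph{Step 1: absorbing-boundary subproblem.} First I would establish, for any $\tilde g\in L^2(\Omega\times\mathbf{R}^3)$ and $\tilde r\in L^2(\gamma_-)$, unique solvability of
\[
\varepsilon f + v\cdot\nabla_x f + L_m f = \tilde g,\qquad f_- = \tilde r,
\]
with $\|f\|_\nu + |f|_{2,+} \lesssim_{\varepsilon,m} \|\tilde g\|_2 + |\tilde r|_{2,-}$. Decomposing $L_m=\nu_m - K_m$ with $\nu_m\ge \nu_0>0$, Lemma \ref{incoming} (with $\varepsilon+\nu_m$ playing the role of the penalization) inverts the transport part, so the full equation reduces to a compact perturbation of the identity, the operator $K_m$ being bounded with integrable kernel thanks to the cut-off. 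The Fredholm alternative applies once the kernel is shown to be trivial: testing the homogeneous problem against $f$ itself and using Green's formula (Lemma \ref{green}) gives $\varepsilon\|f\|_2^2 + \langle L_m f,f\rangle + \tfrac12|f|_{2,+}^2 = 0$, whose three non-negative terms must all vanish, forcing $f\equiv 0$.

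\emph{Step 2: fixed-point iteration for fixed $j$.} Define $\mathcal{T}:L^2(\gamma_+)\to L^2(\gamma_+)$ by sending $\phi$ to the $\gamma_+$-trace of the Step 1 solution with $\tilde g = g$ and $\tilde r = (1-\tfrac1j)P_\gamma \phi + r$. For two inputs $\phi_1,\phi_2$, the difference $h=\mathcal T[\phi_1]-\mathcal T[\phi_2]$ solves the absorbing subproblem with $\tilde g=0$ and $\tilde r = (1-\tfrac1j)P_\gamma(\phi_1-\phi_2)$. The same energy identity yields
\[
\varepsilon\|h\|_2^2 + \langle L_m h,h\rangle + \tfrac12 |h|_{2,+}^2 \ =\ \tfrac12|h|_{2,-}^2 \ \le\ \tfrac{(1-1/j)^2}{2}|\phi_1-\phi_2|_{2,+}^2,
\]
so $|\mathcal T[\phi_1]-\mathcal T[\phi_2]|_{2,+}\le (1-\tfrac1j)|\phi_1-\phi_2|_{2,+}$, a strict contraction on $L^2(\gamma_+)$. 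Banach's theorem produces a unique fixed point, whose Step 1 lift is the unique solution of (\ref{elinear}).

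\emph{Step 3: uniform-in-$j$ bound and $j\to\infty$.} Applying the energy identity to $f$ itself, expanding $|f_-|_{2,-}^2$ and using $|f|_{2,+}^2=|P_\gamma f|_{2,+}^2+|\{I-P_\gamma\}f|_{2,+}^2$, one gets
\[
\varepsilon\|f\|_2^2 + c_m\|(\mathbf{I}-\mathbf{P})f\|_\nu^2 + \tfrac{1-(1-1/j)^2}{2}|P_\gamma f|_{2,+}^2 + \tfrac12|\{I-P_\gamma\}f|_{2,+}^2 \ \le\ (g,f) + |P_\gamma f|_{2,+}|r|_{2,-} + \tfrac12|r|_{2,-}^2.
\]
For $j$ bounded this directly gives the required estimate. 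The genuine obstacle is the regime $j\to\infty$, where the coefficient $1-(1-1/j)^2\sim 1/j$ on $|P_\gamma f|_{2,+}^2$ degenerates, so Young's inequality alone cannot absorb the cross term $|P_\gamma f|_{2,+}|r|_{2,-}$. I would close the uniform bound by contradiction-compactness: given a normalized sequence $f_k$ with vanishing data, the equation controls $\|v\cdot\nabla f_k\|_2$ and velocity averaging provides strong $L^2$ convergence of the hydrodynamic moments $(a_{f_k},b_{f_k},c_{f_k})$ and hence of $\mathbf{P}f_k$, while $L_m$-coercivity controls $(\mathbf I-\mathbf P)f_k$; the weak limit $f^*$ satisfies the $j=\infty$ equation with zero data, the energy identity forces $f^*\equiv 0$, and the quantitative term $\varepsilon\|f_k\|_2^2$ in the energy inequality then contradicts the normalization. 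With the uniform bound in hand, weak $L^2$ compactness of $\{f^j\}$ yields a limit $f$ solving (\ref{elinear1}); uniqueness and the same bound follow again from the energy identity, the full reflection $f_-=P_\gamma f+r$ giving $|f|_{2,+}^2-|f|_{2,-}^2 = |\{I-P_\gamma\}f|_{2,+}^2 - 2(P_\gamma f,r)_{2,-} - |r|_{2,-}^2$.
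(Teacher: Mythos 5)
Your Steps 1--2 give a perfectly serviceable alternative to the paper's fixed-$j$ construction. The paper iterates both in $\ell$ (to handle $K_m$ and the reflected data simultaneously) and then sends $j\to\infty$; you instead package the absorbing-boundary solvability as a Fredholm alternative and set up a Banach contraction on the $\gamma_+$-trace, with the factor $1-1/j$ supplying the contraction constant. That is cleaner as a piece of abstract functional analysis and entirely equivalent in content. You are also right to single out the uniform-in-$j$ bound as the only genuine difficulty.

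The gap is in Step 3. The contradiction mechanism you name --- ``the quantitative term $\varepsilon\|f_k\|_2^2$ in the energy inequality then contradicts the normalization'' --- does not in fact deliver a contradiction. From the energy identity with the boundary condition expanded, the right-hand side already tends to zero (the cross term $(P_\gamma f_k,r_k)$ is harmless since $|P_\gamma f_k|_2$ is bounded by normalization while $|r_k|_2\to 0$), so $\varepsilon\|f_k\|_2^2\to 0$ \emph{for free}; since $\nu_m$ is bounded above and below, this forces $\|f_k\|_{\nu_m}\to 0$. That is entirely consistent with the normalization $\|f_k\|_{\nu_m}+|f_k|_2=1$, which just pushes $|f_k|_2\to 1$, and that is precisely the scenario you must rule out. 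The obstruction is therefore the boundary quantity $|P_\gamma f_k|_2$, and velocity averaging cannot help there: averaging lemmas give strong interior $L^2_{\mathrm{loc}}$ compactness of moments but say nothing about the trace on $\partial\Omega$. What actually closes the argument is the combination that the paper uses directly: the Ukai trace theorem (Lemma~\ref{ukai}) bounds the non-grazing trace $|f\mathbf{1}_{\gamma^{\varepsilon'}}|_2$ in terms of $\|f\|_2$ and $\|v\cdot\nabla_x f\|_2$, and the algebraic structure $P_\gamma f(x,v)=z_\gamma(x)\sqrt{\mu(v)}$ --- exploited in the computation (\ref{Pgamma}) and the small-measure estimates (\ref{epsilonprime1}) --- shows that the restricted trace $|P_\gamma f\,\mathbf{1}_{\gamma^{\varepsilon'}}|_2$ already controls the \emph{full} $|P_\gamma f|_2$, because the Maxwellian profile in $v$ carries only $O(\varepsilon')$ mass near $\gamma_0$ and at large speed. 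Once you invoke that, the bound becomes direct and constructive (add $4\times(\ref{epsilonenergy})$ to (\ref{tracep}) and pick $\eta$ small) and the contradiction--compactness scaffolding is unnecessary. Note that this is by design: the introduction explains that the earlier indirect (contradiction) $L^2$ argument from \cite{Guo08} breaks down in the non-isothermal setting, and a principal point of this paper is to replace it with exactly this kind of quantitative trace estimate, so a compactness argument here would be working against the grain of the construction.
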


\begin{proof}
Denote $L_m=\nu_m -K_m$.  For any $j$, we apply Lemma \ref{incoming} to the following double-iteration in both $j$ and $\ell$:
\begin{eqnarray}
  \varepsilon f^{\ell+1}+v\cdot \nabla _{x}f^{\ell+1}+\nu_m f^{\ell+1}-K_mf^{\ell}  &=&g ,
\label{approximate} \\
f_{{-}}^{\ell+1} &=&(1-\frac{1}{j})P_{\gamma }f^{\ell}+r,
\notag
\end{eqnarray}
{with $f^0=0$.}
\newline {\it Step 1}. We first fix $m$ and $j$ and take $\ell\rightarrow \infty $.

From Green's identity,
\begin{eqnarray*}
&&\varepsilon \|f^{\ell+1}\|^{2}_2+\frac{1}{2}| f^{\ell+1} |_{2,+}^{2}+\|f^{\ell+1}\|_{\nu _{m}}^{2} \\
&&\ \ \ \ \ \ \ \ \ =(K_mf^{\ell},f^{\ell+1})+\frac{1}{2}\vert (1-\frac{1}{j}%
)P_{\gamma }f^{\ell}+r\vert _{{2,-}}^{2}+(f^{\ell+1},g).
\end{eqnarray*}%
From $(K_m(f^{\ell}+f^{\ell+1}),f^{\ell}+f^{\ell+1})\leq (\nu_m
(f^{\ell}+f^{\ell+1}),f^{\ell}+f^{\ell+1})$, we deduce
\begin{equation*}
(K_mf^{\ell},f^{\ell+1})\leq (\nu_m f^{\ell},f^{\ell+1}).
\end{equation*}%
Moreover, there is $C_j$ such that
\begin{equation}
\vert (1-\frac{1}{j})P_{\gamma }f^{\ell}+r\vert
_{{2,-}}^{2}\leq \vert (1-\frac{1}{j})P_{\gamma }f^{\ell}\vert _{{2,-}}^{2}+\frac{1}{2j^{2}}|P_{\gamma
}f^{\ell}|_{2,-}^{2}+C_{j}|r|_{2}^{2}.\label{Cj}
\end{equation}%
We derive from $|P_{\gamma }f^{\ell}|_{2,-}^{2}\leq |f ^{\ell}|_{2,+}^{2}$ and $\| \cdot \|_{2}\geq \frac{1}{m}\|\cdot \|_{\nu_m}$,
\begin{eqnarray*}
&&\left\{ \frac{\varepsilon }{m}+1\right\} \|f^{\ell+1}\|_{\nu_m }^{2}+\frac{1}{2}%
|f^{\ell+1}|_{2,{+}}^{2}\leq \ \frac{1}{2}\|f^{\ell}\|_{\nu_m }^{2}+\frac{1}{2}%
\|f^{\ell+1}\|_{\nu_m }^{2}\\
&&+\frac{1}{2}(1-\frac{2}{j}+\frac{3}{2j^{2}}%
)|f^{\ell}|_{2,{+}}^{2}+\frac{1}{2}C_j|r|_{2}^{2} +\frac{\varepsilon }{2m}\|f^{\ell+1}\|_{\nu_m
}^{2}+4\varepsilon ^{-1}\|g\|_{2}^{2}.
\end{eqnarray*}%
Since $\frac{\varepsilon }{2m}+1-\frac{1}{2}  >  \frac{1}{2}$ and $1-\frac{2}{j}+%
\frac{3}{2j^{2}}<1$, by iteration over $\ell$, for some $\eta_{\varepsilon
,m,j}<1, $
\begin{equation*}
\|f^{\ell+1}\|_{\nu _{m}}^{2}+|f^{\ell+1}|_{2,{+}}^{2}\leq \eta_{\varepsilon
,m,j}\{\|f^{\ell}\|_{\nu_{m} }^{2}+|f^{\ell}|_{2,+}^{2}\}+C_{\varepsilon
,m,j}\{|r|_{2}^{2}+\|g\|_{2}^{2}\}.
\end{equation*}%
Taking the difference of $f^{\ell+1}-f^{\ell}$ we conclude that $f^{\ell}$ is a Cauchy
sequence. We take $\ell\rightarrow \infty $ to obtain $f^{j}$ as a solution
to the equation%
\begin{equation}
 \varepsilon f^{j}+v\cdot \nabla _{x}f^{j}+L_mf^{j} = g,\text{ \ \ \ \ \ }%
f_{{-}}^{j}=(1-\frac{1}{j})P_{\gamma }f^{j}+r.
\label{jequation}
\end{equation}
\newline {\it Step 2}. We take $j\rightarrow \infty $ for $f^{j}$.

By Green's identity we obtain uniformly in $j$,  
\begin{equation*}
\varepsilon \|f^{j}\|_{2}^{2}+(L_m f^{j},f^{j})+\frac{1}{2}|f ^{j}|_{2,+}^{2}-\frac{1}{2}|P_{\gamma }f^{j}+r|_{2,-}^{2}=\int f^{j}g.
\end{equation*}%
We rewrite for any $\eta >0$
\begin{eqnarray}
\frac{1}{2}|P_{\gamma }f^{j}+r|_{2,-}^{2} &= &\frac{1}{2}%
|P_{\gamma }f^{j}|_{2,-}^{2}+\frac{1}{2}|r|_{2}^{2}+\int_{\gamma
_{-}}P_{\gamma }f^{j} \ r \ d\gamma\notag\\
&\leq &\frac{1}{2}|P_{\gamma }f^{j}|_{2,-}^{2}+C_{\eta
}|r|_{2}^{2}+\eta |P_{\gamma }f^{j}|_{2,-}^{2},\label{eta}
\end{eqnarray}%
so that from $\int f^{j}g\leq \frac{\varepsilon }{2}\|f^{j}\|_{2}^{2}+C_{%
\varepsilon }\|g\|_{2}^{2}$ and from the spectral gap of $L_m$, we have
\begin{equation}
\frac{\varepsilon }{2}\|f^{j}\|_{2}^{2}+\|(\mathbf{I}-\mathbf{P}%
)f^{j}\|_{\nu_m }^{2}+\frac{1}{2}|(1-P_{\gamma })f^{j}|_{2,+}^{2}\leq C_{\eta
,\varepsilon }\{|r|_{2}^{2}+\|g\|_{2}^{2}\}+\eta |P_{\gamma }f^{j}|_{2,-}^{2}.  \label{epsilonenergy}
\end{equation}%
But from the equation we have
\begin{equation*}
v\cdot \nabla _{x}[f^{j}]^{2}=-2\varepsilon \lbrack f^{j}]^{2}-2f^{j}L_m(\mathbf{I}-%
\mathbf{P})f^{j}+2f^{j}g.
\end{equation*}%
Taking absolute value and integrating on $\Omega\times\mathbf{R}^3$,  from (\ref{epsilonenergy}) we have
\begin{eqnarray*}
\|v\cdot \nabla _{x}(f^{j} )^{2}\|_{1}&\leq& C_{\varepsilon }\{\|f^j\|_{2}^{2}+\|(%
\mathbf{I}-\mathbf{P})f^{j}\|_{\nu_{m} }^{2}+\|g\|_{2}^{2}\}\\
&\leq& C_{\eta
,\varepsilon }\{|r|_{2}^{2}+\|g\|_{2}^{2}\}+\eta C_{\varepsilon }|P_{\gamma
}f^{j}|_{2,-}^{2}.
\end{eqnarray*}%
Hence, by Lemma \ref{ukai}, for any $\gamma ^{\varepsilon^{\prime} }$ in (\ref{ukaitrace}) away from $\gamma_{0},$
we have
\begin{equation}
|f ^{j} \ \mathbf{1}_{\gamma^{\varepsilon^{\prime}}}|_{2 }^{2}\leq C_{\varepsilon ,\eta ,\varepsilon^{\prime}
}\{|r|_{2}^{2}+\|g\|_{2}^{2}\}+\eta C_{\varepsilon ,\varepsilon^{\prime} }|P_{\gamma
}f^{j}|_{2,-}^{2}.  \label{tracefk}
\end{equation}%
From (\ref{pgamma}) we can write $P_{\gamma }f^j =z_{\gamma}(x)\sqrt{\mu}$ for a suitable function $z_\gamma(x)$ and, from $|P_{\gamma}f^j \mathbf{1}_{\gamma^{\varepsilon^{\prime}}}|_2\lesssim |f ^{j} \ \mathbf{1}_{\gamma^{\varepsilon^{\prime}}}|_{2 } < +\infty$, for $\varepsilon^{\prime}$ small
\begin{eqnarray}
|P_{\gamma}f^j \ \mathbf{1}_{\gamma^{\varepsilon^{\prime}}}|_2^2 &=&
\int_{\partial\Omega} |z_{\gamma}(x)|^2 \int_{ |n(x)\cdot v|\geq \varepsilon^{\prime},|v|\leq \frac{1}{\varepsilon^{\prime}} } \mu(v)|v\cdot n(x)|
dvdx\notag\\
&\geq &  \int_{\partial\Omega } |z_{\gamma}(x)|^2 dx\times \frac{1}{2} \int_{\mathbf{R}^3}  \mu(v)|v\cdot n(x)| dv\notag\\
&=& \frac{1}{2} |P_{\gamma}f^j|_2^2,\label{Pgamma}
\end{eqnarray}
where we used the fact
\begin{eqnarray}
&&\int_{|n(x)\cdot v|\leq \varepsilon^{\prime}} \mu(v) |n(x)\cdot v| dv \leq
\int_{-\varepsilon^{\prime}}^{\varepsilon^{\prime}}e^{-v_{\|}^2} \ |v_{\|}| dv_{\|}
\int_{\mathbf{R}^2} e^{-{|v_{\bot}|^2}/{2}} dv_{\bot} \leq C\varepsilon^{\prime},\label{epsilonprime1}\\
&&\int_{|v|\geq 1/\varepsilon^{\prime}} \mu(v) |n(x)\cdot v| dv \leq C\varepsilon^{\prime}.\notag
\end{eqnarray}
Therefore we conclude
\begin{eqnarray}
\frac{1}{2}|P_{\gamma }f^{j}|_{2}^{2}-|(1-P_{\gamma })f^{j}|_{2,+}^{2} &\leq
&|P_{\gamma }f ^{j}\mathbf{1}_{\gamma^{\varepsilon^{\prime}}}|_{2}^{2}-|(1-P_{\gamma })f^{j}\mathbf{1}_{\gamma^{\varepsilon^{\prime}}}|_{2,+}^{2}\label{tracep}\\
&\lesssim& |f ^{j} \mathbf{1}_{\gamma^{\varepsilon^{\prime}}} |_{2  }^{2}\notag\\
&\leq&  C_{\varepsilon ,\eta ,\varepsilon^{\prime} }\{|r|_{2}^{2}+\|g\|_{2}^{2}\}+\eta
C_{\varepsilon ,\varepsilon^{\prime} }|P_{\gamma }f ^{j}|_{2,-}^{2}.  \notag
\end{eqnarray}%
Adding $4\times (\ref{epsilonenergy})$ to $(\ref{tracep})$, we
obtain:%
\begin{eqnarray*}
&& \ \ 2\varepsilon \|f^{j}\|_{2}^{2}+|(1-P_{\gamma })f ^{j}|_{2,+}^{2}+4 \|(\mathbf{I}-\mathbf{P})f^{j}\|_{\nu_{m } }^{2}+\frac{1}{2}%
|P_{\gamma }f^j|_{2}^{2}\\
&&\leq \ C_{\varepsilon ,\eta
}\{|r|_{2}^{2}+\|g\|_{2}^{2}\}+\eta (1+ C_{\varepsilon ,\varepsilon^{\prime}
})|P_{\gamma }f ^{j}|_{2,-}^{2}.
\end{eqnarray*}%
Choosing $\eta $ small and taking the weak limit $j\rightarrow \infty$,  we complete the
proof of the lemma.
\end{proof}
\vskip .3cm
Next lemma states the crucial $L^2$ bound for ${\mathbf P}f$.  It will provide uniform in $\varepsilon$ estimates which allow to take the limit $\varepsilon\to 0$ in (\ref{elinear1}).
\begin{lemma}
\label{steadyabc}Let $f$ be a solution, in the sense of (\ref{weakformulation}) below, to
\begin{equation}
v\cdot \nabla _{x}f+L f=g,\text{ \ \ \ \ \ \ \ }f_{ {-}}=P_{\gamma
}f+r,\label{linearf2}
\end{equation}%
with
\begin{equation*}
\iint_{\Omega\times\mathbf{R}^3}f\sqrt{\mu }dxdv
= \iint_{\Omega\times\mathbf{R}^3}   g\sqrt{\mu }dxdv=\int_{\gamma_-}r \sqrt{\mu}d\gamma=0,
\end{equation*}%
then we have
\begin{equation*}
\|\mathbf{P}f\|_{\nu }^{2}\lesssim \|(\mathbf{I}-\mathbf{P})f\|_{\nu
}^{2}+\|g\|_{2}^{2}+|(1-P_{\gamma })f|_{2,+}^{2}+|r|_{2 }^{2}.
\end{equation*}
\end{lemma}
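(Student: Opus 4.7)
The strategy, following the introduction's outline, is to test the equation $v\cdot\nabla_x f + Lf = g$ against three carefully designed trial functions $\psi_c, \psi_b, \psi_a$, one for each hydrodynamic mode in $\mathbf{P}f = (a + v\cdot b + \tfrac{|v|^2-3}{2}c)\sqrt{\mu}$. Each $\psi$ is built from the solution of a Poisson problem with right-hand side equal to the mode to be estimated, so that the bulk term in Green's identity (\ref{steadyGreen}) produces, via integration by parts and an elementary Gaussian moment computation, $\|c\|_2^2$, $\|b\|_2^2$, or $\|a\|_2^2$ up to multiplicative constants. The trial function is tuned (through a free parameter $\beta$) so that the contribution of $P_\gamma f$ to the boundary integral $\int_\gamma f\psi\,d\gamma$ vanishes identically, leaving only $(1-P_\gamma)f$ and $r$ terms that are already controlled by the RHS.

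I would start with $\|c\|_2^2$. Let $-\Delta\phi_c = c$ in $\Omega$ with $\phi_c|_{\partial\Omega}=0$ (so $\|\phi_c\|_{H^2}\lesssim\|c\|_2$ by elliptic regularity) and set $\psi_c=(|v|^2-\beta_c)\sqrt{\mu}\,v\cdot\nabla_x\phi_c$. Green's identity gives
\[
-\iint f\,(v\cdot\nabla_x\psi_c) + \int_\gamma f\psi_c\,d\gamma + \iint \psi_c\,Lf = \iint g\psi_c.
\]
Splitting $f=\mathbf{P}f+(\mathbf{I}-\mathbf{P})f$, the $\mathbf{P}f$ piece of the bulk term reduces, after the $v$-integration, to a constant times $\int c(-\Delta\phi_c)\,dx=\|c\|_2^2$, provided $\beta_c$ annihilates the $a$ and $b$ contributions: choosing $\beta_c=5$ makes $\int\mu(|v|^2-5)v_iv_j\,dv=0$, which simultaneously (as is easily checked) makes the $P_\gamma f = z(x)\sqrt\mu$ contribution to the boundary integral vanish, since on $\partial\Omega$ the Dirichlet condition reduces $v\cdot\nabla_x\phi_c$ to $(v\cdot n)\partial_n\phi_c$ and the moment $\int\mu(|v|^2-5)(v\cdot n)^2\,dv$ is zero. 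The $(\mathbf{I}-\mathbf{P})f$ bulk contribution and the $\iint\psi_c\,L(\mathbf{I}-\mathbf{P})f$ term (using self-adjointness of $L$ and $L\mathbf{P}f=0$) are both dominated by $\|(\mathbf{I}-\mathbf{P})f\|_\nu\|\psi_c\|_\nu\lesssim\|(\mathbf{I}-\mathbf{P})f\|_\nu\|c\|_2$; the residual boundary pieces give $|(1-P_\gamma)f|_{2,+}$ and $|r|_2$. Absorbing a small multiple of $\|c\|_2$ into the LHS yields the $c$-estimate.

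The argument for $\|b\|_2^2$ is analogous but requires two distinct trial functions $(v_i^2-\beta_b)\sqrt\mu\,\partial_j\phi_b^j$ and $|v|^2v_iv_j\sqrt\mu\,\partial_j\phi_b^i$ ($i\neq j$), with $-\Delta\phi_b^j=b_j$, $\phi_b^j|_{\partial\Omega}=0$. A single scalar test cannot extract both the diagonal combinations $\partial_i b_i$ and the off-diagonal ones $\partial_i b_j$, because the Gaussian moments $\int\mu v_i^2v_k^2$ differ for $k=i$ and $k\neq i$; the combination of the two identities with an appropriate $\beta_b$ delivers $\|b\|_2^2$ after elliptic regularity, while the same $\beta_b$ kills the $P_\gamma f$ contribution on the boundary.

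The estimate for $\|a\|_2^2$ is the main obstacle. The naive Dirichlet analogue fails because $a$ has the wrong parity to make $P_\gamma f$ disappear from the boundary integral. The remedy is to solve instead the \emph{Neumann} problem $-\Delta\phi_a=a$ in $\Omega$ with $\partial_n\phi_a=0$ on $\partial\Omega$, and take $\psi_a=(|v|^2-\beta_a)\sqrt\mu\,v\cdot\nabla_x\phi_a$; then on $\partial\Omega$ the trial function $v\cdot\nabla_x\phi_a$ is purely tangential in $v$, so the $P_\gamma f = z(x)\sqrt\mu$ boundary contribution $\int_\partial z(x)\int_{\mathbf{R}^3}\mu(|v|^2-\beta_a)(v\cdot\nabla_\tau\phi_a)(v\cdot n)\,dv\,dS$ vanishes by odd-parity in the normal component of $v$. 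The Neumann problem is solvable only under the compatibility condition $\int_\Omega a\,dx=0$, which by $a=\int f\sqrt\mu\,dv$ (up to normalization) is precisely the hypothesis $\iint f\sqrt\mu\,dxdv=0$: this is the structural reason the zero-mass condition cannot be dropped, and motivates the penalization scheme introduced later. Once $\phi_a$ is available, the same bulk/boundary/elliptic bookkeeping produces $\|a\|_2^2\lesssim\text{RHS}$. Summing the three estimates and using $\|\mathbf{P}f\|_\nu^2\sim\|a\|_2^2+\|b\|_2^2+\|c\|_2^2$ completes the proof. The delicate points will be the precise determination of $\beta_a,\beta_b,\beta_c$ (i.e.\ the bookkeeping of Gaussian moments) and verifying that the weak formulation (\ref{weakformulation}) really licenses the integrations by parts used above, given that $\psi_c,\psi_b,\psi_a$ lie in the right class thanks to elliptic regularity of the Poisson problems.
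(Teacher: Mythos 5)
Your plan matches the paper's proof essentially line for line: the same three test functions built from Poisson solves $-\Delta\phi_c=c$, $-\Delta\phi_b^j=b_j$ (Dirichlet) and $-\Delta\phi_a=a$ (Neumann), the same tuning constants $\beta_c,\beta_b,\beta_a$ to annihilate unwanted Gaussian moments, the same use of the zero-mass hypothesis $\int_\Omega a\,dx=0$ to make the Neumann problem solvable, and the same absorption argument via elliptic regularity. One small misattribution worth correcting: for the $b$-estimate, the $P_\gamma f$ boundary contribution vanishes purely by oddness of $(v_i^2-\beta_b)\mu\,(n\cdot v)$ and of $|v|^2v_iv_j\mu\,(n\cdot v)$ in $v$ (for any $\beta_b$), not because of the value of $\beta_b$; the role of $\beta_b$ is instead to decouple the diagonal $\int(v_i^2-\beta_b)v_i^2\mu\neq0$ from the off-diagonal $\int(v_i^2-\beta_b)v_l^2\mu=0$ ($l\neq i$) bulk moments.
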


\vskip .3cm
\begin{proof}
 The Green's identity (\ref{steadyGreen}) provides the following weak version of (\ref{linearf2}):
\begin{equation}
\int_{\gamma }\psi fd \gamma-\iint_{\Omega \times \mathbf{R}^{3}} {v}\cdot \nabla
_{x}\psi f=-\iint_{\Omega \times \mathbf{R}^{3}}\psi L  (\mathbf{I}-\mathbf{P%
})f+\iint_{\Omega \times \mathbf{R}^{3}} \psi g.  \label{weakformulation}
\end{equation}%
Recall that $\mathbf{P}f=\{a+v\cdot b+c[\frac{|v|^{2}}{2}-\frac{3}{2}]\}\sqrt{\mu
}$ on $\Omega\times\mathbf{R}^3$.  The key of the proof is to choose suitable $H^1$ test functions $\psi$ to estimate $a,b$ and $c$ in (\ref{cestimate}), (\ref{stiijjkiki}), (\ref{stijijiijj}), (\ref{aestimate}) thus concluding the proof of  Proposition \ref{linearl2}.
\newline

\noindent {\it Step 1}. {Estimate of} \ ${c}$

To estimate $c$, we first choose the test function
\begin{equation}
\psi=\psi_c \equiv(|v|^{2}-\beta_c )\sqrt{\mu }v \cdot \nabla_x \phi _{c}(x),
\label{phic}
\end{equation}%
where
\begin{equation*}
-\Delta_x \phi_{c}(x)=c(x)  ,\ \ \ \phi _{c}|_{\partial \Omega }=0,
\end{equation*}%
and $\beta_c$ is a constant to be determined. From the standard elliptic estimate, we have
\begin{equation*}
\|\phi _{c}\|_{H^{2} }\lesssim \|c\|_{2
}.
\end{equation*}%
With the choice (\ref{phic}), the right hand side of (\ref{weakformulation}) is bounded by%
\begin{equation}
\|c\|_{2
}\big\{\|(\mathbf{I}-\mathbf{P})f\|_{2}+\|g\|_{2}\big\}.  \label{rweak}
\end{equation}
We have
\begin{equation*}
{v}\cdot \nabla _{x}\psi_c =\sum_{i,j=1}^{d}(|v|^{2}-\beta_c )v_{i}\sqrt{\mu }%
v_{j}\partial _{i j}\phi _{c}(x),
\end{equation*}%
so that the left hand side of (\ref{weakformulation}) takes the form, for $i=1,\cdots,d,$
\begin{eqnarray}
&&\int_{\partial \Omega \times \mathbf{R}^{3}}(n(x)\cdot v)(|v|^{2}-\beta_c )\sqrt{\mu }%
\sum_{i=1}^d v_{i}\partial _{i}\phi _{c}f\notag\\
&&-\iint_{\Omega \times \mathbf{R}%
^{3}}(|v|^{2}-\beta_c )\sqrt{\mu }\Big\{\sum_{i,j=1}^d v_{i}v_{k}\partial _{ij}\phi
_{c} \Big\}f .  \label{lweak}
\end{eqnarray}%
We decompose
\begin{eqnarray}
f_{\gamma } &=&P_{\gamma }f +\mathbf{1}_{\gamma
_{+}}(1-P_{\gamma })f+\mathbf{1}_{\gamma _{-}}r,\text{ \ \  \ \ \ \ \ \ \ \ \ \ \ \ \ \ \ \ \ on
}\gamma,  \label{bsplit} \\
f &=&\Big\{a+v\cdot b+c[\frac{|v|^{2}}{2}-\frac{3}{2}]\Big\}\sqrt{\mu }+(\mathbf{I}-%
\mathbf{P})f,\text{ \ \ \ \ \  on }\Omega\times\mathbf{R}^3. \label{insidesplit}
\end{eqnarray}%
We shall choose $\beta_c $ so
that, for all $i$
\begin{equation}
\int (|v|^{2}-\beta_c )v_{i}^{2}\mu(v) dv=0.  \label{beta}
\end{equation}%
Since $\mu= \frac{1}{2\pi}e^{-\frac{|v|^2}{2}}$ the desired value of $\beta_c$  is $\beta_c=5$.
Because of the choice of $\beta_c$, there is no $a$ contribution in the bulk and no $P_{\gamma }f$ contribution at
the boundary in (\ref{lweak}).

Therefore, substituting (\ref{bsplit}) and (\ref{insidesplit}) into (\ref%
{lweak}), since the $b$ terms and the off-diagonal $c$ terms also vanish by oddness in $v$ in the bulk, the left hand side of (\ref{weakformulation}) becomes%
\begin{eqnarray*}
&& \sum_{i=1}^d \int_{\gamma }(|v|^{2}-\beta_c ){v_{i}}^{2}n_{i}\sqrt{\mu }\partial _{i}\phi
_{c}[(1-P_{\gamma })f \mathbf{1}_{\gamma _{+}}+r\mathbf{1}_{\gamma _{-}}]
\\
&&-\sum_{i=1}^d \int_{\mathbf{R}^3}(|v|^{2}-\beta_c )v_{i}^{2} (\frac{|v|^{2}}{2}-\frac{3}{2})\mu(v)dv \int_{\Omega } \partial
_{ii}{\phi _{c}(x)c(x)} dx \\
&&-\sum_{i=1}^d \iint_{\Omega \times \mathbf{R}^{3}}(|v|^{2}-\beta_c )v_{i}\sqrt{\mu }%
( {v}\cdot \nabla_x )\partial _{i}\phi _{c}(\mathbf{I}-\mathbf{P})f.
\end{eqnarray*}%
For $\beta_c =5$, $\int_{\mathbf{R}^3}(|v|^{2}-\beta_c )v_{i}^{2} (\frac{|v|^{2}}{2}-\frac{3}{2})\mu(v)dv= 10\pi\sqrt{2\pi}$.
Therefore, we obtain from (\ref{rweak})
\begin{multline*}
- 10\pi\sqrt{2\pi} \int_{\Omega }\Delta_x \phi _{c}(x)c(x)
\lesssim \ \|c\|_{2 }\{|(1-P_{\gamma })f |_{2,+}+\|(\mathbf{I}-\mathbf{P}%
)f\|_{2}+\|g\|_{2}+|r|_{2}\},
\end{multline*}%
where we have used the elliptic estimate and the trace estimate:
\begin{equation*}
|\nabla_x \phi _{c}|_{2 }\lesssim \|\phi _{c}\|_{H^{2} }\lesssim \|c\|_{2}.
\end{equation*}%
Since $-\Delta_x \phi_c = c$, from (\ref{phic}) we obtain
\begin{equation*}
\|c\|_{2 }^{2} \ \lesssim \ \big\{|(1-P_{\gamma })f |_{2,+}+\|(\mathbf{I}-\mathbf{P}%
)f\|_{2}+\|g\|_{2}+|r|_{2}\big\}\|c\|_{2 },\end{equation*}%
and hence
\begin{equation}
\|c\|_{2 }^2 \ \lesssim \ |(1-P_{\gamma })f |_{2,+}^{2}+\|(\mathbf{I}-\mathbf{P}%
)f\|_{2}^{2}+\|g\|_{2}^{2}+|r|_{2}^{2}.  \label{cestimate}
\end{equation}

\noindent{\it Step 2.} {Estimate of} \ $ {b}$

We shall establish the estimate of $b$ by estimating   $(\partial_i \partial_j \Delta^{-1}b_j) b_i$ for all $i,j=1,\dots, d$, and  $(\partial_j\partial_j \Delta^{-1} b_i) b_i$ for $i\neq j$.

We fix $i,j$.  To estimate $\partial_i \partial_j \Delta^{-1}b_j b_i$ we choose as test function in (\ref{weakformulation})
\begin{equation}
\psi = \psi^{i,j}_b\equiv (v_{i}^{2}-\beta_ b)\sqrt{\mu }\partial _{j}\phi _{b}^{j}, \quad i,j=1,\dots, d,  \label{phibj}
\end{equation}%
where $\beta_b$ is a constant to be determined, and
\begin{equation}
-\Delta_x \phi _{b}^{j}(x)=b_{j}(x)  , \ \ \ \phi_b^{j}|_{\partial\Omega}=0.\label{jb}
\end{equation}%
From the standard elliptic estimate%
\begin{equation*}
\|\phi _{b}^{j}\|_{H^{2} }\lesssim \|b\|_{2 }.
\end{equation*}%
Hence the right hand side of (\ref{weakformulation}) is now bounded by%
\begin{equation}
 \|b\|_{2 }\big\{\|(\mathbf{I}-\mathbf{P})f\|_{2}+\|g\|_{2}\big\}.
\label{rweakbj}
\end{equation}%
Now substitute (\ref{bsplit}) and (\ref{insidesplit}) into the left hand side of (\ref{weakformulation}). Note that $(v_{i}^{2}-\beta_b  )\{n(x)\cdot  {v}\}\mu $ is odd in $v$, therefore $P_{\gamma
}f $ contributions to (\ref{weakformulation}) vanishes. Moreover, by (\ref{insidesplit}), the $a, c$ contributions to (\ref{weakformulation}) also vanish by oddness. Therefore the left hand side of (\ref{weakformulation}) takes the form %
\begin{eqnarray}
&&\int_{\partial \Omega \times \mathbf{R}^{3}}(n(x)\cdot  {v})(v_{i}^{2}-\beta_b  )\sqrt{\mu }%
\partial _{j}\phi _{b}^{j}f-\iint_{\Omega \times \mathbf{R}%
^{3}}(v_{i}^{2}-\beta_b  )\sqrt{\mu }\{\sum_{l}v_{l} \partial_{l j}\phi
_{b}^{j}\}f  \notag \\
&=&\int_{\partial \Omega \times \mathbf{R}^{3}}(n(x)\cdot  {v})  (v_{i}^{2}-\beta_b  )
\sqrt{\mu }\partial _{j}\phi _{b}^{j}[(1-P_{\gamma })f +r]\mathbf{1}%
_{\gamma _{+}}  \label{sti_6} \\
&&-\int_{\Omega } { {  \int_{\mathbf{R}%
^{3}}\sum_{l}(v_{i}^{2}-\beta_b )v_{l}^{2}\mu \partial _{lj}\phi
_{b}^{j}(x)b_{l}dv } } dx \label{sti_8} \\
&&-\iint_{\Omega \times \mathbf{R}^{3}}\sum_{l}(v_{i}^{2}-\beta_b  )v_{l}\sqrt{\mu }%
\partial _{l j} \phi _{b}^{j}(x)(\mathbf{I}-\mathbf{P})f.  \notag
\end{eqnarray}%
Furthermore, since $\mu (v)=\frac{1}{2\pi}\prod_{i=1}^3  e^{-\frac{|v_i|^2}{2}}$ we can
choose $\beta_b >0$ such that for all $i,$
\begin{equation}
\int_{\mathbf{R}^{3}}[(v_{i})^{2}-\beta_b ]\mu (v) dv=  \int_{%
\mathbf{R}}[v_{1}^{2}-\beta_b]  e^{-\frac{|v_1|^2}{2}} dv_1=0.  \label{alpha}
\end{equation}%
We remark that the choice (\ref{alpha}) also plays a crucial rule in the dynamical estimate (\ref{i_7}). Since $\mu (v)=\frac{1}{2\pi}e^{-\frac{|v|^{2}}{2}}$, the desired value is $\beta_b=1$.

Note that for such chosen $\beta_b
, $ and for $i\neq k$, by an explicit computation
\begin{eqnarray*}
 \int (v_{i}^{2}-\beta_b )v_{k}^{2}\mu dv  &=& \int (v_{1}^{2}-\beta_b
)v_{2}^{2}  \frac{1}{2\pi}e^{-\frac{|v_1|^2}{2}} e^{-\frac{|v_2|^2}{2}} e^{-\frac{|v_3|^2}{2}} dv\\
&=&  \int_{\mathbf{R}}
(v_{1}^{2}-\beta_b ) e^{-\frac{|v_1|^2}{2}} dv_1=0, \\
 \int (v_{i}^{2}-\beta_b )v_{i}^{2}\mu dv & = &  \int_{\mathbf{R}}
[v_{1}^{4}-\beta_b v_{1}^{2}]e^{-\frac{|v_1|^2}{2}}dv_{1}= 2\sqrt{2\pi}\neq 0.
\end{eqnarray*}%
Therefore, (\ref{sti_8}) becomes, by (\ref{phibj}),
\begin{eqnarray*}
&&-\iint_{\Omega\times\mathbf{R}^3} (v_{i}^{2}-\beta_b )v_{i}^{2}\mu dv \partial _{ij}\phi^j
_{b}(x)b_{i}+\sum_{k(\neq i)}\underbrace{[\int_{\mathbf{R}^3} (v_{i}^{2}-\beta_b )v_{k}^{2}\mu
 ]}_{=0}\int_{\Omega}\partial _{kj}\phi _{b}^j (x)b_{k}  \\
&&=2\sqrt{2\pi}\int_{\Omega }(\partial
_{i}\partial _{j}\Delta ^{-1}b_{j})b_{i}.
\end{eqnarray*}%
Hence we have the following estimate for all $i,j$, by (\ref{rweakbj}):
\begin{eqnarray}
 \left|\int_{\Omega }\partial _{i}\partial _{j}\Delta ^{-1}b_{j}b_{i}\right|
 \lesssim  |(1-P_{\gamma })f |_{2,+}^{2}+\|(\mathbf{I}-\mathbf{P}%
)f\|^{2}_{2}+ \|g\|_{2}^{2}+|r|_{2}^{2} +\varepsilon \|b\|_{2}^{2}.  \label{stiijjkiki}
\end{eqnarray}%
In order to estimate $ \partial _{j}(\partial
_{j}\Delta ^{-1}b_{i})b_{i}$ for $i\neq j$, we choose as test function in (\ref{weakformulation})
\begin{equation}
\psi =
|v|^{2}v_{i}v_{j}\sqrt{\mu }\partial _{j}\phi _{b}^{i}(x),\quad i\neq j,
\label{phibij}
\end{equation}%
where $\phi_b^i$ is given by (\ref{jb}). Clearly, the right hand side of (\ref{weakformulation}) is again bounded by (\ref{rweakbj}%
). We substitute again (\ref{bsplit}) and (\ref{insidesplit}) into the left hand side of (\ref{weakformulation}). The $P_{\gamma}f$ contribution and $a,c$ contributions vanish again due to oddness. Then the left hand side of (\ref{weakformulation}) becomes
\begin{eqnarray}
&&\int_{\partial \Omega \times \mathbf{R}^{3}}\{n\cdot  {v}\}|v|^{2}v_{i}v_{j}\sqrt{\mu }%
\partial _{j}\phi _{b}^{i}f-\iint_{\Omega \times \mathbf{R}%
^{3}}|v|^{2}v_{i}v_{j}\sqrt{\mu }\{\sum_{k}v_{k}\partial _{kj}\phi
_{b}^{i}\}f  \notag \\
&=&\int_{\partial \Omega \times \mathbf{R}^{3}} \{n\cdot  {v}\} |v|^{2} v_{i}v_{j}
\sqrt{\mu }\partial_j\phi ^{i}_b[(1-P_{\gamma })f +r]\mathbf{1}_{\gamma _{+}}
\label{stij_6} \\
&&- {\iint_{\Omega \times \mathbf{R}^{3}}|v|^{2}v_{i}^{2}v_{j}^{2}%
\mu \lbrack \partial _{ij}\phi _{b}^{i}b_{j}+\partial _{jj}\phi
_{b}^{i}(x)b_{i}]}  \label{stij_8} \\
&&-\iint_{\Omega \times \mathbf{R}^{3}}|v|^{2}v_{i}v_{j}v_{k}\sqrt{\mu }%
\partial _{kj}\phi _{b}^{i}(x)[\mathbf{I}-\mathbf{P}]f.  \label{stij_9}
\end{eqnarray}%
Note that (\ref{stij_8}) is evaluated as
\begin{equation*}
7\sqrt{2\pi} \int_{\Omega }\{(\partial _{i}\partial _{j}\Delta
^{-1}b_{i})b_{j}+(\partial _{j}\partial _{j}\Delta ^{-1}b_{i})b_{i}\}.
\end{equation*}%
Furthermore, by $(\ref{jb})$, $|\partial _{j}\phi _{b}^{i}|_{2}\lesssim
\|\phi _{b}^{i}\|_{H^{2}}\lesssim \|b\|_{2}$, so that
\begin{equation*}
(\ref{stij_6})+(\ref{stij_9})\lesssim \|b\|_{2}\big\{|(1-P_{\gamma
})f |_{2,+} +|r|_2 +\|(\mathbf{I}-\mathbf{P})f\|_{2} \big\}.
\end{equation*}
Combining (\ref{stiijjkiki}), we have the following estimate for $i\neq j,$
\begin{eqnarray}
&&\left|\int_{\Omega} \partial _{j}\partial _{j}\Delta ^{-1}b_{i}b_{i}\right|\   \notag \\
&\lesssim &\ \left|\int_{\Omega} \partial _{i}\partial _{j}\Delta
^{-1}b_{i}b_{j}\right|+ |(1-P_{\gamma })f |_{2,+}^{2}+ \|(\mathbf{I}-\mathbf{P})f\|^{2}_{2}+\|g\|_{2}^{2}+ |r|_2^2 +\varepsilon \|b\|_2^2  \notag \\
&\lesssim &  |(1-P_{\gamma })f |_{2,+}^{2} +\|(\mathbf{I}-\mathbf{P})f\|^{2}_{2}+\|g\|_{2}^{2}+ |r|_2^2 +\varepsilon \|b\|_2^2.
\label{stijijiijj}
\end{eqnarray}%
Moreover, by (\ref{stiijjkiki}), for $i=j=1,2,\dots,d,$
\begin{eqnarray}
&& \ \ \left|\int_{\Omega} \partial _{j}\partial _{j}\Delta ^{-1}b_{j}b_{j}\right| \   \notag \\
&&  \lesssim \ |(1-P_{\gamma })f |_{2,+}^{2} +\|(\mathbf{I}-\mathbf{P})f\|^{2}_{2}+\|g\|_{2}^{2}+ |r|_2^2 +\varepsilon \|b\|_2^2.\label{stjjjj}
\end{eqnarray}
Combining (\ref{stijijiijj}) and (\ref{stjjjj}), we sum over $j=1,2,\dots,d$, to obtain, for all $i=1,2,\dots,d,$
\begin{eqnarray}
\| b_i \|_2 \lesssim \ |(1-P_{\gamma })f |_{2,+}^{2} +\|(\mathbf{I}-\mathbf{P})f\|^{2}_{2}+\|g\|_{2}^{2}+ |r|_2^2.\label{b_i}
\end{eqnarray}

\noindent{\it Step 3.} {Estimate of} \ ${a}$

The estimate for $a$ is more delicate because it requires the zero mass condition
$$\iint_{\Omega\times\mathbf{R}^3} f \sqrt{\mu}dx dv= 0=\int_{\Omega} a dx.$$  We choose a test function%
\begin{eqnarray}
 \psi =\psi_a  \equiv  (|v|^{2}-\beta_{a} )v\cdot\nabla_x\phi _{a}\sqrt{\mu }
 = \sum_{i=1}^d(|v|^{2}-\beta_{a} )v_{i}\partial _{i}\phi _{a}\sqrt{\mu } , \label{phia}
\end{eqnarray}%
where
\begin{equation*}
-\Delta_x \phi _{a}(x)=   a(x) ,\text{ \ \ \ \   }\frac{%
\partial }{\partial n}\phi_a|_{\partial\Omega} =0.
\end{equation*}%
It follows from the elliptic estimate with $\int_{\Omega}a =0$ that we have
\begin{equation*}
\|\phi_{{a}} \|_{H^{2} }\lesssim \|a\|_{2 }.
\end{equation*}%
Since $\int_{\mathbf{R}^3}(\frac{|v|^2}{2}-\frac{3}{2})(v_i)^2 \mu(v) dv\neq 0$, we can choose $\beta_a>0 $ so that, for all $i,$
\begin{equation}
\int_{\mathbf{R}^{3}}(|v|^{2}-\beta_a )(\frac{|v|^{2}}{2}-\frac{3}{2}%
)(v_{i})^{2}\mu(v) =0.\label{betaalpha}
\end{equation}%
Since $\mu(v)=\frac{1}{2\pi}e^{-\frac{|v|^2}{2}}$, the desired value is $\beta_a =10$.
Plugging $\psi_a $ into (\ref{weakformulation}) and its right hand side is again bounded by%
\begin{equation*}
 \|a\|_{2}\big\{\|(\mathbf{I}-\mathbf{P})f\|_{2}+\|g\|_{2}\big\}.
\end{equation*}%
By (\ref{bsplit}) and (\ref{insidesplit}), since the $c$ contribution vanishes in (\ref{weakformulation}) due to our choice of $\beta_a $ and the $b$ contribution vanishes in (\ref{weakformulation}) due to the oddness, the right hand side of (\ref{weakformulation}%
) takes the form of
\begin{eqnarray}
&&\sum_{i=1}^d\int_{\gamma } \{ n \cdot  {v}\} (|v|^{2}-\beta_{a} )v_{i}\sqrt{\mu }\partial _{i}\phi
_{a}(x) [P_{\gamma }f+(I-P_{\gamma })f\mathbf{1}_{\gamma _{+}}+r%
\mathbf{1}_{\gamma _{+}}] \ \ \ \ \ \ \label{aboundary} \\
&&-\sum_{i,k=1}^d \iint_{\Omega\times\mathbf{R}^3} (|v|^{2}-\beta_a )v_{i}v_{k}\partial _{ik}\phi _{a}(x)a(x)\mu(v) \label{abulk}\\
&&-\sum_{i,k=1}^d\iint_{\Omega\times\mathbf{R}^3} (|v|^{2}-\beta_a )v_{i}v_{k}\partial _{ik}\phi _{a}(x)(\mathbf{I}-%
\mathbf{P})f.\label{a2bulk}
\end{eqnarray}%
We make an orthogonal decomposition at the boundary,
\begin{equation*}
 {v}_i=(  {v}\cdot n)n_i +  ({v}_{\perp })_i = v_n n_i + ({v}_{\perp })_i .
\end{equation*}%
The contribution of $P_{\gamma}f=z_{\gamma}(x)\sqrt{\mu}$ in (\ref{aboundary}) is
\begin{eqnarray*}
&&\int_{\gamma}(|v|^{2}-\beta_a ) {v}\cdot \nabla_x \phi _{a}(x) {v}_{n}\mu(v) z_{\gamma}(x)  \\
&=&\int_{\gamma }(|v|^{2}-\beta_a ) {v}_{n} \frac{\partial \phi _{a}}{%
\partial n} {v}_{n}\mu(v) z_{\gamma}(x) \\
&&+ \int_{\gamma }(|v|^{2}-\beta_a ) v_{\bot}\cdot \nabla_x \phi_a  {v}_{n}\mu(v) z_{\gamma}(x) .
\end{eqnarray*}%
The crucial choice of (\ref{betaalpha}) makes the first term vanish due to the Neumann boundary condition, while the second term also
vanishes due to the oddness of $ ({v}_{\bot})_i  {v}_n$ for all $i$.
Therefore, (\ref{aboundary}) and (\ref{a2bulk}) are bounded by%
\begin{equation*}
\|a\|_{2 }\big\{\|(\mathbf{I}-%
\mathbf{P})f\|_{2}+ |(1-P_{\gamma })f |_{2,+} +|r|_{2} \big\}.
\end{equation*}%
The second term (\ref{abulk}), for $k\neq i$ vanishes due to the oddness. Hence we only have the $k=i$ contribution:
\begin{equation*}
\sum_{i=1}^d\iint_{\Omega \times \mathbf{R}^{3}}(|v|^{2}-\beta_a )(v_{i})^{2}\mu \partial
_{ii}\phi _{a}a.
\end{equation*}%
Using $-\Delta_x \phi_{a}=a$ we obtain
\begin{equation}
\|a\|_{2 }^{2}\lesssim \|(\mathbf{I}-\mathbf{P})f\|_{2}^{2}+|(1-P_{\gamma
})f |_{2,+}^{2}+|r|_{2}^{2}+\|g\|_{2}^{2}.\label{aestimate}
\end{equation}
\end{proof}

We close this section by proving Proposition \ref{linearl2}.
\newline

\begin{proof}[Proof of the Proposition \ref{linearl2}]

We keep $m$ fixed and take $\varepsilon \rightarrow 0$ in Lemma %
\ref{lineareb} by using Lemma \ref{steadyabc} which obviously holds even with the additional penalization term.  Indeed $\iint_{\Omega\times\mathbf{R}^3} g\sqrt{\mu} = 0 = \int_{\gamma_-} r \sqrt{\mu} d\gamma$. Hence we have $\varepsilon \iint_{\Omega\times\mathbf{R}^3} f^{\varepsilon} \sqrt{\mu}=0$ and therefore, for any $\varepsilon>0$,
\[
\iint_{\Omega\times\mathbf{R}^3} f^{\varepsilon}\sqrt{\mu} dx dv =0.
\]
We first obtain%
\begin{equation}
\|\mathbf{P}f^{\varepsilon }\|_{2}^{2}\lesssim \|(\mathbf{I}-\mathbf{P}%
)f^{\varepsilon }\|_{2}^{2}+|(1-P_{\gamma })f^{\varepsilon
}|_{2,+}^{2}+|r|_{2}^{2}+\|g\|_{2}^{2}+\varepsilon \|f^{\varepsilon
}\|_{2}^{2},  \label{pfe}
\end{equation}%
On the other hand from Green's identity:
\begin{equation*}
\varepsilon \|f^{\varepsilon }\|_{2}^{2}+(L_{m}f^{\varepsilon },f^{\varepsilon
})+\frac{1}{2}|f^{\varepsilon }|_{2,+}^{2}-\frac{1}{2}|P_{\gamma
}f^{\varepsilon }+r|_{2,-}^{2}=\int f^{\varepsilon }g,
\end{equation*}%
we deduce from the spectral gap of $L_m$
\begin{equation}
\varepsilon \|f^{\varepsilon }\|_{2}^{2}+\|(\mathbf{I}-\mathbf{P}%
)f^{\varepsilon }\|_{\nu_{m  } }^{2}+\frac{1}{2}|(1-P_{\gamma })f^{\varepsilon }|_{2,+}^{2}\leq \eta \lbrack \|f^{\varepsilon
}\|_{2}^{2}+|P_{\gamma }f^{\varepsilon }|_{2,-}^{2}]+C_{\eta
}[|r|_{2}^{2}+\|g\|_{2}^{2}].  \label{energye}
\end{equation}%
From the argument of (\ref{Pgamma}) and the trace theorem as well as the equation (\ref{linearf}),
\begin{equation*}
|P_{\gamma }f^{\varepsilon }|_{2}^{2}\lesssim \|v\cdot \nabla
_{x}(f^{\varepsilon })^2\|_{1} +\|f^{\varepsilon }\|_{2}^{2}\lesssim \|(%
\mathbf{I}-\mathbf{P})f^{\varepsilon }\|_{\nu_{m }
}^{2}+\|g\|_{2}^{2}+\|f^{\varepsilon }\|_{2}^{2}.
\end{equation*}%
Plugging this into (\ref{energye}) with $\eta $ small and adding a small constant $%
\times $(\ref{pfe}), collecting terms and using the fact
$$\| \mathbf{P} f^{\varepsilon} \|_2^2 \ \sim \ \| \mathbf{P}f^{\varepsilon}\|_{\nu_m}^2,\notag$$
we obtain the uniform in $\varepsilon $ estimate
\begin{equation}
\|f^{\varepsilon }\|_{\nu_{m } }^{2}+|f^{\varepsilon
}|_2  ^{2} \ \lesssim \ \|g\|_{2}^{2}+|r|_{2}^{2}.  \label{ebound}
\end{equation}%
We thus obtain a weak solution $f^{\varepsilon }\rightarrow f$ with
the same bound (\ref{ebound}). Moreover, we have%
\begin{equation*}
\varepsilon \lbrack f^{\varepsilon }-f]+v\cdot \nabla _{x}[f^{\varepsilon
}-f]+L_{m }[f^{\varepsilon }-f]=\varepsilon f,\text{ \ \ \ \ \ \ \ \ }%
[f^{\varepsilon }-f]_{ {-}}=P_{\gamma }[f^{\varepsilon }-f] .
\end{equation*}%
We conclude from (\ref{ebound}), written for the difference of $f^{\varepsilon}-f$, that
\begin{eqnarray*}
\|f^{\varepsilon }-f\|_{\nu_{m } }^{2}+|[f^{\varepsilon }-f] |_{2}^{2}&\lesssim& \varepsilon \| f \|_2^2 \\
&\lesssim& \varepsilon \big\{
\|g\|_{2}^{2}+\|r\|_{2}^{2}\big\}\rightarrow 0.  \label{ecauchy}
\end{eqnarray*}%
The proposition follows as $\varepsilon \rightarrow 0$.

Finally from (\ref{ebound}) again we can take the  limit $m\to \infty$ of $f^m$ solution to
\begin{equation*}
v\cdot \nabla _{x}f^{m}+L_{m}f^{m}=g,\text{ \ \ \ \ \ \ \ }f_{ {-}}^{m}=P_{\gamma }f ^{m}+r.
\end{equation*}%
 By a
diagonal process, there exists a weak solution $f$ such that $%
f^{m}\rightarrow f$ $\ $weakly in $\|\cdot \|_{\nu_{m_{0}}}$, for any fixed $%
m_{0}$.  It thus follows from weak semi-continuity of the norm $\|\,\cdot\,\|_{\nu_{m_0}}$ that
\begin{equation*}
\|f\|_{\nu _{m_{0}}}^{2}+|f|_{2}^{2} \ \lesssim \ \|g\|_{2}^{2}+|r|_{2}^{2}.
\end{equation*}%
Note that the limiting $f$ so obtained satisfies  $\iint_{\Omega\times\mathbf{R}^3} f(x,v)\sqrt{\mu(v)}dv dx=0. $
The proposition follows as $m_{0}\rightarrow \infty $ and uniqueness follows
from Green's identity. We remark that due to lack of moments control of $f$ we cannot show $f^m \rightarrow f$ strongly in $L^2$.
\end{proof}

\bigskip
\section{$L^{\infty}$ Estimate along the Stochastic Cycles}
We define a weight function scaled with parameter $\varrho,$
\begin{equation}
w_{\varrho}(v)= w_{\varrho, \beta, \zeta}(v) \equiv (1+\varrho^2|v|^2)^{\frac{\beta}{2}} e^{\zeta|v|^2}.\label{weight}
\end{equation}
The main purpose of this section is to prove the following:
\begin{proposition}\label{linfty}
Assume (\ref{constraint}). Then the solution $f$ to the linear Boltzmann equation (\ref{linearf})
satisfies
\begin{equation*}
\|w_{\varrho}f\|_{\infty }+|w_{\varrho}f |_{\infty }\lesssim \|w_{\varrho} \ g\|_{\infty
}+|w_{\varrho}\langle v\rangle r|_{\infty }.
\end{equation*}%
Moreover if $g$ and $r$ are continuous away from the grazing set $\gamma _{0}$, then $f$ is continuous away from $\mathfrak{D}$.  In particular, if $ \ \Omega   $ is convex then $\mathfrak{D}=\gamma_0$.
\end{proposition}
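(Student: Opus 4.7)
The plan is to bootstrap the $L^{2}$ estimate of Proposition \ref{linearl2} into the weighted $L^{\infty}$ bound by running Vidav's double iteration along the stochastic cycles of Definition \ref{diffusecycles}. After conjugating by the weight, i.e.\ setting $h=w_{\varrho}f$, equation (\ref{linearf}) retains its structure with the kernel $K$ replaced by a weighted kernel $K_{w}$ whose Grad-type exponential decay in $|v-v_{*}|$ is preserved. The argument is carried out first on the truncated/penalized approximation of Lemma \ref{lineareb}; the uniform-in-$(j,m,\varepsilon)$ bounds on $h$ are then passed to the weak limit to obtain the statement for $f$.

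For each such approximant, integrating the transport equation along the back-time cycle $(t_{k},x_{k},\bar v_{k})$ represents $h(t,x,v)$ as a sum of: a remainder at the $k$-th bounce, a source contribution $\int w_{\varrho}g$ along each leg, a boundary contribution $\int w_{\varrho}r\,d\sigma_{j}$ over the accumulated bounces, and a collision term involving $K_{w}h$. The first is absorbed by choosing $k$ large and invoking the small-measure estimate on $\Pi_{j=1}^{k-1}\mathcal{V}_{j}$ announced in the introduction (Lemma \ref{k}); the source is controlled directly by $\|w_{\varrho}g\|_{\infty}$; and the cycle-averaged boundary term is dominated by $|w_{\varrho}\langle v\rangle r|_{\infty}$, the additional factor $\langle v\rangle$ arising because $d\sigma_{j}=\mu(v)|n\cdot v|dv$ supplies only one power of $|n\cdot v|$, while the subsequent change of variables from velocity directions to surface points (inequalities (\ref{change1})--(\ref{change2}) of Lemma \ref{tbsmooth}) costs one further power of $|v|$ through the Jacobian.

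The hard piece is the collision term, which I treat via Vidav's second iteration. Substituting the representation of $h$ back into $K_{w}h$ produces a schematic double integral
\begin{equation*}
\int_{0}^{t}\!\int_{0}^{s_{1}}\!\iint \mathbf{k}_{w}(V_{\mathbf{cl}}(s_{1}),v')\,\mathbf{k}_{w}(v',v'')\,h\big(s,X_{\mathbf{cl}}(s;s_{1},X_{\mathbf{cl}}(s_{1};t,x,v),v'),v''\big)\,dv'dv''\,ds\,ds_{1},
\end{equation*}
which I decompose into three zones: (a) $|v'|+|v''|\geq N$, killed by the exponential decay of $\mathbf{k}_{w}$; (b) short intervals $s_{1}-s\leq\kappa$ together with many-bounce configurations, contributing at most $(\kappa+2^{-N})\|h\|_{\infty}$, absorbed on the left; and (c) the generic regime, on which the Jacobian formulas (\ref{tb})--(\ref{change2}) of Lemma \ref{tbsmooth} guarantee that the map $v'\mapsto X_{\mathbf{cl}}(s;s_{1},X_{\mathbf{cl}}(s_{1};t,x,v),v')$ is non-degenerate, so that a change of variables converts the $dv'$-integration into a bulk position integral. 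Cauchy--Schwarz combined with Proposition \ref{linearl2} then yields
\begin{equation*}
\text{(c)}\;\lesssim\;\|f\|_{2}\;\lesssim\;\|g\|_{2}+|r|_{2}\;\lesssim\;\|w_{\varrho}g\|_{\infty}+|w_{\varrho}\langle v\rangle r|_{\infty}.
\end{equation*}
Summing (a)--(c) with the earlier non-collision contributions closes the estimate via a recursion of the form $\|h\|_{\infty}+|h|_{\infty}\leq \tfrac{1}{2}(\|h\|_{\infty}+|h|_{\infty})+C_{N}(\|w_{\varrho}g\|_{\infty}+|w_{\varrho}\langle v\rangle r|_{\infty})$.

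The main obstacle I anticipate is the non-degeneracy analysis in zone (c) in a non-convex domain: back-time trajectories may cluster near $\gamma_{0}$ where the Jacobian of $v'\mapsto X_{\mathbf{cl}}$ degenerates, so the carving out of $\Pi\mathcal{V}_{j}$ must quantitatively avoid these directions, in the spirit of the set $\gamma^{\varepsilon'}$ in (\ref{ukaitrace}), while still leaving the bulk of the measure to be handled by Lemma \ref{tbsmooth}. The accompanying continuity claim follows from Lemma \ref{incoming}: each iterate in the approximate scheme (\ref{approximate}) is continuous off $\mathfrak{D}$, and the uniform $L^{\infty}$ convergence transfers this to the limit $f$; when $\Omega$ is convex, $\mathfrak{D}=\gamma_{0}$ as already established there.
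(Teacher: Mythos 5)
Your proposal identifies exactly the paper's route: conjugate by the weight to set $h=w_{\varrho}f$; work on the truncated/penalized approximants of Lemma \ref{lineareb}; integrate along the stochastic cycles of Definition \ref{diffusecycles} and run Vidav's double iteration, which is implemented in the paper through the abstract inequality (\ref{iteration}) and Lemma \ref{iterationlinfty}; convert the final $dv'$-integration into a bulk $L^{2}$ integral closed by Proposition \ref{linearl2}; and obtain continuity from Lemma \ref{incoming} and uniform $L^{\infty}$ convergence of the approximants. The decomposition into (a) high velocities, (b) short times and deep bounces, and (c) the generic regime, together with the contraction $\|h\|_{\infty}+|h|_{\infty}\leq\tfrac{1}{2}(\|h\|_{\infty}+|h|_{\infty})+C_{N}(\cdots)$, is what the paper does once the index bookkeeping on $h^{\ell-l}$ along the cycles is taken into account. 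So the strategy is correct.

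Two of your attributions of key tools are, however, off. First, the $\langle v\rangle$ on the $r$-term in the statement does \emph{not} come from the change-of-variables Jacobian (\ref{change1})--(\ref{change2}). Lemma \ref{iterationlinfty} already controls the cycle with the tighter norms $|w_{\varrho}r|_{\infty}$ and $\|w_{\varrho}g/\langle v\rangle\|_{\infty}$; the $\langle v\rangle$ only enters because the Vidav scheme must be closed with the $L^{2}$ theory, which requires the bound $|r|_{2}\lesssim|w_{\varrho}\langle v\rangle r|_{\infty}$ for $\beta>4$. The formulas (\ref{tb}) and the changes of variables (\ref{change1})--(\ref{change2}) of Lemma \ref{tbsmooth} are used only in the auxiliary estimate (\ref{rho12}), which is reserved for the singularity-formation part of Theorem \ref{main1}, not for the basic $L^{\infty}$ bound. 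Second, the non-degeneracy in your zone (c) is not handled by avoiding grazing angular sectors or by the $t_{\mathbf{b}}$-Jacobian formulas. Inside the double Duhamel iterate the change of variables is the affine map $\bar v'_{l'}\mapsto y=x'_{l'}+(s_{1}-t'_{l'})\bar v'_{l'}$, with $x'_{l'}$, $t'_{l'}$ independent of $\bar v'_{l'}$; its Jacobian is bounded below by $\varepsilon^{d}$ once the time-cut $s-s_{1}\geq\varepsilon$ has been imposed, and the excluded short-time region contributes only $\varepsilon\|h\|_{\infty}$, which is absorbed on the left as in your zone (b). These are misattributions rather than gaps; the overall argument still goes through.
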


We define
\begin{equation}
\tilde{w}_{\varrho}(v)\equiv \frac{1}{ w_{\varrho, \beta, \zeta} (v)\sqrt{\mu (v)}}=\sqrt{2\pi} \frac{e^{(\frac{1}{4}- \zeta)
|v|^{2}}}{(1+\varrho ^{2}|v|^{2})^{\frac{\beta}{2} }},\label{tweight}
\end{equation}
and $\mathcal{V}(x)=\{v \in\mathbf{R}^3 : n(x)\cdot v >0\}$
with a probability measure $d\sigma=d\sigma(x)$ on $\mathcal{V}(x)$ which is given by
\begin{equation}
d\sigma \equiv\mu
(v)\{n(x)\cdot v\}dv.\label{smeasure}
\end{equation}

We use below Definition \ref{diffusecycles} of stochastic cycles and iterated integral and remind the dependence of $t_k$ on $(t,x,v,v_1,v_2,\dots,v_{k-1})$.
We first show that the set of points in the phase
space $\Pi _{j=1}^{k-1}\mathcal{V}_{j}$ not reaching $t=0$ after $k$ bounces
is small when $k$ is large.
\begin{lemma} \label{k}
For $T_{0}>0$ sufficiently large, there exist constants $%
C_{1},C_{2}>0$ independent of $T_{0}$, such that for $k=C_{1}T_{0}^{5/4}$, and all $(t,x,v)\in [0,T_0]\times\overline{\Omega}\times\mathbf{R}^3,$
\begin{equation}
\int_{\Pi _{j=1}^{k-1}\mathcal{V}_{j}}\mathbf{1}_{\{t_{k}(t,x,\bar{v},\bar{v}_{1},\bar{v}_{2},%
\cdots ,\bar{v}_{k-1})>0\}}\Pi _{j=1}^{k-1}d\sigma _{j}\leq \left\{ \frac{1}{2}%
\right\} ^{C_{2}T_{0}^{5/4}}.  \label{largek}
\end{equation}%
We also have, for $\beta>4$,%
\begin{eqnarray}
&&\int_{\Pi _{j=1}^{k-1}\mathcal{V}_{j}}\sum_{l=1}^{k-1}\mathbf{1}%
_{\{t_{l+1}\leq 0<t_{l}\}} \tilde{w}_{\varrho} (v_{l})\langle v_l \rangle \Pi _{j=1}^{k-1}d\sigma _{j}
\leq  \left\{1+\frac{C_{ \beta,\zeta }}{\varrho^4}\right\} ^{k-1},\label{ktildew}\\
&&\int_{\Pi _{j=1}^{k-1}\mathcal{V}_{j}} \mathbf{1}%
_{\{  0<t_{l+1}\}} \tilde{w}_{\varrho} (v_{l})\langle v_l\rangle \Pi _{j=1}^{k-1}d\sigma _{j}
\leq  \left\{1+\frac{C_{ \beta,\zeta }}{\varrho^4}\right\}, \notag
\end{eqnarray}
for all $l=1,2,\dots, k-1$.
\end{lemma}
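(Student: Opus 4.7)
Both estimates reduce to single-variable computations against the probability measures $d\sigma_j$, once one observes two facts: (i) each $d\sigma_j$ is a probability measure (so integration over an irrelevant $v_j$ just gives a factor $1$), and (ii) for any rotationally invariant function $F(v)$, the integral $\int_{\mathcal V(x)} F(v)\,d\sigma(v)$ is independent of $x \in \partial\Omega$, hence independent of the earlier stochastic variables $\bar v_1,\dots,\bar v_{j-1}$ that determine $x_j$.

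For (\ref{largek}), the starting point is the deterministic identity $t_{j+1} = t_j - t_\mathbf{b}(x_j,\bar v_j)$ with $t_0 = t \leq T_0$, which forces on the event $\{t_k > 0\}$ that
\begin{equation*}
\sum_{j=0}^{k-1} t_\mathbf{b}(x_j,\bar v_j) \;<\; T_0.
\end{equation*}
I set the threshold $\delta = T_0^{-1/4}$ and the short-hop set $A_j(x_j) = \{v \in \mathcal V_j : t_\mathbf{b}(x_j,v) < \delta\}$. Then at most $T_0/\delta = T_0^{5/4}$ indices can satisfy $t_\mathbf{b}(x_j,\bar v_j) \geq \delta$, so on $\{t_k > 0\}$ at least $k - T_0^{5/4}$ of the $\bar v_j$'s must land in $A_j$. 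The pivotal sub-estimate is
\begin{equation*}
\alpha \;:=\; \sup_{x\in\partial\Omega} \int_{A_j(x)} d\sigma \;\ll\; 1,
\end{equation*}
which I prove by splitting $\{|v|>N\}$ (contributing $\lesssim e^{-N^2/4}$ by the Gaussian tail of $\mu$) and $\{|v|\leq N,\,t_\mathbf{b}(x,v)<\delta\}$. On the latter, the lower bound (\ref{tlower}) gives $|n(x)\cdot v| < C_\Omega N^2 \delta$, a thin near-grazing strip whose $d\sigma$-measure is $\lesssim N^4 \delta^2$ by direct computation against $\mu(v)(n\cdot v)\,dv$. Taking $N \sim (\ln(1/\delta))^{1/2}$ makes $\alpha$ as small as desired. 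A union bound over the choice of which $\lfloor T_0^{5/4}\rfloor$ indices land outside $A_j$ then gives
\begin{equation*}
\int_{\Pi\mathcal V_j}\mathbf 1_{\{t_k > 0\}} \prod d\sigma_j \;\leq\; \binom{k}{\lfloor T_0^{5/4}\rfloor}\,\alpha^{k - T_0^{5/4}} \;\leq\; \bigl(eC_1\,\alpha^{C_1-1}\bigr)^{T_0^{5/4}}
\end{equation*}
for $k = C_1 T_0^{5/4}$, which yields (\ref{largek}) after $C_1$ is chosen so that $eC_1\,\alpha^{C_1-1} \leq 1/2$.

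For the second inequality in (\ref{ktildew}), I drop the indicator, integrate out the $k-1-l$ innermost and $l-1$ outermost variables (each producing $\int d\sigma_j = 1$), and am left with a single-variable integral
\begin{equation*}
\int_{\mathcal V(x)} \tilde w_\varrho(v)\langle v\rangle\,d\sigma(v) \;=\; \tfrac{1}{\sqrt{2\pi}} \int_{n\cdot v > 0} \frac{\langle v\rangle\,e^{-(1/4+\zeta)|v|^2}}{(1+\varrho^2|v|^2)^{\beta/2}}\,(n\cdot v)\,dv,
\end{equation*}
which, after the change of variables $u = \varrho v$ and using $\beta > 4$ for integrability at infinity, is bounded by $1 + C_{\beta,\zeta}/\varrho^4$. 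For the first inequality, the indicators $\mathbf 1_{\{t_{l+1}\leq 0 < t_l\}}$ are mutually exclusive across $l$, so
\begin{equation*}
\sum_{l=1}^{k-1} \mathbf 1_{(\cdot)}\,\tilde w_\varrho(v_l)\langle v_l\rangle \;\leq\; \sum_{l=1}^{k-1} \tilde w_\varrho(v_l)\langle v_l\rangle \;\leq\; \prod_{l=1}^{k-1}\bigl(1 + \tilde w_\varrho(v_l)\langle v_l\rangle\bigr),
\end{equation*}
and upon integrating the product it factorizes, by the rotational invariance point (ii) above, into $\prod_l \int(1 + \tilde w_\varrho(v_l)\langle v_l\rangle)\,d\sigma_l \leq (1+C_{\beta,\zeta}/\varrho^4)^{k-1}$.

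The main obstacle is the geometric estimate $\int_{A_j(x)\cap\{|v|\leq N\}} d\sigma \lesssim_{\Omega} N^4\delta^2$, which is really the only place where the structure of $\partial\Omega$ (smooth, not necessarily convex) enters. The lower bound (\ref{tlower}) from Lemma \ref{tbsmooth} is the essential ingredient: it reduces ``quick exit'' to ``near grazing'', and the $d\sigma$-measure of the near-grazing strip $\{0<n(x)\cdot v < C_\Omega N^2\delta\}$ is easily computed. Balancing this against the Gaussian tail for $|v|>N$ by choosing $N$ logarithmic in $1/\delta$ is what ultimately produces an $\alpha$ small enough to drive the combinatorial estimate.
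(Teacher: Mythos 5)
Your proposal is correct and follows essentially the same strategy as the paper: distinguish ``long-hop'' from ``short-hop'' indices, show at most $O(T_0/\delta)$ of them can be long on the event $\{t_k>0\}$, bound the measure of the short-hop set uniformly in $x$ via (\ref{tlower}), and close with a binomial count; the second part of the lemma is, in both treatments, the single-variable computation $\int \tilde w_\varrho(v)\langle v\rangle\,d\sigma \lesssim \varrho^{-4}$ (which needs $\beta>4$) combined with $\sum_l a_l \le \prod_l(1+a_l)$ and the factorization of the iterated integral.

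The one genuine difference is the parametrization of the dichotomy. The paper defines its ``good'' set $\mathcal{V}_j^{\mathfrak z}$ by two conditions, $\bar v_j\cdot n(x_j)\ge\mathfrak z$ and $|v_j|\le 1/\mathfrak z$, obtains $\int_{\mathcal V_j\setminus\mathcal V_j^{\mathfrak z}}d\sigma_j \le C\mathfrak z$ (a linear bound that conflates the near-grazing strip with the large-velocity tail), and then carries out a two-parameter optimization over $\mathfrak z$ and an auxiliary power $N$, finally choosing $N=15$ and $\mathfrak z\sim T_0^{-1/12}$. You instead parametrize directly by the exit time, $A_j(x_j)=\{t_\mathbf{b}(x_j,v)<\delta\}$, set $\delta=T_0^{-1/4}$ to make the long-hop count exactly $T_0^{5/4}$, and bound $\alpha=\sup_x\int_{A_j(x)}d\sigma$ by a logarithmic balance between the Gaussian tail $e^{-N^2/4}$ and the quadratic strip estimate $N^4\delta^2$. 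This buys a somewhat cleaner combinatorial estimate $\binom{k}{\lfloor T_0^{5/4}\rfloor}\alpha^{k-T_0^{5/4}}\le (eC_1\alpha^{C_1-1})^{T_0^{5/4}}$ with $\alpha\to 0$ as $T_0\to\infty$, so $C_1$ can be a small absolute constant. The key geometric input (\ref{tlower}) is identical in both proofs, and the measure-theoretic content is the same; your version is arguably more transparent but does not change the mathematics. One small point worth making explicit in your writeup of (\ref{ktildew}): after the change of variable $u=\varrho v$, the factor $\langle u/\varrho\rangle e^{-(1/4+\zeta)|u|^2/\varrho^2}$ should be bounded uniformly (it is $\le C_\zeta$ since $(1+y)^{1/2}e^{-cy}$ is bounded for $y\ge 0$); dropping the Gaussian outright before this step would leave a divergent $u$-integral for $\beta\le 5$, whereas keeping it this way makes the $\beta>4$ threshold genuinely sharp.
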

\begin{proof}
Choosing $0<\z $ sufficiently small, we further define non-grazing
sets for $1\leq j\leq k-1$ as%
\begin{equation*}
\mathcal{V}_{j}^{\z }=\{v_{j}\in \mathcal{V}_{j}:\text{ }\bar{v}_{j}\cdot
n(x_{j})\geq \z \}\cap \{v_{j}\in \mathcal{V}_{j}:\text{ }|v_{j}|\leq
\frac{1}{\z }\}.
\end{equation*}%
Clearly, by the same argument used in (\ref{epsilonprime1}),
\begin{equation}
\int_{\mathcal{V}_{j}\setminus \mathcal{V}_{j}^{\z }}d\sigma _{j}\leq
\int_{\bar{v}_{j}\cdot n(x_{j})\leq \z }d\sigma _{j}+\int_{|v_{j}|\geq \frac{1%
}{\z }}d\sigma _{j}\leq C\z ,  \label{v-v}
\end{equation}%
where $C$ is independent of $j$.  On the other hand, if $v_{j}\in \mathcal{V}%
_{j}^{\z }$, then from the definition of diffusive back-time cycle (\ref{diffusecycle}),
we have $x_{j}-x_{j+1}=(t_{j}-t_{j+1})\bar{v}_{j}$.  By (\ref{tlower}), since $|v_{j}|\leq \frac{1}{\z }$, and $\bar{v}_{j}\cdot
n(x_{j})\geq \z ,$
\begin{equation*}
(t_{j}-t_{j+1})\geq \frac{\z ^{3}}{C_{\Omega }}.
\end{equation*}%
Therefore, if $t_{k}(t,x,\bar{v},\bar{v}_{1},\bar{v}_{2}...,\bar{v}_{k-1})>0$, then there can be at
most $\left[ \frac{C_{\xi }T_{0}}{\z ^{3}}\right] +1$ number of $v_{j}$ $%
\in \mathcal{V}_{j}^{\z }$ for $1\leq j\leq k-1$.  We therefore have
\begin{eqnarray*}
&&\int_{\mathcal{V}_{1}}...\left\{ \int_{\mathcal{V}_{k-1}}\mathbf{1}_{%
\mathcal{\{}t_{k}>0\}}d\sigma _{k-1}\right\} d\sigma _{k-2}...d\sigma _{1} \\
&\leq &\sum_{m=1}^{\left[ \frac{C_{\xi }T_{0}}{\z ^{3}}\right]
+1}\int_{\{\text{There are exactly }m\text{ of }v_{j_{i}}\in \mathcal{V}%
_{j_{i}}^{\z },\text{ and }k-1-m\text{ of }v_{j_{i}}\notin \mathcal{V}%
_{j_{i}}^{\z }\}}\Pi _{j=1}^{k-1}d\sigma _{j} \\
&\leq &\sum_{m=1}^{\left[ \frac{C_{\xi }T_{0}}{\z ^{3}}\right] +1}\binom{%
k-1}{m}|\sup_{j}\int_{\mathcal{V}_{j}^{\z }}d\sigma _{j}|^{m}\left\{
\sup_{j}\int_{\mathcal{V}_{j}\setminus \mathcal{V}_{j}^{\z }}d\sigma
_{j}\right\} ^{k-m-1}.
\end{eqnarray*}%
Since $d\sigma $ is a probability measure, $\int_{\mathcal{V}_{j}^{\z
}}d\sigma _{j}\leq 1$, and
\begin{equation*}
\left\{ \int_{\mathcal{V}_{j}\setminus \mathcal{V}_{j}^{\z }}d\sigma
_{j}\right\} ^{k-m-1}\leq \left\{ \int_{\mathcal{V}_{j}\setminus \mathcal{V}%
_{j}^{\z }}d\sigma _{j}\right\} ^{k-2-\left[ \frac{C_{\xi }T_{0}}{\z
^{3}}\right] }\leq \{C\z \}^{k-2-\left[ \frac{C_{\xi }T_{0}}{\z ^{3}}%
\right] }.
\end{equation*}%
But, from $\binom{k-1}{m}\leq \{k-1\}^{m}\leq \{k-1\}^{\left[ \frac{C_{\Omega }T_{0}}{%
\z ^{3}}\right] +1}$, we deduce that
\begin{equation}
\int \mathbf{1}_{\{t_{k}>0\}}\Pi _{l=1}^{k-1}d\sigma _{l}\leq {\left[
\frac{C_{\xi }T_{0}}{\z ^{3}}\right] }(k-1)^{\left[
\frac{C_{\xi }T_{0}}{\z ^{3}}\right] +1}C\z ^{k-2-\left[ \frac{%
C_{\xi }T_{0}}{\z ^{3}}\right] }.\label{kdelta}
\end{equation}%
Now let $k-2 = N \{[\frac{C_{\Omega}T_0}{\z^3}]+1\}$, so that if $\frac{C_{\Omega}T_0}{\z^3}\geq 1$, (\ref{kdelta}) can be further majorized by
\begin{eqnarray*}
&& \ \ \left\{ N \left( \frac{C_{\Omega}T_0}{\z^3}+1\right) (C\z)^N\right\}^{\left[\frac{C_{\Omega}T_0}{\z^3}\right]+1}\\
&& \leq \ \left\{ \frac{2NC_{\Omega}T_0}{\z^3} (C\z)^N\right\}^{\left[\frac{C_{\Omega}T_0}{\z^3}\right]+1}\\
&& \leq \ \left\{ C_{N,\Omega} T_0 \z^{N-3}\right\}^{\left[\frac{C_{\Omega}T_0}{\z^3}\right]+1}.
\end{eqnarray*}
We choose $C_{N,\Omega}T_0 \z^{N-3}=\frac{1}{2}$, so that $\z=\{\frac{1}{2C_{N,\Omega}T_0}\}^{\frac{1}{N-3}}$ is small for $T_0$ large and for $N>3$.  Moreover,
\begin{eqnarray*}
\left[\frac{C_{N,\Omega}T_0}{\z^3}\right]+ 1 \sim C_{N,\Omega} T_0^{1+\frac{3}{N-3}},
\end{eqnarray*}
and $\frac{C_{N,\Omega}T_0}{\z^3}\geq 2$, if $T_0$ is large so that we can close our estimate.

Finally we choose $N=15$.  For $T_0$ sufficiently large, $\left[\frac{C_{N,\Omega}T_0}{\z^3}\right] +1 \sim CT_0^{5/4}$ and $k=16\{\left[\frac{C_{\Omega}T_0}{\z^3}\right]+1\}+2 \sim CT_0^{5/4}$, and (\ref{largek}) follows.

Next, we  give the proof of the first estimate in (\ref{ktildew}). The left hand side of (\ref{ktildew}) is bounded by
\begin{eqnarray*}
&& \ \int_{\Pi_{j=1}^{k-1}\mathcal{V}_j} \sum_{l=1}^{k-1} \mathbf{1}_{\{t_{l+1}\leq 0 < t_l\}} \tilde{w}_{\varrho}(v_l) \langle v_l \rangle \Pi_{j=1}^{k-1} d\sigma_j\\
&& \leq \ \int_{\Pi_{j=1}^{k-1}\mathcal{V}_j} \mathbf{1}_{\{t_{k}\leq 0 < t_1\}} 
\tilde{w}_{\varrho}(v_l) \langle v_l \rangle\Pi_{j=1}^{k-1} d\sigma_j
 \\
&&\leq \ \prod_{j=1}^{k-1} \left\{ \int_{\mathcal{V}_j}[1+ \tilde{w}_{\varrho}(v_j)\langle v_j \rangle]  d\sigma_j \right\}  \leq  \left\{1+  \frac{1}{\sqrt{2\pi}} \int_{v_1>0} \frac{ v_1 e^{-(\frac{1}{4}+\zeta)|v|^2} }{(1+\varrho^2 |v|^2)^{\frac{\beta-1}{2}}} dv
 \right\}^{k-1} \\
&&\leq \ \left\{ 1+ \frac{1}{ \sqrt{2\pi} } \int_{\mathfrak{u}_1>0} \frac{ C_{\zeta} \mathfrak{u}_1}{(1+|\mathfrak{u}|^2)^{\frac{\beta-1}{2}}} \varrho^{-4}d\mathfrak{u}
\right\}^{k-1}  \\
&&\leq \ \left\{1+ \frac{C_{\beta,\zeta}}{\varrho^{4}}\right\}^{k-1},
\end{eqnarray*}
where we used the change of variables : $\varrho v=\mathfrak{u}$ and the fact $\beta>4$.

For the second estimate in (\ref{ktildew}) similarly we have
\begin{eqnarray*}
&& \ \int_{\Pi_{j=1}^{k-1}\mathcal{V}_j}   \mathbf{1}_{\{   0 < t_{l+1}\}} \tilde{w}_{\varrho}(v_l) \langle v_l \rangle \Pi_{j=1}^{k-1} d\sigma_j \ \leq \ \int_{\Pi_{j=1}^{k-1}\mathcal{V}_j}     \tilde{w}_{\varrho}(v_l) \langle v_l \rangle \Pi_{j=1}^{k-1} d\sigma_j\\
&&\leq \   \int_{\mathcal{V}_l}[1+ \tilde{w}_{\varrho}(v_l)\langle v_l \rangle]  d\sigma_l    \leq  \left\{1+  \frac{1}{\sqrt{2\pi}} \int_{v_1>0} \frac{ v_1 e^{-(\frac{1}{4}+\zeta)|v|^2} }{(1+\varrho^2 |v|^2)^{\frac{\beta-1}{2}}} dv
 \right\}  \\
&&\leq \ \left\{ 1+ \frac{1}{ \sqrt{2\pi} } \int_{\mathfrak{u}_1>0} \frac{ C_{\zeta} \mathfrak{u}_1}{(1+|\mathfrak{u}|^2)^{\frac{\beta-1}{2}}} \varrho^{-4}d\mathfrak{u}
\right\}   \\
&&\leq \ \left\{1+ \frac{C_{\beta,\zeta}}{\varrho^{4}}\right\} ,
\end{eqnarray*}
where we used the fact that $d\sigma_j$ is a probability measure on $\mathcal{V}_j$.
\end{proof}
\\

Denote $h=w_{\varrho} \ f$ and $K_{w_{\varrho}}( \ \cdot \ )=w_{\varrho} K( \frac {1} {w_{\varrho}} \ \cdot)$.  We present an {\it abstract iteration} scheme which gives a unified way to study both steady and dynamic problem with diffuse boundary condition. Recall $v=(\bar{v},\hat{v})$ and $v_l =(\bar{v}_l,\hat{v}_l)$ from (\ref{v}) and define:
\begin{eqnarray}
|h^{\ell+1}(t,x,v)| &\leq &\mathbf{1}_{t_{1}\leq 0}e^{-\nu
(v)t}|h^{\ell+1}(0,x-t{\bar{v}},v)|  \notag \\
&&+\mathbf{1}_{t_{1}\leq 0}\int_{0}^{t}e^{-\nu
(v)(t-s)}|[K_{w_{\varrho}}h^{\ell}+w_{\varrho}g](s,x-(t-s){\bar{v}},v)|ds  \notag \\
&&+\mathbf{1}_{t_{1}>0}\int_{t_{1}}^{t}e^{-\nu
(v)(t-s)}|[K_{w_{\varrho}}h^{\ell}+w_{\varrho}g](s,x-(t-s){\bar{v}},v)|ds \ \ \ \ \  \ \ \ \ \  \label{iteration} \\
&&+\mathbf{1}_{t_{1}>0}e^{-\nu (v)(t-t_{1})}|w_{\varrho}r(t_1,x_{1},v)|+\frac{e^{-\nu
(v)(t-t_{1})}}{\tilde{w_{\varrho}}(v)}\int_{\prod_{j=1}^{k-1}\mathcal{V}_{j}}|H|,  \notag
\end{eqnarray}%
where $|H|$ is bounded by
\begin{eqnarray}
&&\sum_{l=1}^{k-1}\mathbf{1}_{\{t_{l+1}\leq
0<t_{l}\}}|h^{\ell+1-l}(0,x_{l}-t_{l}{\bar{v}}_{l},v_{l})|d\Sigma _{l}(0)  \label{h1} \\
&&+\sum_{l=1}^{k-1}\int_{0}^{t_l}\mathbf{1}_{\{t_{l+1}\leq
0<t_{l}\}}|[K_{w_{\varrho}}h^{\ell-l}+w_{\varrho} \ g](s,x_{l}-(t_{l}-s){\bar{v}}_{l},v_{l})|d\Sigma
_{l}(s)ds  \ \ \ \  \ \ \ \ \ \ \label{h2} \\
&&+\sum_{l=1}^{k-1}\int_{t_{l+1}}^{t_{l}}\mathbf{1}_{\{0<t_{l}%
\}}|[K_{w_{\varrho}}h^{\ell-l}+w_{\varrho} \ g](s,x_{l}-(t_{l}-s){\bar{v}}_{l},v_{l})|d\Sigma _{l}(s)ds
\label{h3} \\
&&+\sum_{l=1}^{k-1} \mathbf{1}_{\{0<t_{l}\}} d\Sigma _{l}^r
\label{h4} \\
&&+\mathbf{1}_{\{0<t_{k}\}}|h^{\ell+1-k}(t_{k},x_{k},v_{k-1})|d\Sigma
_{k-1}(t_{k}),  \label{h5}
\end{eqnarray}%
and we define
\begin{eqnarray}
d\Sigma _{l} &=&\{\Pi _{j=l+1}^{k-1}d\sigma _{j}\}\times\{\tilde{w}(v_{l})d\sigma
_{l}\}\times \Pi _{j=1}^{l-1} d\sigma _{j}
\label{measure} \\
d\Sigma _{l}(s) &=&\{\Pi _{j=l+1}^{k-1}d\sigma _{j}\}\times\{e^{\nu
(v_{l})(s-t_{l})}\tilde{w}(v_{l})d\sigma _{l}\}\times \Pi
_{j=1}^{l-1}\{{{e^{\nu (v_{j})(t_{j+1}-t_{j})} d\sigma _{j}}}\}
\notag \\
d\Sigma _{l}^r &=&\{\Pi _{j=l+1}^{k-1}d\sigma _{j}\}\times\{e^{\nu
(v_{l})(t_{l+1}-t_{l})}\tilde{w}(v_{l})w(v_{l})r( t_{l+1},x_{l+1},v_{l} ) d\sigma_l\}\notag\\
&& \  \times \Pi
_{j=1}^{l-1}\{{{e^{\nu (v_{j})(t_{j+1}-t_{j})} d\sigma _{j}}}\}.
\notag
\end{eqnarray}
\begin{remark}
For the steady case you can regard the temporal variables $t,s$ as parameters and read $h(t,x,v)=h(x,v)$.  We used the notation in (\ref{v}), 
$v=(v_1,\cdots, v_d ; v_{d+1},\cdots, v_3) = ( \bar{v}; \hat{v})$ again and $\bar{v}\in\mathbf{R}^d$ and $\hat{v}\in\mathbf{R}^{3-d}$ where $d$ is the spatial dimension so that $x\in\overline{\Omega}\subset \mathbf{R}^d$, for $d=1,2,3$.
\end{remark}
\vskip .2cm

\begin{lemma}
\label{iterationlinfty}There exist $\varrho_0>0$ and $C >0$ such that for all $\varrho>\varrho_0, \ \beta>4$, and for $k= \varrho=Ct^{\frac{5}{4}}$,
\begin{eqnarray}
&& \ \ \sup_{0\leq s \leq t }e^{\frac{\nu _{0}}{2}t} \|h^{\ell+1}(s)\|_{\infty} \notag\\
 && \ \leq \  \frac{1}{8%
}\max_{0\leq l \leq 2k}\sup_{0\leq s\leq t}\{e^{\frac{\nu _{0}}{2}%
s}\|h^{\ell-l}(s)\|_{\infty }\}+\max_{0\leq l\leq 2k}\|h^{\ell+1-l}(0)\|_{\infty }\notag
\\
&& \ \ \  + \ \varrho \left[ \sup_{0\leq s\leq t}\left\{ e^{\frac{\nu _{0}}{2}%
s}|w_{\varrho} \ r(s)|_{\infty }\right\} + \sup_{0\leq s\leq t}\left\{ e^{\frac{\nu _{0}}{2%
}s}\left\Vert \frac{w_{\varrho} \ g(s)}{\langle v\rangle }\right\Vert _{\infty }\right\} %
\right] \notag\\
&& \ \ \  + \ C \max_{1\leq l\leq 2k}\int_{0}^{t}   \left\| \frac{h^{\ell-l}(s)}{w_{\varrho}}\right\|_{2}ds.\label{rho}
\end{eqnarray}
Furthermore for $k=\varrho =Ct^{\frac{5}{4}}$
\begin{eqnarray}
&&\ \ |h^{\ell +1}(t,x,v)|  \notag \\
&& \ \leq \ e^{-\nu (v)(t-t_{1})}w_{\varrho }(v)|r(t_{1},x_{1},v)|  \notag \\
&& \ \ \ +\left\{ \frac{C_{\beta ,\rho ,\zeta }}{N}+\varepsilon C_{\rho ,\beta
,N}\right\} \sup_{s}|\langle v\rangle ^{\beta +4}r(s)|_{\infty
}+C_{N,\varepsilon ,\rho }\left\vert \sup_{s,v}|r(s,\cdot ,v)|\right\vert
_{1}  \notag \\
&&\ \ \ +\ \frac{1}{N}\sup_{s,l}\Vert h^{l}(s)\Vert _{\infty }+e^{-\frac{\nu
_{0}}{2}t}\sup_{l}\Vert h^{l}(0)\Vert _{\infty }  \notag \\
&&\ \ \ +\ \varrho \sup_{s}\left\Vert \frac{w_{\varrho }\ g(s)}{\langle
v\rangle }\right\Vert _{\infty }+\ e^{-\frac{\nu _{0}}{2}t}\sup_{l}%
\int_{0}^{t}\left\Vert \frac{h^{l}(s)}{w_{\varrho }}\right\Vert _{2}ds.
\label{rho12}
\end{eqnarray}

\end{lemma}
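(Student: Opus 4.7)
The plan is to bound each of the terms in the abstract iteration (\ref{iteration}) with $|H|$ controlled by (\ref{h1})--(\ref{h5}), using the measure estimates of Lemma \ref{k} to handle the integrals over $\prod_{j=1}^{k-1}\mathcal{V}_j$, and then performing a second Duhamel substitution à la Vidav to reduce the kernel contribution to an $L^2$ norm of $h$. First I would dispose of the easy pieces: the free-transport initial term gives $\|h^{\ell+1-l}(0)\|_\infty$, the pointwise source and boundary terms give respectively $\nu_0^{-1}\sup_s e^{-\nu_0(t-s)/2}\|w_\varrho g(s)/\langle v\rangle\|_\infty$ and $|w_\varrho r|_\infty$, and the term (\ref{h5}) featuring $\mathbf{1}_{\{t_k>0\}}$ is made exponentially small by (\ref{largek}) of Lemma \ref{k}, once we fix $k\sim CT_0^{5/4}$; in particular this single term yields the $\frac{1}{8}$ factor in (\ref{rho}) by choosing $T_0$ large.

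Next, the key contributions are the boundary-cycle reincarnations (\ref{h2}) and (\ref{h3}) of the source $w_\varrho g$ and the kernel $K_{w_\varrho}h^{\ell-l}$, plus the reincarnation (\ref{h4}) of $r$. For each $l$, the non-$K$ pieces are majorized in sup norm and the measure $d\Sigma_l$ or $d\Sigma_l^r$ is integrated via the second estimate of (\ref{ktildew}) giving $1+C_{\beta,\zeta}/\varrho^4$ per bounce, so that summing over $l=1,\dots,k-1$ produces a factor of order $k=\varrho$; this is precisely the prefactor $\varrho$ appearing in front of $|w_\varrho r|_\infty$ and $\|w_\varrho g/\langle v\rangle\|_\infty$ in (\ref{rho}). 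The scaling choice $k=\varrho=Ct^{5/4}$ is exactly tuned so that $(1+C/\varrho^4)^{k-1}$ stays bounded while $k$ is large enough to trigger (\ref{largek}).

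The main obstacle is the kernel pieces $K_{w_\varrho}h^{\ell-l}$ in (\ref{h2}), (\ref{h3}), since $K$ is not pointwise small. I would split $K_{w_\varrho}=K_{w_\varrho,N}+(K_{w_\varrho}-K_{w_\varrho,N})$ where the tail part has operator norm $O(1/N)$ (absorbed in the $\frac{1}{N}\sup\|h^l\|_\infty$ term of (\ref{rho12})), and on the truncated part $K_{w_\varrho,N}$ with kernel bounded on $\{|v|,|v'|\le N\}$ perform a second Duhamel substitution of the iteration inequality at the backward stochastic point $x_l-(t_l-s)\bar v_l$, producing a double velocity integral $\int dv'\int dv''$ of $|h^{\ell-l-1}|$ along a concatenated trajectory $X_{\mathbf{cl}}(s';s,\dots;v'')$. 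On the resulting inner integral I apply the change of variables $v'\mapsto X_{\mathbf{cl}}(\cdot)$, whose Jacobian is non-degenerate away from a small grazing set by the transversality condition built into Lemma \ref{tbsmooth}; this converts the double integral into $\int \|h^{\ell-l-1}(s)/w_\varrho\|_2\,ds$, yielding the $L^2$ remainder in (\ref{rho}). For estimate (\ref{rho12}), the only additional point is that the very first boundary bounce $e^{-\nu(v)(t-t_1)}w_\varrho r(t_1,x_1,v)$ is kept pointwise (not absorbed in sup-norm), and the change of variables of Lemma \ref{tbsmooth}, applied to the contribution (\ref{h4}) at bounces $l\geq 1$, converts the integral of $|r|$ along cycles into $|\sup_{s,v}|r(s,\cdot,v)||_1$ up to a near-grazing remainder of size $1/N+\varepsilon$, giving the two mixed terms $C_{\beta,\rho,\zeta}N^{-1}\sup|\langle v\rangle^{\beta+4}r|_\infty$ and $C_{N,\varepsilon,\rho}|\sup_{s,v}|r(s,\cdot,v)||_1$ in (\ref{rho12}).
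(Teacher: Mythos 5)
Your overall strategy matches the paper's: bound the direct terms, use Lemma \ref{k} for the cycle measures, perform a second Duhamel substitution on the kernel, split the kernel, and change variables to produce an $L^2$ remainder. But two key mechanisms are misattributed.

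First, the factor $\frac{1}{8}$ in (\ref{rho}) does \emph{not} come from (\ref{h5}) alone. The term (\ref{h5}), via (\ref{largek}), contributes only the piece $\left(\frac12\right)^{\varrho}\sup\|h^{\ell+1-k}\|_{\infty}$. The paper's $\frac18$ is obtained from the \emph{sum} of three independent small contributions to $\sup\|h^{\ell-l}\|_{\infty}$: the $(1/2)^{\varrho}$ from (\ref{h5}), the small-time split $|s-s_1|\leq\varepsilon$ of the second Duhamel (which you do not mention at all), and the kernel tail $\|\mathbf{k}_{w_\varrho}-\mathbf{k}_m\|\leq 1/N$. You place the $1/N$ piece only in (\ref{rho12}) (``absorbed in the $\frac1N\sup\|h^l\|_\infty$ term of (\ref{rho12})''), which leaves you with no way to make the kernel tail small in (\ref{rho}); as written your accounting for (\ref{rho}) is internally inconsistent.

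Second, the change of variables that converts the doubly-iterated kernel integral into $\int\|h/w_\varrho\|_2\,ds$ is \emph{not} the one from Lemma \ref{tbsmooth}. On $\{|s-s_1|\geq\varepsilon\}$ the map is simply $\bar v'_{l'}\mapsto y=x'_{l'}+(s_1-t'_{l'})\bar v'_{l'}$, a linear map whose Jacobian $|s_1-t'_{l'}|^d$ is bounded below by $\varepsilon^d$ precisely because of the time split; no transversality of the exit time is involved. Lemma \ref{tbsmooth} (the change $\mathfrak u\mapsto x_{\mathbf b}(x,\mathfrak u)$ on the sphere, estimates (\ref{change1})--(\ref{change2})) enters only in the proof of (\ref{rho12}), where it converts the cycle-integrated $r$-contribution into $\left|\sup_{s,v}|r(s,\cdot,v)|\right|_1$ — there your description is correct. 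Confusing these two distinct changes of variables is not merely cosmetic: if you tried to use the sphere-to-boundary map of Lemma \ref{tbsmooth} to control the kernel term, the factor $|n(x)\cdot v'|$ required by (\ref{change1}) would not be supplied, and the estimate would not close.
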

\begin{remark}
The estimate (\ref{rho12}) is used only in the proof of the singularity formation in Theorem \ref{main1}.
\end{remark}

\begin{proof}
We first prove (\ref{rho}) and then sketch the proof of (\ref{rho12}).

We start with $r$-contribution in (\ref{iteration}) and (\ref{h4}). Clearly the contribution in (\ref{iteration}) is bounded by
$$
e^{-\nu(v)[t-t_1(t,x,v)]}|w_{\varrho}(v)  r(t_1,x_1,v)|.
$$
Since the exponent of $d\Sigma_l^r$ is bounded by $e^{-\nu_0(t-t_{l+1})}$, from (\ref{measure}) and Lemma \ref{k}
\begin{eqnarray}
(\ref{h4}) &\leq& k e^{-\frac{\nu _{0}}{2}t} \frac{1}{\tilde{w}_{\varrho}(v)}\Big\{
1+\frac{C_{\beta, \zeta }}{\varrho ^{4}}\Big\}  \left\vert e^{\frac{\nu _{0}}{2}s}%
\frac{w_{\varrho}r(s)}{\langle v \rangle}\right\vert _{\infty }\label{diff1}\\
&\leq& e^{-\frac{\nu _{0}}{2}t} \frac{ 2\varrho }{\tilde{w}_{\varrho}(v)}  \sup_{0\leq s\leq t}\Big\{e^{\frac{\nu _{0}}{2}s}%
\left\vert\frac{w_{\varrho}r(s)}{\langle v \rangle}\right\vert _{\infty }\Big\},\notag
\end{eqnarray}
for sufficiently large $\varrho>0$ and $k=\varrho=Ct^{\frac{5}{4}}$.

We now turn to the $g$- contribution in (\ref{iteration}), (\ref{h2}) and (\ref{h3}). Rewrite
\begin{equation*}
|w_{\varrho}   g(x_{l}-(t_{l}-s)\bar{v}_{l},v_{l})|=\frac{|w_{\varrho}g(x_{l}-(t_{l}-s){\bar{v}}_{l},v_{l})|}{%
\langle v_{l}\rangle}\times \langle v_{l}\rangle \leq \left\Vert \frac{w_{\varrho}g}{\langle v \rangle }%
\right\Vert _{\infty } \langle v_{l}\rangle .
\end{equation*}%
Since the exponent of $d\Sigma_l(s)$ is bounded by $e^{\nu_0(s-t_1)}$, the $g$-contributions in (\ref{iteration}), (\ref{h2}) and (\ref{h3}) are bounded by
\begin{eqnarray}
&&2\int_{0}^{t}e^{-\nu (v)(t-s)}|w_{\varrho}g (s,x-(t-s)\bar{v},v)|ds \label{g}\\
&&+\frac{e^{-\nu
_{0}(t-t_{1})}}{\tilde{w}_{\varrho}(v)}\int_{0 }^{t}e^{-\nu _{0}(t_{1}-s)}\left\Vert \frac{w_{\varrho}g(s)}{\langle v \rangle}\right\Vert _{\infty }ds\notag\\
&&  \ \ \ \ \ \ \ \ \ \ \ \ \times\sum_{l=1}^{k-1}\bigg\{\int
\mathbf{1}_{\{t_{l+1}\leq 0<t_{l}\}}\tilde{w}_{\varrho}%
 (v_{l})  \langle  v_l \rangle\Pi_{j=1}^{k-1}d\sigma_j\notag \\
 && \ \ \  \ \ \ \ \ \ \ \ \ \ \ \ \  \ \ \ \ \ \ \  +\max_{l}\int  \mathbf{1}_{\{0<t_{l+1}\}}\tilde{w}_{\varrho}%
 (v_{l})  \langle v_l \rangle \Pi_{j=1}^{k-1}d\sigma_j \bigg\}.\notag
\end{eqnarray}
From
\begin{equation}
\frac{1}{\tilde{w}_{\varrho}}\lesssim _{ \beta, \zeta }\varrho ^{\beta },\label{varrho}
\end{equation}
the second line of (\ref{g}) is bounded by
\begin{eqnarray*}
&\lesssim_{\beta,\zeta}& \ \varrho^{\beta} e^{-\nu_0(t-t_1)} \int_0^t e^{-\nu_0(t_1-s)} \left\| \frac{w_{\varrho}g(s)}{\langle v \rangle} \right\|_{\infty} ds\\
&\lesssim& \varrho^{\beta} e^{-\frac{\nu_0}{2}(t-t_1)} \int_0^t e^{-\frac{\nu_0}{2}(t_1-s)-\frac{\nu_0}{2}t_1} ds \sup_{0\leq s\leq t}\left\{ e^{\frac{\nu_0}{2}s} \left\| \frac{w_{\varrho}g(s)}{\langle v \rangle}\right\|_{\infty}\right\} \\
&\lesssim& \varrho^{\beta} e^{-\frac{\nu_0}{2}t} \left\{ \int_0^t e^{-\frac{\nu_0}{2}(t_1-s)}ds \right\}  \sup_{0\leq s\leq t}\left\{ e^{\frac{\nu_0}{2}s} \left\| \frac{w_{\varrho}g(s)}{\langle v \rangle}\right\|_{\infty}\right\}\\
&\lesssim& \varrho^{\beta} e^{-\frac{\nu_0}{2}t} \sup_{0\leq s\leq t}\left\{ e^{\frac{\nu_0}{2}s} \left\| \frac{w_{\varrho}g(s)}{\langle v \rangle}\right\|_{\infty}\right\}.
\end{eqnarray*}
With our choice $k=\varrho=Ct^{\frac{5}{4}}$ and from Lemma \ref{k}, the third and forth line of (\ref{g}) are bounded by
\begin{eqnarray*}
(1+k) \left\{ 1+\frac{C_{\beta,\zeta}}{\varrho^4}\right\}^{k-1} + k \left\{1+\frac{C_{\beta,\zeta}}{\varrho^4}\right\}
\leq 2(1+\varrho)\left\{ 1+ \frac{C_{\beta,\zeta}}{\varrho^4}\right\}^{\varrho}\lesssim_{\beta,\zeta}  (1+\varrho),
\end{eqnarray*}
where we have chosen sufficiently large $\varrho_0>0$ such that $\varrho > \varrho_0$,
\begin{equation}
\left\{1+ \frac{C_{\beta, \zeta }}{\varrho ^{ 4}}\right\}^{\varrho } < 2.\label{varrho2}
\end{equation}
Therefore, the total $g$-contribution is bounded by
\begin{eqnarray*}
&\lesssim_{\beta,\zeta} & (1+\varrho^{\beta})  e^{-\frac{\nu _{0}}{2}t}\sup_{0\leq s\leq t}\left\{ e^{ \frac{\nu _{0}}{2}%
s}\left\Vert \frac{w_{\varrho}g(s)}{\langle v\rangle }\right\Vert _{\infty }\right\} .
\end{eqnarray*}%

Notice that the exponent in $d\Sigma_l(s)$ is bounded by $e^{-\nu_0(t_1-s)}$ and from (\ref{varrho}) and (\ref{tweight}) and from Lemma \ref%
{k} we get
\begin{eqnarray}
&&e^{-\nu (v)t}|h^{\ell+1}(0,x-t\bar{v},v)| \ \mathbf{1}_{t_{1}\leq0}\notag \\
&&+\frac{e^{-\nu
(v)(t-t_{1})}}{\tilde{w}_{\varrho}(v)}
\int_{\prod_{j=1}^{k-1}\mathcal{V}_j} \Big\{ \ \sum_{l=1}^{k-1}\mathbf{1}_{\{t_{l+1}\leq
0<t_{l}\}}|h^{\ell+1-l}(0,x_{l}-t_{l}{\bar{v}}_{l},v_{l})|d\Sigma _{l}(0)\notag\\
&& \ \ \ \ \ \ \ \ \ \ \ \ \ \ \ \  \ \ \ \ \ \ \ \ \ \ \ \ \ \ \  \ \ \ \ \ \ \ \ \ +\mathbf{1}%
_{\{0<t_{k}\}}|h^{\ell+1-k}(t_{k},x_{k},v_{k-1})|d\Sigma _{k-1}(t_{k}) \ \Big\}\notag\\
&\lesssim_{\beta, \zeta }&e^{-\nu_{0}t}\|h^{\ell+1}(0)\|_{\infty }+\varrho ^{ \beta
}e^{-\nu _{0}t}\max_{1\leq l\leq k}\|h^{\ell+1-l}(0)\|_{\infty }\int
\sum_{l=1}^{k-1}\mathbf{1}_{\{t_{l+1}\leq 0<t_{l}\}}d\Sigma _{l}  \notag \\
&&+\varrho ^{ \beta }e^{- \nu _{0}  t}\sup_{0\leq s\leq t}\Big\{e^{ \nu
_{0}  s}\|h^{\ell+1-k}(s)\|_{\infty }\Big\}\int \mathbf{1}_{\{0<t_{k}\}}d\Sigma
_{k-1}  \notag \\
&\lesssim_{\beta, \zeta } &e^{- \nu _{0} t}\bigg\{\|h^{\ell+1}(0)\|_{\infty}+ \varrho^{ \beta }\left\{ 1+\frac{C_{\beta, \zeta }}{\varrho ^{4}}%
\right\} ^{\varrho}\max_{1\leq l\leq k}\|h^{\ell+1-l}(0)\|_{\infty }\nonumber\\
&& \ \ \ \ \ \ \ \ \ \ \ \ \ \ \ \ \ \ \ \ \ \ \ \ \ \ \ + \varrho
^{ \beta  }\left\{ \frac{1}{2}\right\} ^{C_2C^{-1}\varrho}\sup_{0\leq s\leq t}e^{%
 \nu _{0}  s}\|h^{\ell+1-k}(s)\|_{\infty }\bigg\}   \notag \\
&\lesssim_{\beta, \zeta } &e^{-\frac{\nu _{0}}{2}t}\left\{ \max_{0\leq l\leq
k}\|h^{\ell+1-l}(0)\|_{\infty }+\left\{ \frac{1}{2}\right\} ^{ \varrho
}\sup_{0\leq s\leq t}e^{\frac{\nu _{0}}{2}s}\|h^{\ell+1-k}(s)\|_{\infty
}\right\},\nonumber
\end{eqnarray}%
where we have chosen $t=\varrho=Ct^{\frac{5}{4}}$ for sufficiently large $\varrho>0$ but fixed and used (\ref{varrho}).

Now we obtain an upper bound for (\ref{iteration}), (\ref{h1}), (\ref{h2}), (\ref{h3}), (\ref{h4}), (\ref{h5}) as
\begin{eqnarray}
&&  |h^{\ell+1}(t,x,v)| \ \leq \ \mathbf{1}_{t_{1}\leq 0}\int_{0}^{t}e^{-\nu (v)(t-s)}|K_{w_{\varrho}}h^{\ell}(s,x-(t-s)\bar{v},v)|ds\label{hmain}\\
&& \ \ \ \ \ \ \ \ \ \ \ \ \ \ \ \ \ \ \ \ \    +\mathbf{1}_{t_{1}>0}\int_{t_{1}}^{t}e^{-\nu
(v)(t-s)}|K_{w_{\varrho}}h^{\ell}(s,x-(t-s)\bar{v},v)|ds  \notag\\
&& \ \ \ \ \ \ \ +\frac{e^{-\nu (v)(t-t_{1})}}{%
\tilde{w_{\varrho}}(v)}\times\int_{\prod_{j=1}^{k-1}\mathcal{V}_j}\sum_{l=1}^{k-1}\bigg\{   \int_{0}^{t_{l}}\mathbf{1}_{\{t_{l+1}\leq
0<t_{l}\}}|K_{w_{\varrho}}h^{\ell-l}(s,X_{\mathbf{cl}}(s),v_{l})|\notag\\
&& \ \ \ \ \ \ \ \ \ \ \ \ \ \ \ \ \ \ \ \ \ \ \ \ \ \ \ \ \ \ \ \ \ \    +
\int_{t_{l+1}}^{t_{l}}\mathbf{1}_{\{0<t_{l+1}\}}|K_{w_{\varrho}}h^{\ell-l}(s,X_{\mathbf{cl%
}}(s),v_{l})|\bigg\} d\Sigma _{l}(s)ds \notag\\
&& \ \ \ \ \ \ \ \ \ \ \ \ \ \ \ \ \ \ \ \ \   +e^{-\frac{\nu _{0}}{2}t}A_{\ell}(t,x,v),\notag
\end{eqnarray}%
where $A_{\ell}(t,x,v)$ denotes
\begin{eqnarray}
A_{\ell}(t,x,v)&=& e^{\frac{\nu_0}{2}t_1} w_{\varrho}(v) |r(t_1,x_1,v)| + \frac{2\varrho}{\tilde{w}_{\varrho}(v)} \sup_{0\leq s\leq t} \Big\{ e^{\frac{\nu_0}{2}s} \left| \frac{w_{\varrho} \ r(s)}{\langle v \rangle} \right|_{\infty}
\Big\}\notag\\
&&+(1+\varrho^{\beta}) \sup_{0\leq s\leq t} \Big\{ e^{\frac{\nu_0}{2}s} \left\| \frac{w_{\varrho} \ g(s)}{\langle v \rangle} \right\|_{\infty}
\Big\} +  \max_{0\leq l\leq k}\|h^{\ell+1-l}(0)\|_{\infty }\notag\\
&&+\left\{ \frac{1}{2}%
\right\} ^{\varrho }\sup_{0\leq s\leq t} \Big\{ e^{\frac{\nu _{0}}{2}%
s}\|h^{\ell+1-k}(s)\|_{\infty } \Big\}. \ \ \ \  \ \ \  \label{ant}
\end{eqnarray}%
Recall from (\ref{diffusecycle}) that the back-time cycle from $(s,X_{\mathbf{cl}}(s;t,x,v),v^{\prime})$ denotes
$$(t_{1}^{\prime },x_1^{\prime},v_1^{\prime}), \ (t_{2}^{\prime },x_2^{\prime},v_2^{\prime}), \cdots, \ (t_{l^{\prime}}^{\prime },x_{l^{\prime}} ^{\prime},v_ {l^{\prime}}^{\prime}),\cdots.$$
We now iterate (\ref{hmain}) for $\ell-l$ times to get the representation for $h^{\ell-l}$ and then plug in $K_{w_{\varrho}}h^{\ell-l}(s,X_{\mathbf{cl}}(s),v_l)$ to obtain
\begin{eqnarray}
&& \ K_{w_{\varrho}}h^{\ell-l}(s,X_{\mathbf{cl}}(s),v_{l})
 \leq \int_{\mathbf{R}^3} \mathbf{k}%
_{w_{\varrho}}(v_{l},v ^{\prime })|h^{\ell-l}(s,X_{\mathbf{cl}}(s),v ^{\prime
})|dv ^{\prime } \label{411+}\\
&&\leq  \ \ \iint \mathbf{1}_{t_{1}^{\prime }\leq 0}\int_{0}^{s}e^{-\nu (v^{\prime} )(s-s_{1})}\notag\\
&& \ \ \ \ \   \ \ \ \ \ \ \  \ \mathbf{k}_{w_{\varrho}}(v_{l},v ^{\prime })\mathbf{k}%
_{w_{\varrho}}(v ^{\prime },v ^{\prime \prime }) |h^{ \ell-1-l }(s_{1},X_{\mathbf{cl}%
}(s)-(s-s_{1})\bar{v} ^{\prime },v ^{\prime \prime })|ds_{1}dv ^{\prime
}dv ^{\prime \prime }\notag\\
&&  \ \  +  \iint \mathbf{1}_{t_{1}^{\prime }>0}\int_{t_{1}^{\prime }}^{s}e^{-\nu
(v ^{\prime })(s-s_{1})}\notag\\
&& \ \ \ \ \   \ \ \ \ \ \ \  \ \mathbf{k}_{w_{\varrho}}(v_{l},v ^{\prime })\mathbf{k}%
_{w_{\varrho}}(v ^{\prime },v ^{\prime \prime })|h^{ \ell-1-l }(s_{1},X_{\mathbf{cl}%
}(s)-(s-s_{1})\bar{v} ^{\prime },v ^{\prime \prime })|ds_{1}d v ^{\prime
}dv ^{\prime \prime }\notag \\
&& \ \ +  \iint dv^{\prime} dv^{\prime\prime} \int_{\prod_{j=1}^{k-1}\mathcal{V}_{j}^{\prime}}\frac{e^{-\nu (v^{\prime})(s-t_{1}^{\prime})}}{\tilde{w}_{\varrho}(v ^{\prime})} \sum_{l^{\prime
}=1}^{k-1}\int_{0}^{t_{l^{\prime}}^{\prime }}ds_1 \mathbf{1}_{\{t_{l^{\prime
}+1}^{\prime }\leq 0<t_{l^{\prime }}^{\prime }\}} \nonumber\\
&& \ \ \ \ \ \ \ \ \ \ \ \ \mathbf{k}_{w_{\varrho}}(v_{l},v ^{%
\prime })\mathbf{k}_{w_{\varrho}}(v ^{\prime }_{l^{\prime}},v ^{\prime \prime
}) |h^{\ell-1-l-l^{\prime }} (s_{1,}x_{l^{\prime}}^{\prime}+(s_1-t_{l^{\prime}}^{\prime})\bar{v}_{l^{\prime}}^{\prime},v^{\prime\prime})| d\Sigma _{l^{\prime }}(s_{1}) \notag \\
&& \ \   +
 \iint dv^{\prime} dv^{\prime\prime}\int_{\prod_{j=1}^{k-1}\mathcal{V}_{j}^{\prime}}\frac{e^{-\nu (v^{\prime})(s-t_{1}^{\prime})}}{\tilde{w}_{\varrho}(v ^{\prime})}
\sum_{l^{\prime }=1}^{k-1}
\int_{t_{l^{\prime }}^{\prime }}^{t_{l^{\prime }-1}^{\prime }} ds_1 \mathbf{1}%
_{\{t_{l^{\prime}}^{\prime }>0\}}\nonumber\\
&& \ \ \ \ \ \ \ \ \ \ \ \ \mathbf{k}_{w_{\varrho}}(v_{l},v ^{\prime })\mathbf{k}%
_{w_{\varrho}}(v ^{\prime }_{l^{\prime}},v ^{\prime \prime })
| h^{n-1-l-l^{\prime }}
(s_1, x^{\prime}_{l^{\prime}}+(s_1-t^{\prime}_{l^{\prime}})\bar{v}_{l^{\prime}}^{\prime},v ^{\prime \prime })|d\Sigma _{l^{\prime
}}(s_{1})\notag\\
&& \ \ +  \  e^{-\frac{\nu_0}{2}s}\int_{\mathbf{R}^3} \mathbf{k}_{w_{\varrho}}(v_{l},v ^{\prime })A_{\ell-1-l} (s,X_{\mathbf{cl}}(s),v^{\prime})dv ^{\prime }.\label{diff3}
\end{eqnarray}%

The total contributions of $A_{\ell-l-1}$'s in (\ref{hmain}) are obtained via plugging (\ref{411+}) with different $l$'s into (\ref{hmain}). Since $\int \mathbf{k}_{w_{\varrho}}(v_{l},v ^{\prime })dv ^{\prime }<\infty $, the summation of all contributions of $A_{\ell-l-1}$'s leads to the bound
\begin{eqnarray}
&\lesssim_{\beta,\zeta}& 2 A_{\ell-1}(t)\int_{0}^{t}e^{-\nu _{0}(t-s)}e^{-\frac{\nu _{0}}{2}%
s}ds +\varrho^{\beta} e^{-\frac{\nu_0}{2}t}\max_{1\leq l\leq k-1}A_{\ell-l-1}(t)\label{An}\\
&& \ \ \ \times \int_{\prod_{j=1}^{k-1}\mathcal{V}_j} \sum_{l=1}^{k-1}\left\{ \int_0^{t_l} \mathbf{1}%
_{\{t_{l+1}\leq 0<t_{l}\}}+\int_{t_{l+1}}^{t_l} \mathbf{1}_{\{0<t_{l}\}}\right\} e^{-\frac{\nu_0}{2}(t-s)}\tilde{w}_{\varrho}(v_l)d\Sigma
_{l} ds \notag\\
&&+e^{-\frac{\nu _{0}}{2}t}A_{\ell}(t) \notag\\
&\lesssim_{\beta,\zeta} & \varrho^{\beta} e^{-\frac{\nu _{0}}{2}t}\left\{ \left[ 1+\frac{C_{ \beta, \zeta }}{\varrho
^{4}}\right] ^{\varrho}+\left[ \frac{1}{2}\right] ^{ \varrho }\right\}
\max_{0\leq l\leq k}A_{\ell-l}(t) \notag\\
&\lesssim_{\beta,\zeta} &\varrho^{\beta}e^{-\frac{\nu _{0}}{2}t}\left[ \varrho  \sup_{0\leq s\leq
t}\left\{ e^{\frac{\nu _{0}}{2}s}|wr(s)|_{\infty }\right\} + \varrho \sup_{0\leq
s\leq t}\left\{ e^{\frac{\nu _{0}}{2}s}\left\Vert \frac{wg(s)}{\langle
v\rangle }\right\Vert _{\infty }\right\} \right]  \notag \\
&&+e^{-\frac{\nu _{0}}{2}t}\left\{ \max_{1\leq l\leq
2k}\|h^{\ell+1-l}(0)\|_{\infty }+\left\{ \frac{1}{2}\right\} ^{\varrho
}\max_{0\leq l \leq k}\sup_{0\leq s\leq t}e^{\frac{\nu _{0}}{2}s}\|h^{\ell-l+1-k}(s)\|_{\infty
}\right\}.\notag
\end{eqnarray}

To estimate the $h^{\ell-l-1}$ contribution, we first separate $s-s_{1}\leq
\varepsilon $ and $s-s_{1}\geq \varepsilon $.  In the first case, we use the
fact $\int \mathbf{k}_{w_{\varrho} }(p,v)dv<+\infty$ and by (\ref{411+}) to obtain the small contribution
\begin{equation}
\varepsilon e^{-\frac{\nu
_{0}}{2}s}\max_{1\leq l^{\prime }\leq k}\sup_{0\leq s_1\leq s}\{e^{\frac{\nu
_{0}}{2}s_1}\|h^{\ell-l-l^{\prime }}(s_1)\|_{\infty }\}.\label{epsilon1}
\end{equation}
Now we treat the case of $s-s_{1}\geq \varepsilon$.  For any large $N \gg 1$, we can choose a number $m(N)$
to define
\begin{equation}
\mathbf{k}_{m}(v ^{\prime }_{l^{\prime}},v ^{\prime \prime })\equiv \mathbf{1}%
_{|v^{\prime }_{l^{\prime}}-v^{\prime \prime }|\geq \frac{1}{m},|v^{\prime \prime }|\leq
m}\mathbf{k}_{w_{\varrho}  }(v ^{\prime }_{l^{\prime}},v ^{\prime \prime }),\label{km}
\end{equation}
such that
\begin{equation*}
\sup_{ v^{\prime }_{l^{\prime}} }\int_{\mathbf{R}^{3}}|\mathbf{k}_{m}(v ^{\prime
}_{l^{\prime}},v ^{\prime \prime })-\mathbf{k}_{w_{\varrho}  }(v ^{\prime }_{l^{\prime}},v ^{\prime
\prime })|dv ^{\prime \prime }\leq \frac{1}{N}.
\end{equation*}%
We split $\mathbf{k}_{w}=\{\mathbf{k}_{w_{\varrho} }(v ^{\prime }_{l^{\prime}},v ^{\prime
\prime })-\mathbf{k}_{m}(v ^{\prime }_{l^{\prime}},v ^{\prime \prime })\}+\mathbf{k}%
_{m}(v ^{\prime }_{l^{\prime}},v ^{\prime \prime })$, and the first difference
leads to a small contribution in (\ref{411+}) with $k=\varrho,$%
\begin{equation}
\frac{C_{\varrho}}{N}e^{-\frac{\nu _{0}}{2}s}\max_{1\leq l^{\prime }\leq
k}\sup_{0\leq s_1\leq s}\{e^{\frac{\nu _{0}}{2}s_1}\|h^{\ell-l-l^{\prime
}}(s_1)\|_{\infty }\}.\label{epsilon2}
\end{equation}
For the remaining main contribution of $\mathbf{k}_{m}(v ^{\prime
}_{l^{\prime}},v ^{\prime \prime })$, note that
\begin{equation*}
|\mathbf{k}_{m}(v ^{\prime }_{l^{\prime}},v ^{\prime \prime })|\lesssim C_{N}.
\end{equation*}%
We may make a change of variable $y=x^{\prime}_{l^{\prime}}+(s_1-t^{\prime}_{l^{\prime}})\bar{v}_{l^{\prime}}$ ($x_{l^{\prime}}^{\prime}$ does not depend on $ v _l^{\prime}$) so that $\left\vert \frac{%
dy}{d\bar{v}_{l}^{\prime }}\right\vert \geq \varepsilon ^{d}$, for $s-s_1 \geq \varepsilon$ to estimate (using the notation in (\ref{v}))
\begin{eqnarray*}
&&\hskip -.5cm\int_{|v^{\prime \prime }|\leq m}\int_{\mathcal{V}_{l^{\prime}}^{\prime}}|h^{\ell-l-l^{\prime }}(s_{1,}x^{\prime}_{l^{\prime}}+(s_1-t_{l^{\prime}}^{\prime})\bar{v}_{l}^{\prime },v ^{\prime \prime })|
  \frac{e^{-(\frac{1}{4}+ \zeta )|v_{l^{\prime}}^{\prime}|^2}}{(1+\varrho^2 |v^{\prime}_{l^{\prime}}|)^{\beta}} | n(x^{\prime}_{l^{\prime}})\cdot v^{\prime}_{l^{\prime}} |
dv_{l^{\prime}}^{\prime} dv^{\prime
\prime }\\
&&\leq  \int_{\mathbf{R}^{3-d}} e^{-(\frac{1}{4}+ \zeta )|\hat{v}_{l^{\prime}}|^2} d\hat{v}_{l^{\prime}} \int_{|v^{\prime \prime }|\leq m}
dv^{\prime\prime}\\
&& \ \ \ \ \ \ \ \ \ \times\int_{\mathbf{R}^d} h^{\ell-l-l^{\prime}}(s_1, x_{l^{\prime}}^{\prime}+(s_1-t_{l^{\prime}}^{\prime})\bar{v}_{l}^{\prime},v^{\prime\prime}) \mathbf{1}_{\{ x_{l^{\prime}}^{\prime}+(s_1-t_{l^{\prime}}^{\prime})\bar{v}_{l}^{\prime} \in \overline{\Omega} \}} d\bar{v}_{l^{\prime}}^{\prime}
\\
&&\leq \ \ \ \ \frac{1}{\varepsilon ^{d}}\int_{\Omega }\int_{|v^{\prime \prime
}|\leq m}|h^{\ell-l-l^{\prime }}(s_{1,}y,v ^{\prime \prime })|dy dv^{\prime
\prime }  \\
&&\lesssim_{ \varepsilon,m   }   \   \left\|\frac{h^{\ell-l-l^{\prime }}(s_{1})}{w_{\varrho}(v)}%
\right\|_{2}.
\end{eqnarray*}%
Hence the integrand of main contribution is bounded by $\lesssim _{\varepsilon , m
 }\|\frac{h^{\ell-l-l^{\prime }}(s_{1})}{ w_{\varrho}(v)}\|_{2}$.  Rearranging (\ref{411+}) and combining (\ref{ant}), (\ref{epsilon1}) and (\ref{epsilon2}) we have a bound for (\ref{411+}) as
\begin{eqnarray}
&& K_{w_{\varrho} }h^{\ell+1-l}(s,X_{\mathbf{cl}}(s),v_{l})\label{B} \\
&\leq& e^{-\frac{\nu_0}{2}s} [\varepsilon +\frac{C_{K}}{N}+\left\{ \frac{1}{2}\right\} ^{\varrho }] \max_{1\leq l\leq 2k}\sup_{0\leq s_1\leq s}\{e^{\frac{\nu
_{0}}{2}s_1}\|h^{\ell-l}(s_1)\|_{\infty }\} \notag\\
&&+e^{-\frac{\nu _{0}}{2}s}\bigg\{ \varrho
\sup_{0\leq s_1\leq
s}\left[ e^{\frac{\nu _{0}}{2}s_1}\left|\frac{w_{\varrho}r(s_1)}{\langle v \rangle}\right|_{\infty }\right] +\varrho \sup_{0\leq
s_1\leq s}\left[ e^{ \frac{\nu _{0}}{2}s_1}\left\Vert \frac{w_{\varrho}  g(s_1)}{\langle
v\rangle }\right\Vert _{\infty }\right]\notag
\\
&&  \ \ \ \ \ \ \ \ \ \ \ \ \ +\max_{0\leq l\leq
2k}\|h^{\ell+1-l}(0)\|_{\infty } \bigg\} \notag\\
&&+C_{\varepsilon ,m, N}\max_{1\leq l\leq
2k}\int_{0}^{s}\left\|\frac{h^{\ell-l}(s_1)}{w_{\varrho}(v)}\right\|_{2}ds_1 \notag\\
&\equiv &e^{-\frac{\nu _{0}}{2}s}B.\notag
\end{eqnarray}%
By plugging back (\ref{An}) and (\ref{B}) into (\ref{hmain}), we obtain%
\begin{eqnarray*}
&&|h^{\ell+1}(t,x,v)| \\
&\lesssim &B\left\{ \mathbf{1}_{t_{1}\leq 0}\int_{0}^{t}e^{-\nu (v)(t-s)}ds+%
\mathbf{1}_{t_{1}>0}\int_{t_{1}}^{t}e^{-\nu (v)(t-s)}ds\right\}  \\
&&+B\frac{\int e^{-\nu _{0}(t-s)}ds}{\tilde{w}_{\varrho}}\sum_{l=1}^{k-1}\left\{ \int
\mathbf{1}_{\{t_{l+1}\leq 0<t_{l}\}}+\int \mathbf{1}_{\{0<t_{l}\}}\right\}
d\Sigma _{l}+e^{-\frac{\nu _{0}}{2}t}A^{\ell} \\
&\lesssim&e^{-\frac{\nu_0}{2}t} [\varepsilon +\frac{C_{K}}{N}+\left\{ \frac{1}{2}\right\} ^{\varrho }] \max_{1\leq l\leq 2k}\sup_{0\leq s \leq t}\{e^{\frac{\nu
_{0}}{2}s }\|h^{\ell-l}(s )\|_{\infty }\} \notag\\
&&+e^{-\frac{\nu _{0}}{2}t}\bigg\{  {\varrho}
\sup_{0\leq s \leq
t}\left[ e^{\frac{\nu _{0}}{2}s }\left|\frac{w_{\varrho}r(s )}{\langle v \rangle}\right|_{\infty }\right] +\varrho \sup_{0\leq
s \leq t}\left[ e^{ \frac{\nu _{0}}{2}s }\left\Vert \frac{w_{\varrho} g(s )}{\langle
v\rangle }\right\Vert _{\infty }\right]  \notag
\\
&&  \ \ \ \ \ \ \ \ \ \ \ \ \  +\max_{0\leq l\leq
2k}\|h^{\ell+1-l}(0)\|_{\infty } \bigg\} \notag\\
&&+C_{\varepsilon ,m, N}\max_{1\leq l\leq
2k}\int_{0}^{t}\left\|\frac{h^{\ell-l}(s )}{w_{\varrho}(v)}\right\|_{2}ds_1 \notag\\
\end{eqnarray*}
We then deduce our lemma by choosing $t=\varrho=Ct^{5/4}$ and letting $\varrho $ large and fixed, (so are $k$ and
$t$), and then choosing $\varepsilon$ sufficiently small and $N$ sufficiently large. It is trivial to rescale to the case $\varrho=1$ with a different constant depending on $\varrho$.

 Now we sketch the proof of (\ref{rho12}). Since the proof
is similar to (\ref{rho}) we just highlight the differences. The key is to
estimate the $r$-contribution (\ref{h4}) differently, using its weaker $%
L_{x}^{1}(L_{s,v}^{\infty })$ norm. For large $N>0$ we choose $m>0$ such
that
\begin{equation}
\int_{\mathcal{V}}\mathbf{1}_{|v|\geq m}\langle v\rangle \tilde{w}_{\varrho
}(v)d\sigma \leq \frac{1}{N}.  \notag
\end{equation}%
From (\ref{measure}) and Lemma \ref{k}, we bound (\ref{h4}) by splitting $1=
\mathbf{1}_{|v|\geq m}+\mathbf{1}_{|v|<m}$:
\begin{eqnarray}
&&\frac{e^{-\frac{\nu _{0}}{2}t}}{\tilde{w}_{\varrho }(v)}\int
\sum_{l=1}^{k-1}\mathbf{1}_{0<t_{l}}d\sigma _{k-1}\cdots d\sigma _{l+1}
\notag \\
&&\ \ \ \ \ \ \ \ \ \ \times e^{\frac{\nu _{0}}{2}t_{l+1}}\left\vert \frac{%
w_{\varrho }(v_{l})r(t_{l+1})}{\langle v_{l}\rangle }\right\vert \mathbf{1}%
_{|v_{l}|\geq m}\langle v_{l}\rangle \tilde{w}_{\varrho }(v_{l})d\sigma
_{l}d\sigma _{l-1}\cdots d\sigma _{1}  \notag \\
&&+\frac{e^{-\frac{\nu _{0}}{2}t}}{\tilde{w}_{\varrho }(v)}\int
\sum_{l=1}^{k-1}\mathbf{1}_{0<t_{l}}d\sigma _{k-1}\cdots d\sigma _{l+1}e^{%
\frac{\nu _{0}}{2}t_{l+1}}w_{\varrho }(v_{l})|r(t_{l+1})|\mathbf{1}%
_{|v_{l}|<m}\tilde{w}_{\rho }(v_{l})d\sigma _{l}d\sigma _{l-1}\cdots d\sigma
_{1}  \notag \\
&\leq &\frac{k}{N\tilde{w}_{\varrho }(v)}\Big\{1+\frac{C_{\beta ,\zeta }}{%
\varrho ^{4}}\Big\}\sup_{0\leq s\leq t}\left\vert \frac{w_{\varrho }r(s)}{%
\langle v\rangle }\right\vert   \label{rsplit} \\
&&+\frac{kC_{m}}{\tilde{w}_{\varrho }(v)}\int \int_{|v_{l}|\leq
m}|r(t_{l+1},x_{l}-t_{\mathbf{b}}(x_{l},v_{l})\bar{v}_{l},\bar{v}_{l},%
\hat{v})\|n(x_{l})\cdot v_{l}|dv_{l}d\sigma _{l-1}\cdots d\sigma _{1}.
\notag
\end{eqnarray}%
We have used the $d\sigma _{j}$ is a probability measure for $j>l.$ Using
the notation (\ref{v}), $\bar{u}_{l}\equiv \frac{\bar{v}_{l}}{|\bar{v}_{l}|}$
and $t_{\mathbf{b}}(x_{l},v_{l})\bar{v}_{l}=t_{\mathbf{b}}(x_{l},\frac{\bar{v%
}_{l}}{|\bar{v}_{l}|})\frac{\bar{v}_{l}}{|\bar{v}_{l}|}$, and we have an
uppder bound:
\begin{eqnarray}
&&\int_{|v_{l}|\leq m}|r(t_{l+1},x_{l}-t_{\mathbf{b}}(x_{l},v_{l})\bar{v}%
_{l},v)| |n(x_{l})\cdot \bar{v}_{l}|dv_{l}  \notag \\
&\leq &\int_{|\hat{v}_{l}|\leq m}d\hat{v}_{l}\int_{|\bar{v}_{l}|\leq
m}\sup_{s,v}|r(s,x_{l}-t_{\mathbf{b}}(x_{l},v_{l})\bar{v}%
_{l},v)| |n(x_{l})\cdot \bar{u}_{l}\|\bar{v}_{l}|d\bar{v}_{l}  \label{change}
\\
&\lesssim _{m}&\int_{\mathbf{S}^{d-1}}\sup_{s,v}|r(s,x_{l}-t_{\mathbf{b}%
}(x_{l},v_{l})\bar{v}_{l},v)| |n(x_{l})\cdot \bar{u}_{l}|d\bar{u}_{l}.  \notag
\end{eqnarray}%
Since $x_{l}-t_{\mathbf{b}}(x_{l},v_{l})\bar{v}_{l}\in \partial \Omega $ and
$x_{l}\in \partial \Omega ,$ Now apply (\ref{change1}) in Lemma \ref%
{tbsmooth} to have
\begin{eqnarray}
&&\int_{|v_{l}|\leq m}\sup_{s,v}|r(s,x_{l}-t_{\mathbf{b}}(x_{l},v_{l})%
\bar{v}_{l},v)| |n(x_{l})\cdot \bar{u}_{l}|d\bar{u}_{l}\   \notag \\
&\lesssim _{m}&\ \int_{\partial \Omega }\sup_{s,v}|r(s,y,v)|dS(y).
\label{r1l2}
\end{eqnarray}
We now turn to (\ref{hmain}). Now instead of (\ref{ant}), combining (\ref%
{r1l2}) and (\ref{rsplit}) yields
\begin{eqnarray}
&&e^{-\frac{\nu _{0}}{2}t}A_{\ell }(t,x,v)  \notag \\
&=&e^{-\frac{\nu _{0}}{2}(t-t_{1})}w_{\varrho }(v)|r(t_{1},x_{1},v)|  \notag
\\
&&+\frac{C\varrho }{\tilde{w}_{\varrho }(v)}\Big\{1+\frac{C_{\beta ,\zeta }%
}{\varrho ^{4}}\Big\}
\left\{\frac{1}{N} \sup_{s}\left\vert \frac{w_{\varrho }r(s)}{\langle
v\rangle }\right\vert _{\infty }
+C_{N}\left\vert
\sup_{s,v}|r(s,\cdot ,v)|\right\vert _{1}  \right\}\notag \\
&&+C_{\beta ,\zeta }(1+\varrho ^{\beta })
\sup_{s}\left\Vert \frac{w_{\varrho
}\ g(s)}{\langle v\rangle }\right\Vert _{\infty }
+e^{-\frac{\nu _{0}}{2}%
t}\sup_{l}\Vert h^{l}(0)\Vert _{\infty }
+C_{\beta ,\zeta }\left\{ \frac{1}{2}%
\right\} ^{\varrho }\sup_{s,l}\Vert h^{l}(s)\Vert _{\infty }  \notag \\
&\equiv &e^{-\frac{\nu _{0}}{2}(t-t_{1})}w_{\varrho
}(v)|r(t_{1},x_{1},v)|+D.\ \ \ \ \ \ \   \label{diff2}
\end{eqnarray}%
We now plug (\ref{diff2}) back into (\ref{diff3}), the estimate for (\ref{411+}), for each $\ell$. Now we obtain the upper bound for the last term of (\ref{diff3}) using the estimate of $A_{\ell-1-l}(s,X_{\mathbf{cl}}(s),v^{\prime })$ from (\ref{diff2}) as
\begin{equation*}
e^{-\frac{\nu _{0}}{2}s}\int_{\mathbf{R}^{3}}\mathbf{k}_{w_{\varrho
}}^{\prime }e^{\frac{\nu _{0}}{2}t_{1}^{\prime }}w_{\varrho }(v^{\prime
})|r(t_{1}^{\prime },x_{1}^{\prime },v^{\prime })|dv^{\prime }+e^{-\frac{\nu
_{0}}{2}s}C_{K}D.
\end{equation*}%
where
\begin{equation}
t_{1}^{\prime }=t_{\mathbf{b}}(X_{\mathbf{cl}}(s),\bar{v}^{\prime })
\label{prime}
\end{equation}%
is the exit time of the point $(X_{\mathbf{cl}}(s),\bar{v}^{\prime })$ and 
\begin{equation}\label{primeprime}x_{1}^{\prime }=X_{\mathbf{cl}}(s)-t_{\mathbf{b}}(X_{\mathbf{cl}}(s),%
\bar{v}^{\prime })\bar{v}^{\prime },\text{ \ }\mathbf{k}_{w_{\varrho
}}^{\prime }=\mathbf{k}_{w_{\varrho }}(V_{\mathbf{cl}}(s),v^{\prime }).\end{equation} We
further plug this bound back into (\ref{411+}) and then back into (\ref%
{hmain}) to isolate \textit{only} the contribution of $A_{\ell -1-l}$.
Following exactly the same steps, it suffices to control
\begin{eqnarray}
&&\ \mathbf{1}_{t_{1}\leq 0}\int_{0}^{t}e^{-\nu (v)(t-s)}e^{-\frac{\nu _{0}}{%
2}s}\int \mathbf{k}_{w_{\varrho }}^{\prime }e^{\frac{\nu _{0}}{2}%
t_{1}^{\prime }}w_{\varrho }(v^{\prime })|r(t_{1}^{\prime },x_{1}^{\prime
},v^{\prime })|dv^{\prime }ds  \label{hrmain} \\
&&\ \ \ \ \ \ \ \ \ \ \ \ \ \ \ \ \ \ \ \ \ +\mathbf{1}_{t_{1}>0}%
\int_{t_{1}}^{t}e^{-\nu (v)(t-s)}\int \mathbf{k}_{w_{\varrho }}^{\prime }e^{%
\frac{\nu _{0}}{2}t_{1}^{\prime }}w_{\varrho }(v^{\prime })|r(t_{1}^{\prime
},x_{1}^{\prime },v^{\prime })|dv^{\prime }ds  \notag \\
&&\ \ \ \ \ \ \ +\frac{e^{-\nu (v)(t-t_{1})}}{\tilde{w_{\varrho }}(v)}\times
\int_{\prod_{j=1}^{k-1}\mathcal{V}_{j}}\sum_{l=1}^{k-1}\bigg\{%
\int_{0}^{t_{l}}\mathbf{1}_{\{t_{l+1}\leq 0<t_{l}\}}\int \mathbf{k}%
_{w_{\varrho }}^{\prime }e^{\frac{\nu _{0}}{2}t_{1}^{\prime }}w_{\varrho
}(v^{\prime })|r(t_{1}^{\prime },x_{1}^{\prime },v^{\prime })|  \notag \\
&&\ \ \ \ \ \ \ \ \ \ \ \ \ \ \ \ \ \ \ \ \ \ \ \ \ \ \ \ \ \ \ \ \ \
+\int_{t_{l+1}}^{t_{l}}\mathbf{1}_{\{0<t_{l+1}\}}\int \mathbf{k}_{w_{\varrho
}}^{\prime }e^{\frac{\nu _{0}}{2}t_{1}^{\prime }}w_{\varrho }(v^{\prime
})|r(t_{1}^{\prime },x_{1}^{\prime },v^{\prime })|\bigg\}d\Sigma _{l}(s)ds
\notag \\
&&\ \ \ \ \ \ \ \ \ \ \ \ \ \ \ \ \ \ \ \ \ +e^{-\frac{\nu _{0}}{2}%
(t-t_{1})}w_{\varrho }(v)|r(t_{1},x_{1},v)|+D+C_{\beta ,\zeta }\varrho ^{\beta
}\left\{ [1+\frac{C_{\beta ,\zeta }}{\varrho ^{4}}]^{\varrho }+\frac{1}{2^{\varrho}}%
\right\} D,  \notag
\end{eqnarray}%
by Lemma \ref{k}. Note $\varrho ^{\beta }\left\{ [1+\frac{C_{\beta ,\zeta }}{%
\varrho ^{4}}]^{\varrho }+\frac{1}{2^{\varrho }}\right\} \lesssim 1.$ We shall use a
sequence of approximations to estimate the above integrals. Let $\mathbf{k}%
_{w_{\varrho }}^{\prime }\mathbf{1}_{|v^{\prime }-V_{\mathbf{cl}}(s)|\geq
\frac{1}{m},|v^{\prime }-V_{\mathbf{cl}}(s)|\leq m}=\mathbf{k}_{w_{\varrho
}}^{m}.$ We again split for $m$ large
\begin{equation*}
\int |\mathbf{k}_{w_{\varrho }}^{\prime }-\mathbf{k}_{w_{\varrho }}^{m}|\leq
\frac{1}{N}.
\end{equation*}%
We therefore bound%
\begin{eqnarray}
&&\ \mathbf{1}_{t_{1}\leq 0}\int_{0}^{t}e^{-\nu (v)(t-s)}e^{-\frac{\nu _{0}}{%
2}s}\int \mathbf{k}_{w_{\varrho }}^{m}e^{\frac{\nu _{0}}{2}t_{1}^{\prime
}}w_{\varrho }(v^{\prime })|r(t_{1}^{\prime },x_{1}^{\prime },v^{\prime
})|dv^{\prime }ds  \notag \\
&&\ \ \ \ \ \ \ \ \ \ \ \ \ \ \ \ \ \ \ \ \ +\mathbf{1}_{t_{1}>0}%
\int_{t_{1}}^{t}e^{-\nu (v)(t-s)}\int \mathbf{k}_{w_{\varrho }}^{m}e^{\frac{%
\nu _{0}}{2}t_{1}^{\prime }}w_{\varrho }(v^{\prime })|r(t_{1}^{\prime
},x_{1}^{\prime },v^{\prime })|ds  \notag \\
&&\ \ \ \ \ \ \ +\frac{e^{-\nu (v)(t-t_{1})}}{\tilde{w_{\varrho }}(v)}\times
\int_{\prod_{j=1}^{k-1}\mathcal{V}_{j}}\sum_{l=1}^{k-1}\bigg\{%
\int_{0}^{t_{l}}\mathbf{1}_{\{t_{l+1}\leq 0<t_{l}\}}\int \mathbf{k}%
_{w_{\varrho }}^{m}e^{\frac{\nu _{0}}{2}t_{1}^{\prime }}w_{\varrho
}(v^{\prime })|r(t_{1}^{\prime },x_{1}^{\prime },v^{\prime })|  \notag \\
&&\ \ \ \ \ \ \ \ \ \ \ \ \ \ \ \ \ \ \ \ \ \ \ \ \ \ \ \ \ \ \ \ \ \
+\int_{t_{l+1}}^{t_{l}}\mathbf{1}_{\{0<t_{l+1}\}}\int \mathbf{k}_{w_{\varrho
}}^{m}e^{\frac{\nu _{0}}{2}t_{1}^{\prime }}w_{\varrho }(v^{\prime
})|r(t_{1}^{\prime },x_{1}^{\prime },v^{\prime })|\bigg\}d\Sigma _{l}(s)ds
\notag \\
&&+\frac{C_{\beta ,\varrho ,k,\zeta }}{N}\sup_{s}|w_{\varrho }r(s)|_{\infty }.
\label{ksplit}
\end{eqnarray}%
Since $\mathbf{k}_{w_{\varrho }}^{m}$ is bounded, and $w_{\varrho
}(v^{\prime })\lesssim _{\varrho }\langle v^{\prime }\rangle ^{\beta },$ we
further split $|v^{\prime }|\geq N$ and $|v^{\prime }|\leq N.$ Since $%
\int_{|v|\geq N}\langle v\rangle ^{-4}\lesssim \frac{1}{N},$ up to a
constant of $C_{N,\varrho ,k}$ the main integrals in (\ref{ksplit}) are bounded
by:%
\begin{eqnarray}
&&\ \mathbf{1}_{t_{1}\leq 0}\int_{0}^{t}\int_{\Pi _{m}}\langle v^{\prime
}\rangle ^{\beta }|r(t_{1}^{\prime },x_{1}^{\prime },v^{\prime })|dv^{\prime
}ds+\mathbf{1}_{t_{1}>0}\int_{t_{1}}^{t}\int_{\Pi _{m}}\langle v^{\prime
}\rangle ^{\beta }|r(t_{1}^{\prime },x_{1}^{\prime },v^{\prime })|dv^{\prime
}ds  \notag
\\
&&\ \ \ \ \ \ \ +\int_{\prod_{j=1}^{k-1}\mathcal{V}_{j}}\sum_{l=1}^{k-1}%
\bigg\{\int_{0}^{t_{l}}\mathbf{1}_{\{t_{l+1}\leq 0<t_{l}\}}\int_{\Pi
_{m}}\langle v^{\prime }\rangle ^{\beta }|r(t_{1}^{\prime },x_{1}^{\prime
},v^{\prime })|  \notag \\
&&\ \ \ \ \ \ \ \ \ \ \ \ \ \ \ \ \ \ \ \ \ \ \ \ \ \ \ \ \ \ \ \ \ \
+\int_{t_{l+1}}^{t_{l}}\mathbf{1}_{\{0<t_{l+1}\}}\int_{\Pi _{m}}\langle
v^{\prime }\rangle ^{\beta }|r(t_{1}^{\prime },x_{1}^{\prime },v^{\prime })|%
\bigg\}d\Sigma _{l}(s)ds  \notag \\
&&+\frac{1}{N}\sup_{s,v,x}\langle v\rangle ^{\beta +4}|r(s,x,v)|,
\label{nsplit}
\end{eqnarray}%
where%
\begin{equation}
v^{\prime}\in\Pi _{m} \ \ \ \text{iff} \ \ \ \ |v^{\prime }-V_{\mathbf{cl}}(s)|\geq \frac{1}{m},\text{ \ and \ }%
|v^{\prime }-V_{\mathbf{cl}}(s)|\leq m,\text{ \ and \ }|v^{\prime }|\leq N.
\label{pi}
\end{equation}%
Lastly, for any $\varepsilon >0,$ we further split the time intervals of the
main integrals in (\ref{nsplit}) to obtain:%
\begin{eqnarray}
&&\ \mathbf{1}_{t_{1}\leq 0}\int_{\varepsilon }^{t-\varepsilon }\int_{\Pi
_{N}}{}\langle v^{\prime }\rangle ^{\beta }|r(t_{1}^{\prime },x_{1}^{\prime
},v^{\prime })|dv^{\prime }ds+\mathbf{1}_{t_{1}>0}\int_{t_{1}+\varepsilon
}^{t-\varepsilon }\int_{\Pi _{N}}\langle v^{\prime }\rangle ^{\beta
}|r(t_{1}^{\prime },x_{1}^{\prime },v^{\prime })|dv^{\prime }ds  \notag \\
&&\ \ \ \ \ \ \ +\int_{\prod_{j=1}^{k-1}\mathcal{V}_{j}}\sum_{l=1}^{k-1}%
\bigg\{\int_{\varepsilon }^{t_{l}-\varepsilon }\mathbf{1}_{\{t_{l+1}\leq
0<t_{l}\}}\int_{\Pi _{N}}\langle v^{\prime }\rangle ^{\beta
}|r(t_{1}^{\prime },x_{1}^{\prime },v^{\prime })|  \notag \\
&&\ \ \ \ \ \ \ \ \ \ \ \ \ \ \ \ \ \ \ \ \ \ \ \ \ \ \ \ \ \ \ \ \ \
+\int_{t_{l+1}+\varepsilon }^{t_{l}-\varepsilon }\mathbf{1}%
_{\{0<t_{l+1}\}}\int_{\Pi _{N}}\langle v^{\prime }\rangle ^{\beta
}|r(t_{1}^{\prime },x_{1}^{\prime },v^{\prime })|\bigg\}d\Sigma
_{l}(s)dv^{\prime }ds  \notag \\
&&+\varepsilon C_{\varrho,k,\beta }\sup_{s,x,v}\langle v\rangle ^{\beta
}|r(s,x,v)|.  \label{esplit}
\end{eqnarray}%
We now are ready to use change of variables to estimate the main $v^{\prime
}$-integrals in (\ref{esplit}). Recall that, by the definition (\ref{primeprime}) of $x_{1}^{\prime }$  
and Definition (\ref{diffusecycles}), $X_{\mathbf{cl}}(s)$ reaches $\partial
\Omega $ if and only if $s$ at these $t,t_{1},t_{2},...t_{l+1}.$ From our
splitting of time intervals, there exists $c_{\varepsilon }>0$ such that
\begin{equation*}
\text{dist}(X_{\mathbf{cl}}(s),\partial \Omega )>c_{\varepsilon }>0,
\end{equation*}%
in the integrals in (\ref{esplit}), uniformly in $\Pi _{N}.$ We now repeat
the change of variables $x_{1}^{\prime }\rightarrow \frac{\bar{v}^{\prime }}{%
|\bar{v}^{\prime }|}$ as (\ref{change}) and (\ref{r1l2}), but using (\ref{change2}) instead of (\ref{change1}). We finally bound (\ref{esplit}) by%
\begin{equation}
C_{N,\varepsilon ,k,\varrho }\left\vert \sup_{s,v}|r(s,\cdot ,v)|\right\vert
_{1}+\varepsilon C_{\varrho ,k,\beta ,N}\sup_{s,x,v}\langle v\rangle ^{\beta
}|r(s,x,v)|.  \label{rchange}
\end{equation}%
Collecting and combining (\ref{diff2}), (\ref{hrmain}), (\ref{ksplit}), (\ref%
{esplit}) and (\ref{rchange}), we conclude the $r$ contribution in (\ref%
{hmain}) is bounded by%
\begin{eqnarray*}
&&e^{-\frac{\nu _{0}}{2}(t-t_{1})}w_{\varrho }(v)|r(t_{1},x_{1},v)|+C_{\varrho
}D+\left\{ \frac{C_{\beta ,\varrho ,k,\zeta }}{N}+\varepsilon C_{\varrho ,k,\beta
,N}\right\} \sup_{s}|\langle v\rangle ^{\beta +4}r(s)|_{\infty } \\
&&+C_{N,\varepsilon ,k,\varrho }\left\vert \sup_{s,v}|r(s,\cdot ,v)|\right\vert
_{1}.
\end{eqnarray*}%
We deduce (\ref{rho12}) by recalling $D$ defined in (\ref{diff2}).
\end{proof}

\bigskip

After preparing the above tools we return to the stationary problem to give the proof of Proposition \ref{linfty}.

\vskip .2cm
\begin{proof}[Proof of Proposition \ref{linfty}]
We use the exactly same approximation (\ref{approximate}) to establish the proposition and follow the same steps in the proof of Proposition \ref{linearl2}. We denote $h^{\ell+1}= w_{\varrho} \ f^{\ell+1} $, and rewrite (\ref%
{approximate}) as%
\begin{eqnarray}
\varepsilon h^{\ell+1}+v\cdot \nabla _{x}h^{\ell+1}+\nu h^{\ell+1} &=&K_{w_{\varrho}}h^{\ell}+w_{\varrho} g,
\label{happroximate} \\
h_{ {-}}^{\ell+1} &=&\frac{1-\frac{1}{j}}{\tilde{w}_{\varrho}(v)}%
\int_{n(x)\cdot v^{\prime }>0}h ^{\ell}(t,x,v^{\prime })\tilde{w}%
_{\varrho}(v^{\prime })d\sigma +w_{\varrho} \ r.  \notag
\end{eqnarray}
\noindent{\it Step 1}:  We take $\ell\rightarrow \infty $ in $L^{\infty }$.  Upon
integrating over the characteristic lines $\frac{dx}{dt}=v$, and $\frac{dv}{%
dt}=0$ and using the boundary condition repeatedly, we obtain (by replacing $1$ with $1-%
\frac{1}{j}$ and $\nu$ with $\nu+\varepsilon$) that the abstract iteration (\ref{iteration}) is valid
for stationary $h^{\ell+1}(s,x,v)=h^{\ell+1}(x,v), \
g(s,x,v)=g(x,v)$ and $r(s,x,v)=r(x,v)$.  Therefore, for $\ell\geq 2k$, by Lemma \ref{iterationlinfty}, we get (choosing $k=\varrho=C t^{5/4}$ large)
\begin{eqnarray*}
\|h^{\ell+1}\|_{\infty } &\leq &\frac{1}{8}\max_{1\leq l\leq
2k}\{\|h^{\ell-l}\|_{\infty }\}+e^{-\frac{\nu _{0}}{2}t}\max_{0\leq l\leq
2k}\|h^{\ell+1-l}\|_{\infty } \\
&&+\varrho \left[    |w_{\varrho} \ r|_{\infty }  +   \left\Vert \frac{w_{\varrho} \ g}{\langle v\rangle }\right\Vert _{\infty
}  \right] +C_k \max_{1\leq l\leq 2k}\left\| f^{\ell-l} \right\|_{2}.
\end{eqnarray*}
Then, absorbing $e^{-\frac{\nu_0}{2}t}\|h^{\ell+1}\|_{\infty}$ in the left hand side, we have
\begin{eqnarray*}
\|h^{\ell+1}\|_{
\infty}&\leq &\frac{1}{4}\max_{1\leq l\leq 2k}\{\|h^{\ell+1-l}\|_{\infty }\}+C_{k}%
\left[   |w_{\varrho} \ r|_{\infty }  +  \left\Vert \frac{w_{\varrho} \ g}{\langle v\rangle }\right\Vert _{\infty
}  \right]  \\
&&+C_{k}\max_{1\leq l\leq 2k}\left\| f^{\ell-l} \right\|_{2}.
\end{eqnarray*}%
Now this is valid for all $\ell\geq 2k$.  By  induction on $\ell$, we
can iterate such bound for $\ell+2,....\ell+2k$ to obtain%
\begin{eqnarray*}
&&\hskip -.8cm\|h^{\ell+i}\|_{\infty } \leq \frac{1}{4}\max_{1\leq l\leq
2k}\{\|h^{\ell+i-l}\|_{\infty }\}+C_{k}\left[ |w_{\varrho} \ r|_{\infty }+\left\Vert \frac{%
w_{\varrho} \ g}{\langle v\rangle }\right\Vert _{\infty }+\max_{-2k\leq l\leq
2k}\left\| f^{\ell+l} \right\|_{2}\right]  \\
&\leq &\frac{1}{4}\max_{1\leq l\leq 2k}\{\|h^{\ell-1+i-l}\|_{\infty }\}+2C_{k}%
\left[ |w_{\varrho} \ r|_{\infty }+\left\Vert \frac{w_{\varrho} \ g}{\langle v\rangle }\right\Vert
_{\infty }+\max_{-2k\leq l\leq 2k}\left\| f^{\ell+l} \right\|_{2}\right]  \\
&\vdots& \\
&\leq &\frac{1}{4}\max_{1\leq l\leq 2k}\{\|h^{\ell-l}\|_{\infty }\}+(2i+1)C_{k}%
\left[ |w_{\varrho} \ r|_{\infty }+\left\Vert \frac{w_{\varrho} \ g}{\langle v\rangle }\right\Vert
_{\infty }+ \max_{-2k\leq l\leq 2k}\left\| f^{\ell+l} \right\|_{2} \right].
\end{eqnarray*}%
We now take a maximum over $1\leq i \leq 2k$ to get
\begin{eqnarray}
\max_{1\leq l\leq 2k}\|h^{\ell+1-l}\|_{\infty }
 &\leq&  \frac{1}{4}\max_{1\leq l\leq 2k}\|h^{\ell-l}\|_{\infty }\nonumber\\
 &+&(4k+1)C_{k}%
\left[ |w_{\varrho} \ r|_{\infty }+\left\Vert \frac{w_{\varrho} \ g}{\langle v\rangle }\right\Vert
_{\infty }+\max_{-2k\leq l\leq 2k} \| {f^{\ell+l}}  \|_{2} \right].\notag\\
\label{hn1l}
\end{eqnarray}%
This implies that, from Lemma \ref{iterationlinfty} for all $\ell\geq 2k,$
\begin{equation}
\max_{1\leq l\leq 2k}\|h^{\ell+1-l}\|_{\infty }\lesssim _{k}\left[ \max_{1\leq
l\leq 2k}\|h^{l}\|_{\infty }+|w_{\varrho} \ r|_{\infty }+\sup_{0\leq s\leq t}\left\Vert
\frac{w_{\varrho} \ g}{\langle v\rangle }\right\Vert _{\infty }+\max_{1\leq l\leq
\ell}\|f^{l}\|_{2}\right] .\label{n1-l}
\end{equation}%
Now in order to control $\max_{1\leq l\leq 2k}\|h^{l}\|_{\infty }$, we can use (\ref%
{iteration}) repeatedly for $h^{2k}\rightarrow h^{2k-1\text{ }%
}\cdots\rightarrow h^{0}$ to obtain, by Lemma \ref{k},
\begin{equation}
\max_{1\leq l\leq 2k}\|h^{l}\|_{\infty }\lesssim _{k}|w_{\varrho} \ r|_{\infty
}+ \left\Vert \frac{w_{\varrho} \ g}{\langle v\rangle }\right\Vert
_{\infty }.\label{l2k}
\end{equation}%
We therefore conclude that, from (\ref{n1-l}) and (\ref{l2k}),
\begin{eqnarray*}
\max_{1\leq l\leq 2k}\|h^{\ell+1-l}\|_{\infty } & \lesssim_{k}  &\left[
\|h^{0}\|_{\infty }+|w_{\varrho} \ r|_{\infty }+\sup_{0\leq s\leq t}\left\Vert \frac{w_{\varrho} \ g}{%
\langle v\rangle }\right\Vert _{\infty }+\max_{1\leq l\leq \ell}\|f^{l}\|_{2}%
\right].
\end{eqnarray*}%
But $\max_{1\leq l\leq \infty }\|f^{l}\|_{2}$ is bounded by step 1 in the
proof of Lemma \ref{lineareb}. Furthermore for $\beta>4$, $\|\cdot \|_{2}$ is bounded by $%
\|w_{\varrho}\cdot \|_{\infty }$ and $|r|_{2}$ is bounded by $%
|w_{\varrho}\langle v\rangle r|_{\infty }$.  Hence we have
\begin{eqnarray*}
\max_{1\leq l\leq 2k}\|h^{\ell+1-l}\|_{\infty }& \lesssim_{k}  &\left[ \|h^{0}\|_{\infty }+|w_{\varrho} \ r|_{\infty }+\sup_{0\leq s\leq
t}\left\Vert \frac{w_{\varrho} \ g}{\langle v\rangle }\right\Vert _{\infty }\right].
\end{eqnarray*}%
Therefore there exists a limit (unique) solution $%
h^{\ell}\rightarrow h=w_{\varrho} \ f\in L^{\infty }$.  Furthermore, $h$ satisfies (\ref%
{iteration}) with $h^{\ell+1}\equiv h$.

By subtracting $h^{\ell+1}-h$ in (\ref%
{iteration}) with $r=g=0$, we obtain from Lemma \ref{iterationlinfty}, for $h^{\ell+1}-h$, (choosing $k=\varrho=Ct^{5/4}$ large)
\begin{eqnarray*}
\|h^{\ell+1}-h\|_{\infty } &\leq & \frac{1}{8} \max_{0\leq l \leq 2k} \| h^{\ell-l}-h \|_{\infty} + e^{-\frac{\nu_0}{2}k} \max_{0\leq l \leq 2k} \| h^{\ell+1-l}-h \|_{\infty}\\
&&+ C_k \max_{1\leq l \leq 2k} \| f^{\ell-l}-f\|_2\\
&\leq& \frac{1}{4} \max_{1\leq l \leq 2k} \| h^{\ell-l}-h \|_{\infty} + C_k \max_{1\leq l \leq 2k} \| f^{\ell-l}-f\|_2,
\end{eqnarray*}
where $e^{-\frac{\nu_0}{2}k} \|h^{\ell+1}-h\|_{\infty}$ is absorbed in the left hand side.
\begin{eqnarray*}
&\leq &\frac{1}{4^{\ell/2k}}\max_{1\leq l\leq 2k}\{\|h^{l}-h\|_{\infty }\}+C_{k}%
\left[ \max_{-2k\leq l\leq 2k}\|f^{\ell+l}-f\|_{2}\right]  \\
&\leq &\frac{C_{k}(|w_{\varrho} r|_{\infty }+\sup_{0\leq s\leq t}\left\Vert \frac{w_{\varrho} g}{%
\langle v\rangle }\right\Vert _{\infty })}{4^{\ell/2k}}+C_{k}\left[
\max_{-2k\leq l\leq 2k}\|f^{\ell+l}-f\|_{2}\right] .
\end{eqnarray*}%
From step 1 of Lemma \ref{lineareb}, we deduce $\|h^{\ell+1-l}-h\|_{\infty
}\rightarrow 0$ for $\ell$ large.
\vskip .3cm
\noindent{\it Step 2:} Now we let $j\rightarrow \infty$.  We take $f^{j}$ to be the solution
to (\ref{jequation}) and integrate along $\frac{dx}{dt}=v,\frac{dv}{dt}=0$
repeatedly. We establish (\ref{iteration}) for $h^{\ell}\equiv h^{j}$
(we may replace $(1-\frac{1}{j})$ by $1$ and $\varepsilon =0$ to preserve inequality) and $l=0$.
Lemma \ref{iterationlinfty} implies
\begin{equation*}
\|h^{j}\|_{\infty }\leq \frac{1}{8} \|h^{j}\|_{\infty } +e^{-\frac{\nu _{0}%
}{2}k}\|h^{j}\|_{\infty }+\varrho^{1+4\beta}\left[ |w_{\varrho} r|_{\infty }+\left\Vert \frac{w_{\varrho} g}{%
\langle v\rangle }\right\Vert _{\infty }\right] +C(k)\|f^{j}\|_{2} ,
\end{equation*}%
so that
\begin{equation*}
\|h^{j}\|_{\infty }\leq C_{k}\left[ |w_{\varrho} r|_{\infty }+\left\Vert \frac{w_{\varrho} g}{%
\langle v\rangle }\right\Vert _{\infty }\right] +C_{k}\|f^{j}\|_{2}.
\end{equation*}%
Since $\|f^{j}\|_{2}$ is bounded, this implies that $\|h^{j}\|_{\infty }$ is
uniformly bounded and we obtain a (unique) solution $h=wf\in L^{\infty }$.
Taking the difference, we have
\begin{eqnarray*}
&&\varepsilon \lbrack h^{j}-h]+v\cdot \nabla _{x}[h^{j}-h]+\nu \lbrack
h^{j}-h] =K_{w}[h^{j}-h], \\
&& \ \ \ \ \ \ \ \ h_{ {-}}^{j}-h_{ {-}} =\frac{1}{\tilde{w}_{\varrho}(v)}%
\int_{n(x)\cdot v^{\prime }>0}[h^{j}-h](t,x,v^{\prime }) \tilde{w}_{\varrho}(v^{\prime}) d\sigma(v^{\prime}) \\
&& \ \ \ \ \ \ \ \ \ \ \ \ \ \ \ \ \ \ \ \ \ \ -\frac{1}{j}\frac{1}{\tilde{w}_{\varrho}(v)}\int_{n(x)\cdot v^{\prime
}>0}h ^{j}(t,x,v^{\prime })\tilde{w}_{\varrho}(v^{\prime}) d\sigma(v^{\prime})   .
\end{eqnarray*}%
We regard $-\frac{1}{j}\frac{1}{\tilde{w}_{\varrho}(v)}\int_{n(x)\cdot v^{\prime
}>0}h (t,x,v^{\prime })\tilde{w}_{\varrho}(v^{\prime })(n(x)\cdot v^{\prime})dv^{\prime}  =r$.
So Lemma \ref{iterationlinfty} implies that
\begin{equation*}
\|h^{j}-h\|_{\infty }\leq \frac{1}{4}\{\|h^{j}-h\|_{\infty }\}+\frac{1}{j}%
 |h^{j} |_{\infty }+C_{k}\left[ \|f^{j}-f\|_{2}\right],
\end{equation*}%
which goes to zero as $j$ to $\infty $.

We obtained a $L^{\infty }$ solution $h^{\varepsilon }=w_{\varrho}  f^{\varrho}$ to (\ref%
{elinear}). By integrating over the trajectory, (\ref{iteration}) is valid for $%
h^{\ell}$ replaced by  $h^{\varepsilon }$ so that from Lemma \ref{iterationlinfty}
\begin{equation*}
\|h^{\varepsilon }\|_{\infty }\leq \frac{1}{8}\{\|h^{\varepsilon }\|_{\infty
}\}+e^{-\frac{\nu _{0}}{2}t}\|h^{\varepsilon }\|_{\infty }+k\left[
|wr|_{\infty }+\left\Vert \frac{wg}{\langle v\rangle }\right\Vert _{\infty }%
\right] +C(k)\|f^{\varepsilon }\|_{2} ,
\end{equation*}%
and hence
\begin{equation*}
\|h^{\varepsilon }\|_{\infty }\lesssim k\left[ |wr|_{\infty }+\left\Vert \frac{wg%
}{\langle v\rangle }\right\Vert _{\infty }\right] +C(k)\|f^{\varepsilon
}\|_{2} ,
\end{equation*}%
which implies that, from Proposition \ref{linearl2}, that $\|h^{\varepsilon
}\|_{\infty }$ is uniformly bounded and we obtain $h=wf$ solution to the linear
equation.

Now we have
\begin{eqnarray*}
\varepsilon h^{\varepsilon }+v\cdot \nabla _{x}[h^{\varepsilon }-h]+\nu
L_m[h^{\varepsilon }-h] &=&0, \\
h_{ {-}}^{\varepsilon }-h_{ {-}} &=&\frac{1}{\tilde{w}_{\varrho}(v)}%
\int_{n(x)\cdot v^{\prime }>0}[h^{\varepsilon }-h](t,x,v^{\prime })\tilde{w}_{\varrho}(v^{\prime })d\sigma ,
\end{eqnarray*}%
so that from Lemma \ref{iterationlinfty}%
\begin{equation*}
\|h^{\varepsilon }-h\|_{\infty }\lesssim _{k}\varepsilon \|h^{\varepsilon
}\|_{\infty }+\|f^{\varepsilon }-f\|_{2},
\end{equation*}%
which goes to zero. For hard potential kernel, we recall that in previous section we have constructed an approximating sequence $f^m$ to the equation
\begin{equation*}
v\cdot \nabla f^{m}+L_{m}f^{m}=g,\text{ \ \ \ \ }f^{m}_{ -}=P_{\gamma }f^{m} +r,
\end{equation*}%
with uniform bound in $L^{2}$ and a limit $f^{m}\rightarrow f$ weakly in $%
\|\cdot \|_{\nu }$, see Step 2 in the proof of Proposition \ref{linearl2}. Moreover, we obtain from %
(\ref{iterationlinfty}) that $\|w_{\varrho}f^{m}\|_{\infty }$ is uniformly bounded and
so is $w_{\varrho}f$.  Note that
\begin{equation*}
v\cdot \nabla_x \lbrack f^{m}-f]+L_{m}[f^{m}-f]=[-L+L_{m}]f,\text{ \ \ \ \ }%
[f^{m}-f]_{ -}=P_{\gamma }[f^{m}-f].
\end{equation*}%
It follows from Proposition \ref{linearl2} and the boundedness of $w_{\varrho}f$ that
\begin{equation*}
\|f^{m}-f\|_{\nu }\leq \|[-L+L_{m}]f\|_{2}\rightarrow 0.
\end{equation*}%
Now, to show $\| f^{m}-f \|_{\infty }\to 0$, we apply (\ref{iteration}) with $g=(-L+L_m)f$
\begin{equation*}
\| \{f^{m}-f\}\|_{\infty }\lesssim \|f^{m}-f\|_{2}+\left\|\frac{%
 \{L_{m}-L\}f}{\langle v\rangle }\right\|_{\infty }\rightarrow 0.
\end{equation*}%
Since $w_{\varrho}f\in L^{\infty }$, the second term goes to zero.

In the iteration scheme $f^\ell$ is continuous away from $\mathfrak{D}$ and hence $f$ is continuous away from $\mathfrak{D}$.
\end{proof}
\begin{remark}
Our construction fails to imply the $F_{s}=\mu +\sqrt{\mu }f_{s}\geq 0$.
This can only be shown by the dynamical asymptotic stability discussed in Section 7.
\end{remark}

\bigskip
\section{Well-posedness, Continuity and Fourier Law}

\begin{proof}[Proof of the Theorem \ref{main1}]

\noindent{\it Wellposedness}.  We consider the following iterative sequence%
\begin{eqnarray}
 v\cdot \nabla _{x}f^{\ell+1}+Lf^{\ell+1}  &=&\Gamma (f^{\ell},f^{\ell}), \label{nlinearfn}\\
f_{ {-}}^{\ell+1} &=&P_{\gamma }f^{\ell+1}+\frac{\mu _{\delta
}-\mu }{\sqrt{\mu }}\int_{n(x)\cdot v >0}f ^{\ell}\sqrt{\mu }(n(x)\cdot
v)dv+\frac{\mu _{\delta }-\mu }{\sqrt{\mu }},\notag
\end{eqnarray}%
{with $f^0=0$.}

Note $\int \Gamma (f^{\ell},f^{\ell})\sqrt{\mu }=0$ and from $\int_{n(x)\cdot v<0}\mu_{\delta}(n(x)\cdot v)dv=\int_{n(x)\cdot v<0}\mu (n(x)\cdot v)dv=1,$
\begin{equation*}
\int_{\gamma _{-}}\sqrt{\mu }\left\{ \frac{\mu _{\delta }-\mu }{\sqrt{\mu }}%
\int_{n(x)\cdot v>0}f ^{\ell}\sqrt{\mu }(n(x)\cdot v)dv+\frac{\mu
_{\delta }-\mu }{\sqrt{\mu }}\right\} d\gamma =0.
\end{equation*}%
Since $|w_{\varrho}\frac{\mu_{\delta}-\mu}{\sqrt{\mu}}|_\infty \lesssim \delta$, we apply Proposition \ref{linfty} to get
\begin{equation*}
\|w_{\varrho}f^{\ell+1}\|_{\infty }+|w_{\varrho}f ^{\ell+1}|_{\infty }\lesssim \left\Vert
\frac{w_{\varrho}\Gamma (f^{\ell},f^{\ell})}{\langle v\rangle }\right\Vert _{\infty }+\delta
|w_{\varrho}f^{\ell}|_{  \infty,{+} }+\delta .
\end{equation*}%
Since $\left\Vert \frac{w_{\varrho}\Gamma (f^{\ell},f^{\ell})}{\langle v\rangle }\right\Vert
_{\infty }\lesssim \|w_{\varrho}f^{\ell}\|_{\infty }^{2}$, we deduce
\begin{equation*}
\|w_{\varrho}f^{\ell+1}\|_{\infty }+|w_{\varrho}f ^{\ell+1}|_{\infty }\lesssim
\|w_{\varrho}f^{\ell}\|_{\infty }^{2}+\delta |w_{\varrho}f ^{\ell}|_{\infty,+ }+\delta,
\end{equation*}%
so that for $\delta $ small,
\begin{equation*}
\|w_{\varrho}f^{\ell+1}\|_{\infty }+|w_{\varrho}f ^{\ell+1}|_{\infty }\lesssim \delta.
\end{equation*}%
Upon taking differences, we have
\begin{eqnarray*}
&&\lbrack f^{\ell+1}-f^{\ell}]+v\cdot \nabla _{x}[f^{\ell+1}-f^{\ell}]+L[f^{\ell+1}-f^{\ell}]
\\
&& \ \ \ \ \ \ \ \ \ \ \ \ \ \ \ \  \ \ \ \ \ \ \ \ \ \ \ \ \ \ \ \ \ \ \ \ \ \ \ \ \ \ \ \ \ \ \ \ =\Gamma (f^{\ell}-f^{\ell-1},f^{\ell})+\Gamma (f^{\ell-1},f^{\ell}-f^{\ell-1}), \\
&&f_{ {-}}^{\ell+1}-f_{ {-}}^{\ell} = P_{\gamma }[f^{\ell+1}-f^{\ell}]+\frac{\mu _{\delta }-\mu }{\sqrt{\mu }}%
\int_{n(x)\cdot v>0}[f ^{\ell}-f ^{\ell-1}](n(x)\cdot v) dv+ \sqrt{\mu },
\end{eqnarray*}%
and by Proposition \ref{linfty} again for $f^{\ell+1}-f^\ell,$
\begin{equation*}
\|w_{\varrho}[f^{\ell+1}-f^{\ell}]\|_{\infty }+|w_{\varrho}[f ^{\ell+1}-f ^{\ell}]|_{\infty }\lesssim \delta \big\{ \|w_{\varrho}[f^{\ell}-f^{\ell-1}]\|_{\infty
}+|w_{\varrho}[f ^{n}-f ^{n-1}]|_{\infty }\big\}.
\end{equation*}%
Hence $f^{\ell}$ is Cauchy in $L^{\infty }$ and  we construct our solution by taking the limit $%
f^{\ell}\rightarrow f_{s}$.  Uniqueness follows in the standard way. Moreover due to Theorem 2 and Theorem 3 in \cite{Kim} $f^{\ell}$ is continuous away from $\mathfrak{D}$ and so is $f_{s}$.  Moreover, if $\Omega $
is convex, then $\mathfrak{D}=\gamma_0$.

\vskip .2cm
\noindent{\it Formation of singularities}.
Now we prove the formation of singularity. Recall that $|\vartheta|_{\infty}\leq 1$ and the wall
Maxwellian is defined as
\begin{equation*}
\mu_{\delta}(v)=\frac{1}{2\pi [1+\delta \vartheta(x)]^2}\exp\left[{-\frac{|v|^{2}}{2[1+\delta \vartheta(x)] }}\right]\ ,
\end{equation*}%
while  the global Maxwellian is $\mu (v)=\frac{1}{2\pi }\ e^{-\frac{|v|^{2}}{2}}$.  Then
\begin{eqnarray}
\Big| \ \mu_{\delta}(x,v)-\mu(v)-2[\frac{|v|^2}{4}-1]\mu(v) \delta \vartheta(x) \ \Big| \notag \\ \lesssim \ \delta^2 |\vartheta|_{\infty}^2[1+|v|^4] \exp \Big[-\frac{|v|^2}{2(1+ \delta|\vartheta|_{\infty})}\Big].\label{mudeltaEXP}
\end{eqnarray}

For any non-convex domain $\Omega\subset \mathbf{R}^d$ for $d=2,3$, there exists at least one $(x_0,v)\in\gamma_0^{\mathbf{S}}$ with $\bar v\neq 0$.  Suppose $h=w_{\varrho}f$ satisfies (\ref{nlinearfn}) with $f^{\ell+1}=f^\ell=f$.  We prove that $h$ is discontinuous at $(x_0,v)$ by contradiction argument. Assume $h$ is continuous at $(x_0,v)$ where the magnitude of $v$ will be chosen later. By definition
\[
x_1 \equiv x_0 -t_{\mathbf{b}}(x_0,\frac{v}{|v|})\frac{\bar{v}}{|\bar{v}|}.
\]

We choose $\vartheta $ to be a continuous function such
that
\begin{equation}
\vartheta (x_{0})=|\vartheta |_{\infty }=1,\ \text{\ \ \ }\vartheta (x_{1})=%
\frac{1}{2}|\vartheta |_{\infty }=\frac{1}{2}  \label{theta},
\end{equation}%
and mainly zero elsewhere. Then by the continuity assumption we know that
the quantities $\mathbf{I}$ and $\mathbf{II}$ defined below must coincide: $%
\mathbf{I}\ =\ \mathbf{II}$. $\mathbf{I}$ is obtained by evaluating $%
h(x_{0},v)$ through the boundary condition
\begin{eqnarray}
\mathbf{I}=h(x_{0},v) &=&\frac{1}{\tilde{w}_{s}(v)}\int_{n(x_{0})\cdot
v^{\prime }>0}h(x_{0},v^{\prime })\tilde{w}_{s}(v^{\prime })d\sigma
\label{rk_1} \\
&+&w_{s}(v)\frac{\mu _{\delta }-\mu }{\sqrt{\mu }}|_{(x_{0},v)}%
\int_{n(x_{0})\cdot v^{\prime }>0}h(x_{0},v^{\prime })\tilde{w}(v^{\prime
})d\sigma   \label{sk_1} \\
&+&w_{s}(v)\frac{\mu _{\delta }-\mu }{\sqrt{\mu }}|_{(x_{0},v)}.
\label{ek_1}
\end{eqnarray}%
On the other hand, the existence of $x_{1}$ allow us to evaluate $h(x_{0},v)=%
\mathbf{II}$ along the trajectory and has the expression (\ref{iteration}%
) with $H$ as in (\ref{h1}) -- (\ref{h5}) with
\begin{equation*}
r=\frac{\mu _{\delta }-\mu }{\sqrt{\mu }},\text{ \ }h^{\ell}\equiv h,\text{ }%
g=\Gamma (\frac{h}{w},\frac{h}{w}),
\end{equation*}%
which are independent of time. Since from (\ref{mudeltaEXP}),
\begin{eqnarray*}
|r|_{2} &\lesssim &\delta |\vartheta |_{2}, \\
\left\vert \sup_{v}|r(\cdot ,v)|\right\vert _{1} &\lesssim &\delta
|\vartheta |_{1}\lesssim _{\Omega }\delta |\vartheta |_{2}, \\
|\langle v\rangle ^{\beta +4}r|_{\infty } &\lesssim &\delta |\vartheta
|_{\infty }\lesssim \delta , \\
\left\Vert \frac{w_{\varrho }\ g}{\langle v\rangle }\right\Vert _{\infty }
&\lesssim &\left\Vert \frac{w_{\varrho }\ \Gamma (\frac{h}{w},\frac{h}{w})}{%
\langle v\rangle }\right\Vert _{\infty }\lesssim \left\Vert w_{\rho
}h\right\Vert _{\infty }^{2}\lesssim \delta ^{2}|\vartheta |_{\infty
}^{2}\lesssim \delta ^{2}, \\
\Vert f\Vert _{2} &\lesssim &\delta ^{2}|\vartheta |_{\infty }^{2}+\delta
|\vartheta |_{2},
\end{eqnarray*}%
where we have employed Propositions \ref{linearl2} and \ref{linfty} for the
last two estimates. Due to (\ref{mudeltaEXP}) we can use the estimate (\ref%
{rho12}) for $\mathbf{II}$ :
\begin{eqnarray}
&&\ \ \ |h(x_{0},v)|\   \notag \\
&\leq &e^{-\nu (v)(t-t_{1})}w_{\varrho }(v)|r(x_{1},v)|  \notag \\
&&+\left\{ \frac{C_{\beta ,\rho ,\zeta }}{N}+\varepsilon C_{\rho ,\beta
,N}\right\} |\langle v\rangle ^{\beta +4}r|_{\infty }+C_{N,\varepsilon ,\rho
}\left\vert \sup_{v}|r(\cdot ,v)\right\vert _{1}  \notag \\
&&+\ \left\{ \frac{1}{N}+e^{-\frac{\nu _{0}}{2}\rho }\right\} \Vert h\Vert
_{\infty }+\varrho \left\{ \left\Vert \frac{w_{\varrho }\ g}{\langle
v\rangle }\right\Vert _{\infty }\right\} +\ C\int_{0}^{t}\left\Vert
f\right\Vert _{2}ds\   \notag \\
&\leq &e^{-\nu (v)(t-t_{1})}w_{\varrho }(v)|r(x_{1},v)|  \label{hb} \\
&&+\left\{ \frac{C_{\beta ,\rho ,\zeta }}{N}+\varepsilon C_{\rho ,\beta ,N}+%
\frac{C}{N}+Ce^{-\frac{\nu _{0}}{2}\rho }+C\delta \right\} \delta +C_{\rho
,\varepsilon ,N}\delta |\vartheta |_{2}  \notag \\
&\equiv &e^{-\nu (v)(t-t_{1})}w_{\varrho }(v)|r(x_{1},v)|+B.  \notag
\end{eqnarray}

Now we will show that, for a suitable choice of $\vartheta$, $\mathbf{I}-%
\mathbf{II}\gneqq0$, which is a contradiction so that $h=w_{\varrho}f$ has a
discontinuity at $(x_0,v)$.

Rewrite
\begin{eqnarray}
\mathbf{I}-\mathbf{II} &\geq &\{(\ref{ek_1})-e^{-\nu (v)(t-t_{1})}w_{\varrho
}(v)\frac{\mu _{\delta }(x_{1},v)-\mu (v)}{\sqrt{\mu }}\}  \notag \\
&&-\{|(\ref{rk_1})|+|(\ref{sk_1})|+B\}.  \label{I-II}
\end{eqnarray}
For large $v$, by (\ref{mudeltaEXP}), $t\geq t_{1},e^{-\nu (v)(t-t_{1})}\leq
1,$ we have
\begin{eqnarray}
&&(\ref{ek_1})-e^{-\nu (v)(t-t_{1})}w_{\varrho }(v)\frac{\mu _{\delta
}(x_{1},v)-\mu (v)}{\sqrt{\mu }}  \notag \\
&\geq &2w_{\varrho }(v)[\frac{|v|^{2}}{4}-1]\sqrt{\mu }\delta |\vartheta
|_{\infty }-w_{\varrho }(v)[\frac{|v|^{2}}{4}-1]\sqrt{\mu }\delta |\vartheta
|_{\infty }-C\delta ^{2}  \notag \\
&\geq &w_{\varrho }(v)[\frac{|v|^{2}}{4}-1]\sqrt{\mu }\delta |\vartheta
|_{\infty }-C\delta ^{2}.  \label{2>1}
\end{eqnarray}%
Recall that $\Vert h\Vert _{\infty }\lesssim \delta |\vartheta |_{\infty
}\lesssim \delta $ to conclude that
\begin{eqnarray*}
&&|(\ref{rk_1})|,|(\ref{sk_1})|\  \\
&\leq &\ \Big\{\varrho ^{-4}w_{\varrho }(v)\sqrt{\mu (v)}+\Big(1+[\frac{%
|v|^{2}}{4}-1]w_{\varrho }(v)\sqrt{\mu (v)}\Big)\delta |\vartheta |_{\infty }%
\Big\}\delta |\vartheta |_{\infty }.
\end{eqnarray*}%
By (\ref{2>1}), we can find $|v_{0}|>0$ so that $c_{v_{0}}>0$ sufficiently
large so that
\begin{equation*}
w_{\varrho }(v_{0})[\frac{|v_{0}|^{2}}{4}-1]\sqrt{\mu (v_{0})}-\varrho
^{-4}w_{\varrho }(v_{0})\sqrt{\mu (v_{0})}\geq c_{v_{0}}>0.
\end{equation*}%
On the other hand in (\ref{hb}) and (\ref{2>1}), we choose $\delta $
sufficiently small, and $\rho $ sufficiently large, then $N$ sufficiently
large to get, then $\varepsilon $ sufficiently small, such that for $v=v_{0},
$
\begin{equation*}
\mathbf{I}-\mathbf{II\geq }\frac{c_{v_{0}}}{2}\delta -C\delta |\vartheta
|_{2}.
\end{equation*}%
Since $\vartheta $ is almost zero except for $x_{0}$ and $x_{1},$ we can can
make $|\vartheta |_{2}$ arbitrarily small. In particular, there is a
continuous function $\vartheta $ such that $\frac{c_{v_{0}}}{2}\delta
-C\delta |\vartheta |_{2}>\frac{c_{v_{0}}}{4}\delta >0.$ Hence $\mathbf{I}-%
\mathbf{II}>0$ and this is a contradiction.
\end{proof}

\vskip .3cm

\begin{proof}[Proof of Theorem \ref{main2}]

We now prove the $\delta-$expansion. Since $\mu_{\delta}(x)$ is analytic with respect to $\delta$ we have
\begin{eqnarray}
\mu_{\delta }(x,v) &=&\mu +\delta \mu _{1}+\delta ^{2}\mu _{2}+\cdots.\label{mudelta1}
\end{eqnarray}
Note that $|\mu_i(v)| \lesssim p_i(v)e^{-\frac{|v|^2}{2}}$,  where $p_i(v)$ is some polynomial. Further we seek a formal expansion
\begin{eqnarray*}
f_{s} &\sim& \delta f_{1}+\delta ^{2}f_{2}+\cdots  .
\end{eqnarray*}%
Plugging this into the equation,
\begin{eqnarray*}
&&v\cdot \nabla _{x}[\delta f_{1}+\delta ^{2}f_{2}+  \dots  ]+L[\delta f_{1}+\delta ^{2}f_{2}+  \dots ] \\
&& \ \ \ \ \ \ \ \ \ \ \ \ \ \ \ \ \  \  \ \  \ \ \ \ \ \ \ \ \ \ \ \ =\Gamma (\delta f_{1}+\delta ^{2}f_{2}+ \dots ,\delta
f_{1}+\delta ^{2}f_{2}+  \dots ), \\
&& \lbrack \delta f_{1}+\delta ^{2}f_{2}+  \dots ]_{{-}} =P_{\gamma }[\delta f_{1}+\delta ^{2}f_{2}+  \dots ]  \\
&& \ \ \ \ \ \ \ \ \    + \frac{[\delta \mu _{1}+\delta ^{2}\mu _{2}+ \dots]}{\sqrt{\mu }}\int_{n(x)\cdot
v>0}[ \sqrt{\mu}  +\delta f_{1}+\delta ^{2}f_{2}+ \dots ]\sqrt{%
\mu }(n(x)\cdot v)dv.
\end{eqnarray*}%
We compare the coefficients of power of $\delta$ to get an equation for $f_i$ for $i=1, \dots, m-1$, (assuming $f_0\equiv 0$)
\begin{eqnarray}
v\cdot\nabla_x f_i + L f_i &=& \sum_{j=1}^{i-1} \Gamma(f_j,f_{i-j}),\label{eqnfi}\\
{f_i}|_{\gamma_-} &=& P_{\gamma}f_i + \sum_{j=1}^{i-1}\frac{\mu_j}{\sqrt{\mu}} \int_{n(x)\cdot v>0} f_{i-j} \sqrt{\mu} ( n(x)\cdot v)dv + \frac{\mu_i}{\sqrt{\mu}}.\notag
\end{eqnarray}
Note that, from the $\delta-$expansion,
\[
\int_{\gamma_{\pm}} (\mu^{\delta}(x,v)-\mu(v)) |n(x)\cdot v| dv =0, \ \ \ \ \sum_{i=1}^{\infty} \delta^i \int_{\gamma_{\pm}} \mu_i |n(x)\cdot v| dv =0.
\]
Since
\begin{eqnarray*}
\int_{\mathbf{R}^3} \sum_{j=1}^{i-1} \Gamma(f_j,f_{i-j}) \sqrt{\mu} dv=0, \ \ \ \int_{n(x)\cdot v < 0} \Big[\sum_{j=1}^{i-1}\frac{\mu_j}{\sqrt{\mu}} \int_{n(x)\cdot v>0} f_{i-j} \sqrt{\mu}   + \frac{\mu_i}{\sqrt{\mu}}\Big] =0,
\end{eqnarray*}
by applying Proposition \ref{linfty} repeatedly, we can construct $%
f_{1},f_{2},...,f_{m-1}$ inductively so that for $0 \leqq \zeta< \frac{1}{4}$,
\begin{equation}
\|w_{\varrho}f_{i}\|_{\infty }+|w_{\varrho}f_{i}|_{\infty }\lesssim _{m}1.\label{estfi}
\end{equation}%
In particular $f_{1}$ satisfies (\ref{f1}). Now we define the remainder $f_m^{\delta}$ satisfying
\[
f_s = \delta f_1 + \cdots + \delta^{m-1}f_{m-1} + \delta^m f_m^{\delta},
\]
and we obtain the equation for the remainder $f_m^{\delta}$ \begin{eqnarray*}
v\cdot \nabla _{x}f_{m}^{\delta }+Lf_{m}^{\delta } &=&\delta \lbrack \Gamma
(f_{1}+\delta f_{2}+ \dots+\delta ^{m-2}f_{m-1},f_{m}^{\delta }) \\
&+&\Gamma
(f_{m}^{\delta },f_{1}+\delta f_{2}+ \dots +\delta ^{m-2}f_{m-1})+\delta ^{m}\Gamma (f_{m}^{\delta },f_{m}^{\delta })+g^{\delta }, \\
f_{m}^{\delta }|_{\gamma _{-}} &=&P_{\gamma }f_{m}^{\delta }  +\frac{\mu
^{\delta }-\mu }{\sqrt{\mu }}\int_{n(x)\cdot v>0}f_{m}^{\delta }\sqrt{\mu }%
(n(x)\cdot v)dv+r^{\delta }.
\end{eqnarray*}%
where
\begin{eqnarray*}
g^{\delta}= \sum_{i=m}^{2m-1}\sum_{j=1}^{m-1} \delta^{i-m}\Gamma(f_j,f_{i-j}), \ \  r^{\delta}=\sum_{i=1}^{m}\delta^{i-1}\frac{\mu_i}{\sqrt{\mu}}\int_{n(x)\cdot v>0} f_{m-i}\sqrt{\mu}(n(x)\cdot v)dv.
\end{eqnarray*}
 Since by (\ref{estfi})$$\left\|\frac{w_{\varrho} \ g^{\delta }}{\langle v\rangle }\right\|_{\infty }+|w_{\varrho} \ r^{\delta
}|_{\infty }\lesssim _{m}1,$$ we can apply Proposition \ref{linfty} to
deduce that%
\begin{eqnarray*}
&&\|w_{\varrho} \ f_{m}^{\delta }\|_{\infty }+|w_{\varrho} \ f_{m}^{\delta }|_{\infty }\\
&& \ \ \ \ \lesssim
_{m}\delta \|w_{\varrho} \ f_{m}^{\delta }\|_{\infty }+\delta ^{m}\|w_{\varrho} \ f_{m}^{\delta
}\|_{\infty }^{2}+\delta |w_{\varrho} \ f_{m}^{\delta }|_{\infty }+\|w_{\varrho} \ g^{\delta
}\|_{\infty }+|w_{\varrho} \ r^{\delta }|_{\infty }.
\end{eqnarray*}%
We therefore conclude that $\|w_{\varrho} \ f_{m}^{\delta }\|_{\infty }+|w_{\varrho} \ f_{m}^{\delta
}|_{\infty }\lesssim 1$ and the expansion is valid. \end{proof}

\vskip.3cm
\begin{proof}[Proof of Theorem \ref{fourierlaw}]
We now consider a slab $-\frac{1}{2}\leq x\leq +\frac{1}{2}$.
We consider the stationary solution $f_{s}\in L^{\infty }$ satisfying
\begin{eqnarray}
v_{1}\partial _{x}f_{s}+Lf_{s}&=&\Gamma (f_{s},f_{s}), \ \ \ \ \  \ \ (x,v)\in (-\frac{1}{2},+\frac{1}{2})\times\mathbf{R}^3, \label{fs}\\
f_{s}(x,v)&=&P_{\gamma }f_{s}+\frac{\mu ^{\delta }-\mu }{\sqrt{\mu }}%
\int_{n(x)\cdot v> 0}f_{s}\sqrt{\mu }(n(x)\cdot v)dv,\notag
\end{eqnarray}%
for $x=-\frac{1}{2}$ or $x=+\frac{1}{2}$.  From Theorem \ref{main1}, $%
\ \|w_{\varrho} \ f_{s}\|_{\infty }\lesssim \delta $. We claim that $%
\partial_x f_{s}(x,v)\in L^{2}([-\frac{1}{2}+\varepsilon ,+\frac{1}{2}-\varepsilon ]\times \mathbf{R}^{3})$
for any small $\varepsilon >0$.  In fact, we multiply (\ref{fs}) by a 1-dimensional spatial
smooth cutoff function $\chi (x)
$, $(\chi\equiv 0 \ \text{near} \ x=-\frac{1}{2} \ \text{and} \ x=+\frac{1}{2})$ so that%
\begin{equation*}
v_{1}\partial _{x}[\chi f_{s}]+L[\chi f_{s}]=\Gamma (\chi
f_{s},f_{s})-v_{1}f_{s}\chi ^{\prime }.
\end{equation*}%
We take one spatial derivative (\ref{fs}) to get
\begin{eqnarray*}
v_{1}\partial _{xx}[\chi f_{s}]&+&L\partial _{x}[\chi f_{s}] =\Gamma
(\partial _{x}[\chi f_{s}],f_{s})+\Gamma (\chi f_{s},\partial
_{x}f_{s})-v_{1}\chi ^{\prime }\partial _{x}f_{s}-v_{1}\chi ^{\prime \prime
}f_{s} \\
&=&\Gamma (\partial _{x}[\chi f_{s}],f_{s})+\Gamma (\chi f_{s},\partial
_{x}f_{s})+\chi ^{\prime }(Lf_{s}-\Gamma (f_{s},f_{s}))-v_{1}\chi ^{\prime
\prime }f_{s} \\
&=&\Gamma (\partial _{x}[\chi f_{s}],f_{s})+\Gamma (f_{s},\partial _{x}[\chi
f_{s}]) \\
&&-\Gamma (f_{s},\chi ^{\prime }f_{s})+\chi ^{\prime }(Lf_{s}-\Gamma
(f_{s},f_{s}))-v_{1}\chi ^{\prime \prime }f_{s},
\end{eqnarray*}%
where we have used (\ref{fs}) to replace $-v_1 \partial_x f_s = Lf_s -\Gamma(f_s,f_s)$.  Letting $Z=\chi \partial_x f_s $, then we have
\begin{eqnarray*}
v_{1}Z_{x}+LZ &=&\Gamma (Z,f_{s})+\Gamma (f_{s},Z) \\
&&-\Gamma (f_{s},\chi ^{\prime }f_{s})+\chi ^{\prime }(Lf_{s}-\Gamma
(f_{s},f_{s}))-v_{1}\chi ^{\prime \prime }f_{s}.
\end{eqnarray*}%
Note that $-\Gamma (f_{s},\chi ^{\prime }f_{s})+\chi ^{\prime }(Lf_{s}-\Gamma
(f_{s},f_{s}))-v_{1}\chi ^{\prime \prime }f_{s} \in L^{\infty}$.  Multiply by $Z$ and use Green's identity to have (no boundary contribution)
\begin{equation*}
\|(\mathbf{I}-\mathbf{P})Z\|_{\nu }^{2} \ \lesssim \ \delta \big\{ \ \|Z\|_{\nu }^{2} \ + \ 1 \ \big\}.
\end{equation*}%
Then we repeat the proof of Lemma \ref{steadyabc} but replacing $\phi_a$ in (\ref{phia}) with the solution of $-\Delta \phi _{a}=a$ with $\phi _{a}=0$ on $%
\partial \Omega$.  Since the boundary condition of $Z$ is $Z_{\gamma}\equiv 0$, Lemma \ref{steadyabc} is valid so that
\[
\|\chi \partial_x f_s \|_2^2 \lesssim \|Z\|_{2}^2 < +\infty,
\]
and $\partial_x f_s$ is locally in $L_{loc}^2(0,1)$.

Recall that
\[
u_s(x) = \frac{1}{\int_{\mathbf{R}^3}F_s(x,v)dv } \int_{\mathbf{R}^3} v F_s(x,v)dv.
\]
Now we are going to show
\begin{equation*}
u_s(x)=0, \ \ \ \ \text{for any\ }x\in [-\frac{1}{2},+\frac{1}{2}].
\end{equation*}
For any smooth test function $\psi\in C^{\infty}([-\frac{1}{2},+\frac{1}{2}])$, due to the weak formulation for $F_s$, we have for the first component of $\rho_s u$ that
\begin{eqnarray*}
&& \int_{-\frac{1}{2}}^{+\frac{1}{2}} \psi^{\prime}(x) (\rho_s u_s)_1 dx\\
&& \  = \int_{-\frac{1}{2}}^{+\frac{1}{2}}  \psi^{\prime}(x) \int_{\mathbf{R}^3} v_1 F_s(x,v) dvdx\\
&& \   = \psi(\frac{1}{2})\int_{\mathbf{R}^3} v_1 F_s(\frac 12)  - \psi(-\frac{1}{2})\int_{\mathbf{R}^3} v_1 F_s(-\frac 1 2) - \int_{-\frac{1}{2}}^{+\frac{1}{2}} \psi(x)\int_{\mathbf{R}^3} v_1 \partial_x F_s(x,v)\\
&& \   = -\int_{-\frac{1}{2}}^{+\frac{1}{2}} \psi(x) \int_{\mathbf{R}^3} Q(F_s,F_s)(x,v) dv dx\\
 && \   =0,
\end{eqnarray*}
where we used the boundary condition as well as the orthogonality of $Q$ to the collision invariants. Hence $\partial_x [(\rho_s u_s)_1(x)]=0$ in the distribution sense and hence $\rho_s u_s \in W^{1,\infty}$.  Therefore $\rho_s u_s $ is continuous up to the boundary and $\lim_{x\downarrow -\frac{1}{2}}  \rho_s u_s (x)= \rho_s u_s (-\frac{1}{2})$ and $\lim_{x\uparrow +\frac{1}{2}}  \rho_s u_s (x)= \rho_s u_s (+\frac{1}{2})$.  Moreover for all $\psi \in C^{\infty}([-\frac{1}{2},+\frac{1}{2}])$
\begin{eqnarray*}
0=\int_{-\frac{1}{2}}^{+\frac{1}{2}} \psi^{\prime}(x) (\rho_s u_s)_1
= (\rho_s u_s)_1(\frac{1}{2}) \psi(\frac{1}{2})- (\rho_s u_s)_1(-\frac{1}{2})\psi(-\frac{1}{2}),
\end{eqnarray*}
to conclude
\begin{equation*}
(\rho_s u_s )_1(x) \equiv 0, \ \text{ for all } \ x\in [-\frac{1}{2},+\frac{1}{2}].
\end{equation*}
 On the other hand $$\rho_s(x)=\int_{\mathbf{R}^3} \mu(v)+\sqrt{\mu}f_s(x,v) dv \geq \int_{\mathbf{R}^3} \mu(v) dv - \int_{\mathbf{R}^3} \sqrt{\mu}(v)w^{-1}(v) dv \times \| wf_s\|_{\infty} >0, $$
for small $\| wf_s\|_{\infty}$ so that $\rho_s$ is positive. Thus $(u_{s })_1(x)\equiv0$.  The components $(u_{s})_i$ for $i=2,3$ vanish by symmetry:
\begin{eqnarray}
(u_s)_i(x) &=& \frac{1}{\rho_s(x)} \int_{\mathbf{R}^3} v_i F_s(x,v)dv\notag\\
&=&\frac{1}{\rho_s(x)} \int_{v_i>0} v_i F_s(x,v)dv - \frac{1}{\rho_s(x)} \int_{\tilde{v}_i>0}  \tilde{v}_i F_s(x,{v})d\tilde{v}\notag\\
&=&\frac{1}{\rho_s(x)} \int_{v_i>0} v_i F_s(x,v)dv - \frac{1}{\rho_s(x)} \int_{ \tilde{v}_i>0} \tilde{v}_i \tilde{F}_s(x, \tilde{v})d \tilde{v},\label{i}
\end{eqnarray}
where $(\tilde{v})_{i}=-v_i$ and $\tilde{v}_j = v_j$ for $j\neq i$, and we have defined $\tilde{F}_s(x,v)=F_s(x,\tilde{v})$.  Then $\tilde{F}$ solves
\begin{eqnarray*}
v_1 \partial_{x_1} \tilde{F}_s = Q(\tilde{F}_s,\tilde{F}_s) \ , \ \ \ \tilde{F}_s(x,v) = \mu^{\theta}(x,v) \int_{n(x)\cdot v >0} \tilde{F}_s(x,v) (n(x)\cdot v)dv,
\end{eqnarray*}
where $x\in\partial\Omega$ and $n(x)\cdot v<0$.
By the uniqueness, we conclude $\tilde{F}_s(x,v)=F_s(x,v)$ and hence $\tilde{F}_s(x,\tilde{v})=F_s(x,\tilde{v})$ almost everywhere. Therefore it follows that (\ref{i}) vanishes and $(u_s)_i \equiv 0$ almost everywhere for $i=2,3$.

Recall the definition of temperature $\theta_s$ in (\ref{thetas}). Since $\partial _{x}f_{s}\in L^{2}([-\frac{1}{2}+\varepsilon
,+\frac{1}{2}-\varepsilon ]\times \mathbf{R}^{3})$, we have $\partial _{x}\theta_s \in
L^{2}([-\frac{1}{2}+\varepsilon ,+\frac{1}{2}-\varepsilon ]\times \mathbf{R}^{3})$ so that in the
sense of distribution,
\begin{equation*}
\kappa (\theta_s ) \partial_{x}\theta_s=\partial_x \{K(\theta_s ) \},
\end{equation*}%
with $K^{\prime }=\kappa >0$. Since $u_s \equiv 0$, by Green's identity,
\begin{equation}
\partial_x \left\{\frac 1 2\int_{\mathbf{R}^3} v|v|^2 f_s\right\} \equiv \partial_x q_s =0.\notag
\end{equation}
By the Fourier law (\ref{fourierlaw1}) we have then, in the sense of distribution
$$
\partial_{xx} \{K(\theta_s )\} =0.
$$
Therefore, in the sense of distributions
\begin{equation*}
K(\theta_s )=Ax+B.
\end{equation*}%
But if $\theta_s =\frac{1}{3\rho_s}\int |v|^{2}f_{s}\sqrt{\mu }%
\in L^{\infty }$, then $K(\theta_s )\in L^{\infty }$, and this implies that $K(\theta_s )$
is in $W^{2,\infty}$ so that $K(\theta_s )$ is continuous up to the boundary. But
if $K^{\prime }=\kappa >0$ we thus deduce that $\theta_s $ itself is
continuous up to the boundary $x=-\frac{1}{2},+\frac{1}{2}$.  We then can rewrite
\begin{equation*}
A={K(\theta ( +\frac{1}{2} ))-K(\theta (  -\frac{1}{2}))},\text{ \ \ \ }B=\frac{K(\theta(+\frac{1}{2}))+K(\theta(-\frac{1}{2}))}{2}.
\end{equation*}%
By the main theorem, $f_{s}= \delta f_{1}+O(\delta ^{2})$, so that
\begin{equation*}
\theta_s (x)=\theta _{0}+\delta \theta _{1}(x)+O(\delta ^{2}),
\end{equation*}%
and we know that $\theta _{0}=\frac 1 3\int |v|^{2}\mu $ is constant in our
construction. Then we have for all $ -\frac{1}{2}\leq x\leq + \frac{1}{2},$
\begin{equation}
K(\theta _{0}+\delta \theta _{1}+O(\delta ^{2}))=K(\theta _{0})+\delta
K^{\prime }(\theta _{0})\theta _{1}+O(\delta ^{2}).
\end{equation}%
We therefore have
\begin{eqnarray*}
A&=&\delta K^{\prime}(\theta_0)[{\theta_1(+\frac{1}{2})-\theta_1(-\frac{1}{2})}]+ O(\delta^2),\\
B&=&K(\theta_0) + \delta K^{\prime}(\theta_0) \frac{\theta_1(+\frac{1}{2})+\theta_1(-\frac{1}{2}) }{2} +O(\delta^2).
\end{eqnarray*}%
Hence the first order expansion $\theta _{1}(x)$ must to be a linear
function:
\begin{equation*}
\theta _{1}(x)=\frac{\theta_1(+\frac{1}{2})+\theta_1(-\frac{1}{2})}{2}
+ x [{\theta_1(+\frac{1}{2})-\theta_1(-\frac{1}{2})}]
.
\end{equation*}%
Here $\theta _{1}=\frac{1}{3\int dv f_1}\int |v|^{2}f_{1}$ and from (\ref{slab}) $f_{1}$ satisfies
(\ref{slablinear}).
\end{proof}

\bigskip
\section{$L^2$ Decay}
The main purpose of this section is to prove the following:
\begin{proposition}
\label{dlinearl2}Suppose that for all $t> 0$
\begin{equation}\iint_{\Omega\times\mathbf{R}^3} g(t,x,v)\sqrt{\mu }dv=\int_{\gamma_-}r\sqrt{\mu}d\gamma=0. \label{dlinearcondition}
\end{equation} Then there exists a unique solution to the problem
\begin{equation}
\partial _{t}f+v\cdot \nabla _{x}f+Lf=g,\text{ \ \ }f(0)=f_{0},\quad \text{ in }\Omega\times\mathbf{R}^3\times \mathbf{R}_+, \label{dlinear}
\end{equation}%
with
\begin{equation}
f_{-}=P_{\gamma }f+r,\quad \text{ on } \gamma_-\times \mathbf{R}_+,\label{dlinearl2BC}
\end{equation}
such that for all $t\> 0$,
\begin{equation}
  \iint_{\Omega\times\mathbf{R}^3} f(t,x,v)\sqrt{\mu }dxdv= 0.\label{dlinearcondition1}
\end{equation}
Moreover, there is $\lambda>0$ such that
\begin{equation*}
\|f(t)\|_{2}^{2}+|f (t)|_{2}^{2}\lesssim e^{-\lambda
t}\Big\{\|f_{0}\|_{2}^2+\int_{0}^{t}e^{\lambda
s}\|g(s)\|^{2}_2ds+\int_{0}^{t} e^{\lambda s}|r(s)|_2^{2}ds\Big\}.
\end{equation*}
\end{proposition}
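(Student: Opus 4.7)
The plan is to extend the steady $L^{2}$ theory of Section 3 to the dynamical case, following its three-stage structure: first construct solutions by a penalized cut-off approximation (the analogs of Lemma \ref{incoming} and Lemma \ref{lineareb}), then establish a dynamic counterpart of the $abc$-estimate of Lemma \ref{steadyabc}, and finally combine the two with the dynamic Green's identity (\ref{dynamicGreen}) to extract exponential decay through a Gronwall-type argument. For Step 1 I would solve
\[
\varepsilon f+\partial _{t}f+v\cdot \nabla _{x}f+L_{m}f=g,\qquad f_{-}=(1-\tfrac{1}{j})P_{\gamma }f+r,\qquad f|_{t=0}=f_{0},
\]
by the same double iteration in $\ell$ and $j$ used in Lemma \ref{lineareb}, with (\ref{dynamicGreen}) playing the role that (\ref{steadyGreen}) played there and the dynamic part of Lemma \ref{ukai} together with the computation (\ref{Pgamma}) replacing their steady analogs. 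The zero-mass constraint (\ref{dlinearcondition1}) is propagated in time by integrating the equation against $\sqrt{\mu }$: the boundary flux vanishes because of the reflection condition and $\int r\sqrt{\mu }d\gamma =0$, and the bulk flux vanishes because $\iint g\sqrt{\mu }=0$.

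Step 2, the dynamic $abc$-estimate, is the heart of the proof and the main obstacle. I would reuse the steady test functions of Lemma \ref{steadyabc}, namely $\psi _{c}=(|v|^{2}-5)\sqrt{\mu }\,v\cdot \nabla _{x}\phi _{c}$, the two families of $\psi _{b}$'s and $\psi _{a}=(|v|^{2}-10)\sqrt{\mu }\,v\cdot \nabla _{x}\phi _{a}$, but now with time-dependent potentials $-\Delta _{x}\phi _{c}(t,\cdot )=c(t,\cdot )$, $-\Delta _{x}\phi _{b}^{j}(t,\cdot )=b_{j}(t,\cdot )$ under Dirichlet data, and $-\Delta _{x}\phi _{a}(t,\cdot )=a(t,\cdot )$ under Neumann data (the latter well-posed thanks to $\int_{\Omega }a(t)\,dx=0$ inherited from (\ref{dlinearcondition1})). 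Inserted into (\ref{dynamicGreen}) on $[s,t]$, these yield the steady $abc$ identities plus two genuinely new contributions: a temporal boundary term $[\iint \psi f]_{s}^{t}$ and an interior integrand $\iint \partial _{t}\psi \,f$ involving $\partial _{t}\phi _{c},\partial _{t}\phi _{b}^{j},\partial _{t}\phi _{a}$. The second one is the delicate point. I would differentiate the elliptic problems in $t$ and use the local conservation laws obtained by testing (\ref{dlinear}) against $\sqrt{\mu },\,v_{i}\sqrt{\mu },\,\tfrac{|v|^{2}-3}{2}\sqrt{\mu }$: these express $\partial _{t}a,\partial _{t}b_{i},\partial _{t}c$ as divergences of moments of $(\mathbf{I}-\mathbf{P})f$ plus source contributions, so that $\partial _{t}\phi _{c},\partial _{t}\phi _{b}^{j}$ are controlled in $H^{1}(\Omega )$ and $\partial _{t}\phi _{a}$ in the homogeneous Neumann $H^{1}$ by $\|(\mathbf{I}-\mathbf{P})f\|_{\nu }+\|g\|_{2}$. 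The resulting integrated estimate will read
\[
\int_{s}^{t}\|\mathbf{P}f\|_{\nu }^{2}\,d\tau \ \lesssim \ \int_{s}^{t}\bigl\{\|(\mathbf{I}-\mathbf{P})f\|_{\nu }^{2}+|(1-P_{\gamma })f|_{2,+}^{2}+\|g\|_{2}^{2}+|r|_{2}^{2}\bigr\}d\tau \ +\ \eta \,\bigl\{\|f(t)\|_{2}^{2}+\|f(s)\|_{2}^{2}\bigr\},
\]
with a small $\eta $ absorbing the temporal boundary term on the left-hand side of the energy identity in the next step.

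For Step 3, applying (\ref{dynamicGreen}) with $\psi =f$, using the spectral gap of $L$ and treating the boundary cross-term $\int P_{\gamma }f\cdot r$ with Young's inequality exactly as in (\ref{eta}), produces
\[
\tfrac{1}{2}\|f(t)\|_{2}^{2}+\int_{s}^{t}\|(\mathbf{I}-\mathbf{P})f\|_{\nu }^{2}+\tfrac{1}{2}\int_{s}^{t}|(1-P_{\gamma })f|_{2,+}^{2}\ \leq \ \tfrac{1}{2}\|f(s)\|_{2}^{2}+C\int_{s}^{t}\bigl\{\|g\|_{2}^{2}+|r|_{2}^{2}\bigr\}.
\]
Adding a small multiple of Step 2 and closing the $|f|_{2}$-control through the trace argument of (\ref{tracep}) gives $\|f(t)\|_{2}^{2}+\int_{s}^{t}(\|f\|_{\nu }^{2}+|f|_{2}^{2})\lesssim \|f(s)\|_{2}^{2}+\int_{s}^{t}(\|g\|_{2}^{2}+|r|_{2}^{2})$; since $\|f\|_{\nu }\geq \sqrt{\nu _{0}}\|f\|_{2}$, this is equivalent to a differential inequality $\tfrac{d}{dt}\|f\|_{2}^{2}+\lambda \|f\|_{2}^{2}\lesssim \|g\|_{2}^{2}+|r|_{2}^{2}$, whose Gronwall integration with the factor $e^{\lambda t}$ delivers the announced decay. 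Step 4 then passes $\varepsilon \to 0$ and $m\to \infty $ as in the final step of Proposition \ref{linearl2}, using uniform estimates and weak lower semicontinuity; uniqueness follows by applying the decay inequality to the difference of two solutions with vanishing data and source.
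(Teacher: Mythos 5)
Your proposal follows essentially the same three-part architecture as the paper's proof: a double-iteration approximation in $\ell$ and $j$, a dynamic $abc$-estimate (Lemma \ref{dabc}) obtained by recycling the steady test functions through (\ref{timeweak}), and exponential decay by a weighted energy identity. A minor structural difference is that you retain the $\varepsilon$-penalization and the $m$-cut-off in the approximating scheme; the paper drops both for the dynamic problem, because the penalization was only ever a device to manufacture the zero-mass constraint in the \emph{steady} case, whereas here the constraint propagates from $t=0$, and the full $L$ already has a spectral gap.

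The substantive issue is in Step~2. You assert that the local conservation laws express $\partial_t a$, $\partial_t b_i$, $\partial_t c$ as divergences of moments of $(\mathbf{I}-\mathbf{P})f$ plus sources, so that $\nabla_x\partial_t\phi_a$, $\nabla_x\partial_t\phi_b^j$, $\nabla_x\partial_t\phi_c$ are controlled by $\|(\mathbf{I}-\mathbf{P})f\|_\nu+\|g\|_2$. That is not what the conservation laws say. Testing (\ref{dlinear}) against $\phi(x)\sqrt\mu$ gives $\partial_t a\sim\nabla\cdot b$ (plus the boundary $r$-flux); testing against $\phi(x)\{\frac{|v|^2}{2}-\frac32\}\sqrt\mu$ gives $\partial_t c\sim\nabla\cdot b+\cdots$; and testing against $\phi(x)v_i\sqrt\mu$ gives $\partial_t b\sim\nabla(a+c)+\cdots$. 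The resulting estimates are exactly (\ref{-1a})--(\ref{-1c}):
\begin{equation*}
\|\nabla_x\partial_t\phi_a\|_2\lesssim\|b\|_2+|r|_2,\quad
\|\nabla_x\partial_t\phi_b^i\|_2\lesssim\|a\|_2+\|c\|_2+\|(\mathbf{I}-\mathbf{P})f\|_2+\|g\|_2,\quad
\|\nabla_x\partial_t\phi_c\|_2\lesssim\|b\|_2+\|(\mathbf{I}-\mathbf{P})f\|_2+\|g\|_2,
\end{equation*}
so the right-hand sides involve precisely the hydrodynamic quantities you are trying to bound, and the bound you claim would short-circuit a genuinely circular structure that in reality must be unwound. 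The paper does close the loop, but it requires two additional observations your sketch does not supply: (i) when $\partial_t\psi_c$ is paired with $f$ in (\ref{timeweak}), the dangerous $b$-contribution cancels by the choice $\beta_c=5$ in (\ref{beta}), and when $\partial_t\psi_b^{i,j}$ is paired with $f$, the $a$-contribution cancels by $\beta_b=1$ in (\ref{alpha}); and (ii) the residual cross-terms $\varepsilon\|b\|_2^2$, $\varepsilon\|a\|_2^2$, $\|c\|_2^2$ surviving in (\ref{ct}), (\ref{bt1}), (\ref{bt2}) must be absorbed by combining (\ref{timec}), (\ref{timeb}), (\ref{timea}) in the right order with small multipliers. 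Without this step the argument does not close, so it is worth making explicit.
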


\begin{lemma}
\label{dabc}Assume that $g$ and $r$ satisfy (\ref{dlinearcondition}) and $f$ satisfies (\ref{dlinear}) , (\ref{dlinearl2BC}) and (\ref{dlinearcondition1}). Then there exists a function $G(t)$ such that, for all $0\le s\leq t$,
\begin{eqnarray*}
\int_{s}^{t}\|\mathbf{P}f(\tau)\|_{\nu }^{2} &\lesssim& G(t)-G(s)+\int_{s}^{t}\|g(\tau)\|_{2}^{2} +|r(\tau)|_{2}^{2}\\
&+&\int_{s}^{t}\|(%
\mathbf{I}-\mathbf{P})f(\tau)\|_{\nu
}^{2} +\int_{s}^{t}|(1-P_{\gamma
})f(\tau)|_{2,+}^{2}
.
\end{eqnarray*}%
Moreover, $G(t)\lesssim \|f(t)\|_{2}^{2}$.
\end{lemma}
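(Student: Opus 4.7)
The plan is to adapt the proof of Lemma \ref{steadyabc} to the dynamic setting by using the dynamic Green identity (\ref{dynamicGreen}) in place of (\ref{steadyGreen}), with the same three families of test functions $\psi_c$, $\psi_b^{i,j}$, $\psi_a$ from (\ref{phic}), (\ref{phibj}), (\ref{phibij}), (\ref{phia}), except that the elliptic potentials $\phi_c(t,\cdot)$, $\phi_b^j(t,\cdot)$, $\phi_a(t,\cdot)$ are now time-dependent and are obtained by solving the same Poisson problems at every time $\tau$ with sources $c(\tau,\cdot)$, $b_j(\tau,\cdot)$, $a(\tau,\cdot)$. The Neumann problem defining $\phi_a$ is solvable at every $\tau$ because the compatibility $\int_\Omega a(\tau,\cdot)\,dx = 0$ is forced by (\ref{dlinearcondition1}); this is the dynamical counterpart of (\ref{a=0}) and propagates in time thanks to (\ref{dlinearcondition}), the boundary condition (\ref{dlinearl2BC}), and the orthogonality of $Lf$ to $\sqrt{\mu}$.

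Substituting each $\psi$ into (\ref{dynamicGreen}) together with (\ref{dlinear}) produces the weak identity
\begin{equation*}
\int_s^t\!\iint (v\cdot\nabla_x\psi)f + \int_s^t\!\iint (\partial_t\psi)f = \Bigl[\iint f\psi\Bigr]_s^t + \int_s^t\!\int_\gamma f\psi\,d\gamma + \int_s^t\!\iint \psi L(\mathbf{I}-\mathbf{P})f - \int_s^t\!\iint g\psi.
\end{equation*}
The spatial-gradient contribution and the boundary flux are handled verbatim as in Lemma \ref{steadyabc} and produce both the coercive lower bound $\int_s^t(\|a\|_2^2+\|b\|_2^2+\|c\|_2^2)\,d\tau$ and the desired upper estimates by $\int_s^t\|(\mathbf{I}-\mathbf{P})f\|_\nu^2 + \|g\|_2^2 + |(1-P_\gamma)f|_{2,+}^2 + |r|_2^2$. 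The boundary-in-time piece $[\iint f\psi]_s^t$ is pointwise $\lesssim \|f(\tau)\|_2^2$, since each $\psi$ depends linearly on $\nabla_x\phi_\star$ with $\|\nabla_x\phi_\star\|_2\lesssim \|\star\|_2 \lesssim \|f\|_2$, and enters directly into $G(t)-G(s)$.

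The new feature is the temporal term $\int_s^t\iint (\partial_t\psi)f$. By parity in $v$ together with the algebraic choices of $\beta_a$, $\beta_b$, $\beta_c$, the $\mathbf{P}f$-part of each such integral collapses to a single hydrodynamic coupling: $\int b\cdot\nabla_x\partial_t\phi_c\,dx$ for $\psi_c$, $\int c\,\partial_j\partial_t\phi_b^j\,dx$ for $\psi_b^{i,j}$, and $\int b\cdot\nabla_x\partial_t\phi_a\,dx$ for $\psi_a$, each with an explicit nonzero constant, plus an $(\mathbf{I}-\mathbf{P})f$-remainder controlled directly by Cauchy--Schwarz together with $\|\nabla_x\partial_t\phi_\star\|_2$. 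Each hydrodynamic coupling is processed by integration by parts in time, e.g.
\begin{equation*}
\int_s^t \int b\cdot\nabla_x\partial_t\phi_c\,dxd\tau = \Bigl[\int b\cdot\nabla_x\phi_c\,dx\Bigr]_s^t - \int_s^t\int \partial_t b\cdot \nabla_x\phi_c\,dxd\tau.
\end{equation*}
The boundary-in-time piece $[\,\cdot\,]_s^t$ is bounded pointwise by $\|b\|_2\|c\|_2\lesssim\|f\|_2^2$ and joins $G(t)-G(s)$. The remaining time integral is handled by invoking the local conservation laws obtained by testing (\ref{dlinear}) against $\sqrt\mu$, $v_i\sqrt\mu$, and $(|v|^2/2-3/2)\sqrt\mu$: these express $\partial_t a$, $\partial_t b_j$, $\partial_t c$ as spatial divergences of $L^2$-bounded moments of $f$, plus $L^2$-bounded moments of $g$, yielding the negative Sobolev bound $\|\partial_t\phi_\star\|_{H^1}\lesssim \|a\|_2+\|b\|_2+\|c\|_2+\|(\mathbf{I}-\mathbf{P})f\|_\nu+\|g\|_2$. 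Pairing against $\nabla_x\phi_\star$ (using $-\Delta\phi_\star=\star$ and the appropriate boundary condition on $\phi_\star$ to kill the outgoing fluxes) returns only quantities of the form $\|a\|_2^2$, $\|b\|_2^2$, $\|c\|_2^2$, $\int ac$, moments of $(\mathbf{I}-\mathbf{P})f$ and $g$, each of which is absorbed into the coercive lower bound via Cauchy--Schwarz with a small constant.

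The hardest step will be the $\psi_a$ estimate: the Neumann boundary condition $\partial_n\phi_a|_{\partial\Omega}=0$ is critical both for solvability of the Poisson problem (requiring the zero-mass condition (\ref{dlinearcondition1})) and for eliminating the boundary contribution when pairing $\partial_t b$ against $\nabla_x\phi_a$, which would otherwise involve uncontrolled traces of $a$ and $c$ on $\partial\Omega$. Defining $G(t)$ as the sum of all boundary-in-time quantities collected above then yields $G(t)\lesssim \|f(t)\|_2^2$ and completes the proof of Lemma \ref{dabc}.
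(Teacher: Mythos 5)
Your overall strategy is correct and matches the paper's: use the dynamic Green identity, keep the same test functions $\psi_c$, $\psi_b^{i,j}$, $\psi_a$ with time-dependent potentials, absorb $[\iint f\psi]_s^t$ into $G$, and control the new term $\int_s^t\iint(\partial_t\psi)f$ via local conservation laws. However, there are two concrete issues and a structural confusion.

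\emph{Factual error.} You claim that for $\psi_c$ the $\mathbf{P}f$ part of $\int(\partial_t\psi_c)f$ collapses to a $b$-coupling $\int b\cdot\nabla_x\partial_t\phi_c$. It does not: the only surviving diagonal piece is $\sum_i\left(\int(|v|^2-\beta_c)v_i^2\mu\,dv\right)\int\partial_t\partial_i\phi_c\,b_i$, and the velocity integral vanishes precisely because $\beta_c$ was chosen by (\ref{beta}) so that $\int(|v|^2-\beta_c)v_i^2\mu\,dv=0$. The paper notes this explicitly. So the entire $\mathbf{P}f$ contribution for $\psi_c$ is zero; only the $(\mathbf{I}-\mathbf{P})f$ remainder survives, and one needs no hydrodynamic coupling at all there. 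The genuine couplings are: $c$ for $\psi_b^{i,j}$ (via (\ref{i_7})) and $b$ for $\psi_a$ (via (\ref{at})); the off-diagonal $\psi_b^{i\neq j}$ is also all $(\mathbf{I}-\mathbf{P})f$.

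\emph{Structural confusion.} You announce integration-by-parts in time (moving $\partial_t$ off $\phi_\star$ and onto $\xi_\diamond$), and then in the same breath assert a negative Sobolev bound $\|\partial_t\phi_\star\|_{H^1}\lesssim\cdots$ which is exactly what you would use to \emph{avoid} the time IBP. The paper's route is to estimate $\|\nabla_x\Delta^{-1}\partial_t\xi_\star\|_2=\|\nabla_x\partial_t\phi_\star\|_2$ directly in Steps 1--3 (choosing $\Phi_b^i$, $\Phi_c$ with Dirichlet data and $\Phi_a$ with Neumann data \emph{precisely so the boundary flux in the conservation identity drops out}), and then apply Cauchy--Schwarz in Step 4, with no IBP in time. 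Your alternative IBP route leads to $\int_s^t\int\partial_t\xi_\diamond\cdot\nabla_x\phi_\star$, where now $\phi_\star$ is paired with the conservation law for the \emph{other} hydrodynamic index $\diamond$, and the boundary flux does not vanish by the boundary condition on $\phi_\star$. One can save the argument because the $P_\gamma f$ part of this flux cancels between $\gamma_+$ and $\gamma_-$ (for $\psi$ even in $v\mapsto -v$), leaving only $(1-P_\gamma)f$ and $r$, but you neither observe nor justify this cancellation, and your phrase \emph{``the appropriate boundary condition on $\phi_\star$ to kill the outgoing fluxes''} is incorrect for the pairings you are actually proposing. You also understate the dependence structure: (\ref{-1a}) needs $|r|_2$ on the right, a point your summary drops.

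\emph{Bootstrapping.} You say the resulting $\|a\|_2^2$, $\|b\|_2^2$, $\|c\|_2^2$ terms are \emph{``absorbed via Cauchy--Schwarz with a small constant''}, but in the paper this requires a specific order: (\ref{timec}) for $c$ spawns $\varepsilon\|b\|^2$, (\ref{timeb}) for $b$ spawns $\varepsilon\|a\|^2 + \|c\|^2$, and (\ref{timea}) for $a$ spawns a full $\|b\|^2$. Closing the loop needs substituting (\ref{timec}) into (\ref{timeb}) and then the result into (\ref{timea}), absorbing the residual small multiples. This is a genuine extra argument, not a one-line Cauchy--Schwarz.

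In short, the conservation-law idea is correct and your IBP-in-time route is a viable alternative to the paper's direct negative-Sobolev estimate, but as written your proposal contains a wrong computation (the $\psi_c$ coupling), conflates the two methods, and skips the boundary-cancellation and bootstrapping details that make the estimate close.
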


\begin{proof}
The key of the proof is to use the same choices of test functions (allowing extra dependence on time) of (%
\ref{phic}), (\ref{phibj}), (\ref{phibij}) and (\ref{phia}) and estimate the new contribution $\int_{s}^{t}\iint_{\Omega\times\mathbf{R}^3} \partial _{t}\psi f$ in the time dependent weak formulation
\begin{eqnarray}
&& \ \int_{s}^{t}\int_{\gamma_+ }\psi f d\gamma - \int_{s}^{t}\int_{\gamma_- }\psi f d\gamma -\int_{s}^{t}\iint_{\Omega\times\mathbf{R}^3} v\cdot \nabla _{x}\psi f\notag\\
&&=-\iint_{\Omega\times\mathbf{R}^3} \psi f(t)+\iint_{\Omega\times\mathbf{R}^3} \psi f(s)
+\int_{s}^{t}\iint_{\Omega\times\mathbf{R}^3} -\psi L(\mathbf{I}-\mathbf{%
P})f\notag +  \psi g\nonumber\\
&& \ \ \ \ +\int_{s}^{t}\iint_{\Omega\times\mathbf{R}^3} \partial _{t}\psi f .  \label{timeweak}
\end{eqnarray}%
We note that, with such choices \newline
\begin{equation*}
G(t) = -\iint_{\Omega\times\mathbf{R}^3}\psi f(t), \quad |G(t)|\lesssim \|f(t)\|_{2}^{2}.
\end{equation*}%
Without loss of generality we give the proof for $s=0$.

\noindent {\bf Remark:} We note that (\ref{dlinearcondition}), (\ref{dlinear}), (\ref{dlinearl2BC}) and (\ref{dlinearcondition1}) are all invariant under a standard $t$-mollification for all $t>0$. The estimates in step 1 to step 3  are obtained via a $t$-mollification  so that all the functions are smooth in $t$. For the notational simplicity we do not write explicitly the parameter of the regularization.
\newline

\noindent {\it Step 1.} Estimate of  ${ \nabla_x  \Delta_N^{-1} \partial_{t}a   =   \nabla_x \partial_t \phi_a}$ in (\ref{timeweak})

In the weak formulation(with time integration over $[t,t+\varepsilon]$), if we choose the test function $\psi=\phi \sqrt{\mu }$ with $\phi(x)$ dependent only of $x$, then we get
(note that $L f$ and $g$ against $\phi(x) \sqrt{\mu }$
are zero)
\begin{eqnarray*}
\sqrt{2\pi}\int_{\Omega }[a(t+\varepsilon )-a(t)]\phi (x)
 = 2\pi\sqrt{2\pi}\int_{t}^{t+\varepsilon }\int_{\Omega }(b\cdot \nabla _{x})\phi (x)
 +\int_{t}^{t+\varepsilon }\int_{\gamma_-}  r \phi\sqrt{\mu},
\end{eqnarray*}%
where $\int_{\mathbf{R}^3}\mu(v)dv=\sqrt{2\pi}, \   \int_{\mathbf{R}^{3}}(v_{1})^{2}\mu (v)dv=2\pi\sqrt{2\pi}$ and we have used the splitting (\ref{bsplit}) and (\ref{insidesplit}). Taking difference
quotient, we obtain, for almost all $t$,
\begin{equation*}
\int_{\Omega }\phi \partial _{t}a=\sqrt{2\pi}\int_{\Omega }(b\cdot \nabla_x )\phi
+\frac{1}{{2\pi}}\int_{\gamma_-}  r  \phi \sqrt{\mu}.
\end{equation*}
Notice that, for $\phi =1$, from (\ref{dlinearcondition}) the right hand side of the above equation is zero so that, for all $t> 0 $,%
\begin{equation*}
\int_{\Omega }\partial _{t}a(t )dx =0.
\end{equation*}
On the other hand, for all $\phi(x)
\in H^{1}\equiv H^{1}(\Omega )$, we have, by the trace theorem $|\phi|_2 \lesssim \|\phi \|_{H^1}$,
\begin{eqnarray*}
\left|\int_{\Omega }\phi (x)\partial _{t}adx\right| &\lesssim & |r|_{2} |\phi  |_{ 2 }+\|b\|_{2 }\|\phi
\|_{H^{1} } \\
&\lesssim & \{ \ \|b(t)\|_{2}  +|r|_{2} \} \ \|\phi \|_{H^{1}}.
\end{eqnarray*}%
Therefore we conclude that, for all $t>0$,
\begin{equation*}
\|\partial _{t}a(t)\|_{(H^{1})^{\ast }}\ \lesssim \
\|b(t)\|_{2} +|r|_{2},
\end{equation*}%
where $(H^{1})^{\ast }\equiv(H^{1}(\Omega ))^{\ast }$ is the dual space of $H^{1}(\Omega )$
with respect to the dual pair $$\langle A,B \rangle=\int_{\Omega }A(x)B(x)dx,$$ for $A\in H^{1}$
and $B\in (H^1)^{\ast}$.

By the standard elliptic theory, we can solve the Poisson equation with the Neumann
boundary condition
\begin{equation}
-\Delta \Phi_a =\partial _{t}a(t)\ ,\ \ \ \ \ \frac{\partial \Phi_a }{\partial n}%
\Big{|}_{\partial \Omega }=0, \notag
\end{equation}%
with the crucial condition $\int_{\Omega }\partial _{t}a(t,x)dx=0$ for all $t>0$.   Notice that $\Phi_a =-\Delta _{N}^{-1}\partial _{t}a=\partial_t \phi_a$ where $\phi_a$ is defined in (\ref{phia}).
Moreover we have
 \begin{eqnarray*}
 \|\nabla_x \partial_t \phi_a \|_{2} &=& \|\Delta _{N}^{-1}\partial
 _{t}a(t)\|_{H^{1}}=\|\Phi_a \|_{H^1}\\
 &\lesssim &\|\partial _{t}a(t)\|_{(H^{1})^{\ast }}
  \lesssim  \|b(t)\|_{2}+ |r|_{2}.
 \end{eqnarray*}%
 Therefore we conclude, for almost all $t> 0,$
 \begin{equation}
 \|\nabla_x \partial _{t}\phi _{a}(t)\|_{2 }\lesssim
 \|b(t)\|_{2 } +|r|_{2}.  \label{-1a}
 \end{equation}%

\vskip .3cm
\noindent{\it Step 2.} {Estimate of} $\nabla_x  \Delta^{-1}\partial _{t}{b}^{j} = \nabla_x \partial_t \phi_b^i$ in (\ref{timeweak})

In (\ref{timeweak}) we choose a test function $\psi =\phi (x)v_{i}%
\sqrt{\mu }$.  Since $\mu(v)=\frac{1}{2\pi}e^{-\frac{|v|^2}{2}}$,  $\int v_{i}v_{j}\mu(v)dv=\int
v_{i}v_{j}(\frac{|v|^{2}}{2}-\frac{3}{2})\mu(v)dv=2\pi\sqrt{2\pi}\delta _{i,j}$ and we get
\begin{eqnarray*}
&&2\pi\sqrt{2\pi}\int_{\Omega }[b_i(t+\varepsilon )-b_i(t)]\phi\\
&=&-\int_{t}^{t+\varepsilon }\int_{\gamma }f\phi v_{i}\sqrt{\mu }  +2\pi\sqrt{2\pi} \int_{t}^{t+\varepsilon
}\int_{\Omega}\partial _{i}\phi [a+ c] \\
&&+\int_{t}^{t+\varepsilon }\iint_{\Omega \times \mathbf{R}%
^{3}}\sum_{j=1}^{d}v_{j}v_{i}\sqrt{\mu }\partial _{j}\phi (\mathbf{I}-\mathbf{P})f + \int_{t}^{t+\varepsilon} \iint_{\Omega\times\mathbf{R}^3} \phi v_i g \sqrt{\mu}.
\end{eqnarray*}%
Taking difference quotient, we obtain
\begin{eqnarray*}
&&\int_{\Omega }\partial _{t}b_{i}(t)\phi\notag\\
&=& \frac{-1}{2\pi\sqrt{2\pi}}\int_{\gamma}f(t)v_{i}\phi \sqrt{\mu } + \int_{\Omega }\partial _{i}\phi [a(t)+ c(t)]\\
&&+\frac{1}{2\pi\sqrt{2\pi}}\Big\{\iint_{\Omega
\times \mathbf{R}^{3}}\sum_{j=1}^{d}v_{j}v_{i}\sqrt{\mu }\partial _{j}\phi(\mathbf{I}-\mathbf{P})f(t)+\iint_{\Omega\times\mathbf{R}^3} \phi v_i g(t) \sqrt{\mu}\Big\}.
\end{eqnarray*}%
For fixed $t> 0$, we choose $\phi=\Phi_b^i $ solving
\begin{eqnarray*}
-\Delta \Phi^{i}_{b}   = \partial _{t}b_{i}(t) , \ \ \ \ \
 \Phi^{i}_{b}|_{\partial \Omega }  = 0.
\end{eqnarray*}%
Notice that $\Phi_b^i = -\Delta^{-1}\partial_t b_i = \partial_t \phi_b^i$ where $\phi_b^i$ is defined in (\ref{jb}). The boundary terms vanish because of the Dirichlet
boundary condition on $\Phi^{i}_{b}$.  Then we have, for $t\geq 0$,
\begin{eqnarray*}
\int_{\Omega }|\nabla_x \Delta ^{-1}\partial _{t}b_{i}(t)|^{2}dx
&=&\int_{\Omega }|\nabla_x \Phi^{i}_{b} |^{2}dx=-\int_{\Omega }\Delta  \Phi^{i}_{b} \Phi^{i}_{b} dx \\
&\lesssim &\varepsilon \{\|\nabla_x \Phi
^{i}_{b}\|_{2}^{2} + \|  \Phi
^{i}_{b}\|_{2}^{2} \}+\|a(t)\|_{2}^{2}+\|c(t)\|_{2}^{2}\\
&& \ \   +\|(\mathbf{I}-\mathbf{P}%
)f(t)\|_{2}^{2} +\|g(t)\|_2^2 \\
&\lesssim &\varepsilon  \|\nabla_x \Phi
^{i}_{b}\|_{2}^{2}  +\|a(t)\|_{2}^{2}+\|c(t)\|_{2}^{2}\\
&& \ \   +\|(\mathbf{I}-\mathbf{P}%
)f(t)\|_{2}^{2} +\|g(t)\|_2^2,
\end{eqnarray*}%
where we have used the Poincar$\acute{e}$ inequality.
Hence, for all $t>0$
\begin{eqnarray}
&&  \ \  \|\nabla_x \partial_t \phi_b^i (t)\|_{2}=\|\nabla_x \Delta ^{-1}\partial _{t}b_{i}(t)\|_{2}\notag\\
&&  \lesssim \ \
\|a(t)\|_{2}+\|c(t)\|_{2}+\|(\mathbf{I}-%
\mathbf{P})f(t)\|_{2}+\|g(t)\|_2.\label{-1b}
\end{eqnarray}

\vskip .3cm
\noindent{\it Step 3.} {Estimate of  }$\nabla_x \Delta^{-1} \partial_t c= \nabla_x \partial _{t}\phi _{c} $ in (\ref{timeweak})

In the weak formulation, we choose a test function $\phi (x)(\frac{|v|^{2}}{2}-%
\frac{3}{2})\sqrt{\mu }$.  Since $\int_{\mathbf{R}^3}\mu (v)(\frac{|v|^{2}}{2}-\frac{3}{2})dv=0, \ \int_{\mathbf{R}^3} \mu(v) v_i v_j (\frac{|v|^2-3}{2})=2\pi\sqrt{2\pi}\delta_{i,j}$ and $%
\int_{\mathbf{R}^3}\mu (v)(\frac{|v|^{2}}{2}-\frac{3}{2})^{2}dv=3\pi\sqrt{2\pi}$ we get
\begin{eqnarray*}
&&3\pi\sqrt{2\pi} \int_{\Omega} \phi(x)  c(t+\varepsilon,x) dx -3\pi\sqrt{2\pi} \int_{\Omega} \phi(x)  c(t,x) dx  \\
 & &=  2\pi\sqrt{2\pi}\int_{t}^{t+\varepsilon}\int_{\Omega }b\cdot \nabla_x \phi
-\int_{t}^{t+\varepsilon}\int_{\gamma }(\frac{|v|^{2}}{2}-\frac{3}{2})\sqrt{\mu }%
\phi f \\
&& \ \ +\int_{t}^{t+\varepsilon}\iint_{\Omega \times \mathbf{R}^{3}}(\mathbf{I}-\mathbf{P})f(%
\frac{|v|^{2}}{2}-\frac{3}{2})\sqrt{\mu }(v\cdot \nabla_x )\phi \\
&& \ \ +\int_{t}^{t+\varepsilon}\iint_{\Omega \times \mathbf{R}^{3}} \phi g (%
\frac{|v|^{2}}{2}-\frac{3}{2})\sqrt{\mu }
,
\end{eqnarray*}%
and taking difference quotient, we obtain
\begin{eqnarray*}
&&\int_{\Omega} \phi(x)\partial_t c(t,x) dx\\
&& =\frac{2}{3} \int_{\Omega} b(t)\cdot \nabla_x \phi -\frac{1}{3\pi\sqrt{2\pi}}\int_{\gamma} (\frac{|v|^2}{2}-\frac{3}{2})\sqrt{\mu} \phi f(t)\\
&& \ \ + \frac{1}{3\pi\sqrt{2\pi}}\iint_{\Omega \times \mathbf{R}^{3}}(\mathbf{I}-\mathbf{P})f(t)(%
\frac{|v|^{2}}{2}-\frac{3}{2})\sqrt{\mu }(v\cdot \nabla_x )\phi \\
&& \ \ + \frac{1}{3\pi\sqrt{2\pi}}\iint_{\Omega \times \mathbf{R}^{3}} \phi g(t) (%
\frac{|v|^{2}}{2}-\frac{3}{2})\sqrt{\mu }.
\end{eqnarray*}
For fixed $t> 0$, we define a test function $\phi=\Phi_c$ where $\phi_c$ is defined in (\ref{phic}). The boundary terms vanish because of the Dirichlet
boundary condition on $\Phi_{c}$.  Then we have, for $t>0$,
\begin{eqnarray*}
 -\Delta \Phi _{c}=\partial _{t}c(t) , \ \ \ \ \
 \Phi_{c}|_{\partial \Omega }=0.
\end{eqnarray*}%
Notice that $\Phi_c = -\Delta^{-1}\partial_t c(t) = \partial_t \phi_c(t)$ in (\ref{phic}).
We follow the same procedure of estimates $\nabla_x \Delta ^{-1}\partial _{t}a$
and $\nabla_x \Delta ^{-1}\partial _{t}b$ to have
\begin{eqnarray*}
&&\|\nabla_x \Delta ^{-1}\partial _{t}c(t)\|_{2}^{2}  =
\int_{\Omega} |\nabla_x \Phi_c(x)|^2 dx\\
&&= \int_{\Omega} \Phi_c(x) \partial_t c(t,x) dx\\
&&\lesssim \
\varepsilon \{ \|\nabla_x \Phi_c \|_2^2 + \|  \Phi_c \|_2^2 \}+ \|b(t)\|_{2}^2
 + \|(\mathbf{I}-\mathbf{P})f(t)\|_2^2 + \|g(t)\|_2^2\\
&&\lesssim \ \varepsilon   \|\nabla_x \Phi_c \|_2^2  + \|b(t)\|_{2}^2
 + \|(\mathbf{I}-\mathbf{P})f(t)\|_2^2 + \|g(t)\|_2^2,
\end{eqnarray*}%
where we have used the Poincar$\acute{e}$ inequality. Finally we have, for all $t>0$,
\begin{eqnarray}
&&\|\nabla_x \partial _{t}\phi _{c}\|_{2}\lesssim \| \nabla_x \Delta^{-1} \partial_t c(t) \|_2\notag\\
&& \lesssim \ \|b(t)\|_{2}+\|(%
\mathbf{I}-\mathbf{P})f(t)\|_{2} + \|g(t)\|_2.  \label{-1c}
\end{eqnarray}

\vskip.3cm
\noindent{\it Step 4.} {Estimate of } ${a,b,c}$ contributions in (\ref{timeweak})

To estimate $c$ contribution in (\ref{timeweak}), we plug (\ref{phic}) into (\ref{timeweak}) to have from (\ref{insidesplit})
\begin{eqnarray*}
&& \int_{0}^{t}\iint_{\Omega \times \mathbf{R}^{3}}(|v|^{2}-\beta_c )v_{i}\sqrt{%
\mu }\partial _{t}\partial _{i}\phi _{c}f  \\
&=& \sum_{j=1}^{d}\int_{0}^{t}\iint_{\Omega \times \mathbf{R}^{3}}(|v|^{2}-\beta_c
)v_{i}v_{j}\mu(v) \partial _{t}\partial _{i}\phi _{c}b_{j}\\
 & &+\int_{0}^{t}\iint_{\Omega \times \mathbf{R}^{3}}(|v|^{2}-\beta_c )v_{i} %
\sqrt{\mu }\partial _{t}\partial _{i}\phi _{c} (\mathbf{I}-\mathbf{P})f .
\end{eqnarray*}%
The second line has non-zero contribution only for $j=i$ which leads to zero by the
definition of $\beta_c $ in (\ref{beta}). We thus have from (\ref{-1c}), {for $\varepsilon$ small,}
\begin{eqnarray}
&&\left|\int_{0}^{t}\iint_{\Omega \times \mathbf{R}^{3}}(|v|^{2}-\beta_c )v_{i}\sqrt{%
\mu }\partial _{t}\partial _{i}\phi _{c}f \right| \notag\\
&\lesssim&
\int_{0}^{t}\Big\{\|b\|_{2}+\|(\mathbf{I}-\mathbf{P})f\|_{2} +\|g\|_2\Big\}\|(\mathbf{I}-%
\mathbf{P})f\|_{2} \notag \\
&\lesssim &\varepsilon \int_{0}^{t}\|b \|_{2}^2+ \int_{0}^{t}\Big\{ \|(\mathbf{I}%
-\mathbf{P})f \|_{2}^{2} +\|g\|_2^2\Big\}.\label{ct}\\ &&\notag
\end{eqnarray}
{Combining  with  (\ref{cestimate}) in Lemma \ref{steadyabc} we conclude, for $\varepsilon$ small},
\begin{eqnarray}
\int_{{0}}^t \|c(s)\|_2^2 ds &\lesssim& G(t)-G(0)\label{timec}\\
&+&\int_{{0}}^t \Big\{ \|(\mathbf{I}-\mathbf{P})f(s)\|_{\nu}^2 + \|g(s)\|_2^2 + |(1-P_{\gamma})f(s)|_{2,+}^2 + |r(s)|_2^2+\varepsilon \|b(s)\|_2^2
\Big\}ds .\notag
\end{eqnarray}
To estimate $b$ contribution in (\ref{timeweak}), we plug (\ref{phibj}) into (\ref{timeweak}) to have from (\ref{insidesplit}) %
\begin{eqnarray}
&& \int_{0}^{t}\iint_{\Omega \times \mathbf{R}^{3}}(v_{i}^{2}-\beta_b )\sqrt{%
\mu }\partial _{t}\partial _{j}\phi _{b}^j f   \notag \\
&=& {{\int_{0}^{t}\iint_{\Omega \times \mathbf{R}^{3}}(v_{i}^{2}- \beta_b ){\mu
}\partial _{t}\partial _{j}\phi _{b}^{j} \{\frac{|v|^{2}}{2}-\frac{3}{2}\}c }}
\label{i_7} \\
&&+ \int_{0}^{t}\iint_{\Omega \times \mathbf{R}^{3}}(v_{i}^{2}- \beta_b ) %
\sqrt{\mu }\partial_t\partial _{j}\phi _{b}^{j} (\mathbf{I}-\mathbf{P})f, \notag
\end{eqnarray}%
where we used (\ref{alpha}) to remove the $a$ contribution. We thus have from (\ref{-1b}), \begin{eqnarray}
&&\left|\int_{0}^{t}\iint_{\Omega \times \mathbf{R}^{3}}(v_{i}^{2}-\beta_b )\sqrt{\mu
}\partial _{t}\partial _{j}\phi _{b}^jf \right|\notag\\
 &\lesssim
&\int_{0}^{t}\Big\{\|a\|_{2}+\|c\|_{2}+\|(\mathbf{I}-\mathbf{P}%
)f\|_{2} +\|g\|_2\Big\}\Big\{\|c\|_{2}+\|(\mathbf{I}-\mathbf{P})f\|_{2}\Big\}  \notag\\
&\lesssim &
\int_{0}^{t}\Big\{ \|(\mathbf{I}-\mathbf{P})f\|_{2}^{2}
+\|c\|_{2}^{2}+\|g\|_2^2 +{\varepsilon\|a\|_2^2\Big\}}.\label{bt1}
\end{eqnarray}%
Next we plug (\ref{phibij}) into (\ref{timeweak}) and we have from (\ref{-1b}), 
\begin{eqnarray}
&&  \int_{0}^{t}\int_{\Omega \times \mathbf{R}^{3}} |v|^{2} v_{i}v_{j}%
\sqrt{\mu }\partial _{t}\partial _{j}\phi _{b}^i f  \notag \\
&=&  \int_{0}^{t}\int_{\Omega \times \mathbf{R}^{3}} |v|^{2} v_{i}v_{j}\sqrt{\mu }\partial_{t}\partial_{j}\phi _{b}^i(\mathbf{I}-\mathbf{%
P})f  \notag \\
&\lesssim &\int_{0}^{t}\{\|a\|_{2}+\|c\|_{2}+\|(\mathbf{I}-\mathbf{P}%
)f\|_{2} +\|g\|_2 \}\|(\mathbf{I}-\mathbf{P})f\|_{2}  \notag \\
&\lesssim &\int_0^t \Big\{ \|(\mathbf{I}-\mathbf{P})f\|_{\nu}^2 + \|g\|_2^2 +{\varepsilon\big[\|a\|_2^2 +\|c\|_2^2}\big]
\Big\}
.  \label{bt2}
\end{eqnarray}
{Combining this with (\ref{b_i}) in Lemma \ref{steadyabc} we conclude from (\ref{timeweak}) that}
\begin{eqnarray}
&&\int_0^t \|b(s)\|_2^2 ds \lesssim G(t)-G(0)\label{timeb}\\
&&+\int_0^t \Big\{ \|(\mathbf{I}-\mathbf{P})f(s)\|_{\nu}^2 + \|g(s)\|_2^2 + |(1-P_{\gamma})f(s)|_{2,+}^2 + |r(s)|_2^2+\varepsilon\|a\|_2^2 +\|c\|_2^2(1+\varepsilon)
\Big\}ds.\notag
\end{eqnarray}

Finally in order to estimate $a$ contribution in (\ref{timeweak}) we plug (\ref{phia}) for into (\ref{timeweak}). We estimate%
\begin{eqnarray}
&& \int_{0}^{t}\int_{\Omega\times\mathbf{R}^3} (|v|^{2}-\beta_a )v_{i}\mu \partial _{t}\partial _{i}\phi
_{a}f  \label{at} \\
&=& \int_{0}^{t}\int_{\Omega\times\mathbf{R}^3}  (|v|^{2}- \beta_a )(v_{i})^{2}\mu \partial _{t}\partial
_{i}\phi _{a} b_{i}  \notag \\
&& +\int_{0}^{t}\int_{\Omega\times\mathbf{R}^3}  (|v|^{2}- \beta_a )v_{i}\mu \partial _{t}\partial
_{i}\phi _{a} (\mathbf{I}-\mathbf{P})f  \notag \\
&\lesssim &\int_{0}^{t}\{\|b(t)\|_{2,\Omega}+ |r|_{2}\}\{\|b\|_{2,\Omega}+\|(\mathbf{I}-\mathbf{P})f\|_{2}\}\notag
\end{eqnarray}%
Combining this with Lemma \ref{steadyabc} we conclude
\begin{eqnarray}
\int_0^t \|a(s)\|_2^2 ds &\lesssim& G(t)-G(0)\label{timea}\\
&+&\int_0^t \Big\{ \|(\mathbf{I}-\mathbf{P})f(s)\|_{\nu}^2 + \|g(s)\|_2^2 + |(1-P_{\gamma})f(s)|_{2,+}^2 + |r(s)|_2^2+{\|b\|_2^2}
\Big\}ds.\notag
\end{eqnarray}
From (\ref{timec}), (\ref{timeb}) and (\ref{timea}), we prove the lemma {by choosing $\varepsilon$ sufficiently small.}
\end{proof}

\vskip.3cm

\begin{proof}[Proof of Proposition \ref{dlinearl2}]
The proof is parallel to the steady case,  proof of Proposition \ref{linearl2}. We start with the approximating sequence. with $f^0 \equiv f_0$,
\begin{eqnarray}
\partial _{t}f^{\ell+1}+v\cdot \nabla _{x}f^{\ell+1}+\nu f^{\ell+1}-Kf^{\ell}&=&g,\text{ \
\ }f^{n+1}(0)=f_{0},\label{dapproximate}\\
f_{-}^{\ell+1}&=&(1-\frac{1}{j}%
)P_{\gamma }f ^{\ell}+r.\notag
\end{eqnarray}

\vskip .3cm
\noindent{\it Step 1.}  We first show that for fixed $j$, $f^{\ell}\rightarrow f^{j}$ as $%
\ell\rightarrow \infty $. Notice that by the compactness of $K$,
\begin{equation*}
(Kf^{\ell},f^{\ell+1})\leq \varepsilon \lbrack \|f^{\ell+1}\|_{\nu
}^{2}+\|f^{\ell}\|_{\nu }^{2}]+C_{\varepsilon
}[\|f^{\ell+1}\|_{2}^{2}+\|f^{\ell}\|_{2}^{2}].
\end{equation*}%
We first take an inner product(using Green's identity) with $f^{\ell+1}$ in (\ref{dapproximate}). With the same choice of $C_j$ in (\ref{Cj}) and using the boundary condition
\begin{eqnarray}
&&\|f^{\ell+1}(t)\|_{2}^{2}+(1-\varepsilon)\int_{0}^{t}\|f^{\ell+1}(s)\|_{\nu
}^{2}+\int_{0}^{t}|f^{\ell+1}(s)|_{2,+}^{2}ds  \notag \\
&\leq &\varepsilon\int_{0}^{t}\|f^{\ell}(s)\|_{\nu }^{2}ds+\left[ (1-\frac{1}{j})^{2}+\frac{1}{j^{2}}\right] \int_{0}^{t}|P_{%
\gamma }f^{\ell}|_{2,-}^{2}\notag\\
&&+C_{j}\int_{0}^{t}|r|_{2
}^{2}+C_{\varepsilon} \int_0^t \max_{1\leq i\leq \ell+1}\|f^i(s)\|_2^2 ds
+\int_{0}^{t}\|g(s)\|_{\nu
}^{2}ds+\|f_{0}\|_{2}^{2}.  \notag  \\
&\leq& \varepsilon\int_{0}^{t}\|f^{\ell}(s)\|_{\nu }^{2}ds+
\left[ (1-\frac{1}{j})^{2}+\frac{1}{j^{2}}\right]\int_{0}^{t}|f^{\ell}(s)|_{2,+}^{2}ds
\notag\\
&&+C_{j}\int_{0}^{t}|r|_{2
}^{2}+C_{\varepsilon} \int_0^t \max_{1\leq i\leq \ell+1}\|f^i(s)\|_2^2 ds
+\int_{0}^{t}\|g(s)\|_{\nu
}^{2}ds+\|f_{0}\|_{2}^{2}  \notag \\
&\leq& \max\left\{\frac{\varepsilon}{1-\varepsilon},\left[ (1-\frac{1}{j})^{2}+\frac{1}{j^{2}}\right]\right\} \left\{ (1-\varepsilon)\int_{0}^{t}\|f^{\ell}(s)\|_{\nu }^{2}ds+ \int_{0}^{t}|f^{\ell}(s)|_{2,+}^{2}ds\right\}\notag\\
&&+C_{j}\int_{0}^{t}|r|_{2
}^{2}+C_{\varepsilon} \int_0^t \max_{1\leq i\leq \ell+1}\|f^i(s)\|_2^2 ds
+\int_{0}^{t}\|g(s)\|_{\nu
}^{2}ds+\|f_{0}\|_{2}^{2}.\notag
\end{eqnarray}
We choose $\varepsilon>0$ sufficiently small, so that $\max \left\{\frac{\varepsilon}{1-\varepsilon},\left[ (1-\frac{1}{j})^{2}+\frac{1}{j^{2}}\right]\right\} \leq 1-\varepsilon$. Now we iterate again to have
\begin{eqnarray}
&\leq&(1-\varepsilon)^2 \left\{(1- \varepsilon)\int_{0}^{t}\|f^{\ell-1}(s)\|_{\nu }^{2}ds+\int_{0}^{t}|f^{\ell}(s)|^{2}_{2,+}ds\right\}  \notag \\
&&+[(1-\varepsilon )+1]\left\{ C_{\varepsilon}\int_{0}^{t}|r|_{2
}^{2}+C_{\varepsilon}\int_{0}^{t}\max_{1\leq i\leq
\ell+1}\|f^{i}(s)\|_{2}^{2}ds+\int_{0}^{t}\|g(s)\|_{\nu
}^{2}ds+\|f_{0}\|_{2}^{2}\right\}   \notag \\
& \vdots&  \notag \\
&\leq &(1-\varepsilon )^{\ell+1}\left\{(1- \varepsilon)\int_{0}^{t}\|f^{0}(s)\|_{\nu }^{2}ds+\int_{0}^{t}|f^{0}|^{2}_{2,+}\right\}\notag\\ &&
+ \frac{1-(1-\varepsilon)^{\ell+1}}{%
\varepsilon }\left\{ C_{\varepsilon}\int_{0}^{t}|r|_{2
}^{2}+C_{\varepsilon}\int_{0}^{t}\max_{1\leq i\leq
\ell+1}\|f^{i}(s)\|_{2}^{2}ds+\int_{0}^{t}\|g(s)\|_{\nu
}^{2}ds+\|f_{0}\|_{2}^{2}\right\} .  \notag\\
\label{lastest}
\end{eqnarray}%
We therefore have, from $f^0\equiv f_0$,
\begin{eqnarray*}
\max_{1\leq i\leq \ell+1}\|f^{i}(t)\|_{2}^{2}&\lesssim _{\varepsilon ,j}&\Big\{
\int_{0}^{t}|r|_{2}^{2}+\int_{0}^{t}\max_{1\leq i\leq
\ell+1}\|f^{i}(s)\|_{2}^{2}ds+\int_{0}^{t}\|g(s)\|_{\nu
}^{2}ds+\|f_{0}\|_{2}^{2}\\
&&\ \ \ \ \ \ \ \ \ \ \ \    + t \|f_0\|_{\nu}^2 + t |f_0|_{2,+}^2
\Big\}.
\end{eqnarray*}%
By Gronwall's lemma we have
\begin{equation*}
\max_{1\leq i\leq \ell+1}\|f^{i}(t)\|_{2}^{2}\lesssim _{\varepsilon ,j,t}\left\{
\int_{0}^{t}|r|_{2}^{2}+\int_{0}^{t}\|g(s)\|_{\nu
}^{2}ds+\|f_{0}\|_{2}^{2}+ t \|f_0\|_{\nu}^2 + t |f_0|_{2,+}^2\right\},
\end{equation*}%
where $t> 0$ is fixed.
This in turns leads to
\begin{eqnarray*}
&&\max_{1\leq i\leq \ell+1} \left\{\|f^{i}(t)\|_{2}^{2}+\int_{0}^{t}\|f^{i}(s)\|_{\nu
}^{2}+\int_{0}^{t}|f^{i}(s)|_{{+}}^{2}ds \right\}\\
&& \ \ \  \ \ \ \lesssim_{\varepsilon ,j,t}\left\{ \int_{0}^{t}|r|_{2
}^{2}+\int_{0}^{t}\|g(s)\|_{\nu }^{2}ds+\|f_{0}\|_{2}^{2}+ t \|f_0\|_{\nu}^2 + t |f_0|_{2,+}^2\right\}.
\end{eqnarray*}%
Upon taking the difference, we have
\begin{equation}
\partial _{t}[f^{\ell+1}-f^{\ell}]+v\cdot \nabla _{x}[f^{\ell+1}-f^{\ell}]+\nu \lbrack
f^{\ell+1}-f^{\ell}]=K[f^{\ell}-f^{\ell-1}],\text{ \ \ \ }  \label{dndifference}
\end{equation}%
with \ $[f^{\ell+1}-f^\ell](0)\equiv 0$ and  $f_{-}^{\ell+1}-f_{-}^{\ell}=(1-%
\frac{1}{j})P_{\gamma }[f ^{\ell}-f ^{\ell-1}]$.   Applying (\ref{lastest}) to $f^{\ell+1}-f^\ell$ yields%
\begin{eqnarray*}
&&\|f^{\ell+1}(t)-f^{\ell}(t)\|_{2}+\int_{0}^{t}\|f^{\ell+1}(s)-f^{v}(s)\|_{\nu
}^{2}+\int_{0}^{t}|f^{\ell+1}(s)-f^{\ell}(s)|_{2,+}^{2}ds \\
&\leq &(1-\varepsilon )^{\ell}\int_{0}^{t}|f^{1}-f^{0}|_{2,+}^{2}+\frac{1}{ \varepsilon }\left\{ \int_{0}^{t}\max_{1\leq i\leq
n}\|f^{i}(s)-f^{i-1}(s)\|_{\nu }^{2}ds\right\}.
\end{eqnarray*}%
From (\ref{lastest}), this implies, from the Gronwall's lemma, that, by taking maximum over $1\leq i\leq \ell+1,$
\begin{eqnarray*}
\|f^{\ell+1}(t)-f^{\ell}(t)\|_{2} &\lesssim_{\varepsilon ,j,t} &(1-\varepsilon
)^{\ell}\int_{0}^{t}|f^{1}-f^{0}|_{2,+}^{2},
\\
\int_{0}^{t}\|f^{\ell+1}(s)-f^{\ell}(s)\|_{\nu
}^{2}+\int_{0}^{t}|f^{\ell+1}(s)-f^{\ell}(s)|_{2,+}^{2}ds &\lesssim_{\varepsilon ,j,t}
&(1-\varepsilon )^{\ell}\int_{0}^{t}|f^{1}-f^{0}|_{2,+}^{2}.
\end{eqnarray*}%
Therefore, $f^{\ell}$ is a Cauchy sequence and $f^{\ell}\rightarrow f^{j}$ so that $f^{j}$
is the solution of
\begin{equation}
\partial _{t}f+v\cdot \nabla _{x}f+Lf=g,\text{ \ \ }f(0)=f_{0},\text{ \ \ \
\ \ }f_{-}=(1-\frac{1}{j})P_{\gamma }f +r.
\label{dlinearj}
\end{equation}

\vskip.3cm
\noindent {\it Step 2.} We let $j\rightarrow \infty $. Upon using Green's identity and the boundary condition and (\ref{eta}), for any $\eta>0$,
\begin{eqnarray}
&&\|f^{j}(t)\|_{2}^{2}+\int_{0}^{t}\|(\mathbf{I}-\mathbf{P})f^{j}(s)\|_{\nu
}^{2}ds+\int_{0}^{t}|(1-P_{\gamma })f^{j}(s)|^{2}_{2,+}ds  \notag \\
&\leq &\eta \int_{0}^{t}|P_{\gamma }f^{j}(s)|_{2,+}^{2}ds+C_{\eta
}\int_{0}^{t}|r|_{2}^{2}+\int_{0}^{t}\|g(s)\|_{\nu
}^{2}ds+\|f_{0}\|_{2}^{2}.  \label{jgreen}
\end{eqnarray}%
Note that, from Ukai's trace theorem (Lemma \ref{ukai}) and (\ref{dlinearj})
\begin{eqnarray}
\int_{0}^{t}|P_{\gamma }f^{j}(s)|_{2,+}^{2}ds &\lesssim
&\int_{0}^{t}|f^{j}(s)\mathbf{1}_{\gamma  ^{\varepsilon }}|_2^{2}ds  \label{dukai} \\
&\lesssim &\int_{0}^{t}\|f^{j}(s)\|_{2}^{2}ds+\int_{0}^{t}\|\partial
_{t}[f^{j}]^{2}+v\cdot \nabla _{x}[f^{j}]^{2}\|_{1}  \notag \\
&\lesssim &\int_{0}^{t}\|f^{j}(s)\|_{2}^{2}ds+\int_{0}^{t}
|(Lf^{j},f^{j})|+\int_{0}^{t}\|g(s)\|_{\nu }^{2}ds  \notag \\
&\lesssim &\int_{0}^{t}\|f^{j}(s)\|_{2}^{2}ds+\int_{0}^{t}  \|(\mathbf{I}-%
\mathbf{P})f^{j}(s)\|_{\nu }^{2}+\int_{0}^{t}\|g(s)\|_{\nu }^{2}ds.  \notag
\end{eqnarray}%
But from (\ref{dukai}), choosing $\eta $ small, we obtain
from (\ref{jgreen}) and (\ref{dukai}) that%
\begin{eqnarray*}
&&\|f^{j}(t)\|_{2}+\int_{0}^{t}\|(\mathbf{I}-\mathbf{P})f^{j}(s)\|_{\nu
}^{2}ds+\int_{0}^{t}|(1-P_{\gamma })f^{j}(s)|_{2,+}^{2}ds \\
&\lesssim &\int_{0}^{t}\|f^{j}(s)\|_{2}^{2}ds+\int_{0}^{t}|r|_{2
}^{2}+\int_{0}^{t}\|g(s)\|_{\nu }^{2}ds+\|f_{0}\|_{2}^{2}.
\end{eqnarray*}%
It follows from the Gronwall's lemma that
\begin{eqnarray}
\|f^{j}(t)\|^2_2 &\lesssim_{t} &\int_{0}^{t}|r|_{2
}^{2}+\int_{0}^{t}\|g(s)\|_{\nu }^{2}ds+\|f_{0}\|_{2}^{2},\label{tlinearj}
\end{eqnarray}
and hence
\begin{eqnarray}
&& \ \ \   \|f^{j}(t)\|_{2}^{2}+\int_{0}^{t}\|(\mathbf{I}-\mathbf{P})f^{j}(s)\|_{\nu
}^{2}ds+\int_{0}^{t}|(1-P_{\gamma })f^{j}(s)|_{2,+}^{2}ds \notag\\
&&\lesssim_{t}\int_{0}^{t}|r|_{2 }^{2}+\int_{0}^{t}\|g(s)\|_{\nu
}^{2}ds+\|f_{0}\|_{2}^{2}.\label{j1}
\end{eqnarray}%
Since $\|\mathbf{P}f^{j}(s)\|_{\nu }^{2}\lesssim $ $%
\|f^{j}(s)\|_{2}^{2}$, integrating (\ref{tlinearj}) from $0$ to $t$ and combining with (\ref{dukai}) we have
\begin{eqnarray}
&&\int_0^t \|\mathbf{P} f^j(s)\|_{\nu}^2 ds + \int_0^t |P_{\gamma}f^j(s)|_{2,+}ds\label{j2}\\
&&\lesssim_t \int_0^t \|(\mathbf{I}-\mathbf{P})f^j (s)\|_{\nu} ds + \int_0^t \big\{ |r(s)|_2^2 + \|g(s)\|_2^2\big\} ds + \|f_0\|_2^2.\notag
\end{eqnarray}
Add (\ref{j1}) and (\ref{j2}) together to have
\begin{equation}
\|f^{j}(t)\|_{2}^{2}+\int_{0}^{t}\|f^{j}(s)\|_{\nu
}^{2}ds+\int_{0}^{t}|f^{j}(s)|_{2}^{2}ds\lesssim
_{t}\int_{0}^{t}|r|_{2 }^{2}+\int_{0}^{t}\|g(s)\|_{\nu
}^{2}ds+\|f_{0}\|_{2}^{2}.  \label{jbound}
\end{equation}%
By taking a weak limit, we obtain a weak solution $f$ to (\ref{dlinear}) with
the same bound (\ref{jbound}). Taking difference, we have%
\begin{equation}
\partial _{t}[f^{j}-f]+v\cdot \nabla _{x}[f^{j}-f]+L[f^{j}-f]=0,\text{ \
\ \ \ }[f^{j}-f]_- =P_{\gamma }[f^{j}-f]+\frac{1}{j}P_{\gamma }f^{j},
\end{equation}%
with $[f^j-f](0)=0$. Applying the same estimate (\ref{jbound}) with $r=\frac{1}{j}P_{\gamma
}f^{j}$ we obtain
\begin{eqnarray*}
&&\|f^{j}(t)-f(t)\|_{2}^{2}+\int_{0}^{t}\|f^{j}(s)-f(s)\|_{\nu
}^{2}ds+\int_{0}^{t}|f^{j}(s)-f(s)|_{2}^{2}ds\\
&&\lesssim _{t}\frac{1}{j}%
\int_{0}^{t}|P_{\gamma }f^{j}|^{2}\lesssim _{t}\frac{1}{j}%
\rightarrow 0.
\end{eqnarray*}%
We thus construct $f$ as a $L^2$ solution to (\ref{dlinear}).

\vskip.2cm
\begin{remark} In both Step 1 and Step 2, we do not need the zero mass constraint which is instead essential for next step.
\end{remark}

\noindent {\it Step 3.} Decay estimate.

To conclude our proposition, let
\begin{equation*}
y(t)\equiv e^{\lambda t}f(t).
\end{equation*}%
We multiply  (\ref{dlinear}) by $e^{\lambda t}$, so that $y$ satisfies
\begin{equation}
\partial _{t}y+v\cdot \nabla _{x}y+Ly=\lambda y+e^{\lambda t}g,\text{ \ \ \
\ }y|_{\gamma _{-}}=P_{\gamma }y_{+}+e^{\lambda t}r.  \label{lineary}
\end{equation}%
We apply Green's identity and (\ref{dukai}), (\ref{jgreen}) together with $\eta \ll\lambda $ to obtain%
\begin{eqnarray}
&&\|y(t)\|_{2}^{2}+\int_{0}^{t}\|(\mathbf{I}-\mathbf{P}%
)y(s)\|_{\nu }^{2}+\int_{0}^{t}|(1-P_{\gamma })y(s)|^{2,+}  \label{ygreen} \\
&\leq &\lambda \int_{0}^{t}\|y(s)\|_{2}^{2}+\|y(0)\|_{2}^{2}+C_{\lambda
}\int_{0}^{t}e^{\lambda s}|r|_{2}^{2}+\int_{0}^{t}e^{\lambda
s}\|g(s)\|_{2}^{2}ds.  \notag
\end{eqnarray}%
From (\ref{dlinear}) we know that
\begin{equation*}
\iint_{\Omega\times\mathbf{R}^3} y \sqrt{\mu} = \iint_{\Omega\times\mathbf{R}^3} (\lambda y + e^{\lambda t}g) \sqrt{\mu} =0, \ \ \int_{\gamma_-} e^{\lambda t} r \sqrt{\mu} d\gamma=0.
\end{equation*}
Applying Lemma \ref{dabc} to (\ref{lineary}), we deduce
\begin{eqnarray}
&&\int_{0}^{t}\|\mathbf{P}y(s)\|_{\nu }^{2}ds\lesssim G(t)-G(0)  \label{py}
\\
&&+\int_{0}^{t}\|(\mathbf{I}-\mathbf{P})y(s)\|_{\nu
}^{2}ds+\int_{0}^{t}e^{\lambda s}\|g\|_{2}^{2}ds\\&&+\lambda
\int_{0}^{t}\|y\|_{2}^{2}ds+\int_{0}^{t}\{|(1-P_{\gamma })y(s)|_{2,+}^{2}+e^{\lambda s}|r|_{2 }^{2}\}ds , \notag
\end{eqnarray}%
where $G(t)\lesssim \|y(t)\|_{2}^{2}$. Multiplying with a small constant$%
\times $(\ref{py})$+$(\ref{ygreen}) to obtain, for some $\varepsilon \ll 1$
\begin{eqnarray*}
&&[\|y(t)\|_{2}^{2}-\varepsilon G(t)]+\varepsilon \left\{ \int_{0}^{t}\|(%
\mathbf{I}-\mathbf{P})y(s)\|_{\nu }^{2}+\|\mathbf{P}y(s)\|_{\nu
}^{2}\right\} +\varepsilon \int_{0}^{t}|(I-P_{\gamma })y(s)\|^{2}_{2,+} \\
&\leq &C\lambda \int_{0}^{t}\|y(s)\|_{2}^{2}+[\|y(0)\|_{2}^{2}-\varepsilon
G(0)]+C_{\varepsilon ,\lambda }\int_{0}^{t}e^{\lambda s}|r|_{2
}^{2}+\int_{0}^{t}e^{\lambda s}\|g(s)\|_{2}^{2}ds.
\end{eqnarray*}%
By further choosing $\lambda \ll \varepsilon $, and $\|(\mathbf{I}-\mathbf{P}%
)y(s)\|_{\nu }^{2}+\|\mathbf{P}y(s)\|_{\nu }^{2}\gtrsim \|y(s)\|_{2}^{2}$
for hard potentials, we conclude our proposition.
\end{proof}

\bigskip
\section{$L^{\infty}$ Stability and Non-Negativity}

To conclude the proof of Theorem \ref{main3} we need $L^\infty$ estimates. They are provided, in the linear case, by next
\begin{proposition}
\label{dlinearlinfty} Let $\|w_\rho f_{0}\|_{\infty }+|\langle v\rangle w_\rho r|_{\infty
}+\|w_\rho g\|_{\infty }<+\infty $ and $\iint \sqrt{\mu }g=\int_{\gamma }r\sqrt{\mu } = \iint f_0 \sqrt{\mu}
=0$. Then the solution $f$ to (\ref{dlinear}) satisfies
\begin{equation*}
\|w_\rho f(t)\|_{\infty }+|w_\rho f(t)|_{\infty }\leq e^{-\lambda
t}\big\{\|w_\rho f_{0}\|_{\infty }+\sup e^{\lambda s}\|w_\rho g\|_{\infty
}+\int_{0}^{t}e^{\lambda s}|\langle v\rangle w_\rho r(s)|_{\infty }ds\big\}.
\end{equation*}%
Furthermore, if $f_{0}|_{\gamma_{-}}=P_{\gamma }f_{0}+r_{0}$, $f_{0},r$ and $g$ are
continuous, then $f(t,x,v)$ is continuous away from $\mathfrak{D}$.  In particular, it $\Omega$ is convex then $\mathfrak{D}=\gamma_0$.
\end{proposition}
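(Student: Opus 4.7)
The plan is to mimic the strategy of Proposition \ref{linfty} in the dynamical setting, exploiting the fact that the abstract iteration scheme (\ref{iteration}) together with Lemma \ref{iterationlinfty} has been written to work uniformly for steady and dynamic problems. First I would rescale by setting $y(t) = e^{\lambda t} f(t)$ with $\lambda>0$ to be fixed below, so that $y$ solves
\begin{equation*}
\partial_t y + v\cdot\nabla_x y + L y = \lambda y + e^{\lambda t}g, \qquad y|_{\gamma_-} = P_\gamma y + e^{\lambda t} r, \qquad y(0)=f_0.
\end{equation*}
The zero-mass condition on $f_0,g,r$ is preserved, so Proposition \ref{dlinearl2} applies and yields a uniform in $t$ bound
\begin{equation*}
\|y(t)\|_2 + |y(t)|_2 \lesssim \|f_0\|_2 + \sup_{0\le s\le t}\{e^{\lambda s}\|g(s)\|_2\} + \sup_{0\le s\le t}\{e^{\lambda s}|r(s)|_2\},
\end{equation*}
provided $\lambda$ is chosen smaller than the spectral gap of $L$.

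Next I would construct the $L^\infty$ solution through the double iteration in the boundary damping parameter $j$ and in the nonlinear iteration index $\ell$, exactly as in (\ref{happroximate}):
\begin{equation*}
\partial_t h^{\ell+1} + v\cdot\nabla_x h^{\ell+1} + \nu h^{\ell+1} = K_{w_\rho} h^\ell + w_\rho e^{\lambda t}g + \lambda h^{\ell+1},
\end{equation*}
with boundary data $(1-1/j)$ times the diffuse reflection plus $w_\rho e^{\lambda t}r$. Integrating along characteristics gives exactly the abstract form (\ref{iteration}) with $h^{\ell+1}(t,x,v) = e^{\lambda t} w_\rho(v) f^{\ell+1}(t,x,v)$, so Lemma \ref{iterationlinfty} is directly applicable. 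Combining (\ref{rho}) with the uniform $L^2$ bound on $f^{\ell}$ (which descends from the dynamic $L^2$ theory in Section 6 applied at each iteration level) produces, after passing $\ell\to\infty$ and then $j\to\infty$ as in Section 4,
\begin{equation*}
\sup_{0\le s\le t} e^{\frac{\nu_0}{2}s}\|h(s)\|_\infty \lesssim \|h(0)\|_\infty + \sup_{0\le s\le t}|h(s)|_\infty + \sup_{0\le s\le t}\left\|\frac{w_\rho e^{\lambda s} g(s)}{\langle v\rangle}\right\|_\infty + \int_0^t \|f(s)\|_2\,ds.
\end{equation*}
The absorption of $\sup|h(s)|_\infty$ into the left hand side, plus the decay of $\|f(s)\|_2$ and $|f(s)|_2$ from the proposition of Section~6 (applied to the rescaled unknown $y$), then yields the desired estimate after choosing $\lambda$ strictly smaller than $\nu_0/2$ and the $L^2$ decay rate. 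The main technical obstacle I anticipate is that the abstract iteration gives control only by $\int_0^t\|h^\ell/w_\rho\|_2$; one has to verify that the dynamic $L^2$ decay of Proposition \ref{dlinearl2} is strong enough to absorb this term uniformly in the iteration index, which is the analog of Step~1 in the proof of Proposition \ref{linfty} and relies on the exponential decay (with rate $\lambda$) rather than just uniform boundedness.

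For uniqueness I would subtract two solutions, apply the same scheme with $f_0=g=r=0$, and conclude from the contractive estimate. For the continuity statement, away from $\mathfrak{D}$ the backward trajectory $x - (t-s)\bar v$ meets $\partial\Omega$ at a non-grazing angle (by Lemma \ref{tbsmooth}), so the stochastic cycle representation expresses $f(t,x,v)$ as a sum of continuous contributions from $f_0$, $g$, $r$, and previously-reflected data; the compatibility condition at $t=0$ ensures no jump across $\gamma_-$, and an induction on the bounce count (as in the proof of Lemma \ref{incoming} and in Theorem 2--3 of \cite{Kim}) transfers continuity of the data to continuity of $f$ on $\overline{\Omega}\times\mathbf{R}^3\setminus \mathfrak{D}$. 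When $\Omega$ is convex the identity $\mathfrak{D}=\gamma_0$ was already recorded in Lemma \ref{incoming} and needs no further work.
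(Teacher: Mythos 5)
Your proposal diverges from the paper's proof in one structural respect, and the divergence hides a genuine gap: the paper \emph{does not rescale} at the $L^\infty$ level. Instead, after establishing the finite-time estimate (\ref{hT0}) on a \emph{fixed} window $[0,T_0]$, the paper performs an induction on $m$ over the time windows $[mT_0,(m+1)T_0]$: at each step it re-initializes the abstract iteration at $mT_0$, uses the $e^{-\nu_0 T_0/2}$ contraction from the transport flow plus the geometric decay $\|f(s)\|_2\lesssim e^{-\lambda s/2}R$ of Proposition \ref{dlinearl2}, and sums the resulting geometric series. This is the analogue of iterating the solvable recursion $a_{m+1}\le \alpha\,a_m+\beta\gamma^m$ with $\alpha,\gamma<1$.

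Your rescaling $y=e^{\lambda t}f$ would be legitimate if Lemma \ref{iterationlinfty} were a global-in-time estimate, but it is not: the parameters are tied together by $k=\varrho=Ct^{5/4}$, and the constants $C_k$, $C_{\varepsilon,m,N}$ (and the requirement $\varrho>\varrho_0$ so that $\{1+C_{\beta,\zeta}/\varrho^4\}^\varrho<2$) depend on the interval length $t$. Applying the lemma directly on $[0,t]$ with $t\to\infty$ lets the constant $C_k$ in front of $\int_0^t\|h^\ell/w_\varrho\|_2$ diverge, so uniform boundedness of $y$ does not follow from $\int_0^\infty\|y(s)\|_2\,ds<\infty$ alone. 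You would still need to localize on windows of bounded length and run the bootstrap, at which point the rescaling buys nothing over the paper's direct argument. There are also two imprecisions in your displayed inequality: the quantity appearing on the right of (\ref{rho}) is $\sup_s e^{\nu_0 s/2}|w_\varrho r(s)|_\infty$ (the boundary \emph{datum} $r$), not $\sup_s|h(s)|_\infty$ (the trace of $h$, which cannot be ``absorbed'' into the bulk norm on the left); and the $L^2$ input from (\ref{rho}) is $\int_0^t\|h^\ell/w_\varrho\|_2$, which after your rescaling becomes $\int_0^t\|y(s)\|_2\,ds$, not $\int_0^t\|f(s)\|_2\,ds$. Your treatment of the continuity and uniqueness assertions is in line with the paper.
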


\begin{proof}
We use exactly  the same approximating sequence and repeat step 1 to step 3 of Proposition \ref{dlinearl2} to show that  $f^{j},f$ are bounded, via Lemma \ref{iterationlinfty}.
We denote $h^{\ell }=w_{\varrho}f^{\ell}$, where $w_{\varrho}$ is scaled weight in (\ref{weight}). Rewrite (\ref{dapproximate}) as%
\begin{eqnarray}
\partial _{t}h^{\ell+1}+v\cdot \nabla _{x}h^{\ell+1}+\nu h^{\ell+1} &=&K_{w_{\varrho}}h^{\ell}+w_{\varrho}g,
\label{dhapproximate} \\
h_{ {-}}^{\ell+1} &=&\frac{1-\frac{1}{j}}{\tilde{w}_{\varrho}(v)}%
\int_{n(x)\cdot v^{\prime }>0}h^{\ell}(t,x,v^{\prime })\tilde{w}%
_{\varrho}(v^{\prime })d\sigma +w_{\varrho}r.  \notag
\end{eqnarray}

\noindent{\it Step 1.}We take $\ell\rightarrow \infty $ in $L^{\infty }$.   Upon
integrating over the characteristic lines $\frac{dx}{dt}=v$, and $\frac{dv}{%
dt}=0$ and the boundary condition repeatedly, along the stochastic cycles, we obtain (by replacing $1-%
\frac{1}{j}$ with $1$ and $\varepsilon =0$) that (\ref{iteration}) is valid for $h^{\ell+1}$.
Therefore, for $\ell\geq 2\varrho$, by Lemma \ref{iterationlinfty} that
\begin{eqnarray*}
&&\sup_{0\leq s\leq T_0}e^{\frac{\nu _{0}}{2}s}|h^{\ell+1}(s,x,v)| \\
&\leq &\frac{1}{8}\max_{0\leq l\leq 2k}\sup_{0\leq s\leq T_0}\{e^{\frac{\nu
_{0}}{2}s}\|h^{\ell-l}(s)\|_{\infty }\}+C(k)\max_{1\leq l\leq
2k}\int_{0}^{T_0}\|f^{\ell-l}(s)\|_{2}ds \\
&&+\|h_{0}\|_{\infty }+k\left[ \sup_{0\leq s\leq T_0}\left\{ e^{\frac{\nu _{0}%
}{2}s}|w_{\varrho}r(s)|_{\infty }\right\} +\sup_{0\leq s\leq T_0}\left\{ e^{\frac{\nu
_{0}}{2}s}\left\Vert \frac{w_{\varrho}g(s)}{\langle v\rangle }\right\Vert _{\infty
}\right\} \right]  \\
&\equiv &\frac{1}{8}\max_{0\leq l\leq 2k}\sup_{0\leq s\leq T_0}\{e^{\frac{\nu
_{0}}{2}s}\|h^{\ell-l}(s)\|_{\infty }\}+C(k)\max_{1\leq l\leq
2k}\int_{0}^{T_0}\|f^{\ell-l}(s)\|_{2}ds \\
&&+D,
\end{eqnarray*}%
where $T_0 =\varrho^{4/5}, \ \varrho =k$ are chosen to be sufficiently large but fixed.
Now this is valid for all $\ell\geq 2k$. By induction on $\ell$, we can iterate such bound for $\ell+2,....\ell+2k$ to obtain%
\begin{eqnarray}
&&\sup_{0\leq s\leq T_0}e^{\frac{\nu _{0}}{2}s}\|h^{\ell+i}(s)\|_{\infty }
\label{dnbound} \\
&\leq &\frac{1}{8}\max_{1\leq l\leq 2k}\sup_{0\leq s\leq T_0}\{e^{\frac{\nu
_{0}}{2}s}\|h^{\ell+i-l}(s)\|_{\infty }\}+C(k)\max_{-2k\leq l\leq
2k}\int_{0}^{T_0}\|f^{\ell-l}(s)\|_{2}ds+D  \notag \\
&\leq &\frac{1}{4}\max_{1\leq l\leq 2k} \sup_{0\leq s \leq T_0}\{e^{\frac{\nu _{0}}{2}%
s}\|h^{\ell+i-l-1}(s)\|_{\infty }\}+2C(k)\max_{-2k\leq l\leq
2k}\int_{0}^{T_0}\|f^{\ell-l}(s)\|_{2}ds+2D  \notag \\
&\vdots&\notag \\
&\leq &\frac{1}{4}\max_{1\leq l\leq 2k}\sup_{0\leq s\leq T_0}\{e^{\frac{\nu_0}{2}s}\|h^{\ell-l}(s)\|_{\infty
}\}+(i+1)C(k)\left\{ \max_{-2k\leq l\leq
2k}\int_{0}^{T_0}\|f^{\ell-l}(s)\|_{2}ds+D\right\} .  \notag
\end{eqnarray}%
Taking maximum over $i=1,...2k$, and by induction on $\ell$,
\begin{eqnarray*}
&&\max_{1\leq l\leq 2k}\sup_{0\leq s\leq T_0}e^{\frac{\nu _{0}}{2}%
s}\|h^{\ell+1-l}(s)\|_{\infty } \\
&\leq &\frac{1}{4}\max_{1\leq l\leq 2k}\sup_{0\leq s\leq T_0}\{e^{\frac{\nu
_{0}}{2}s}\|h^{\ell-l}(s)\|_{\infty }\}+C(k)\Big[\max_{-2k\leq l\leq
2k}\int_{0}^{T_0}\|f^{\ell-l}(s)\|_{2}ds\Big]+D \\
&\leq &\frac{1}{4^{2}}\max_{1\leq l\leq 2k}\sup_{0\leq s\leq T_0}\{e^{\frac{%
\nu _{0}}{2}s}\|h^{\ell-1-l}(s)\|_{\infty }\}+(1+\frac{1}{4})\Big\{
C(k)\Big[\max_{-2k\leq l\leq 2k}\int_{0}^{T_0}\|f^{\ell-l}(s)\|_{2}ds\Big]+D\Big\}  \\
&\vdots & \\
&\leq &\frac{1}{4^{\ell/2k}}\max_{1\leq l\leq 2k}\sup_{0\leq s\leq T_0}\{e^{\frac{%
\nu _{0}}{2}s}\|h^{l}(s)\|_{\infty }\}+\Big\{ C(k)\Big[\max_{-2k\leq l\leq
2k}\int_{0}^{T_0}\|f^{\ell-l}(s)\|_{2}ds\Big]+D\Big\},
\end{eqnarray*}%
where we may assume that $\ell$ is a multiple of $k$.  Now for $\max_{1\leq
l\leq 2k}\|h^{l}\|_{\infty }$, we can use (\ref{iteration}) for $k=1$
repeatedly for $h^{2k}\rightarrow h^{2k-1\text{ }}...\rightarrow h_{0}$ to
obtain
\begin{equation*}
\max_{1\leq l\leq 2k}\sup_{0\leq s\leq T_0}\{e^{\frac{\nu _{0}}{2}%
s}\|h^{l}(s)\|_{\infty }\}\lesssim _{k}\|h_{0}\|_{\infty }+\sup_{0\leq s\leq
T_0}\left\{ e^{\frac{\nu _{0}}{2}s}|w_{\varrho}r(s)|_{\infty }\right\} +\sup_{0\leq
s\leq T_0}\left\{ e^{\frac{\nu _{0}}{2}s}\left\Vert \frac{w_{\varrho}g(s)}{\langle
v\rangle }\right\Vert _{\infty }\right\} .
\end{equation*}%
We therefore conclude that
\begin{eqnarray*}
&&\max_{1\leq l\leq 2k}\sup_{0\leq s\leq T_0}e^{\frac{\nu _{0}}{2}%
s}\|h^{\ell+1-l}(s)\|_{\infty } \\
&&\lesssim_{k}  C(k)\Big\{\max_{-2k\leq l\leq
2k}\int_{0}^{T_0}\|f^{\ell-l}(s)\|_{2}ds+D\Big\}  \\
&& \ \ \ +\|h_{0}\|_{\infty }+\sup_{0\leq s\leq T_0}\left\{ e^{\frac{\nu _{0}}{2}%
s}|w_{\varrho}r(s)|_{\infty }\right\} +\sup_{0\leq s\leq T_0}\left\{ e^{\frac{\nu _{0}}{2%
}s}\left\Vert \frac{w_{\varrho}g(s)}{\langle v\rangle }\right\Vert _{\infty }\right\} ,
\end{eqnarray*}%
Now $\max_{1\leq l\leq \infty }\|f^{l}\|_{2}$ is bounded by step 1 in the
proof of Proposition \ref{dlinearl2},  $\|\cdot \|_{2}$ is bounded by $%
\|w_{\varrho}\cdot \|_{\infty }$ for $\beta > 3$ and $|r|_{2}$ is bounded by $%
|w_{\varrho}\langle v\rangle r(s)|_{\infty }$.  Hence, there is a limit (unique) solution $%
h_{\ell}\rightarrow h=w_{\varrho}f\in L^{\infty }$.  Furthermore, $h$ satisfies (\ref%
{iteration}) with $h^{\ell+1}\equiv h$. By subtracting $h^{\ell+1}-h$ in (\ref%
{iteration}) with $D=0$, we obtain from Lemma \ref{iterationlinfty} for $h^{\ell+1}-h$ and (\ref%
{dnbound}):%
\begin{eqnarray*}
&&\max_{1\leq l\leq 2k}\sup_{0\leq s\leq T_0}e^{\frac{\nu _{0}}{2}%
s}\|h^{\ell+1-l}(s)-h(s)\|_{\infty }\\
&& \leq \frac{1}{4^{\ell/2k}}\max_{1\leq l\leq
2k}\sup_{0\leq s\leq T_0}\{e^{\frac{\nu _{0}}{2}s}\|h^{l}(s)-h(s)\|_{\infty }\} \\
&& \ \ +\Big\{ C(k)\max_{-2k\leq l\leq
2k}\int_{0}^{T_0}\|f^{\ell-l}(s)-f(s)\|_{2}ds\Big\}.
\end{eqnarray*}%
From step 1 of Lemma \ref{dlinearl2} we deduce $\sup_{0\leq s \leq T_0}\|h^{\ell+1-l}(s)-h(s)\|_{\infty
}\rightarrow 0$ for $\ell$ large.

\vskip.3cm
\noindent{\it
Step 2.} We take  $j\rightarrow \infty$.  Let $f^{j}$ to be the solution
to (\ref{dlinearj}) and integrate along $\frac{dx}{dt}=v,\frac{dv}{dt}=0$
repeatedly. Again (\ref{iteration}) is valid for $h^{\ell}\equiv h^{j}$
(replacing $(1-\frac{1}{j})$ by $1$ and $\varepsilon =0$) and $l=0$.
Lemma \ref{iterationlinfty} implies%
\begin{eqnarray*}
&&\sup_{0\leq s\leq T_0}e^{\frac{\nu _{0}}{2}s}\|h^{j}(s)\|_{\infty }\\
 &&\leq
\frac{1}{8}\sup_{0\leq s\leq T_0}\{e^{\frac{\nu _{0}}{2}s}\|h^{j}(s)\|_{\infty
}\}+C(k)\int_{0}^{T_0}\|f^{j}(s)\|_{2}ds \\
&& \ \ \ +\|h_{0}\|_{\infty }+k\left[ \sup_{0\leq s\leq T_0}\left\{ e^{\frac{\nu _{0}%
}{2}s}|w_{\varrho}r(s)|_{\infty }\right\} +\sup_{0\leq s\leq T_0}\left\{ e^{\frac{\nu
_{0}}{2}s}\left\Vert \frac{w_{\varrho}g(s)}{\langle v\rangle }\right\Vert _{\infty
}\right\} \right],
\end{eqnarray*}%
with $T_0 =\varrho^{4/5}, \ \varrho=k$, and $\varrho$ sufficiently large but fixed. Therefore, by an induction over $j$,
\begin{eqnarray*}
&&\sup_{0\leq s\leq T_0}e^{\frac{\nu _{0}}{2}s}\|h^{j}(s)\|_{\infty } \leq
C(k)\int_{0}^{T_0}\|f^{j}(s)\|_{2}ds+\|h_{0}\|_{\infty } \\
&&+k\left[ \sup_{0\leq s\leq T_0}\left\{ e^{\frac{\nu _{0}}{2}%
s}|w_{\varrho}r(s)|_{\infty }\right\} +\sup_{0\leq s\leq T_0}\left\{ e^{\frac{\nu _{0}}{2%
}s}\left\Vert \frac{w_{\varrho}g(s)}{\langle v\rangle }\right\Vert _{\infty }\right\} %
\right].
\end{eqnarray*}%
Since $\int_{0}^{T_0}\|f^{j}(s)\|_{2}ds$ is bounded from step 2 of Proposition \ref{dlinearl2} (with no mass constraint), this implies that $%
\|h^{j}\|_{\infty }$ is uniformly bounded and we obtain a (unique) solution $%
h=w_{\varrho}f\in L^{\infty }$. Taking the difference, we have
\begin{eqnarray*}
\partial _{t}[h^{j}-h]&+&v\cdot \nabla _{x}[h^{j}-h]+\nu [ h^{j}-h]
=K_{w_{\varrho}}[h^{j}-h], \\
{[h^{j}-h]}_- &=&\frac{1}{\tilde{w}_{\varrho}(v)}%
\int_{n(x)\cdot v^{\prime }>0}[h ^{j}-h ](t,x,v^{\prime })\tilde{w}_{\varrho}(v^{\prime })d\sigma  \\
&&-\frac{1}{j}\frac{1}{\tilde{w}_{\varrho}(v)}\int_{n(x)\cdot v^{\prime
}>0}h ^{j}(t,x,v^{\prime })\tilde{w}_{\varrho}(v^{\prime })d\sigma .
\end{eqnarray*}%
We regard $\frac{1}{j}\frac{1}{\tilde{w}_{\rho}(v)}\int_{n(x)\cdot v^{\prime
}>0}h_{\gamma _{-}}(t,x,v^{\prime })\tilde{w}_{\rho}(v^{\prime })d\sigma$ as $r$ in Lemma \ref{iterationlinfty}.
So Lemma \ref{iterationlinfty} implies that%
\begin{eqnarray*}
&&\sup_{0\leq s\leq T_0}e^{\frac{\nu _{0}}{2}s}\|h^{j}(s)-h(s)\|_{\infty }\\
&&\lesssim
C(k)\int_{0}^{T_0}\|f^{j}(s)-f(s)\|_{2}ds
+k \sup_{0\leq s\leq T_0}|w_{\varrho} r(s)|_{\infty} \lesssim \frac{1}{j}
,
\end{eqnarray*}%
which goes to zero as $j$ to $\infty $.

We now obtain a $L^{\infty }$ solution $h=w_\rho f$ to (\ref{dlinear}). Since we
have $L^{\infty }$ convergence at each step, we deduce that $h$ is
continuous away from $\mathfrak{D}$.

To obtain decay estimate, we integrate (\ref{dlinear}) over the trajectory. The estimate  (\ref{iteration}%
) is valid for $h^{\ell}\equiv h$ so that from Lemma \ref%
{iterationlinfty} for $\varrho>0$ sufficiently large we have:%
\begin{eqnarray}
&&\sup_{0\leq s\leq T_0}e^{\frac{\nu _{0}}{2}s}\|h(s)\|_{\infty }+\sup_{0\leq
s\leq T_0}e^{\frac{\nu _{0}}{2}s}|h(s)|_{\infty }  \label{hinfty} \\
&\leq &\|h_{0}\|_{\infty }+C(k)\bigg[ \sup_{0\leq s\leq T_0}\left\{ e^{\frac{%
\nu _{0}}{2}s}|w_{\varrho}r(s)|_{\infty }\right\} +\sup_{0\leq s\leq T_0}\left\{ e^{%
\frac{\nu _{0}}{2}s}\left\Vert \frac{w_{\varrho}g(s)}{\langle v\rangle }\right\Vert
_{\infty }\right\}\notag\\
&& \ \ \ \ \ \ \ \ \ \ \ \ \ \ \ \ \ \ \ \ \ \ \ \ \  \ \ \ \ \ \  \ \ \ \ \ \ \ \ \ \ \ \ \ \ \ \ \ \ \ \ \ \ \ +\int_{0}^{T_0}\|f(s)\|_{2}ds\bigg],  \notag
\end{eqnarray}%
where $T_{0}=\varrho^{4/5}, \ \varrho=k$ and $\varrho$ sufficiently large. Letting $s=T_0$ in (\ref{hinfty}), we obtain
\begin{eqnarray}
&&\|h(T_{0})\|_{\infty }+|h(T_{0})|_{\infty } \leq e^{-\frac{\nu _{0}}{2}T_{0}}\|h_{0}\|_{\infty }\label{hT0}\\
&& \ \ \ \ +C(k)\left[
\sup_{0\leq s\leq T_0}\left\{ |w_{\varrho}r(s)|_{\infty }\right\} +\sup_{0\leq s\leq
T_0}\left\{ \left\Vert \frac{w_{\varrho}g(s)}{\langle v\rangle }\right\Vert _{\infty
}\right\} +\int_{0}^{T_{0}}\|f(s)\|_{2}ds\right].\notag
\end{eqnarray}%
Let
\begin{equation*}
R\equiv \sup_{0\leq s\leq \infty }\left\{ e^{\frac{\lambda }{2}s}|\langle
v\rangle w_{\varrho}r(s)|_{\infty }\right\} +\sup_{0\leq s\leq \infty }\left\{ e^{%
\frac{\lambda }{2}s}\|w_{\varrho}g(s)\|_{\infty }\right\} +\|w_{\varrho}f_{0}\|_{\infty }.
\end{equation*}%
Since $\beta > 3/2$, Proposition \ref{dlinearl2} implies that
\begin{equation*}
\|f(t)\|_{2}\lesssim_{\varrho} e^{-\frac{\lambda }{2}t}R.
\end{equation*}%
Let $m$ be an integer.  We look at times $t=m T_0$.  By induction on $m$, we have from the definition of $R$ and (\ref{hT0}) for $\lambda \ll \nu_0,$
\begin{eqnarray*}
&&\|h([m+1]T_{0})\|_{\infty }+|h([m+1]T_{0})|_{\infty } \\
&\leq &e^{-\frac{\nu _{0}}{2}T_{0}}\|h(mT_{0})\|_{\infty } \\
&&+C(\varrho)e^{-\frac{\nu _{0}}{2}T_{0}}\bigg[ \sup_{0\leq s\leq
T_{0}}\left\{ |w_{\varrho}r(s+mT_{0})|_{\infty }\right\} +\sup_{0\leq s\leq T_0}\left\{
\left\Vert \frac{w_{\varrho}g(s+mT_{0})}{\langle v\rangle }\right\Vert _{\infty
}\right\}\\
 &&  \ \ \ \ \ \ \ \ \ \ \ \ \ \ \ \ \ \ \ \ \ \ \ \ \ \ \ \ \ \ \ \ \ \ \ \ \ \ \ \ \ \ \ \ \ \ \ \ \ \ \ \ \ \ \ +\int_{0}^{T_{0}}\|f(s+mT_{0})\|_{2}ds\bigg]  \\
&\leq &e^{-\frac{\nu _{0}}{2}T_{0}}\|h(mT_{0})\|_{\infty }+C(\varrho)e^{-%
\frac{\nu _{0}}{2}T_{0}-\frac{m\lambda T_{0}}{2}}\times  \\
&&\bigg[ \sup_{0\leq s\leq T_{0}}e^{\frac{m\lambda T_{0}}{2}}\left\{
|w_{\varrho}r(s+mT_{0})|_{\infty }\right\} +\sup_{0\leq s\leq t}e^{\frac{m\lambda T_{0}%
}{2}}\left\Vert \frac{w_{\varrho}g(s+mT_{0})}{\langle v\rangle }\right\Vert _{\infty
}\\
&& \ \ \ \ \ \ \ \ \ \ \ \ \ \ \ \ \ \ \ \ \ \ \ \ \ \ \ \ \ \ \ \ \ \ \ \ \ \ \ \ \ \ \ \ \ \ +\int_{0}^{T_{0}}e^{\frac{m\lambda T_{0}}{2}}\|f(s+mT_{0})\|_{2}ds\bigg]
\\
&\leq &e^{-\frac{\nu _{0}}{2}T_{0}}\|h(mT_{0})\|_{\infty
}+C(\varrho)(1+T_{0})e^{-\frac{\nu _{0}}{2}T_{0}-\frac{m\lambda T_{0}}{2}}R \\
&\leq &e^{-2\frac{\nu _{0}}{2}T_{0}}\|h([m-1]T_{0})\|_{\infty
}+C(\varrho)(1+T_{0})e^{-\frac{\nu _{0}}{2}T_{0}-\frac{m\lambda T_{0}}{2}%
}R\\
  &&+C(\varrho)(1+T_{0})e^{-\frac{\nu _{0}}{2}T_{0}}e^{-\frac{\nu _{0}}{2}%
T_{0}}e^{-\frac{[m-1]\lambda T_{0}}{2}}R \\
&=&e^{-2\frac{\nu _{0}}{2}T_{0}}\|h([m-1]T_{0})\|_{\infty
}+C(\varrho)(1+T_{0})e^{-\frac{\nu _{0}}{2}T_{0}-\frac{m\lambda T_{0}}{2}%
}R[1+e^{-\frac{\nu _{0}-\lambda }{2}T_{0}}] \\
&\leq &e^{-3\frac{\nu _{0}}{2}T_{0}}\|h([m-2]T_{0})\|_{\infty
}\\
&&+C(\varrho)(1+T_{0})e^{-\frac{\nu _{0}}{2}T_{0}-\frac{m\lambda T_{0}}{2}%
}R[1+e^{-\frac{\nu _{0}-\lambda }{2}T_{0}}+e^{-2\frac{\nu _{0}-\lambda }{2}%
T_{0}}] \\
&\vdots&\\
&\leq &e^{-m\frac{\nu _{0}}{2}T_{0}}\|h_{0}\|_{\infty
}+C(\varrho)(1+T_{0})e^{-\frac{\nu _{0}}{2}T_{0}-\frac{m\lambda T_{0}}{2}%
}R\sum_{j}e^{-j\frac{\nu _{0}-\lambda }{2}T_{0}} \\
&\leq &C_{k,\nu _{0},\lambda,\varrho }e^{-\frac{m\lambda T_{0}}{2}}R.
\end{eqnarray*}%
Combining with (\ref{hinfty}) for $0\leq t_{1}\leq T_{0}$, we deduce that
for any $t=mT_{0}+t_{1}$, $\|h(t)\|_{\infty} + |h(t)|_{\infty} \lesssim_{k,\nu_0,\lambda,\varrho} e^{-t}R$. We deduce our proposition with $w=w_{\varrho}$ for $\varrho>0$ sufficiently large but fixed. A simple scaling of $\varrho$ concludes the proof of the proposition.
\end{proof}

\vskip .3cm
By using the above $L^\infty$ estimate we can conclude the proof of Theorem \ref{main3}.

\vskip .3cm
\begin{proof}[Proof of Theorem \ref{main3}]
\bigskip We consider the following iteration sequence
\begin{eqnarray*}
\partial _{t}f^{\ell+1}+v\cdot \nabla _{x}f^{\ell+1}+Lf^{\ell+1}
&=& L_{\sqrt{\mu}f_s}f^{\ell}+\Gamma (f^{\ell},f^{\ell}), \\
f_{ {-}}^{\ell+1} &=&P_{\gamma }f^{\ell+1} +\frac{\mu _{\delta }-\mu }{\sqrt{%
\mu }}\int_{\gamma _{+}}f^{\ell} \sqrt{\mu} (n\cdot v) dv.
\end{eqnarray*}%
Clearly we have
\begin{equation*}
 \iint_{\Omega\times\mathbf{R}^3}\{L_{\sqrt{\mu}f_s}f^{\ell}+\Gamma (f^{\ell},f^{\ell})\} \sqrt{\mu }dx dv=0,\text{ \ \ \ }%
\int \frac{\mu _{\delta }-\mu }{\sqrt{\mu }}\left\{ \int_{\gamma
_{+}}f^{\ell}\right\} d\gamma =0.
\end{equation*}%
Recall $w_{\varrho}(v)= (1+\varrho^2 |v|^2)^{\frac{\beta}{2}}e^{\zeta |v|^2}$ in (\ref{weight}). Note that for $0\leq \zeta<\frac{1}{4},$
\begin{eqnarray*}
&& \ \ \ \ \left\Vert e^{\frac{\lambda  s }{2}}w_{\varrho}\left\{ \frac{1}{\langle v\rangle }%
[L_{\sqrt{\mu}f_s}f^{\ell} +\Gamma (f^{\ell},f^{\ell})(s)\right\} \right\Vert _{\infty
} \\
&&\lesssim \ \ \delta \sup_{0\leq s\leq t}\|e^{\frac{\lambda s}{2}%
}w_{\varrho} f^{\ell}(s)\|_{\infty }+\left\{ \sup_{0\leq s\leq t}\|e^{\frac{\lambda s}{2}%
}w_{\varrho} f^\ell(s)\|_{\infty }\right\} ^{2}.
\end{eqnarray*}
Using (\ref{mudeltaEXP}) for $0\leq \zeta < \frac{1}{4+2\delta},$
\begin{equation*}
\ \ \ \left\vert e^{\frac{\lambda t}{2}}w_{\varrho}\langle v\rangle \frac{\mu _{\delta }-\mu
}{\sqrt{\mu }}\left\{ \int_{\gamma _{+}}f^{\ell}\right\} \right\vert _{\infty } \ \
\lesssim \ \ \delta \sup_{0\leq s\leq t}|e^{\frac{\lambda s}{2}} f ^{\ell}(s)|_{\infty }.
\end{equation*}
By Proposition \ref{dlinearlinfty}, we deduce
\begin{eqnarray*}
&&\sup_{0\leq s\leq t}\|e^{\frac{\lambda s}{2}}w_{\varrho}f^{\ell+1}(s)\|_{\infty
}+\sup_{0\leq s\leq t}|e^{\frac{\lambda s}{2}}w_{\varrho}f^{\ell+1}(s)|_{\infty
} \\
&\lesssim &\|w_{\varrho}f_{0}\|_{\infty }+\delta \sup_{0\leq s\leq t}\|e^{\frac{%
\lambda s}{2}}w_{\varrho}f^{\ell}(s)\|_{\infty }+\delta \sup_{0\leq s\leq t}|e^{\frac{%
\lambda s}{2}}w_{\varrho}f^{\ell}(s)|_{\infty }\\&+&\left\{ \sup_{0\leq s\leq t}\|e^{\frac{%
\lambda s}{2}}w_{\varrho}f^\ell(s)\|_{\infty }\right\} ^{2}.
\end{eqnarray*}%
For $\delta $ small, there exists a $\varepsilon _{0}$ (uniform in $\delta $%
) such that, if the initial data satisfy (\ref{epsilon0}), then
\begin{equation*}
\sup_{0\leq s\leq t}\|e^{\frac{\lambda s}{2}}w_{\varrho}f^{\ell+1}(s)\|_{\infty
}+\sup_{0\leq s\leq t}|e^{\frac{\lambda s}{2}}w_{\varrho}f ^{\ell+1}(s)|_{\infty
}\lesssim \|w_{\varrho}f_{0}\|_{\infty }.
\end{equation*}%
By taking difference $f^{\ell+1}-f^{\ell}$, we deduce that%
\begin{eqnarray*}
&&\partial _{t}[f^{\ell+1}-f^{\ell}]+v\cdot \nabla
_{x}[f^{\ell+1}-f^{\ell}]+L[f^{\ell+1}-f^{\ell}] \\
&& \ \ \ \ = L_{\sqrt{\mu}f_s}[f^{\ell}-f^{\ell-1}]+\Gamma (f^{\ell}-f^{\ell-1},f^{\ell})+\Gamma
(f^{\ell-1},f^{\ell}-f^{\ell-1}), \\
&&[f^{\ell+1}-f^{\ell}]_- =P_{\gamma }[f^{\ell+1}-f^{\ell}]  +%
\frac{\mu _{\delta }-\mu }{\sqrt{\mu }}\int_{\gamma
_{+}}[f^{\ell}-f^{\ell-1}](n(x)\cdot v) dv,
\end{eqnarray*}%
with $f^{\ell+1}-f^{\ell}=0$ initially. Repeating the same argument, we obtain
\begin{eqnarray*}
&&\sup_{0\leq s\leq t}\|e^{\frac{\lambda s}{2}}w_{\varrho} [f^{\ell+1}-f^{\ell}](s)\|_{\infty
}+\sup_{0\leq s\leq t}|e^{\frac{\lambda s}{2}}w_{\varrho} [f ^{\ell+1}-f ^{\ell}](s)|_{\infty } \\
&\lesssim &[\delta +\sup_{0\leq s\leq t}\|e^{\frac{\lambda s}{2}%
}w_{\varrho} f^{\ell}(s)\|_{\infty }+\sup_{0\leq s\leq t}\|e^{\frac{\lambda s}{2}%
}w_{\varrho} f^{\ell-1}(s)\|_{\infty }]\sup_{0\leq s\leq t}\|e^{\frac{\lambda s}{2}%
}w_{\varrho} [f^{\ell}-f^{\ell-1}](s)\|_{\infty }.
\end{eqnarray*}%
This implies that $f^{\ell+1}$ is a Cauchy sequence. The uniqueness is standard.

\vskip .3cm
\noindent{\it Positivity.}
To conclude the positivity of $F_{s}$, we need to show
\begin{equation*}
F_{s}+\sqrt{\mu }f(t)\geq 0,
\end{equation*}%
if initially $F_{s}+\sqrt{\mu }f_{0}\geq 0$.   To this end, we first assume the cross section $B$ is bounded and hence $%
\nu $ is bounded. We note that  previous approximating sequence is not well suited to show the positivity. Therefore, we need to design a different iterative sequence. We use the following one:
\begin{eqnarray*}
\partial _{t}F^{\ell+1}+v\cdot \nabla_x F^{\ell+1}+\nu(F^{\ell})F^{\ell+1}
&=&Q_{\text{gain}}(F^{\ell},F^{\ell}), \\
F_{ {-}}^{\ell+1} &=&\mu _{\delta }\int_{\gamma _{+}}F^{\ell}  (n(x)\cdot v) dv,
\end{eqnarray*}%
where {$F^0=F_s+\sqrt{\mu}f_0$ and}
\begin{equation} \nu(F)=\int_{\mathbf{R}^3}dv_*\int_{\mathbf{S}^2}d\omega B(v-v_*,\omega) F(v_*),
\end{equation}
with initial condition $F^{\ell+1}(0)=F_{s}+\sqrt{\mu }f_{0}\geq 0$. Clearly,
such an iteration preserves the non-negativity. We need to show $F^{\ell}$ is
convergent to conclude the non negativity of the (unique!) limit $F(t)\geq 0$.  Writing $%
F^{\ell+1}=F_{s}+\sqrt{\mu }f^{\ell+1}$, we have
\begin{eqnarray}
&& \ \  \partial_t f^{\ell+1} + v\cdot\nabla_x f^{\ell+1} + \nu(v)f^{\ell+1} -Kf^\ell\label{n}\\
 &&\ \ \ \ \ \ \ \ \ \ \   \ \ \ \ \ = \Gamma_{\text{gain}}(f^\ell,f^\ell)-\nu(\sqrt{\mu}f^\ell)f^{\ell+1}-\nu(\sqrt{\mu}f_s)f^{\ell+1}-\nu(\sqrt{\mu}f^\ell)f_s\notag\\
&& \ \ \ \ \ \ \ \ \ \    \ \ \ \ \ \ \ \ \ \
+\frac{1}{\sqrt{\mu}}\left\{ Q_{\text{gain}}(\sqrt{\mu}f^\ell,\sqrt{\mu}f_s) + Q_{\text{gain}}(\sqrt{\mu}f_s,\sqrt{\mu}f^\ell)
\right\},\notag\\
&& \ \ \ \ \ \ \ \ \ \ \ \ \ \ \ \ f_{ {-}}^{\ell+1} =P_{\gamma }f^{\ell}  +\frac{\mu _{\delta }-\mu }{\sqrt{%
\mu }}\int_{\gamma _{+}}f^{\ell} \sqrt{\mu} (n(x)\cdot v)dv.\notag
\end{eqnarray}%
Taking inner product (Green's identity) with $f^{\ell+1}$, from
Ukai's trace Theorem, and $(Kf^{\ell},f^{\ell+1})\lesssim
\|f^{\ell}\|_{2 }^{2}+\|f^{\ell+1}\|_{2 }^{2}$ (bounded cross section $B$) and the boundary condition we have
\begin{eqnarray}
&&\|f^{\ell+1}(t)\|_{2}^2+\int_{0}^{t}\|f^{\ell+1}\|_{2}^{2}+ \int_{0}^{t}|f^{\ell+1}|_{2,+}^{2}\notag \\
&\lesssim &\|f_{0}\|_{2}^2 +[1 + C\delta^2 ]\int_{0}^{t}|P_{\gamma }f^{\ell}|_{2}^{2} +   \int_0^t \|f^\ell\|_{2}^2\label{1cdelta}
\\
&&+[\delta +\|wf^{\ell}\|_{\infty
}+ \|wf^{\ell}\|_{\infty
}^2]\int_{0}^{t}\|f^{\ell+1}\|_{2 }^{2}.\notag
\end{eqnarray}%
By (\ref{dukai}) and the equation (\ref{n}), assuming
$
\max_{1\leq l\leq \ell}\|wf^{l}\|_{\infty }<+\infty$ we have
\begin{eqnarray}
\int_{0}^{t}|P_{\gamma }f^{\ell}|_{2}^{2}&\lesssim&
\int_{0}^{t}\|f^{\ell}\|_{2}^{2}+ \int_{0}^{t}\|f^{\ell-1}\|_{2 }^{2}+ [ \delta  +\|wf^{\ell-1}\|_{\infty
} + \|wf^{\ell-1}\|_{\infty
}^2 ]\int_{0}^{t}\|f^{ \ell }\|_{2 }^{2}\notag\\
&\lesssim& \max_{\ell-1\leq l \leq \ell } \int_0^t \| f^l \|_2^2 ds
.\label{Pgmax}
\end{eqnarray}%
Splitting $1+C\delta^2=1-\delta + (\delta+ C\delta^2)$ in (\ref{1cdelta}) and using (\ref{Pgmax}) we have
\begin{eqnarray*}
&&\|f^{\ell+1}(t)\|_{2}^2+ \int_{0}^{t}\|f^{\ell+1}\|_{2
}^{2}+\int_{0}^{t}|f ^{\ell+1}|_{2,+}^{2} \\
&\lesssim&\|f_0\|_{2}^2+
[1-\delta] \int_0^t |P_{\gamma}f^\ell|_2^2 ds
+   \max_{\ell-1\leq l \leq \ell} \int_0^t \|f^l\|_2^2 ds.
\end{eqnarray*}%
Then we iterate to obtain
\begin{eqnarray}
&&\|f^{\ell+1}\|_2^2 +  \int_0^t \| f^{\ell+1}\|_{2}^2 + \int_0^t |f^{\ell+1} |_{2,+}^2\notag\\
&\leq& [1-\delta]^2\int_0^t |P_{\gamma}f^{\ell-1}|_2^2
+ [1+(1-\delta)]\left\{ \|f^0\|_2^2  + \max_{1\leq l \leq \ell+1} \int_0^t\|f^l(s)\|_2^2
\right\}\notag\\
&\vdots&  \notag\\
&\leq& (1-\delta)^{\ell+1}\int_0^t |P_{\gamma}f^{0}|_2^2
+ \frac{1-(1-\delta)^{\ell+1}}{\delta}\left\{ \|f^0\|_2^2 +  \max_{1\leq l \leq \ell+1} \int_0^t \|f^l(s)\|_2^2
\right\}.\notag\\
\label{2iterate}
\end{eqnarray}
Taking maximum over $1\leq l \leq \ell+1$,
\begin{eqnarray*}
\max_{1\leq l \leq \ell+1} \| f^l(t)\|_{2}^2 \lesssim \|f^0\|_2^2 +  \int_0^t |P_{\gamma}f^0|_2^2
+ \int_0^t \max_{1\leq l \leq \ell+1} \| f^l(s)\|_{2}^2 ds.
\end{eqnarray*}
By Gronwall's lemma, from $f^0=f_0,$
\begin{equation}
\max_{1\leq 1\leq n+1}\|f^{l}(t)\|_{2}\lesssim_t \|f_{0}\|_{2} + t |P_{\gamma}f_0|_{2}, \label{2bound}
\end{equation}%
uniformly bounded.

We now apply Lemma \ref{iterationlinfty} with
\begin{equation*}
g=L_{\sqrt{\mu}f_s}f^\ell+\Gamma (f^{\ell},f^{\ell}),\text{ \ \ }r=\frac{\mu
_{\delta }-\mu }{\sqrt{\mu }}\int_{\gamma _{+}}f^{\ell} \sqrt{\mu} d\gamma,
\end{equation*}%
for some $T_0$ large, with $f^{0}=f_{0}$, using the same arguments as in proof of (\ref{hT0}), to get
\begin{eqnarray*}
&&\sup_{0\leq s\leq T_0}e^{\frac{\nu _{0}}{2}s}|w_{\varrho}f^{\ell+1}(s,x,v)| \\
&\leq &\frac{1}{4}\max_{1\leq l\leq 2k}\sup_{0\leq s\leq T_0}\{e^{\frac{\nu
_{0}}{2}s}\|w_{\varrho}f^{\ell+1-l}(s)\|_{\infty }\}+\|w_{\varrho}f_{0}\|_{\infty } \\
&&+T_0\left[ \delta \max_{1\leq l\leq \ell}\sup_{0\leq s\leq T_0}\left\{ e^{\frac{%
\nu _{0}}{2}s}|w_{\varrho}f^{\ell+1-l}(s)|_{\infty }\right\} +\max_{0\leq l\leq
2k}\sup_{0\leq s\leq T_0}\left\{ e^{\frac{\nu _{0}}{2}s}|w_{\varrho}f^{\ell+1-l}(s)|_{%
\infty }\right\} ^{2}\right]  \\
&&+C(T_0)\max_{1\leq l\leq 2k}\int_{0}^{T_0}\|f^{\ell-1}(s)\|_{2}ds.
\end{eqnarray*}%
where $\ell\geq 2k$.  For
\begin{equation}
\delta +\max_{0\leq l\leq 2k}\sup_{0\leq s\leq t}\left\{ e^{\frac{\nu _{0}}{2%
}s}|w_{\varrho}f^{\ell+1-l}(s)|_{\infty }\right\} \ \ll \ 1,\label{assumption}
\end{equation}%
small, we obtain from $\beta>3,$
\begin{equation*}
\sup_{0\leq s\leq t}e^{\frac{\nu _{0}}{2}s}|w_{\varrho}f^{\ell+1}(s,x,v)|\leq \frac{1}{2}%
\max_{1\leq l\leq 2k}\sup_{0\leq s\leq t}\{e^{\frac{\nu _{0}}{2}%
s}\|w_{\varrho}f^{\ell+1-l}(s)\|_{\infty }\}+C_{T_0}\|w_{\varrho}f_{0}\|_{\infty }.
\end{equation*}%
Hence we obtain%
\begin{equation*}
\max_{1\leq l\leq 2k}\sup_{0\leq s\leq T_0}e^{\frac{\nu _{0}}{2}%
s}|w_{\varrho}f^{\ell+2-l}(s,x,v)|\leq \frac{1}{2}\max_{1\leq l\leq 2k}\sup_{0\leq s\leq
T_0}\{e^{\frac{\nu _{0}}{2}s}\|w_{\varrho}f^{\ell+1-l}(s)\|_{\infty
}\}+C_{k}\|w_{\varrho}f_{0}\|_{\infty },
\end{equation*}%
and
\begin{equation*}
\max_{1\leq l\leq 2k}\sup_{0\leq s\leq T_0}e^{\frac{\nu _{0}}{2}%
s}|w_{\varrho}f^{\ell+2-l}(s,x,v)|\leq \frac{1}{2^{\ell/2k}}\max_{1\leq l\leq 2k}\sup_{0\leq
s\leq T_0}\{e^{\frac{\nu _{0}}{2}s}\|w_{\varrho}f^{l}(s)\|_{\infty
}\}+2C_{k}\|w_{\varrho}f_{0}\|_{\infty }.
\end{equation*}%
But for $1\leq l\leq 2k$, we use Lemma with $k=1$ to get
\begin{equation}
\max_{1\leq l\leq 2k}\sup_{0\leq s\leq T_0}\{e^{\frac{\nu _{0}}{2}%
s}\|w_{\varrho}f^{l}(s)\|_{\infty }\}\leq C_{k}\|w_{\varrho}f_{0}\|_{\infty },\label{maxT0}
\end{equation}%
so that (\ref{assumption}) is valid as long as $\|w_{\varrho}f_{0}\|_{\infty }$ is\ sufficiently small.
We therefore obtain uniform bound
\begin{equation*}
\max_{1\leq l \leq \ell}\sup_{0\leq s\leq T_0}\|w_{\varrho}f^{l}(s)\|_{\infty }\leq C_{k}\|w_{\varrho}f_{0}\|_{\infty
}.
\end{equation*}%
This leads to $w_{\varrho} f^\ell \rightarrow w_{\varrho} f \in L^{\infty}$.  Furthermore, $f$ satisfies (\ref{n}) with $f^{\ell+1}=f^\ell=f$.  Therefore $f^{\ell+1}-f$ satisfies
\begin{eqnarray*}
&&\{\partial_t+ v\cdot\nabla_x + \nu(v)\} [f^{\ell+1}-f] - K[f^\ell-f]= R^{\prime}, \ \ \ \ [f^{\ell+1}-f](0)\equiv 0,\\
&& [f^{\ell+1}-f]_{+} = P_{\gamma}[f^\ell -f] + \frac{\mu_{\delta}-\mu}{\sqrt{\mu}} \int_{\gamma_+} [f^\ell-f]\sqrt{\mu} (n(x)\cdot v) dv,
\end{eqnarray*}
where
\begin{equation}
|(R^{\prime}, f^{\ell+1}-f)|
\lesssim \ \{\delta+ \|w_{\varrho}f^\ell\|_{\infty}+ \|w_{\varrho}f^\ell\|_{\infty}^2+ \|w_{\varrho}f \|_{\infty}+ \|w_{\varrho}f \|_{\infty}^2 \}\{ \|f^{\ell+1}-f\|_2 \}.
\end{equation}
Notice that from (\ref{maxT0})
\begin{equation}
\max_\ell \|w_{\varrho}f^\ell\|_{\infty}, \ \|w_{\varrho} f \|_{\infty} \leq C_{\varrho} \|w_{\varrho} f_0 \|_{\infty}.\notag
\end{equation}
With the same proof of (\ref{Pgmax})
\begin{equation*}
\int_0^t |P_{\gamma}[f^\ell-f]|_2^2 \lesssim \max_{\ell-1 \leq i \leq \ell} \| f^i -f \|_2^2 ds.
\end{equation*}
Combining the two above estimates we have
\begin{eqnarray*}
&&\|f^{\ell+1}(t)-f(t)\|_2^2 + \int_0^t \| f^{\ell+1}-f\|_2^2 + \int_0^t |f^{\ell+1}-f|_{2,+}^2\\
&&\lesssim [1-\delta] \int_0^t |P_{\gamma}[f^\ell-f]|_2^2 ds + \max_{\ell-1 \leq i \leq \ell}\int_0^t \|f^i-f\|_2^2 ds.
\end{eqnarray*}
We iterate as for  (\ref{2iterate}) to obtain
\begin{eqnarray*}
\max_{1\leq i \leq \ell+1 }\| f^i(t)-f(t)\|_2^2 \lesssim \int_0^t \max_{1\leq i \leq \ell+1} \|f^i-f\|_2^2 ds.
\end{eqnarray*}
Then applying Gronwall's lemma and choosing small $t_0>0$ we conclude that $f^j$ is Cauchy in $[0,t]$ and $f^j\rightarrow f$ by uniqueness and $F(t)\geq 0$ for $0\leq t\leq t_0$. We can repeat this process to show $F(t)\geq 0,$
for all $t\geq 0$.

Finally, we take $F_{0}\sim\mu $, so that $F_0\geq 0$
and hence $F(t)\geq 0$. Therefore $\lim_{t\rightarrow +\infty} F(t) =F_{s}\geq
0$.  This completes the proof in the case $B$ bounded. To remove this limitation, we use a cut-off procedure as before and reduce to the bounded case where previous result holds. Then we pass to the limit in the cut-off using the a priori bounds and uniqueness.
\end{proof}

\vskip 1cm
\noindent{\bf Acknowledgements}: {\it We are deeply indebted with prof. K. Aoki for permitting us to use the plots from the paper \cite{OAS}. Y. Guo thanks C. Mouhot for stimulating discussions and he thanks the support of Beijing International Mathematical Center.  R. E. and R. M. and C. K. wish to thank the Math Department of Brown University and ICERM where part of this research was performed, for their kind hospitality and support.

The research of Y. G. is supported in part by NSF \#0905255 and FRG grants as well as a Chinese NSF
grant \#10828103. The research of R. E. and R. M. is partially supported by MIUR and GNFM-INdAM. The research of C. K. is supported in part by Herchel Smith fund.}

\bigskip

\end{document}